\documentclass[hyperfootnotes=false,letterpaper]{article}
\usepackage[english]{babel}
\usepackage{cite}

\usepackage[utf8]{inputenc}
\usepackage{fullpage}
\usepackage{amsmath, amsfonts, amssymb, amsthm}
\usepackage[linesnumbered,noresetcount,vlined,ruled,procnumbered]{algorithm2e}

\usepackage{color}
\usepackage{xcolor}
\definecolor{darkgreen}{rgb}{0,0.5,0}
\definecolor{darkblue}{rgb}{0,0,0.6}
\definecolor{DarkRed}{rgb}{0.65,0,0}
\definecolor{Red}{rgb}{1,0,0}

\usepackage[linktocpage=true,backref=page,
pagebackref=true,colorlinks, linkcolor=darkblue,citecolor=darkgreen,urlcolor=blue,filecolor=magenta,bookmarksnumbered,hypertexnames=false]{hyperref}

\usepackage[capitalize, nameinlink]{cleveref}
\usepackage{graphicx}
\usepackage{comment}
\usepackage{thm-restate}
\usepackage{bbm}
\usepackage{tikz}
\usetikzlibrary{shapes.geometric, arrows.meta, positioning, bending,calc}
\usetikzlibrary{fit}
\usetikzlibrary{patterns, patterns.meta}
\usepackage{pgfplots}
\usepgfplotslibrary{fillbetween} 
\pgfplotsset{compat=1.16} 
\usetikzlibrary{patterns}

\usepackage[linewidth=1pt]{mdframed}
\usepackage[most]{tcolorbox}
\definecolor{block-gray}{gray}{0.85}
\newtcolorbox{shadequote}{colback=block-gray,grow to right by=-2mm,grow to left by=-2mm,
boxrule=0pt,boxsep=0pt,breakable}
\usepackage{color}
\usepackage[inline]{enumitem}

\usepackage{caption}
\usepackage{cancel}
\usepackage{subcaption}
\usepackage{mathrsfs}
\usepackage{xspace}

\usepackage{mathtools} 
\usepackage{booktabs} 
\usepackage{makecell}
\usepackage{tabularx}
\usepackage{lmodern}

\SetKwComment{Comment}{$\quad\triangleright$\ }{}
\newcommand{\nosemic}{\renewcommand{\@endalgocfline}{\relax}}

\usepackage[normalem]{ulem} 

\crefalias{AlgoLine}{line}%

\makeatletter
\let\cref@old@stepcounter\stepcounter
\def\stepcounter#1{%
  \cref@old@stepcounter{#1}%
  \cref@constructprefix{#1}{\cref@result}%
  \@ifundefined{cref@#1@alias}%
    {\def\@tempa{#1}}%
    {\def\@tempa{\csname cref@#1@alias\endcsname}}%
  \protected@edef\cref@currentlabel{%
    [\@tempa][\arabic{#1}][\cref@result]%
    \csname p@#1\endcsname\csname the#1\endcsname}}
\makeatother

\usepackage{nicefrac}
\usepackage{pgf}
\usepackage{environ} 

\usepackage{float}
\usepackage{tocloft}
\usepackage{tocvsec2}

\newtheorem{theorem}{Theorem}[section]
\newtheorem{lemma}[theorem]{Lemma}
\newtheorem{remark}{Remark}
\newtheorem{definition}{Definition}
\newtheorem{observation}[theorem]{Observation}
\newtheorem{proposition}[theorem]{Proposition}
\newtheorem{claim}[theorem]{Claim}
\newtheorem{corollary}[theorem]{Corollary}

\makeatletter
\renewenvironment{proof}[1][\proofname]{\par
    \pushQED{\qed}%
    \normalfont \topsep6\p@\@plus6\p@\relax
    \trivlist
    \item\relax
    {\bfseries\boldmath
        #1\@addpunct{.}}\hspace\labelsep\ignorespaces
}{%
    \popQED\endtrivlist\@endpefalse
}
\makeatother

\DeclareMathSymbol{\mhyphen}{\mathord}{AMSa}{"39}

\newcommand{\wc}{w_{\mathrm{heavy}}}
\newcommand{\apm}{(1\pm\eps)}

\newcommand{\MM}[1]{\mathsf{MM}(#1)}

\newcommand{\brak}[1]{\left(#1\right)}
\newcommand{\Exp}[1]{\mathbb{E}\left[ #1 \right]}
\renewcommand{\Pr}[1]{{\mathrm{Pr}}\left[ #1 \right]}
\newcommand{\set}[1]{\left\{ #1 \right\}}
\newcommand{\ch}[2]{{#1 \choose #2}}

\newcommand{\mo}{O}
\newcommand{\TO}[1]{\tilde{\mo}(#1)}
\newcommand{\To}{\tilde{O}}
\newcommand{\tg}{\tilde{\Omega}}
\newcommand{\tf}{\tilde{O}}
\newcommand*{\boldone}{\text{\usefont{U}{bbold}{m}{n}1}}

\newcommand{\eps}{\varepsilon}
\newcommand{\EE}{\mathcal{E}}
\newcommand{\UU}{\mathcal{U}}
\newcommand{\HH}{\mathcal{H}}
\newcommand{\PP}{\mathcal{P}}
\newcommand{\XX}{\mathcal{X}}
\newcommand{\AC}{\mathcal{A}}
\newcommand{\FF}{\mathcal{F}}

\makeatletter
\newcommand{\whp}{%
  w.h.p.\@ifnextchar.{\@gobble}{\xspace}%
}
\makeatother

\newcommand{\yes}{\texttt{Yes}\xspace}
\newcommand{\no}{\texttt{No}\xspace}

\DeclarePairedDelimiter\abs{\lvert}{\rvert}%

\makeatletter
\let\oldabs\abs
\def\abs{\@ifstar{\oldabs}{\oldabs*}}

\crefname{claim}{Claim}{Claims} 
\crefname{ineq}{inequality}{inequalities} 
\crefname{proof}{Proof}{Proofs} 
\crefname{line}{Line}{Lines}
\crefname{algorithm}{Algorithm}{Algorithms}
\crefname{observation}{Observation}{Observation}
\crefname{black box}{Procedure}{Procedure}
\crefname{figure}{Figure}{Figures}
\crefname{hope}{Hope}{Hopes}

\newcommand{\DD}{\mathcal{D}}
\newcommand{\Q}{\log^2(\abs{V(G)})\cdot 2^{2h+8}}
\newcommand{\cg}{\brak{2\log \Lambda}^h}
\newcommand{\eo}{1-1/e}
\newcommand{\nemp}{\neq\emptyset}

\newcounter{blackbox}[section] 

\crefname{blackbox}{Procedure}{Procedure}

\newcommand{\omegaval}{2.371552}
\newcommand{\CC}{\mathcal{C}}
\newcommand{\poly}{\mathsf{poly}}
\newcommand{\QQ}{\mathcal{Q}}

\newcommand{\GG}{\mathcal{G}}
\newcommand{\Gp}{G^\prime}
\newcommand{\cclog}{(k\log (4n))^{k^2}}
\newcommand{\W}{L}
\newcommand{\Vp}{S}

\newcommand{\algA}{\textsc{GetZ}}
\newcommand{\algCD}{{\normalfont\textsc{DetectCID}}\xspace}
\newcommand{\algSensitive}{{\normalfont\textsc{SensitiveDetection}}\xspace}
\newcommand{\algXlight}{{\normalfont\textsc{$x$-LightDetection}}\xspace}
\newcommand{\algVertexOne}{{\normalfont\textsc{IsvInDiamond}}\xspace}
\newcommand{\algVertex}{{\normalfont\textsc{FindVertexInDiamond}}\xspace}
\newcommand{\algVertexC}{{\normalfont\textsc{FindDeg3Vertex}}\xspace}
\newcommand{\algPart}{{\normalfont\textsc{FindPartition}}\xspace}
\newcommand{\algCluster}{{\normalfont\textsc{FindClustering}}\xspace}
\newcommand{\algClusterVeri}{{\normalfont\textsc{VerifyClustering}}\xspace}

\newcommand{\degb}{{\normalfont\textsc{Deg2}}\xspace}
\newcommand{\degc}{{\normalfont\textsc{Deg3}}\xspace}

\renewcommand{\phi}{\varphi}

\newcommand{\rmax}{r_{\max}}

\newcommand{\dsum}{{\normalfont4-\textsf{SUM}}\xspace}
\newcommand{\ksum}{{\normalfont\textsf{k-SUM}}\xspace}
\newcommand{\runtimex}{\rmax\cdot(n/\rmax)^\omega +(n/\rmax)^3+n\cdot \rmax}
\newcommand{\runtime}{r\cdot(n/r)^\omega +(n/r)^3+n\cdot r}
\newcommand{\runtimeT}{n^{2.425}/t^{0.25}}
\newcommand{\pat}{F}

\author{Keren Censor-Hillel\thanks{Department of Computer Science, Technion. The research is supported in part by the Israel Science Foundation (grant 529/23)} 
\and Tomer Even\thanks{Department of Computer Science, Technion.}
\and Virginia Vasillevska Williams\thanks{Massachusetts Institute of Technology, Cambridge, MA, USA. Supported by NSF Grant CCF-2330048, BSF Grant 2024233, and a Simons Investigator Award.}
\and Nathan Wallheimer\thanks{Weizmann Institute of Science.}
}

\title{Witness-Sensitive Detection of Induced Diamonds}
\date{}
\begin{document}
\maketitle
\begin{abstract}
	We provide a fast \emph{witness-sensitive} algorithm for detecting an induced diamond (a $K_4$ minus an edge) in an $n$-vertex graph containing $t$ induced diamonds. Our algorithm runs in time $\tilde{O}(\min(n^{2.425}/t^{0.25}+n^2, n^\omega))$ with high probability, improving upon the prior state of the art (witness-oblivious) algorithm that runs in time $O(n^\omega\log{n})$ [Vassilevska Williams, Wang, Williams, Yu, SODA 2014] whenever $t \geq n^{(3-\omega)/3}$, where $\omega < 2.372$ is the matrix multiplication exponent.

	Our key insight is that the size of a clique containing one of the triangles of an induced diamond plays a crucial role in detecting such a diamond. We say that a diamond is $r$-heavy if this size is at least $r$, and we provide a fast detection algorithm for $r$-heavy diamonds in $\tilde{O}(r \cdot (n/r)^\omega + (n/r)^3+ nr)$ time.
	When there are no $r$-heavy diamonds, we provide a different fast detection algorithm in $\tilde{O}(\mathsf{MM}(n,n,n\sqrt{r/t}))$ time, where $\mathsf{MM}(a,b,c)$ denotes the time to multiply an $a \times b$ matrix by a $b \times c$ matrix, which is conditionally optimal for $r=\tilde{O}(1)$.

	Our main technical contribution is in designing a refinement framework for sampling vectors, which allows sampling vertices for detecting diamonds in a manner that is adaptive to the structure of graphs with no $r$-heavy diamonds.
	We establish that our technique is of a wide applicability, by showing how it also allows for faster witness-sensitive algorithms for $4$-SUM and for a special case of $4$-cycles.
\end{abstract}

\setlength{\cftbeforesecskip}{3pt}
\setlength{\cftbeforesubsecskip}{1pt}
\tableofcontents

\section{Introduction}
The problem of detecting a fixed subgraph $\pat$ within a host graph $G$ is a cornerstone task of theoretical computer science, with broad applications ranging from social network analysis~\cite{Kuramochi2001FrequentSD,Liu2019NeuralSI,tsourakakis2015k} and computational biology~\cite{Alon2008BiomolecularNM,dao2011optimally,rahman2009small} to machine learning~\cite{sun2021sugar,Bouritsas2020ImprovingGN,BordesCW14}.

While detecting a pattern $\pat$ on $h$ vertices is easily solvable in $O(h^2 \cdot n^h)$ time, extensive work has been devoted to obtaining faster algorithms.
In some cases, when a barrier prevents improving the state-of-the-art worst-case complexity, \emph{witness-sensitive} algorithms become a significant paradigm: these are algorithms that do not improve upon the general case, e.g., if an input instance has only a single copy of $\pat$ hidden in it, but they do run much faster on instances that have more copies of $\pat$.
Distinguishing between graphs that contain $t$ copies of $\pat$ and $\pat$-free graphs has been studied in various models, such as property testing \cite{goldreich1998property,alon2000efficient,Goldreich2002-rm,Alon2015EasilyTG,Gishboliner2013DeterministicVN}, sublinear algorithms \cite{parnas2002testing,kaufman2004tight,aliakbarpour2018sublinear,eden2017approximately,assadi2018simple,Fichtenberger2020SamplingAS,Biswas2021TowardsAD,AssadiKK19}, streaming \cite{ahn2012graph,mcgregor2016better,bera2017towards,kallaugher2019complexity}, and more.

In the standard word-RAM model, the baseline approach for exploiting multiple copies is uniform sampling. This is because if we
sample $h$ vertices and check if they induce $\pat$, then in expectation $O(n^h/t)$ samples are sufficient.
For triangles, this approach was pushed further by
~\cite{tvetek2022approximate}, who
improved this running time to $\tf(t^2 \cdot (n/t)^\omega)$ by reducing the problem to multiplying $\TO{t^2}$ matrices of size $n/t \times n/t$, also providing a $(1\pm \eps)$ approximation for $t$. Here, $\omega<2.372$ is the matrix multiplication exponent~\cite{moreasym}.
This was further improved by~\cite{CEW24} to $\MM{n,n,n/t}$ time, which is the time to multiply an $n \times n$ matrix by an $n \times (n/t)$ matrix. Improving upon this complexity is shown in~\cite{CEW24} to hit the barrier of a conditional lower bound.
Witness-sensitive algorithms go beyond triangles: In~\cite{CEW24}, the approach for triangles is shown to generalize, providing a $(1\pm \eps)$ approximation for the number of $k$-cycles in $\MM{n,n,n/t^{1/(k-2)}}$ time. In~\cite{CEW25}, witness-sensitive algorithms are given for $k$-clique detection, $k$-sum, and more.

In this work, we address the complexity of detecting an induced diamond, which is a $4$-clique minus an edge and is one of the simplest non-trivial patterns beyond cycles and cliques
\cite{kloks2000finding,eisenbrand2004complexity,williams2014finding}.
The state of the art for diamond detection runs in $O(n^\omega \cdot \log n)$ time~\cite{williams2014finding}, and a faster algorithm would imply faster triangle detection, which is a long-standing open problem~\cite{itai1977finding,nesetril,focsy}.

One can directly use the known approaches to obtain witness-sensitive induced diamond detection: Na\"{\i}ve sampling would give $\tf(n^4/t)$ time, and a reduction to $4$-clique detection would give $\tf(\MM{n^2,n,n}/\sqrt{t})$ time~\cite{CEW25}.
We ask:
\begin{center}
    \emph{Is there a faster witness-sensitive algorithm for induced diamond detection?}
\end{center}
We answer this question affirmatively by presenting an algorithm that runs faster as the number of induced diamonds increases. Our algorithm improves upon the state of the art already for $t\geq n^{(3-\omega)/3}$.

Our key insight is to categorize induced diamonds based on their \emph{$r$-heaviness}, a new notion that captures whether three of the four vertices of the induced diamond are part of a clique of size $r$.
By designing different algorithms for detecting \emph{$r$-heavy} induced diamonds and for detecting \emph{$r$-light} ones, we are able to obtain our improvement.

Prior witness-sensitive algorithms define sampling vectors that are used to sample vertices from a $k$-partite graph derived from the input. The main technical novelty in our approach lies in a \emph{refinement} framework for these vectors, allowing us to sample in a way that leverages the structure of a graph with only $r$-light diamonds. 

Our refinement technique is not limited to induced diamonds, but rather applies to other patterns as well: we obtain faster witness-sensitive algorithms for \dsum and for a special case of induced $4$-cycles.
\begin{figure}[h!]
    \center

\begin{tikzpicture}
    \tikzfading[name=fadeUp,   bottom color=transparent!100, top color=transparent!20]
    \tikzfading[name=fadeDown, bottom color=transparent!20,   top color=transparent!100]
    \def\omegaval{2.371552}

    \begin{axis}[
            axis lines=middle,
            xlabel={$\log_n(t)=\tau$},
            ylabel={$\log_n(\mathsf{Time})$},
            xlabel style={at={(ticklabel* cs:1.3)}, anchor=east},
            ylabel style={at={(ticklabel* cs:1.15)}, anchor=north},
            xmin=0, xmax=2,
            ymin=1.55, ymax=3.15,
            grid=both,
            grid style={line width=.1pt, draw=gray!10},
            major grid style={line width=.2pt, draw=gray!50},
            minor x tick num=5,
            minor y tick num=4,
            ytick={2.0,2.25,2.5,2.75,3.0},
            legend pos=outer north east,
            legend style={anchor=north west, at={(1.02,1)}, cells={anchor=west}},
            extra y ticks={\omegaval},
            extra y tick labels={$\omega$},
            extra y tick style={
                    major tick length=6pt, tick align=outside, tick pos=left, thick,
                    /pgfplots/major grid style={draw=none}
                }
        ]

        \def\colA{cyan}
        \def\colB{red!70}
        \def\colC{green!50!black}
        \def\colOpt{violet}
        \def\colS{red!70!black}
        \def\markRep{8}
        \def\tableDir{figures/omega_data_tx9.dat}
        \def\aup{\uparrow}
        \def\adown{\downarrow}

        \pgfmathsetmacro{\tauzero}{(3-\omegaval)/3}


        \addplot[name path=a_low,  \colA, forget plot] table[x=tau, y=f1_rho0] {\tableDir};
        \addplot[name path=a_high, \colA, forget plot] table[x=tau, y=f1_rho_tau] {\tableDir};

        \addplot[name path=b_low,  \colB, forget plot] table[x=tau, y=f2_rho_tau] {\tableDir};
        \addplot[name path=b_high, \colB, forget plot] table[x=tau, y=f2_rho0] {\tableDir};

        \addplot[\colOpt, thick, mark=o, mark repeat=2\markRep, forget plot]
            table[x=tau, y expr={\thisrow{g_opt}<\omegaval ? \thisrow{g_opt} : \omegaval}] {\tableDir};

        \addplot[name path=cond_lb, draw=\colC, line width=0.9pt,
                 dash pattern=on 1pt off 2pt, line cap=round, forget plot]
            table[x=tau, y expr=\thisrow{f1_rho0}-0.005] {\tableDir};

        \addplot[name path=simpler, draw=\colS, line width=0.9pt, dash pattern=on 3pt off 2pt, line cap=round, forget plot,
                 domain=0:2, samples=200]
            {min(max(2.424-0.248*x,2),\omegaval)};


        \addplot[draw=none, fill=\colB, fill opacity=0.06, forget plot] fill between[of=b_low and b_high];
        \addplot[draw=none, fill=\colB, fill opacity=0.8, path fading=fadeDown, forget plot] fill between[of=b_low and b_high];

        \addplot[draw=none, fill=\colA, fill opacity=0.06, forget plot] fill between[of=a_low and a_high];
        \addplot[draw=none, fill=\colA, fill opacity=0.8, path fading=fadeUp, forget plot] fill between[of=a_low and a_high];

        \draw[red, thin, dashed]
            (axis cs:\tauzero,\pgfkeysvalueof{/pgfplots/ymin}) --
            (axis cs:\tauzero,\pgfkeysvalueof{/pgfplots/ymax})
            node[pos=0.91, anchor=south west, font=\small]
            {$\tau_0=\tfrac{3-\omega}{3}$};


        \addlegendimage{area legend, draw=\colB, fill=\colB, fill opacity=0.25}
        \addlegendentry{\Cref{thm:r-heavy diamonds}: $\runtimex$ }

        \addlegendimage{area legend, draw=\colA, fill=\colA, fill opacity=0.25}
        \addlegendentry{\Cref{thm:r-light diamonds}: $\MM{n,n,n\cdot\sqrt{\rmax/t}}$ }

        \addlegendimage{\colOpt, thick, mark=o}
        \addlegendentry{Best of both}

        \addlegendimage{draw=\colS, line width=0.9pt, dash pattern=on 3pt off 2pt, line cap=round}
        \addlegendentry{\Cref{thm:simpler}: $\min(\runtimeT + n^2,n^\omega)$}

        \addlegendimage{draw=\colC, line width=0.9pt, dash pattern=on 1pt off 2pt, line cap=round}
        \addlegendentry{\Cref{thm:sens-lb}: $\Omega(\MM{n,n,n/{\sqrt{t}}})$ (Conditional)}

    \end{axis}

\end{tikzpicture}
    \caption{Running time of induced diamond detection as a function of $t=n^\tau$ and $\rmax=n^{\rho}$.
    The cyan band shows the runtime of \Cref{thm:r-light diamonds}: $\tf(\MM{n,n,n\cdot \sqrt{\rmax/t}})$ as $\rmax$ varies from $3$ (lower boundary) to $t$ (upper boundary, equal to $\omega$); darker blue indicates larger $\rmax$.
    The red band shows the runtime of \Cref{thm:r-heavy diamonds}: $\tf(\rmax\cdot (\frac{n}{\rmax})^\omega +(\frac{n}{\rmax})^3 + n^2)$; darker red indicates larger $\rmax$.
    The violet curve shows our algorithm's running time as a function of $t$ in the worst case over all possible values of $\rmax$. For each fixed $t$, we consider all possible graphs with different values of $\rmax$, apply whichever of the two algorithms is faster for that $\rmax$, and the violet curve represents the maximum such runtime over all choices of $\rmax$.
    The vertical dashed line marks $\tau_0=(3-\omega)/3$, where our running time first drops below $n^\omega$, the state-of-the-art non-sensitive bound \cite{williams2014finding}.
    The red dashed curve shows a looser upper bound given by \Cref{thm:simpler}.   
    Finally, the lower boundary of the cyan band (highlighted in dashed green) gives a conditional lower bound under the Unbalanced Triangle Detection Hypothesis (\Cref{thm:sens-lb}). 
    }
    \label[figure]{fig5:comp}
\end{figure}

\subsection{Our Contribution}

Our main result is a witness-sensitive algorithm for detecting an induced diamond in $G$ that runs faster as the number of induced diamonds increases. Here, $G$ is a graph with $n$ vertices and $t$ induced diamonds.
\begin{theorem}[Simplified]
    \label{thm:simpler}
    There is a randomized algorithm that, given a graph $G$ with $n$ vertices and $t$ induced diamonds, finds an induced diamond in time
    $\To(\min(\runtimeT + n^2,n^\omega))$.
\end{theorem}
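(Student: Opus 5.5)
The plan is to combine the two algorithms described in the abstract and in \Cref{fig5:comp}: the $r$-heavy detector (\Cref{thm:r-heavy diamonds}) with running time $\To(r\cdot(n/r)^\omega+(n/r)^3+nr)$, and the $r$-light detector (\Cref{thm:r-light diamonds}) with running time $\To(\MM{n,n,n\sqrt{r/t}})$, which applies when no induced diamond is $r$-heavy. The $n^\omega$ term comes from running the witness-oblivious $O(n^\omega\log n)$ algorithm of~\cite{williams2014finding} in parallel and stopping as soon as either finishes. So the real work is to prove the $\To(n^{2.425}/t^{0.25}+n^2)$ bound.

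\textbf{Algorithm.} Since $t$ is unknown, guess $t$ by halving: try $t'=n^4,n^4/2,\dots$. For each guess, try a threshold $r$ chosen as a function of $t'$, say $r=t'^{\alpha}$ with $\alpha$ fixed below. First run the heavy detector with parameter $r$. Then run the light detector with parameter $r$ and sample budget matched to $t'$. Any candidate that is found is verified in $O(1)$ time. The argument for correctness is a dichotomy. Let $\rmax$ be the largest $r$ for which an $r$-heavy induced diamond exists. If $r\le \rmax$, the heavy algorithm succeeds. Otherwise every diamond is $r$-light, and the light algorithm succeeds with high probability once $t'\le t$. Because the total cost is dominated by the guess $t'\approx t$ (the cost grows geometrically as $t'$ decreases), the time is $\To$ of the cost at $t'=t$.

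\textbf{Balancing.} Let $r=n^{\rho}$ and $t=n^{\tau}$. For the light side I would use the crude bound $\MM{n,n,m}\le n^2+ n^{2}m^{\omega-2}$, or more simply $\MM{n,n,m}\le n^{\omega}(m/n)^{\omega-2}$ plus $n^2$. With $m=n\sqrt{r/t}$ this gives the exponent $\omega-(\omega-2)(\tau-\rho)/2$. For the heavy side, the exponent is $\max(\omega-(\omega-1)\rho,\ 3-3\rho,\ 1+\rho)$. The term $nr\le n^2$ only needs $r\le n$; this is why the additive $n^2$ appears. One could choose $\rho$ to equalize the two dominant exponents. In fact, \Cref{fig5:comp} shows the looser line $2.424-0.248\tau$. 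My plan is therefore to pick $\rho$ as an explicit linear function of $\tau$ and check, using $\omega<2.372$, that both exponents stay below $2.425-0.25\tau$ (capped at $n^2$ from below).

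As a sanity check at $\tau=0$: the heavy side at $\rho$ gives $3-3\rho$, and the light side gives $\omega+(\omega-2)\rho/2$. Balancing yields $\rho\approx 0.18$ and a value near $2.44$ with crude matrix multiplication. So I would need rectangular matrix multiplication bounds (for example $\omega(1,1,\mu)$ with $\alpha>0.32$) to reach $2.425$. Otherwise, for small $\tau$, I simply fall back to $n^\omega$, since $\min(\cdot,n^\omega)$ absorbs that range.

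\textbf{Main obstacle.} The main obstacle is this numeric balancing: verifying that a single linear choice of $\rho(\tau)$ gives exponent at most $2.425-0.25\tau$ over the whole range $\tau\in[(3-\omega)/3,\,2]$. That requires a careful convex bound on $\omega(1,1,\mu)$ (interpolating between $\omega$ at $\mu=1$ and $2$ at $\mu\le\alpha$), together with checking the breakpoints where $(n/r)^3$ versus $r(n/r)^\omega$ dominates. I would do this piecewise: on each interval, compare linear functions of $\tau$ at the endpoints.
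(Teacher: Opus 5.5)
Your overall route matches the paper's. You run the heavy algorithm, the light algorithm and the $O(n^\omega\log n)$ algorithm of~\cite{williams2014finding} side by side, then balance the dominant heavy term $(n/r)^3$ against the light term using a linear bound on rectangular matrix multiplication. Two steps, however, do not go through as written.

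\textbf{The heavy/light dichotomy.} The claim ``if $r\le\rmax$, the heavy algorithm succeeds'' is false for the heavy detector run with a fixed parameter $r$.
\begin{itemize}
\item That subroutine (\textsc{FindHeavyHelper}$(G,r)$, \Cref{thm:r-heavy helper}) is only guaranteed to work when there is an $r$-heavy diamond and \emph{no} $2r$-heavy one. The reason is that $L(S,r)$ keeps only edges with $|W(e)|\in[r,2r]$.
\item If the clique witnessing heaviness has size much larger than $2r$, as in \Cref{fig:clique_extreme}, all of its revealing edges are filtered out and the call may return $\bot$.
\item In that same situation the light detector at parameter $r$ gives nothing either, since there may be no $r$-light diamonds at all.
\end{itemize}
The fix is to use the parameter-free round-robin algorithm of \Cref{thm:r-heavy diamonds}. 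For $\rmax\ge r$ its time $\To(\rmax(n/\rmax)^\omega+(n/\rmax)^3+n\rmax)$ is at most $\To(r(n/r)^\omega+(n/r)^3+n^2)$, because the first two terms decrease in $r$. When $\rmax<r$ its time can far exceed your budget (about $n^3$ for $\rmax=O(1)$). So it must be interleaved with, or time-boxed against, the light algorithm rather than run first. Once you do this, guessing $t'$ and choosing $r$ inside the algorithm are unnecessary: \algSensitive and \algVertexC already adapt to the unknown $t$ and $\rmax$. Then $r$ is purely an analysis threshold, and your case split $\rmax\ge r$ versus $\rmax<r$ is exactly the paper's argument: bound the worst case over $\rmax$ of the minimum of the two times by its value at the crossing point.

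\textbf{The numerics.} Your diagnosis here is off.
\begin{itemize}
\item With the crude bound $\MM{n,n,np}\le n^\omega p^{\omega-2}$, balancing $3-3\rho$ against $\omega+\frac{\omega-2}{2}(\rho-\tau)$ gives $\rho\approx 0.197+0.058\tau$ and exponent $\approx 2.408-0.175\tau$. It does not give $\rho\approx0.18$ and $2.44$ at $\tau=0$.
\item The $\tau=0$ check is meaningless anyway. There $\rho>\tau$, i.e.\ $\sqrt{r/t}>1$, so neither bound on $\MM{n,n,np}$ applies and the light algorithm is slower than $n^\omega$.
\item The crude bound actually fails for $\tau\gtrsim 0.22$. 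That is exactly the range where the $n^\omega$ fallback does not help, so ``fall back to $n^\omega$ for small $\tau$'' is no remedy.
\end{itemize}
What you need is the bound you mention at the end, which is \Cref{claim:omega rect}: $\MM{n,n,np}\le n^\omega p^\beta+n^2$ with $\beta=(\omega-2)/(1-\alpha)\approx 0.5475$. Restrict to $\rho\le 1/3$; beyond that the heavy side is already $\To(n^2)$. Since $1/3<(3-\omega)/(4-\omega)$, the heavy time is then dominated by $(n/r)^3$. The single equation $3-3\rho=\omega+\frac{\beta}{2}(\rho-\tau)$ gives $\rho=(3-\omega+\beta\tau/2)/(3+\beta/2)\approx 0.192+0.084\tau$ and exponent $\approx 2.4241-0.2508\tau\le 2.425-0.25\tau$ for all $\tau\ge 0$. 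Moreover $\rho<\tau$ whenever the target is below $n^\omega$ (i.e.\ $\tau\gtrsim 0.214$), so the linear bound is used within its valid range. No piecewise analysis is needed.
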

For $t\geq n^{(3-\omega)/3}$, our algorithm improves upon the prior running time $O(n^\omega\cdot \log n)$~\cite{williams2014finding}.
To achieve this, we design two different algorithms to detect $r$-heavy and $r$-light induced diamonds, and then combine them.

We say that an induced diamond is $r$-heavy if three of its vertices (that form a triangle) are contained in a clique of size at least $r$. Otherwise, we say that the induced diamond is $r$-light.
We use $\rmax$ to denote the maximum integer $r$ such that there exists at least one $r$-heavy induced diamond in $G$.
Our algorithm for $r$-heavy induced diamonds is as follows:
\begin{restatable}[$r$-Heavy Diamonds]{theorem}{ThmHeavyD}\label{thm:r-heavy diamonds}
    There is a randomized algorithm that finds an induced diamond in time $\tf(\runtimex)$ \whp.
    
\end{restatable}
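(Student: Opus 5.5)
The plan is to exploit the large clique through sampling: a random set of about $(n/\rmax)\log n$ vertices hits the clique, and a hit vertex anchors an induced diamond. We then search for a diamond through that anchor, using three ingredients that match the three terms of the running time. Throughout we write $r=\rmax$. We may assume $r$ is known up to a factor $2$ by trying all $O(\log n)$ powers of two, since an invalid guess can only cost time and every reported diamond is verified.

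\textbf{Step 1 (the structural lemma).} Fix an $r$-heavy induced diamond on $a,b,c,d$. Here $ab$ is the diagonal, $c$ and $d$ are nonadjacent, and the triangle $abc$ lies in a clique $K$ with $|K|\ge r$. Let $s\in K\setminus\{a,b,c\}$.
\begin{itemize}
\item If $sd\in E$, then $\{a,s,c,d\}$ induces a diamond with diagonal $as$, because $cd\notin E$ and all other pairs are edges.
\item If $sd\notin E$, then $\{a,b,s,d\}$ induces a diamond with diagonal $ab$.
\end{itemize}
Hence every vertex of $K$ lies in an induced diamond that uses one of the edges $sa$ or $ab$ inside $K$. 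Equivalently, either $N(s)$ or $N(a)$ fails to be a disjoint union of cliques. Now sample $S\subseteq V$ by including each vertex independently with probability $\Theta(\log n/r)$. With high probability $|S|=\tilde O(n/r)$ and $|S\cap K|\ge 4$, so we may assume some $s\in S\cap K\setminus\{a,b,c\}$ exists. The task therefore reduces to finding an induced diamond that contains a vertex of $S$.

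\textbf{Step 2 (reducing to $\tilde O(n/r)$-sized instances).} Most of the structure lies inside dense neighbourhoods, so we exploit that no clique of a diamond exceeds size about $r$. For each $v$, the clique containing a triangle of any diamond has size at most $\rmax$. This is the source of the $n\cdot r$ term: for every vertex $v$ we spend $\tilde O(r)$ time. Concretely, we sort $N(v)\cap S$, and we greedily test whether $N(v)\cap S$ together with a few sampled neighbours forms a clique of size at most about $r$ ($S$-neighbourhoods have size $\tilde O(\deg(v)/r)$). If the test fails, it directly exposes an induced $P_3$ in $N(v)$, and hence a diamond.

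The remaining vertices are then randomly partitioned into $r$ groups $V_1,\dots,V_r$, each of size $\tilde O(n/r)$. For each group we form the $|S|\times|V_i|$ adjacency matrix and compute $A_S A_{V_i}^{\top}$ together with $A_S A_S^{\top}$. This costs $r\cdot \tilde O((n/r)^\omega)$ in total. It gives, for every $s\in S$ and every $x$, the number of common neighbours of $s$ and $x$ within each group. A diamond through $s$ of either type from Step 1 is then certified by a pair $(s,x)$: it has $x\notin N(s)$ and a common neighbourhood containing an edge, or $x\in N(s)$ and a common neighbourhood that is not a clique.

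\textbf{Step 3 (certifying the pair).} The $(n/r)^3$ term pays for brute force over triples inside $S$. It handles the case where the witnessing edge of the common neighbourhood lies inside $S$, which happens with good probability because $K$ has $\Omega(\log n)$ sampled vertices. For each candidate pair $(s,x)$ we compare the clique sizes predicted by the matrix counts with the counts a disjoint union of cliques would produce. Any discrepancy is turned into an explicit diamond in $\tilde O(r)$ time by scanning the offending group.

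\textbf{Main obstacle.} The step I expect to be hardest is Step 2/3. We must argue that the $r$ group-wise products of size $n/r$ really suffice to certify a diamond through $s$, without an $n\times n$ product. The core of that argument is that the clique containing the diamond's triangle is the largest one, of size $\rmax$. My first attempt would be to show that a common neighbourhood which is a union of cliques has size at most $\rmax$ times the number of groups it meets. A violation of this count then forces a non-clique common neighbourhood, and hence a diamond. Failure probabilities would be boosted by $O(\log n)$ repetitions.
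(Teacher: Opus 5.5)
There is a genuine gap. Your Step~1 is correct, since both case analyses check out. The trouble is that it reduces the problem to a task you then have no way to carry out within the budget: finding an induced diamond through one of $\tilde O(n/r)$ prescribed vertices. The available tools are too slow. Running \algVertexOne on each $s\in S$ costs $\tilde O(n^3/r)$. The paper's \degc-detection subroutine on $S$ costs $\MM{n,n,|S|}\ge n^2$. The target $r(n/r)^\omega+(n/r)^3+nr$, by contrast, is subquadratic for $n^{1/3}\ll r\ll n$. Moreover, your $s$ may be only a \degb vertex (e.g.\ a clique vertex in \Cref{fig:clique_extreme}).

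Step~2 does not close this. The groupwise products cost $r(n/r)^\omega$ in total only if all three dimensions are $\tilde O(n/r)$. In that case each product relates only $\tilde O(n/r)$ candidates $x$ to $\tilde O(n/r)$ middle vertices, for example $x\in V_i$ with common neighbours inside $V_i$. Obtaining, for every $s\in S$ and every $x$, the count in \emph{every} group costs $r$ times more and has output size $\tilde\Theta(n^2)$. Even exact values of $|N(s)\cap N(x)|$ do not tell you whether $N(s)\cap N(x)$ contains an edge or fails to be a clique, and those four-vertex conditions are what you need. The per-vertex test is also not meaningful: in a diamond-free graph $N(v)$ is a disjoint union of cliques, so $N(v)\cap S$ failing to be a clique exposes no induced $P_3$. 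Step~3 is unspecified, and your ``main obstacle'' paragraph concedes that the central argument is missing. The three running-time terms are attached to steps rather than derived from them.

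The missing idea is to use an \emph{edge} between two sampled vertices of the heavy clique instead of a single sampled vertex. Take $K$ maximal, so $|K|<2r$ when there is no $2r$-heavy diamond. For $s_1,s_2\in K\cap S$, the edge $e=s_1s_2$ is revealing: either $W(e)$ is not a clique, or $W(e)=K$, which contains the violating edge $ab$. A revealing edge is exposed in $O(|W(e)|^2+n)$ time by checking whether $W(e)$ is a clique and whether some $z\notin W(e)$ has two neighbours in it. The paper obtains $|W(e)|$ for all edges of $G[S]$ from one product $\MM{|S|,n,|S|}=\tilde O(r(n/r)^\omega)$ and keeps those with $|W(e)|\in[r,2r]$.

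The difficulty your plan never confronts is that there can be $\tilde\Theta((n/r)^2)$ candidates, so testing them all costs $\tilde O(n^3/r^2+n^2)$. The paper handles this in three moves. It deletes every candidate lying inside an already processed $W(e)$. It shows that the incidence graph between processed cliques and vertices has no $C_4$ or $C_6$ unless a diamond has already been reported. The bound $\mathrm{ex}(a,b,C_6)\le 3((ab)^{2/3}+a+b)$ then caps the number of processed edges at $O(n/r+n^2/r^3)$, which gives the $O((n/r)^3+n^2/r+nr)$ processing time.

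One caution if you follow this route. The filter $|W(e)|\le 2r$ is justified only when $s_1s_2$ is non-violating. If every pair of an $r$-clique has $3r$ private common neighbours, no edge of the graph passes it, so the violating case needs separate handling.
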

Notably, this is subquadratic in $n$ for $n^{1/3}\ll \rmax \ll n$.

Our algorithm for $r$-light induced diamonds is as follows:
\begin{restatable}[$r$-Light Diamonds]{theorem}{ThmLightD}\label{thm:r-light diamonds}
    There exists an algorithm that given an $n$-vertex graph $G$, detects an induced diamond in $G$ \whp, running in time $\To(\MM{n,n,n \cdot \sqrt{\rmax/t}})$.
\end{restatable}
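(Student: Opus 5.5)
Since no proof of any earlier result is available, I plan a self-contained argument for \Cref{thm:r-light diamonds}, built on the standard witness-sensitive sampling template of \cite{CEW24}. Write each induced diamond as $(a,b,c,d)$, where $bc$ is the shared edge and $ad$ is the missing edge. It is detected by one rectangular product: $A$ is the adjacency matrix restricted to a sampled set $S$ of candidate vertices in the role of $a$. The product $A\cdot A_{S}$ gives, for each pair $(b,c)$, the number of common neighbours of $b$ and $c$ that lie in $S$. We combine this with $A^2$-style information, and with the count of common neighbours inside $N(b)\cap N(c)$ that are non-adjacent to a chosen $a$. Concretely, for an edge $bc$ the pair has an induced diamond iff $N(b)\cap N(c)$ is not a clique. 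For a sampled set $S$ we compute $P=A_{\cdot S}$ (size $n\times |S|$) and, for each edge $bc$, compare the number of common neighbours in $S$ with the number of edges among them.

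Rather than edges among them, we use the identity that an induced diamond exists with $a\in S$ iff some $a\in S\cap N(b)\cap N(c)$ has $|N(a)\cap N(b)\cap N(c)| < |N(b)\cap N(c)|-1$. Checking this requires $a$-indexed triangle-type counts. The algorithm therefore computes $M_1 = A\cdot A$ restricted to the columns in $S$, i.e.\ $\MM{n,n,|S|}$, which gives $|N(a)\cap N(x)|$ for $a\in S$. Then, per edge $bc$, it uses a second product of the same shape to compare counts. Each product has shape $\MM{n,n,|S|}$, and the whole step is done in $\To(\MM{n,n,|S|})$. Hence it suffices to choose $|S| = n\sqrt{\rmax/t}$ and show that, \whp, some induced diamond has its $a$ or $d$ endpoint in $S$ in a way the products can certify.

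The key combinatorial step uses $r$-lightness: every triangle $abc$ of a diamond lies in no clique of size more than $\rmax$. Fix an edge $bc$ and let $Z=N(b)\cap N(c)$. The diamonds on the edge $bc$ correspond to non-edges inside $Z$. Since $G[Z]$ has clique number below $\rmax$ around diamond vertices, Turán-type counting shows that if $Z$ spans $t_{bc}$ non-edges, then at least about $\sqrt{t_{bc}/\rmax}$ vertices of $Z$ are incident to a non-edge. Summing over edges, about $\sum_{bc}\sqrt{t_{bc}/\rmax}$ pairs (edge, endpoint) witness diamonds. Uniform sampling alone loses here, because concentration on few edges only gives $\sqrt{t}$-type savings. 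This is where I would use the refinement of sampling vectors: iteratively partition vertices by their diamond participation degree into $O(\log n)$ buckets. Within each bucket, sample at rate $\sqrt{\rmax/t}$ times a bucket-specific scaling, so that the expected number of hits is $\Omega(1)$ while the total sample size stays $\To(n\sqrt{\rmax/t})$.

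The main obstacle is this refinement. The algorithm does not know the per-vertex diamond counts, so it must guess the scale $t$ (doubling over $O(\log n)$ values) and $\rmax$ (also guessed, or bounded via \Cref{thm:r-heavy diamonds}'s complementary case). It must then argue that some level of the refinement hierarchy simultaneously has a large enough witness mass and a small enough sampled set. I would first try a dyadic bucketing by $|Z|$ and by the number of non-edges per vertex, and apply the Cauchy–Schwarz bound above inside each bucket. A final union bound over the $\To(1)$ guesses gives the \whp guarantee.
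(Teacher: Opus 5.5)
\textbf{The detection primitive does not exist as sketched.} Your plan needs a primitive that decides in $\To(\MM{n,n,|S|})$ time whether some vertex of a sampled set $S$ is a degree-2 endpoint ($a$ or $d$) of an induced diamond, and you do not supply one. Your test needs $|N(a)\cap N(b)\cap N(c)|$ for every $a\in S$ and every edge $bc$ with $a\in N(b)\cap N(c)$. This counts $4$-cliques through the triangle $abc$. It is a three-way intersection, and the codegrees $|N(a)\cap N(x)|$ from your $n\times n\times|S|$ product do not determine it. The relevant triples (triangles through $a\in S$) alone can number $\Theta(|S|n^2)$, which for polynomial $|S|$ is polynomially more than $\MM{n,n,|S|}$. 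The paper notes that existing methods give no fast test of whether a small set meets a diamond, only of whether it contains a \degc vertex. Its colorful detector \algCD (\Cref{lem:colorful-det}) costs $\MM{n_1,n_2,n_3}$ over the three largest color classes. So shrinking a single role buys nothing ($\wc(1,1,1,p)=1$); two coordinates of the sampling vector must be small.

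\textbf{Sampling the degree-2 role uniformly cannot work anyway.} Take non-adjacent $a,d$ whose common neighbourhood is a perfect matching $b_1c_1,\dots,b_kc_k$, with no other edges. The induced diamonds are exactly $\{a,d,b_i,c_i\}$, so $t=k$ and $\rmax=3$, yet $a$ and $d$ are the only degree-2 vertices. A sample at rate $\sqrt{\rmax/t}$ contains one of them with probability $O(1/\sqrt t)$, and constant success probability forces rate $\Omega(1)$, i.e.\ time $\MM{n,n,n}$. Bucket-specific rates do not help, because the algorithm cannot compute per-vertex diamond counts, and guessing $t$ and $\rmax$ does not reveal bucket membership. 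In the paper the buckets exist only in the analysis, while \algSensitive blindly tries all vectors in $\FF^4$ over the known color classes.

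\textbf{The missing idea is the paper's dichotomy.} In the example, diamonds are visible only through the $2t$ \degc vertices. This is why the paper argues via \Cref{thm:helper}:
\begin{itemize}
\item either $x_3\ge\tg(\sqrt{t/r})$, and \algVertexC (\Cref{thm:find degc}) finds a \degc vertex in $\To(\MM{n,n,n/x_3})$ time;
\item or there is a vector $(p_1,p_2,p,p)$ with $\min(p_1,p_2)\ge\tg(\sqrt{r/t})$, hence $\wc=\To(\sqrt{r/t})$, which \algCD can use.
\end{itemize}

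\textbf{Your Tur\'an step points the wrong way.} A lower bound on the number $k$ of vertices of $Z$ touching a non-edge is trivial: $\binom{k}{2}\ge t_{bc}$ already gives $k\ge\sqrt{2t_{bc}}$, with no use of lightness. What lightness gives is the upper bound $k\le 2\sqrt{r\,t_{bc}}$, from Tur\'an applied to $G[M]$, which has no $r$-clique. So the non-edges are concentrated on few vertices. By the second moment method, sampling both degree-2 roles at rate $\sqrt{r/\ell}$ then hits a non-edge with constant probability (\Cref{claim:ee2}), after bucketing the diagonal pairs $(v_1,v_2)$ by their number $\ell$ of diamonds.
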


To obtain \Cref{thm:simpler}, we combine \Cref{thm:r-light diamonds,thm:r-heavy diamonds}.
Applying the standard approximation $\MM{n,n,np}\leq n^\omega\cdot p^\beta + n^2$ for $p\in[1/n,1]$, where $\beta\approx 0.5475$~\cite{huang1998fast,zwick2002all}, to \Cref{thm:r-light diamonds} yields the simpler expression $\tf(\runtimeT + n^2)$.
For values of $t$ where this bound exceeds $n^\omega$, we instead use the $O(n^\omega\cdot \log n)$ time algorithm of \cite{williams2014finding}.
\Cref{fig5:comp} illustrates the running times of our algorithms for various values of $t$ and $\rmax$.

We also show that for $\rmax =\tf(1)$, our algorithm from \cref{thm:r-light diamonds} is conditionally optimal, following the same construction as in~\cite{CEW24} for triangles.
\begin{restatable}[Sensitive Lower Bound]{theorem}{SensLB}
    \label{thm:sens-lb}
    For every $0 \leq t \leq n^2/5$, every randomized algorithm that finds an induced diamond in an $n$-vertex graph that has at least $t$ diamonds requires $\MM{n,n/\sqrt{t},n}/n^{o(1)}$ time under the Unbalanced Triangle Detection Hypothesis. 
    The lower bound holds even for graphs with no $4$-cliques. 
\end{restatable}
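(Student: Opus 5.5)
We reduce from the Unbalanced Triangle Detection Hypothesis, which states that detecting a triangle in a tripartite graph with parts of sizes $n$, $n/\sqrt{t}$, $n$ requires $\MM{n,n/\sqrt{t},n}/n^{o(1)}$ time. This follows the triangle construction of~\cite{CEW24}, adapted so that every triangle yields many induced diamonds and no spurious ones appear.

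\textbf{Construction.} Let $H$ be a tripartite instance with parts $A,B,C$, where $\abs{A}=\abs{C}=n'$ and $\abs{B}=n'/\sqrt{t}$. Build $G$ as follows.
\begin{itemize}
\item Keep $A$ and $C$, and keep the edges of $H$ between $A$ and $C$.
\item Replace each $b\in B$ by an independent set $I_b$ of $s=\Theta(\sqrt{t})$ copies $b_1,\dots,b_s$.
\item Each copy $b_i$ inherits all of $b$'s neighbors in $A\cup C$.
\end{itemize}
Then $\abs{V(G)}=2n'+\abs{B}s=\Theta(n')$. A triangle $(a,b,c)$ of $H$ produces, for every pair $b_i,b_j$, the vertex set $\{a,c,b_i,b_j\}$. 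Here $a$ and $c$ are adjacent, both are adjacent to $b_i$ and $b_j$, and $b_i,b_j$ are non-adjacent, so this set is an induced diamond. This gives $\binom{s}{2}\ge t$ induced diamonds, and choosing $s$ appropriately (using $t\le n^2/5$ to keep $s\le n'$ and the size linear) makes the count exactly at least $t$.

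\textbf{No false positives.} The main obstacle is showing that every induced diamond of $G$ certifies a triangle of $H$, while also ruling out $4$-cliques. Since $A$, $C$ and each $I_b$ are independent sets and there are no edges between $I_b$ and $I_{b'}$, $G$ is tripartite with parts $A$, $C$, and $\bigcup_b I_b$. Tripartiteness already rules out $4$-cliques. An induced diamond contains two triangles, and every triangle of $G$ uses exactly one vertex from each part. So any diamond contains a triangle $(a,b_i,c)$, which by construction projects to the triangle $(a,b,c)$ of $H$. Hence from any returned diamond we recover a triangle of $H$ in $O(1)$ time.

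\textbf{The other direction and the bound.} If $H$ has no triangle, then $G$ has no triangle and hence no diamonds. A witness-sensitive algorithm promised $t$ diamonds must behave correctly on yes-instances; on no-instances it simply finds nothing, and we verify any output in $O(1)$. Running it under the promise, with a time cutoff and then answering ``no'', decides triangle detection in $H$. Building $G$ takes $O(n'^2)$ time, which is dominated by $\MM{n,n/\sqrt{t},n}\ge n^2$. Since $n=\Theta(n')$, a faster diamond algorithm would give triangle detection in time $\MM{n',n'/\sqrt{t},n'}/n'^{\Omega(1)}$, contradicting the hypothesis. Finally, by the standard symmetry of rectangular matrix multiplication, $\MM{n,n/\sqrt t,n}$ equals $\MM{n,n,n/\sqrt t}$ up to constant factors, which matches the bound in the figure.
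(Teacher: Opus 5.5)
Your proposal is correct and is essentially the paper's proof. The paper also uses the \cite{CEW24} construction, which replaces each vertex of the small part $B$ (of size $n/p$) by $p=\sqrt{2t}+1$ pairwise non-adjacent copies, so that each triangle $(a,b,c)$ yields $\binom{p}{2}\ge t$ induced diamonds $(a,b_i,c,b_j)$; tripartiteness rules out $4$-cliques and makes the graph diamond-free in the no-case, and running the hypothetical faster algorithm with a time cutoff contradicts the UTD hypothesis. The only difference is cosmetic: you additionally verify the returned diamond and project it back to a triangle of $H$, whereas the paper simply reports a triangle whenever a diamond is found within the time budget, which is sound because no diamond exists in the no-case.
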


A consequence of this theorem, which we find interesting, is that the parameter $t$ separates the complexities of triangle detection and diamond detection even for small values of $t$, assuming $\omega > 2$. 
Previously, it was known that diamond detection and triangle detection belong to the same complexity classes when measured in terms of $n$ or the number of edges $m$~\cite{williams2014finding}. 
However, when considering the number of witnesses $t$,~\cite{CEW24} showed that witness-sensitive triangle detection can be done in $\To( \MM{n,n,n/t} )$ time for every $t \leq n$. In contrast, our theorem implies a higher lower bound of $\tilde\Omega( \MM{n,n,n/\sqrt{t}} )$ for diamond detection when $t \leq n$. 

\subsection{Technical Overview}
\label{sub:technical-overview}

There are essentially two existing approaches for diamond detection. We review them to determine their witness sensitivity:
\begin{enumerate}[label=(\alph*)]
	\item \textbf{Neighborhood structural analysis.} The approaches of~\cite{kloks2000finding,eisenbrand2004complexity} and the algorithm of\\~\cite[Theorem 5.1]{williams2014finding} exploit a structural property that is specific to induced diamonds: a graph is diamond-free if and only if every vertex neighborhood is a $P_3$-free graph (i.e., a disjoint union of cliques). This includes an $O(m^{3/2})$-time algorithm~\cite{eisenbrand2004complexity} and an $\tilde{O}(\min\{n^{\omega},m^{2\omega/(\omega+1)}\})$-time algorithm~\cite{williams2014finding}. Notably, both algorithms operate by searching for a \degc vertex, namely a vertex whose neighborhood contains an induced $P_3$. Moreover, unlike the second approach, these algorithms are deterministic.

    \item \textbf{Algebraic substructure counting.} The algorithm of~\cite[Theorem 1.1]{williams2014finding} operates by computing counts of smaller structures and relating them to the diamond count. It utilizes polynomial identity testing to distinguish between zero and non-zero counts. This approach is versatile, applying to all induced subgraphs on 4 vertices, and results in a randomized running time of $\tilde{O}(\min\{n^{\omega},m^{2\omega/(\omega+1)}\})$.
\end{enumerate}

\paragraph{Na\"{\i}ve attempt: Making known detection algorithms witness-sensitive.} To construct a witness-sensitive algorithm, a straightforward strategy would be to sample a small number of vertices uniformly at random and check if any of them participates in a diamond.
However, the existing algorithms lack an efficient procedure for checking if a small subset of vertices is incident to a diamond.
Instead, they rely on identifying a \degc vertex. Therefore, our first attempt modifies the sampling strategy: we sample a small set of vertices and specifically check if any of them participates in a diamond \emph{as a \degc vertex}.

This strategy is feasible because the algorithm of~\cite[Theorem 5.1]{williams2014finding} can be adapted to efficiently check if a subset of vertices $S \subseteq V(G)$ contains a \degc vertex.
This check runs in time $\MM{n,n,|S|}$. By combining this with a hitting set argument, we derive an algorithm sensitive to the number of \degc vertices, denoted as $x_3$.

\paragraph{Challenge 1:} The first issue is that the above algorithm is not truly witness-sensitive with respect to the total number of diamonds $t$. It is possible for $t$ to be very large while $x_3$ remains constant, resulting in no speedup when $t$ is large.
This phenomenon occurs when diamonds cluster heavily around the same diagonal edge. Consider a diagonal edge $(v_1,v_2)$ that forms triangles with a set of independent vertices $W = \{w_1, w_2, \ldots, w_d\}$.
Every pair of non-adjacent vertices in $W$ creates a diamond with $(v_1,v_2)$ as the diagonal edge. The number of diamonds is $\binom{d}{2}$, yet $x_3 = 2$ as only $v_1$ and $v_2$ are \degc vertices. Thus, an algorithm that is sensitive to $x_3$ may perform poorly on this example even though $t$ is large.

\paragraph{Challenge 2:}
The above challenge is further complicated by the fact that graphs with large $t$ but small $x_3$ may be structured very differently (see \Cref{fig:diamond_extremes}). In contrast to the case where $W$ is an independent set of vertices, consider a clique-like extreme: Aside from a single vertex (say, $w_1$), the remaining vertices $w_2, \ldots, w_d$ form a clique. Here, the density of edges in $W$ prevents the formation of diamonds among the clique members. The number of diamonds $d-1$ is still large, yet $x_3 = 2$.

\begin{figure}[H]
    \centering
    \tikzset{
        v/.style={circle, draw, fill=gray!10, minimum size=20pt, inner sep=0pt, font=\small},
        u_node/.style={v, fill=blue!20},
        w_node/.style={v, fill=red!20},
        clique_boundary/.style={circle, draw, thick, blue!60, dashed, minimum size=3.2cm}
    }
    \begin{subfigure}[b]{0.48\textwidth}
        \centering
        \begin{tikzpicture}
            \node[v] (v1) at (-2.5, 1) {$v_1$};
            \node[v] (v2) at (-2.5, -1) {$v_2$};
            \node[v] (v3) at (-4, 0) {$w_1$};

            \draw[] (v1) -- (v2);
            \draw[] (v1) -- (v3);
            \draw[] (v2) -- (v3);

            \node[clique_boundary] (clique) at (2.5,0) {};
            \node[blue!60, align=center] at (2.5,0) {Clique of size $d-1$};

            \foreach \angle in {120,130,...,250} {
                    \draw (v1) -- (clique.\angle);
                    \draw (v2) -- (clique.\angle);
                }
        \end{tikzpicture}
        \caption{The clique-like extreme. Vertices $v_1$ and $v_2$ are adjacent to every vertex in $W$,
            so each of them forms a triangle with any clique vertex.
            The vertex $w_1$
            completes the induced diamonds: every diamond is $d$-heavy.
            The only \degc vertices are $v_1$ and $v_2$, hence $x_3=2$.
        }
        \label{fig:clique_extreme}
    \end{subfigure}
    \hfill
    \begin{subfigure}[b]{0.48\textwidth}
        \centering
        \begin{tikzpicture}
            \node[v] (v1) at (-2, 1.5) {$v_1$};
            \node[v] (v2) at (-2, -1.5) {$v_2$};

            \draw (v1) -- (v2);

            \node[w_node] (w1) at (2, 2) {$w_1$};
            \node[w_node] (w2) at (2, 0.7) {$w_2$};
            \node[font=\Large] (dots) at (2, -0.2) {$\vdots$};
            \node[w_node] (wd) at (2, -1.5) {$w_d$};

            \draw[dashed, red!60, rounded corners] (1.4, 2.5) rectangle (2.6, -2) node[midway, right=1cm, red!60, align=center, font=\footnotesize] {I.S. of size $d$};

            \foreach \node in {w1, w2, wd} {
                    \draw (v1) -- (\node);
                    \draw (v2) -- (\node);
                }
        \end{tikzpicture}
        \caption{The independent-set extreme. The vertices $v_1,v_2$ are adjacent to all vertices of an independent set $W$ of size $d$. Every pair of vertices in $W$ together with $v_1,v_2$ forms an induced diamond, yielding $\binom{d}{2}$ diamonds, while $x_3=2$.}
        \label{fig:indep_extreme}
    \end{subfigure}
    \caption{Two extreme configurations illustrating the ``hard case'' where diamonds cluster around a single diagonal edge $(v_1,v_2)$. In both scenarios, the number of diamonds $t$ can be arbitrarily large, while the number $x_3$ of \degc vertices remains a constant $2$.
    }
    \label{fig:diamond_extremes}
\end{figure}

\paragraph{Our approach: pinpointing heaviness.} The spectrum between the above two extremes for large $t$ and small $x_3$ actually serves as a hint for us for how to look at the structure of graphs in order to quickly detect induced diamonds. As mentioned earlier, we define the notion of $r$-heaviness of a diamond, which says that three of its vertices are contained in a clique of size $r$. We define
$\rmax$ to be the size of the largest clique in the graph containing a diagonal edge of some diamond. Note that in the independent-set extreme example $\rmax=3$ and in the clique-like extreme example $\rmax=d+1$.

This notion is our first step toward overcoming Challenge 2: it enables a ``win-win'' approach that quickly detects an induced diamond either when there are $r$-heavy diamonds or when there are $r$-light diamonds. For Challenge 1, we note that (i) our algorithm for detecting $r$-heavy diamonds is independent of $x_3$, and (ii) our algorithm for detecting $r$-light diamonds uses the above algorithm for finding a \degc vertex when $x_3$ is large, while its main technical novelty of sampling vector refinement lies in the case where $x_3$ is small. This way, our final algorithm is faster as $t$ increases as desired, regardless of $x_3$.

\subsubsection{Detecting \texorpdfstring{$r$-}~Heavy Diamonds}
Our first algorithm detects an $r$-heavy induced diamond with a running time that improves as $r$ increases.
At a high level, we are looking for an edge that is contained in two different maximal cliques, as such an edge implies an induced diamond.
Intuitively, one might expect that the presence of an $r$-heavy diamond would make detection easier via sampling.
However, we must be careful: there may be only one such special edge in the graph, making it hard to sample directly.

Concretely, we use the aforementioned edge-based characterization of diamond-free graphs:
$G$ is diamond-free if and only if every edge of $G$ is contained
in exactly one maximal clique \cite[Lemma~7]{chiarelli2021strong}.
We call an edge that lies in two or more maximal cliques a \emph{violating edge}, since its existence implies an induced diamond.
For an edge $e=(u,v)$, we can test whether it is violating in $O(n^2)$ time by computing
$W(e)\triangleq (N(u)\cap N(v))\cup\set{u,v}$ and checking whether $W(e)$ is a clique: $e$ is violating if and only if $W(e)$ is not a clique.
This yields a simple $O(mn^2)$ time algorithm for induced diamond detection.
While this can be improved upon as mentioned earlier~\cite{eisenbrand2004complexity, williams2014finding}, this approach does not benefit from the fact that there is an $r$-heavy diamond.

We strengthen this characterization as follows to leverage the existence of an $r$-heavy diamond.
We know that there exists a violating edge that lies in a clique of size at least $r$, called an \emph{$r$-heavy violating edge}.
Is finding such an edge any easier?
We cannot sample $o(m)$ edges and expect to hit an $r$-heavy violating edge with good probability, since there may be only one such edge (see \Cref{fig:clique_extreme}).
Luckily, an $r$-heavy violating edge implies the existence of many other edges that are easier to find, and we can use them to find the violating edge itself, as follows.
Given an $r$-heavy violating edge $f$ that lies in a clique $C$ of size at least $r$, we refer to every edge in $C$ as an $r$-heavy \emph{revealing} edge.
While there might be only one $r$-heavy violating edge, it implies at least $\Omega(r^2)$ distinct $r$-heavy revealing edges (the edges in $C$).
Therefore, we can sample $o(m)$ edges and expect to hit an $r$-heavy revealing edge with good probability. We then use it to find the violating edge $f$.

Given this observation, our goal is to (1) sample a small set of vertices $S$ that induces a small set of edges $L$ containing an $r$-heavy revealing edge \whp, and (2) efficiently test whether an edge $e\in L$ is $r$-heavy revealing.
We provide a simple procedure to test whether an edge $e$ is revealing in $O(|W(e)|^2 + n)$ time: first check whether $W(e)$ is a clique (if not, $e$ is already violating); if it is, search for a neighbor $z\not\in W(e)$ connected to two vertices in $W(e)$.
To avoid processing large cliques, we only test edges $e$ such that $|W(e)|\in[r,2r)$.
That is, we run the same algorithm in a round-robin fashion with different values of $r$ until one of them detects an induced diamond.
Specifically, we run the algorithm for values of
$r=2^i$ for $i=0,1,2,\ldots,\log n$, where the $i$-th iteration samples $\tf((n/2^i)^2)$ edges, and tests only edges $e$ with $|W(e)|\in[2^i,2^{i+1})$.
In what follows, we analyze the $i$-th iteration for which $\rmax/2\leq 2^i \leq \rmax$.
This iteration is guaranteed to find an $r$-heavy revealing edge \whp
and its running time is the fastest among all iterations that find such an edge.

To obtain $L$, we select a random subset $S$ of $\Theta(n\log n/r)$ vertices.
Then, using fast matrix multiplication, we compute $|W(e)|$ for every $e$ with both endpoints in $S$ in $\MM{|S|,n,|S|}=\To(r\cdot (n/r)^\omega)$ time, where we add an edge $e$ to $L$ if $|W(e)|\in[r,2r)$. Clearly, the size of $L$ is at most the square of the size of $S$.
So far, we explained how to find $L$ in $\To(r\cdot (n/r)^\omega)$ time,
and how to process it in time $O(\abs{L}\cdot (r^2+n))=\To(n^2 + n^3/r^2)$.

We improve upon the above processing time via two modifications.
First, we modify the algorithm so that after processing an edge $e\in L$, it removes from $L$ all edges with both endpoints in $W(e)$.
Note that this never removes an edge $e'$ that is a revealing edge, for the following reason. Let $e'$ be such a removed edge and consider its set $W(e')$. If $W(e')=W(e)$, then $e'$ is revealing if and only if $e$ is revealing, so testing $e'$ in addition to $e$ is redundant. If $W(e')\neq W(e)$, then $e'$ is itself a violating edge, which means that $e$ is a revealing edge and the algorithm would terminate when processing $e$.

Second, we provide an improved analysis that shows that the algorithm stops after processing $\To(n/r+n^2/r^3)$ edges, which is better than the trivial bound of $|L|=\tf((n/r)^2)$.
To see why, let $e_i$ be the $i$-th edge that is processed, and consider the auxiliary bipartite graphs $F_i$ with parts $(W(e_1),W(e_2),\ldots,W(e_i))$ and $V(G)$, in which the clique $W(e_j)$ is connected to the vertex $v$ if and only if $v\in W(e_j)$.
We show that if $F_i$ contains a cycle of length at most $6$, then the algorithm stops while processing the first $i$ edges.
To illustrate this, suppose that $F_{i}$ contains a $4$-cycle while $F_{i-1}$ does not.
Then $W(e_i)$ shares two vertices $(x,y)$ with some previous clique $W(e_j)$, meaning that $(x,y)$ is a violating edge and every edge in $W(e_i)$ is $r$-revealing. Therefore, when processing $e_i$, the algorithm determines its revealing edge and stops, without processing any further edge.
For $6$-cycles, the argument is more complex but similar.
Classical bounds on edge count in unbalanced bipartite graphs with no $4$-cycle or $6$-cycle \cite{naor2005note,Neuwirth2001Girth8} imply that for $i\geq 64(n/r + n^2/r^3)$, the graph $F_i$ contains a cycle of length at most $6$.\footnote{Note that showing that either $F_i$ has no short cycle or we encounter a diamond does not seem to generalize to larger cycles (which would yield a better bound), as an $8$-cycle does not necessarily imply a diamond.} 
Therefore, the algorithm processes at most $O(n/r + n^2/r^3)$ edges.
To summarize, we find the set $L$ in $\To(r\cdot (n/r)^\omega)$ time and process only $O(n/r + n^2/r^3)$ edges from $L$, each in $O(r^2 + n)$ time, for a total time of $\tf(r\cdot (n/r)^\omega + n^3/r^3 +nr)$.

\subsubsection{Detecting \texorpdfstring{$r$-}~Light Diamonds}
Our second algorithm targets the regime where many induced diamonds are $r$-light.
Our starting point is the sampling framework of
\cite{tvetek2022approximate,CEW24,CEW25} that takes a $k$-vertex subgraph detection algorithm and converts it into a witness-sensitive algorithm.
We describe how it works for induced diamond detection, the algorithm needed to use it, why this is hard, and finally our main technical novelty for obtaining the required algorithm.

\subsubsection*{The Sampling Framework}
In the framework, we assign a random coloring $\phi: V(G)\to [4]$
and search for \emph{colorful} patterns, i.e., patterns whose $4$ vertices
have $4$ distinct colors.
This consists of two steps:
(1) sampling $\tf(1)$ induced subgraphs, and (2) searching for colorful induced diamonds in each one.

The sampling step uses the color classes to sample vertices with different probabilities.
Specifically, each sampled subgraph $H$ is obtained by sampling vertices of color $i$ with probability $p_i$ for $i\in[4]$.
We refer to the sampling probabilities as a \emph{sampling vector} denoted by $P=(p_1,p_2,p_3,p_4)$, and we denote the sampled subgraph by $H\gets G[P]$.
To obtain our collection of induced subgraphs, we consider all sampling vectors in $\FF=\{1,1/2,1/4,\ldots,1/n\}^4$ and sample one subgraph for each $P\in\FF$.
Note that the sampling vectors are independent of the underlying pattern we want to find.
The key guarantee of the framework is that if $G$ contains $t$ induced diamonds, then there exists a \emph{good} $P\in\FF$ such that $H\gets G[P]$ contains a colorful induced diamond with $\tg(1)$ probability and $w(P)=\tf(1/t)$, where $w(P)=\prod_{i=1}^4 p_i$ is the weight of $P$.

\newcommand{\Pbal}{P_{\mathrm{balanced}}}
\newcommand{\Pnbal}{P_{\mathrm{unbalanced}}}
The second step is to search for a colorful induced diamond in $H$.
Assume that $P$ is the promised good sampling vector.
To illustrate the potential speedup and challenges, consider two extreme cases, depending on whether $P$ is the balanced vector $\Pbal=(p,p,p,p)$ for $p=1/t^{1/4}$, or an unbalanced vector, such as $\Pnbal=(1,1,1,1/t)$.
For $H\gets G[\Pbal]$, the graph $H$ has $O(n/t^{1/4})$ vertices, so faster colorful induced diamond detection on $H$ sounds plausible.
For the unbalanced case, things are more complicated.
For $H\gets G[\Pnbal]$, the graph $H$ has $\Theta(n)$ vertices and possibly $\Theta(|E(G)|)$ edges, so $H$ is not sparser than $G$.

\paragraph{Colorful Induced Diamond Detection.}
Our task is thus to find an algorithm for colorful induced diamond detection that runs faster on unbalanced graphs than on $G$ itself.

Note that no colorful induced diamond detection algorithm was previously known and,
moreover, detecting an \emph{ordered-colorful} induced diamond (with predetermined colors for the vertices of its missing edge) is as hard as
$4$-clique detection~\cite{Marx2010}.

We observe, perhaps somewhat surprisingly, that colorful induced diamond detection is still possible. However, it is not fast on unbalanced graphs. To this end, we adapt the algorithm of~\cite{williams2014finding}, which detects induced diamonds, to detect \emph{colorful} induced diamonds (and, more generally, this applies to any nontrivial four-vertex induced subgraph that is neither a clique nor an independent set).
Also note that combined with \cite[Theorem 1.1]{DellLM22}, this yields a
$\To(n^\omega/\eps^{O(1)})$ time algorithm for $(1\pm\eps)$ approximate
counting of any nontrivial four-vertex induced subgraph, which is the first application of \cite{DellLM22} to induced subgraph counting.

\paragraph{The Challenge.}
The above colorful induced diamond detection algorithm is not sufficient for obtaining a fast witness-sensitive induced diamond detection algorithm, and in fact it does not yield any speedup. The issue is that the algorithm is not fast on unbalanced graphs.
To see why, we analyze its running time on each sampled graph $H\gets G[P]$ and show that it is $\tilde{\Theta}(\MM{n,n,n\cdot \frac{w(P)}{\min(P)}})$,
where $\frac{w(P)}{\min(P)}$ is the ratio between the weight of $P$ and its smallest coordinate.
Therefore, running the algorithm on $H\gets G[\Pnbal]$ takes time $\tilde O(\MM{n,n,n})=\tilde O(n^\omega)$, which is not faster than running~\cite{williams2014finding} directly on $G$.

This limitation is not merely an artifact of our analysis but is inherent to
the framework itself, as illustrated by the construction in
\Cref{fig:clique_extreme}, where every induced diamond contains the same triplet $(v_1,v_2,w_1)$.
If these vertices receive distinct colors (say $v_1,v_2,w_1$ have colors $1,2,3$ respectively), then for any sampled graph $H\gets G[P]$ with $P=(p_1,p_2,p_3,p_4)$, the probability that all three appear is $p_1p_2p_3$.
Unless $p_1p_2p_3=\Omega(1)$, the sampled graph $H$ is unlikely to contain any colorful induced diamond. However, if $p_1p_2p_3=\Omega(1)$, then $w(P)/\min(P)\ge p_1p_2p_3=\Omega(1)$, so no speedup is possible.

\newcommand{\rt}{\rmax/t}
\subsubsection*{Main New Technique: A Refined Analysis that Yields a Speedup}
We show that a refined analysis of the sampling framework \emph{does} yield a
speedup when many induced diamonds are $r$-light.
Previously, we argued that in the worst case, every induced diamond might contain the same triplet of vertices, e.g., $(v_1,v_2,w_1)$ in \Cref{fig:clique_extreme}, and therefore the only good sampling vector was $\Pnbal=(1,1,1,1/t)$.
In this case, however, the three repeated vertices lie in a large clique, so there exists an $r$-heavy diamond.
Specifically, let $W=N(v_1)\cap N(v_2)$ be the common neighborhood of $v_1,v_2$, and let $W'=W\setminus\set{w_1}$. If $(v_1,v_2,w_1)$ are the three vertices in all induced diamonds, then $W'$ is a clique, and every diamond is $|W'|$ heavy.
In this case, we use the $r$-heavy diamond detection algorithm.
On the other hand, for sufficiently small $r$ that is a function of $t$, the $r$-light diamond detection algorithm is faster. We do not know for which $r$ to switch between the two algorithms\footnote{
    Since rectangular matrix multiplication has no closed-form expression, we cannot find, for every $t$, the value of $r$ for which the two running times are equal.}
but we do not need this information; rather, we simply run both algorithms in parallel and stop when one of them detects an induced diamond.
To illustrate when the $r$-light diamond algorithm is faster, assume
that there is no $r$-heavy diamond for $r\leq |W|^{1-\eps}$ for some constant $\eps>0$.
Then $W'$ is not a clique, and by Tur\'an's theorem there are many non-edges in $G[W]$, specifically at least $\Omega(|W|^2/r)=\Omega(|W|^{1+\eps})$ non-edges, so there must be at least $\Omega(|W|^{\eps})$ vertices in $W$ that are endpoints of non-edges, proving that no vertex in $W$ is ``too'' important.
Therefore, we can hope to find a good sampling vector that is more balanced than $\Pnbal$. Specifically, we show that $P=(1,1,\sqrt{r/t},\sqrt{r/t})$ hits an induced diamond with $\tg(1)$ probability.
This illustrates how having no $r$-heavy diamonds simplifies the family of graphs we have to deal with. Yet, we still need to handle general graphs with no $r$-heavy diamonds, and we cannot assume that there are only two \degc vertices as in the above example.
To get our speedup, we show that either $x_3= \tg(\sqrt{t/\rmax})$, i.e., there are many \degc vertices, or that $P=(1,1,\sqrt{\rt},\sqrt{\rt})$ hits a colorful induced diamond with $\tg(1)$ probability, where in both cases we obtain a running time of $\tilde O(\MM{n,n,n\cdot \sqrt{\rmax/t}})$.
To prove that $P=(1,1,\sqrt{r/t},\sqrt{r/t})$ hits a colorful induced diamond with probability $\tg(1)$, our main technical contribution deviates from the black-box use of the sampling framework.
Specifically, we show that we can combine the framework (which is oblivious to $G$) with
structural properties implied by $r$-lightness to obtain a faster algorithm.
Our proof gradually refines the sampling vector from $P^{(1)}=(1/t)$ to $P^{(4)}=(p_1,p_2,p_3,p_4)$ by extending its dimension (i.e., the number of coordinates) in a very subtle manner.

\paragraph{Refined Analysis.}
We consider the graph $G$ with a random coloring $\phi: V(G)\to[4]$, and the set of ($r$-light) colorful induced diamonds $\DD_\phi$, containing $t$ elements.
We construct a sequence of hypergraphs, starting from 1-partite and ending with
4-partite:
\begin{align*}
    \GG^{(1)} & =((V_1\times V_2\times V_3\times V_4),\DD_{\phi})\;,      \\
    \GG^{(2)} & =((V_1 \times V_2) \sqcup (V_3 \times V_4),\DD_{\phi})\;, \\
    \GG^{(3)} & =((V_1 \times V_2) \sqcup V_3 \sqcup V_4,\DD_{\phi})\;,   \\
    \GG^{(4)} & =(V_1 \sqcup V_2 \sqcup V_3 \sqcup V_4,\DD_{\phi})\;.
\end{align*}
Each $\GG^{(i)}$ is $i$-partite and uses the same hyperedge set $\DD_\phi$,
where each hyperedge corresponds to a colorful induced diamond.
In $\GG^{(1)}$, each hyperedge is a single element from the product space
$V_1\times V_2\times V_3\times V_4$, so sampling each element with probability
$q$ hits a hyperedge with probability $1-(1-q)^t$.
As previously mentioned, \cite{CEW24,CEW25} implies that there exists a
sampling vector $P^{(4)}=(p_1,p_2,p_3,p_4)$ with $w(P^{(4)})=\tf(1/t)$ that
hits a hyperedge with probability $\tg(1)$.
To refine the guarantee, we define a sequence of sampling vectors as in
\Cref{fig-intro:refine-sequence}:
\begin{figure}[H]
    \centering
    \begin{tikzpicture}
        \node[draw, inner sep=4pt, rounded corners=3pt] (P1) {$P^{(1)}=\brak{\frac{1}{t}}$};

        \node[draw, inner sep=4pt, rounded corners=3pt, right=2.1cm of P1] (P2) {$P^{(2)}=(q_1,q_2)$};
        \node[above=2pt of P2] {\small $q_1\cdot q_2=\lambda/t$};

        \node[draw, inner sep=4pt, rounded corners=3pt, right=2.1cm of P2] (P3) {$P^{(3)}=(q_1,p,p)$};
        \node[above=2pt of P3] {\small $p=\sqrt{rq_2}$};

        \node[draw, inner sep=4pt, rounded corners=3pt, right=2.1cm of P3] (P4) {$P^{(4)}=(p_1,p_2,p,p)$};
        \node[above=2pt of P4] {\small $p_1\cdot p_2= \lambda\cdot q_1$};

        \node[draw, inner sep=4pt, rounded corners=3pt, below=1cm of P4, red] (P5) {$\hat{P}^{(4)}=(1,1,1,1/t)$};

        \draw[->, thick] (P1) -- node[above] {\footnotesize $1^{\text{st}}$ refinement} (P2);
        \draw[->, thick] (P2) -- node[above] {\footnotesize $2^{\text{nd}}$ refinement} (P3);
        \draw[->, thick] (P3) -- node[above] {\footnotesize $3^{\text{rd}}$ refinement} (P4);
        \draw[->, thick, red, dashed, bend right=10] (P1.south) to node[pos=0.6, xshift=8em, yshift=.8em, text width=6cm] {\small {\color{black!80!white} Using \cite{CEW25} as a black box}} (P5.west);
    \end{tikzpicture}
    \caption{Refinement sequence from $P^{(1)}$ to $P^{(4)}$. We set $\lambda=48\log n$.}
    \label{fig-intro:refine-sequence}
\end{figure}
The first vector is $P^{(1)}=(1/t)$, which samples each element in
$\GG^{(1)}$ with probability $1/t$.
The first refinement is straightforward: view $\GG^{(2)}$ as a bipartite graph
with vertex sets $A=(V_1\times V_2)$ and $B=(V_3\times V_4)$ and $t$ edges.
For $0\le i\le \log t$, let $A_i$ be the set of vertices in $A$ with degree in
$[2^i,2^{i+1})$.
This lets us control degrees on the $A$-side.
We say that $A_i$ is \emph{heavy} if the number of edges incident to it is at
least $t/\log t$. At least one set must be heavy; otherwise, the total number of edges is less than $t$.
If there exists a heavy $A_i$ with $2^i\geq \sqrt{t/r}$, then $x_3
    \geq \sqrt{t/r}$ and we are in the easy case.
Otherwise, all heavy $A_i$ have $2^i<\sqrt{t/r}$. We fix one such $i$ and
set $P^{(2)}=(q_1,q_2)$ with $q_1=\lambda\cdot 2^i/t$ and
$q_2=4/2^i$. The choice of $i$ implies that $q_2\leq \sqrt{t/r}/t=1/\sqrt{rt}$.
To refine $P^{(3)}$ into $P^{(4)}=(p_1,p_2,p,p)$, we use a general refinement
theorem that does not use the additional
structure of $r$-light diamonds.
Refining $P^{(2)}$ into $P^{(3)}$ is the key step that uses the promise that there are no $r$-heavy diamonds, as previously explained.
We generalize the intuition from the running example with only two \degc vertices $v_1,v_2$ by refining $P^{(2)}=(q_1,q_2)$ into $P^{(3)}=(q_1,p,p)$, with $p=\sqrt{r\cdot q_2}$, where we note that because $q_2\leq 1/\sqrt{rt}$, we have $p\leq \sqrt{r/t}$.
This keeps two coordinates small, instead of just one small one, thereby decreasing the quantity $w(P^{(3)})/\min(P^{(3)})$ that governs the running time of the algorithm.
This comes at the cost of increasing the weight of the sampling vector by a factor of $r$, i.e., $w(P^{(3)})=r\cdot w(P^{(2)})$.

\subsection{Additional Results}
Our novel refinement framework yields two additional results.

First, we obtain an improved running time for detecting induced $4$-cycles when
many are $r$-light, i.e., no two vertices of the $4$-cycle lie in an $r$-clique.
The algorithm runs in time $\To(\MM{n,n,n\sqrt{r/t_r}})$,
where $t_r$ is the number of $r$-light induced $4$-cycles.
This requires different ideas, as induced-$C_4$-free graphs have different
structure than diamond-free graphs.
However, exploiting that every vertex in a $4$-cycle has the same role, we
achieve the same running time.

Second, we obtain a faster witness-sensitive algorithm for \dsum detection
when the number of solutions $t$ is at most $n$.
We reach $\tf(n^{2}/t^{1/3})$ time for \dsum detection and approximate
counting. 
For \dsum detection, there are three important regimes: the sparse regime with
$t\le n$, the medium regime with $n\le t\le n^3$, and the dense regime with
$t\ge n^3$.
Our algorithm is the first non-trivial witness-sensitive algorithm for \dsum in the sparse regime.
In the medium regime, it improves upon the previous best algorithm that takes $\tf(\min(n^2,n^{2}\cdot \sqrt{n/t}))$ time.
In the dense regime, the na\"{\i}ve algorithm that samples four uniform numbers and checks if they sum to zero is the fastest, taking $\tf(n^4/t)$ time, which is also sublinear when $t\gg n^3$.
To get the $\tf(n^2/t^{1/3})$-time algorithm for \dsum, we use the property that any three numbers participate in at most one solution, which allows us to refine the sampling vector similarly to the
induced-diamond and cycle cases, rather than using a black-box refinement
theorem.

Finally, we employ the sampling framework in combination with the structural analysis approach to obtain a combinatorial witness-sensitive algorithm for diamond detection:
\begin{restatable}{theorem}{ThmComb}\label{thm:comb}
    Let $G$ be an $n$-vertex graph with at least $t$ diamonds, where $t \leq n^2$. 
    There is a combinatorial algorithm that \whp runs in $\tf(n^3/\sqrt{t})$ time and finds an induced diamond in $G$.
\end{restatable}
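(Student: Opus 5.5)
We prove \Cref{thm:comb} by combining the structural characterization (diamond-free iff every vertex neighborhood is $P_3$-free) with a simple sampling argument, case-splitting on where the $t$ diamonds concentrate. For an induced diamond with diagonal edge $(v_1,v_2)$ and non-adjacent tips $w_1,w_2$, call $v_1,v_2$ its \degc vertices. In $G[N(v_1)]$ the triple $w_1-v_2-w_2$ is an induced $P_3$ centered at $v_2$.

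\textbf{Combinatorial primitive.} Given a vertex $v$, we decide in $O(n^2)$ time whether $v$ is a \degc vertex, and output a diamond if so. We compute the connected components of $G[N(v)]$ in $O(n^2)$ time using the adjacency matrix. For each component $K$ we check that $G[K]$ is a clique by counting its edges, which costs $O(|K|^2)$ per component and $O(n^2)$ in total. If some component is not a clique, it contains an induced $P_3$, which we extract by BFS: take two non-adjacent vertices at distance $2$ and their common neighbor. Together with $v$ this gives an induced diamond.

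\textbf{Case split by degree of \degc vertices.} For a vertex $v$, let $d_3(v)$ be the number of diamonds in which $v$ is a \degc vertex. Then $\sum_v d_3(v)=2t$. We also have $d_3(v)\le \binom{n}{2}\cdot n$ trivially, but we need a better bound. Fix the diagonal edge $(v_1,v_2)$: the number of diamonds with this diagonal is at most $\binom{n}{2}$. Let $x_3$ be the number of \degc vertices.
\begin{itemize}
\item If $x_3\ge \sqrt{t}/n$, we sample $\tilde O(n/\sqrt t)$ uniform vertices and run the primitive on each. This takes $\tilde O(n^3/\sqrt t)$ time and hits a \degc vertex \whp.
\item Otherwise $x_3<\sqrt t/n$. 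The number of diagonal edges is then at most $x_3^2< t/n^2$. Since each diagonal edge carries at most $n^2/2$ diamonds, we get $t\le x_3^2 n^2/2$, i.e.\ $x_3\ge \sqrt{2t}/n$. This contradicts the assumption, so the first case always applies with threshold $\sqrt{2t}/n$.
\end{itemize}
So the trivial bound alone already gives $x_3=\Omega(\sqrt t/n)$, and the algorithm samples $\Theta((n/\sqrt t)\log n)$ vertices, running the primitive on each. The total time is $\tilde O(n^3/\sqrt t)$. The algorithm is Las Vegas in output correctness: it returns a diamond only when one is verified.

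\textbf{Main obstacle.} The only delicate point is the counting bound $t\le \binom{x_3}{2}\binom{n}{2}$. The two \degc vertices of any diamond are both in the set of \degc vertices and are adjacent. Given the diagonal, the diamond is determined by an unordered pair of tips. This bound is exactly tight for the independent-set extreme of \Cref{fig:indep_extreme}, which is why the rate is $n^3/\sqrt t$ rather than better. The hypothesis $t\le n^2$ ensures $n/\sqrt t\ge 1$, so the sample size is meaningful. We also must ensure the primitive stays combinatorial. The component and clique checks above are purely combinatorial, so no fast matrix multiplication is used. If $t$ is unknown, we run the procedure with guesses $t=n^2,n^2/2,\dots$ in order of increasing cost, which loses only a logarithmic factor.
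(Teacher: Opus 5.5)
There is a genuine gap, and it is in the sampling step. A uniform sample of $s$ vertices hits a fixed set of $x_3$ vertices with probability about $1-(1-x_3/n)^s$, so you need $s=\Theta((n/x_3)\log n)$ samples, not $\Theta((n/\sqrt t)\log n)$. Your counting bound $x_3\ge \sqrt{2t}/n$ is correct, but it only gives $s=\tilde O(n^2/\sqrt t)$ calls to an $O(n^2)$-time primitive. That is $\tilde O(n^4/\sqrt t)$ time, which for $t\le n^2$ is never better than checking all $n$ vertices in $O(n^3)$.

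This is not slack in the analysis; your own ``tight'' example defeats the approach. Take the independent-set extreme of \Cref{fig:indep_extreme} with $|W|=n/3$, and add a disjoint $K_{n/3,n/3}$, which is triangle-free and hence diamond-free, so that the graph is dense. Then $t=\Theta(n^2)$, so the target is $\tilde O(n^2)$. But $x_3=2$, so hitting $v_1$ or $v_2$ takes $\Theta(n)$ samples, and each sample costs $\Theta(n^2)$ under your adjacency-matrix implementation. Stopping after $\tilde O(n/\sqrt t)=\tilde O(1)$ samples instead finds a diamond only with probability $\tilde O(1/n)$.

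Two ideas are missing.

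\textbf{First idea.} The sampler must target all $x$ vertices that participate in some diamond, including \degb ones. This needs a primitive that also detects \degb participation. When every component of $G[N(v)]$ is a clique, search for a vertex $z\notin N(v)\cup\{v\}$ with two neighbors in the same component; this is \algVertexOne in \Cref{thm:is v in D}. Sampling with it gives $\tilde O(n^3/x)$ time (\Cref{thm:vertex in diamond}). In the example above, $x=\Theta(n)$, so this already gives $\tilde O(n^2)$.

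\textbf{Second idea.} The first algorithm alone is still not enough. All four vertices of every diamond lie among the $x$ participating vertices, so you only get $x=\Omega(t^{1/4})$, and hence only $\tilde O(n^3/t^{1/4})$. For instance, $G(x,1/2)$ plus dense diamond-free padding has $t=\Theta(x^4)$.

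The paper pairs the first algorithm with a complementary $\tilde O(n^3\cdot x/t)$ algorithm (\Cref{thm:x-light}). It uses color coding plus the subsampling framework, running \algCD with naive cubic matrix multiplication on each sample. By \Cref{lemma:discovery}, some sampling vector with constant success probability has $w(P)=\tilde O(1/t)$. By \Cref{claim:min xt}, any such vector has $\min(P)=\tilde{\Omega}(1/x)$. Together these give $\wc(P)=\tilde O(x/t)$.

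Running the two algorithms in round robin gives $\tilde O(\min(n^3/x,\, n^3x/t))\le \tilde O(n^3/\sqrt t)$, which is the claimed bound.
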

We supplement this with a conditional lower bound against combinatorial algorithms, showing the result is tight up to polylogarithmic factors. 
    \begin{restatable}[Combinatorial Diamond Detection Lower Bound]{theorem}{CombLB}\label{thm:comb-lb}
        For every $0 \leq t \leq n^2/5$, every combinatorial randomized algorithm that finds an induced diamond in an $n$-vertex graph that has at least $t$ diamonds requires $n^{3-o(1)}/\sqrt{t}$ time under the Combinatorial Boolean Matrix Multiplication Conjecture.
    \end{restatable}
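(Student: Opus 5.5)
The plan is a reduction from combinatorial Boolean matrix multiplication, or equivalently combinatorial triangle detection, following the construction of~\cite{CEW24} as in \Cref{thm:sens-lb}. Under the Combinatorial BMM Conjecture, triangle detection in an $N$-vertex tripartite graph has no combinatorial $N^{3-\eps}$-time algorithm. I will also use the unbalanced variant: detecting a triangle in a tripartite graph with parts of sizes $N_1,N_2,N_3$ needs $(N_1N_2N_3)^{1-o(1)}$ combinatorial time. This follows from the balanced case by splitting the parts into blocks, so I will take it as a standard consequence.

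\textbf{Reduction.} Set $s=\lfloor\sqrt{t}\rfloor$ and take an unbalanced tripartite instance $(A,B,C)$ with $|A|=|C|=n/2$ and $|B|=n/(2s)$, assuming the instance is triangle-free except possibly for a planted triangle. Build $G$ as follows.
\begin{itemize}
\item Replace each $b\in B$ by an independent set $I_b$ of $s$ copies, each inheriting $b$'s edges to $A$ and $C$. This gives about $n/2$ vertices on that side.
\item Keep $A$ and $C$ as they are.
\item Make $A$, $C$ and the blown-up $B$ independent sets internally.
\end{itemize}
With these choices the graph stays triangle-free except for triangles coming from triangles of the original instance.

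\textbf{Correctness.} A triangle $(a,b,c)$ gives $s$ triangles $(a,b_i,c)$. Any two of them share the edge $(a,c)$, and $b_i,b_j$ are non-adjacent, so each pair induces a diamond. That yields $\binom{s}{2}=\Theta(t)$ induced diamonds. To reach exactly $t$ diamonds, scale $s$ by a constant, or add a separate gadget with exactly the required number of diamonds. The gadget must be a known component that cannot help solve the instance, for example $K_{2}$ joined to an independent set; the bound $t\le n^2/5$ ensures it fits.

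Conversely, every induced diamond contains a triangle. Since $G$ is tripartite with independent parts, every triangle of $G$ projects to a triangle of the original instance. So $G$ has an induced diamond, apart from the known gadget, if and only if the instance has a triangle. The same argument shows $G$ has no $K_4$, because $G$ is tripartite.

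\textbf{Running time.} Building $G$ takes $O(n^2)$ time. A combinatorial algorithm running in $n^{3-\eps}/\sqrt{t}$ time would detect the triangle, or find a diamond outside the gadget and read off the triangle, in $(N_1N_2N_3)^{1-\eps'}$ time, since $N_1N_2N_3=\Theta(n^3/s)=\Theta(n^3/\sqrt t)$. The reduction overhead is $O(n^2)$, which is at most $n^3/\sqrt{t}$ because $t\le n^2$. This contradicts the conjecture.

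\textbf{Main obstacle.} The delicate step is the promise that $G$ has at least $t$ diamonds in both cases. The algorithm is only guaranteed to work on inputs with at least $t$ diamonds, so the NO instances must also contain $t$ diamonds, and these must not reveal anything about the instance. I would first try to supply them with a disjoint gadget of $O(\sqrt t)\le O(n)$ vertices, say a $K_2$ joined to an independent set of size $\Theta(\sqrt t)$. Its diamonds must be distinguishable, so the reduction checks any returned diamond and, if it lies in the gadget, reruns with that part removed. This fails if the algorithm always returns gadget diamonds. I would resolve that as in~\cite{CEW24}, by randomly permuting vertex labels and boosting, so that when the instance's $\Theta(t)$ diamonds exist, a constant fraction of outputs hit them.
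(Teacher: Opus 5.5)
Your construction is the paper's. You blow each $b\in B$ up into $s\approx\sqrt{2t}+1$ pairwise non-adjacent copies, so every triangle $(a,b,c)$ yields $\binom{s}{2}\ge t$ induced diamonds with diagonal $(a,c)$. Tripartiteness gives diamond-freeness otherwise. The combinatorial unbalanced triangle bound comes from splitting $B$ into blocks, exactly as in \cref{lem:comb-utd}.

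The genuine gap is your ``main obstacle'' paragraph, which misidentifies the difficulty and then resolves it with an argument that does not hold. The no-instances do \emph{not} need to satisfy the promise. As in the proof of \Cref{thm:sens-lb}, which the paper reuses verbatim, you do the following:
\begin{itemize}
\item run the hypothesized algorithm on $G'$ for its promised time bound $T(n,t)$;
\item check whatever quadruple it returns in $O(1)$ time;
\item answer ``triangle'' iff that quadruple is a genuine induced diamond.
\end{itemize}
If the instance is triangle-free, $G'$ is diamond-free, so no off-promise behaviour of the algorithm can produce a valid diamond. If the instance has a triangle, the promise holds, so a diamond is returned within $T(n,t)$ steps with constant probability, which you boost by repetition.

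The gadget route, by contrast, is broken as stated. The hypothesized algorithm only promises to output \emph{some} induced diamond, with no guarantee on which one. Randomly relabeling vertices only averages over isomorphic copies of the same input. The gadget and the blown-up instance are not isomorphic, so a correct algorithm may still return a gadget diamond every time. For instance, it could first look for a component of the form $K_2$ joined to an independent set. So nothing forces a constant fraction of outputs onto the instance's diamonds.

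Your fallback of rerunning with the gadget removed brings you back to a diamond-free no-instance, where you need the time-out argument anyway; at that point the gadget is superfluous. Drop it. Also drop the planted-triangle assumption: the conjecture does not supply such instances, and you never need it, since extra triangles only add diamonds. The remainder of your argument, including the comparison $n^{3-\eps}/\sqrt{t}\le (n^3/\sqrt{t})^{1-\eps/3}$, then goes through as in the paper.
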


\subsection{Related Work}
When $H$ is the $k$-clique, the problem is solvable in $O(n^{\omega(a,b,c)})$ time for any partition $a+b+c=k$, where $\omega(a,b,c)$ denotes the exponent of multiplying an $n^a \times n^b$ matrix by an $n^b \times n^c$ matrix~\cite{itai1977finding,nesetril,eisenbrand2004complexity}.
In the sparse regime, $k$-clique detection can be solved in $O(m \cdot \alpha^{k-2})$ time, where $\alpha$ denotes the arboricity of $G$~\cite{ChibaN85}.
For other patterns such as paths and cycles, the color-coding technique yields $O((k!) \cdot n^\omega)$ time~\cite{alon1995color}.

We focus on \emph{induced} subgraphs: given $G$ and $\pat$, determine if $G$ contains $\pat$ as an induced subgraph.
Generally, detecting an $h$-vertex induced pattern reduces to $h$-clique detection on a graph with $hn$ vertices and $O(h^2 m)$ edges~\cite{nesetril}.
Significant attention has been devoted to $4$-vertex patterns~\cite{kloks2000finding,eisenbrand2004complexity,williams2014finding,dalirrooyfard2019graph,Dalirrooyfard2022InducedCA,abboud2026truly}.
Our primary interest lies in \emph{induced diamond} detection.
Kloks, Kratsch, and M{\"u}ller~\cite{kloks2000finding} characterized diamond-free graphs locally: $G$ is diamond-free if and only if for every vertex $v$, the induced subgraph on $N(v)$ contains no induced path on three vertices ($P_3$).
This characterization suggests a na\"{\i}ve algorithm: for each $v$, check if $G[N(v)]$ contains an induced $P_3$. This runs in $O(m+n)$ per vertex, or $O(n(m+n))$ total.
We refer to vertices whose neighborhoods contain an induced $P_3$ as \degc vertices.
Eisenbrand and Grandoni~\cite{eisenbrand2004complexity} improved this to $O(m^{3/2})$ by distinguishing between high- and low-degree vertices.
Subsequently, Vassilevska Williams, Wang, Williams, and Yu~\cite{williams2014finding} provided a deterministic algorithm running in $\tilde{O}(\min\{n^{\omega},m^{2\omega/(\omega+1)}\})$ time, which remains the state-of-the-art. 
Recently, Abboud, Akmal, and Fischer~\cite{abboud2026truly} introduced a purely \emph{combinatorial} algorithm for induced $4$-cycle detection running in $O(n^{2.84})$ time, without using fast matrix multiplication techniques.
This shows a separation between induced $4$-cycle detection and triangle detection under the standard Boolean matrix multiplication conjecture; see 
\cite{Dalirrooyfard2022InducedCA} for more details.

In \emph{property testing}, the goal is to read only $f(1/\eps)$ bits from the input, independent of input size, to determine whether a graph is $\pat$-free or $\eps$-far from being $\pat$-free (meaning at least an $\eps$-fraction of its edges must be removed to make it $\pat$-free) \cite{goldreich1998property,alon2000efficient,Goldreich2002-rm,Alon2015EasilyTG,Gishboliner2013DeterministicVN}.
Such an algorithm is called an $\eps$-tester, and properties admitting such testers are called \emph{testable}.
Alon, Fischer, Krivelevich, and Szegedy~\cite{alon2000efficient} proved that having a fixed pattern $\pat$ as an (induced) subgraph is testable.
Further work studies the dependency in $\eps$, and in particular for which properties the complexity $f(1/\eps)$ is polynomial in $1/\eps$.

In the \emph{sublinear} model, the goal is to detect $\pat$ using queries (degree, neighbor, pair, and sometimes random edge queries) in sublinear time~\cite{parnas2002testing,kaufman2004tight,aliakbarpour2018sublinear,eden2017approximately,assadi2018simple,Fichtenberger2020SamplingAS,Biswas2021TowardsAD,eden2025approximately}.
Assadi, Kapralov, and Khanna~\cite{AssadiKK19} provided an algorithm for detecting and approximately counting a fixed subgraph $\pat$ (induced or non-induced).
For any four-vertex pattern $\pat$ containing a $4$-cycle, they achieve query complexity $\TO{\min(m, m^2/t)}$ and runtime $\TO{m^2/t}$, where $t$ is the number of copies of $\pat$ in $G$.
The algorithm samples two random edges and checks if the graph induced on their endpoints contains $\pat$.
However, for induced diamonds, achieving $o(m+n)$ query complexity is impossible when $t=O(m)$; distinguishing a complete graph from a complete graph minus one edge (which contains $m$ induced diamonds) requires $\Omega(n)$ degree queries.
Without using random edge queries, Eden, Levi, Ron, and Rubinfeld~\cite{eden2025approximately} showed how to obtain query complexity and runtime of $\TO{\min(\frac{n}{t^{1/4}}+\frac{m^{2}}{t})}$.
Analogous questions have been explored in the \emph{distributed} setting; see~\cite{censor2022distributed} for a recent survey, and~\cite{le2021lower, Miyamoto25,nikabadi2022beyond} for specific results on induced subgraph detection.

\subsection*{Roadmap}
We begin with some preliminaries in \cref{sec:preliminaries}. 
In \cref{sec:r-heavy}, we present our algorithm for $r$-heavy diamonds and prove \cref{thm:r-heavy diamonds}. 
The algorithm for $r$-light diamonds and the proof of \cref{thm:r-light diamonds} follow in \cref{sec:r-light}. 
All lower bound proofs (for both general and combinatorial algorithms) appear in \cref{sec:lower-bounds}. 
\cref{sec:additional-results} presents our combinatorial algorithm for witness-sensitive diamond detection (\cref{thm:comb}), as well as results for other $4$-vertex patterns and the $4$-SUM problem. 
Finally, \cref{app:missing-proofs} contains proofs deferred from the main text.

\section{Preliminaries}
\label{sec:preliminaries}
Unless stated otherwise, throughout the paper by a diamond we mean an induced diamond.
A diamond's vertices have degree either 2 or 3; we call these \degb and \degc vertices, respectively (a vertex can be both).
We use $t,x$ and $x_3$ to denote the number of induced diamonds, the number of vertices that are part of an induced diamond, and the number of \degc vertices in $G$, respectively.
We say that an induced diamond is $r$-heavy if three of its vertices are contained in a clique of size at least $r$; otherwise, it is called $r$-light.
We use $\rmax$ to denote the largest integer $r$ such that $G$ contains an $r$-heavy induced diamond.
Given a four-coloring $\phi$ of the vertices of $G$, we say that an induced diamond is \emph{colorful} if all its vertices have different colors.

The following result will be used as a black-box in multiple sections across the paper. Its proof appears in \cref{app:missing-proofs}.
\begin{restatable}[\algVertexOne]{theorem}{ThmAlgVertexOne}\label{thm:is v in D}
    There is an algorithm \algVertexOne that, given a graph $G$ and a vertex $v\in V(G)$, either returns a diamond containing $v$ or reports that $v$ is not incident to a diamond. The algorithm runs in $O(m+n)$ time.
\end{restatable}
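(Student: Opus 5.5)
The approach is to reduce ``is $v$ in an induced diamond?'' to a few linear-time checks on the neighborhood of $v$, split by the role $v$ plays. In an induced diamond $\{a,b,c,d\}$ with diagonal edge $(a,b)$ and non-edge $(c,d)$, vertex $v$ is either a \degc vertex ($v\in\{a,b\}$) or a \degb vertex ($v\in\{c,d\}$). I would test the two cases separately, each in $O(m+n)$ time, and return a witness whenever one is found.

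\emph{Case $v$ is \degc.} Then $c,d$ and the other diagonal vertex lie in $N(v)$ and induce a $P_3$ in $G[N(v)]$: the center is adjacent to both ends, and the ends are non-adjacent. Conversely, any induced $P_3$ in $G[N(v)]$ together with $v$ gives an induced diamond with $v$ on the diagonal. So it suffices to decide whether $G[N(v)]$ is a disjoint union of cliques, and otherwise output an induced $P_3$.
\begin{itemize}
\item Build $G[N(v)]$ in $O(m+n)$ time by marking $N(v)$ and scanning adjacency lists.
\item Compute its connected components.
\item For each component $K$, check $\sum_{u\in K}\deg_{G[N(v)]}(u)=|K|(|K|-1)$.
\item If this fails for some $K$, find a vertex $u\in K$ whose degree is less than $|K|-1$. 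By BFS from $u$ inside $K$, find a vertex at distance exactly $2$; this yields an induced $P_3$.
\end{itemize}
All of this takes $O(m+n)$ time.

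\emph{Case $v$ is \degb.} Here $v=c$, and there is a non-neighbor $d\neq v$ with $a,b\in N(v)\cap N(d)$ and $(a,b)\in E$. Equivalently, some edge inside $G[N(v)]$ has an endpoint pair with a common neighbor outside $N[v]$. This case is the main obstacle, since naively it is a triangle-type search. The plan is to use the structure forced by the first case.

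If the first case failed, then $G[N(v)]$ is a disjoint union of cliques $K_1,\dots,K_s$. The goal is to detect a vertex $d\notin N[v]$ adjacent to two vertices of the same clique $K_j$ (those two vertices are adjacent, giving the diagonal). Restricting to $|K_j|\ge 2$, for each $d\notin N[v]$ scan its neighbors in $N(v)$ and check, using the component labels, whether two of them share a label. Assign each vertex of $N(v)$ its clique index. Then for each vertex $d\notin N[v]$, scan its adjacency list with a timestamped array indexed by clique id, and report $(v,d,a,b)$ on the first repeated label. Each edge is scanned $O(1)$ times, so the total is $O(m+n)$.

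Correctness in the \degb case: any reported quadruple is induced. Indeed $v\sim a,b$, $d\sim a,b$, $a\sim b$ (same clique), and $v\not\sim d$. Conversely, if $v$ is a \degb vertex of a diamond while $G[N(v)]$ is $P_3$-free, then the diagonal endpoints $a,b$ are adjacent vertices of $N(v)$, hence in the same clique, and $d$ is a non-neighbor of $v$ seeing both. This is exactly what the scan detects.

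Finally, if both checks fail, report that $v$ lies in no diamond. The total running time is $O(m+n)$.
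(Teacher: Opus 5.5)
Your proposal is correct and follows essentially the same route as the paper. If $G[N(v)]$ is not a disjoint union of cliques, extract an induced $P_3$ by BFS from a vertex of deficient degree in a non-clique component. Otherwise, scan each $z\notin N(v)\cup\{v\}$ for two neighbors in the same clique component, using per-component counters. Both steps run in $O(m+n)$ total time.
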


\paragraph{Graph Theory Notation.}
We use $\sqcup$ to denote disjoint union of vertex sets.
For every vertex $v$ in a graph $G$, we use $N(v)$ to denote the set of its neighbors in $G$. For an edge $e=(u,v)$, we use $N(e)\triangleq N(v)\cap N(u)$ to denote the set of common neighbors of $u$ and $v$, and $W(e)\triangleq N(e)\cup\{u,v\}$.

\paragraph{Matrix Multiplication.}
We use $\MM{n^a,n^b,n^c}$ to denote the time complexity of multiplying two matrices of sizes $n^a\times n^b$ and $n^b\times n^c$. This running time is also denoted by $n^{\omega(a,b,c)}$, where $\omega$ is the matrix multiplication exponent.
The function $\omega(a,b,c)$ is symmetric, meaning that for every permutation $\sigma:[3]\to[3]$ we have $\omega(x_1,x_2,x_3)=\omega(x_{\sigma(1)},x_{\sigma(2)},x_{\sigma(3)})$.
The following appears in \cite{abboud2024time,CEW25}.
\begin{claim}\label{claim:omega balanced}
    For $p_1,p_2,p_3\in[0,1]$ which may depend on $n$, we have:
    $\MM{n p_1, n p_2, n p_3} \leq
        \MM{n,n,n p_1 p_2 p_3}\;.$
\end{claim}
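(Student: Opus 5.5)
The plan is to reduce the product $p_1p_2p_3$ to a single dimension one step at a time, using two standard facts about rectangular matrix multiplication. The first is the symmetry of $\omega(a,b,c)$ recalled above. The second is the block-splitting inequality: if $a\le b$, then an $a\times b$ by $b\times c$ product can be cut into $\lceil b/a\rceil$ products along the inner dimension, each of shape $a\times a$ by $a\times c$. Summing the resulting $a\times c$ outputs costs $O(ac\cdot b/a)=O(bc)$. Similarly, cutting an outer dimension into blocks gives the reverse inequality
\[
\MM{a,b,c}\le \lceil a/a'\rceil\cdot \MM{a',b,c}.
\]
The key facts I will use are that $\MM{\cdot}$ is monotone in each argument, and that scaling one dimension up by a factor $k$ costs at least as much as doing $k$ separate products. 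In other words, $k\cdot\MM{a,b,c}\le \MM{ka,b,c}$ up to constants. This is the standard superadditivity of bilinear rank, or one can simply take it as the usual convention that $\omega(\cdot)$ is defined via the limit of exponents.

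With these in hand, the argument proceeds in three moves. First, enlarge the first dimension from $np_1$ to $n$, and compensate by shrinking the third dimension from $np_3$ to $np_1p_3$. Concretely, I want to show
\[
\MM{np_1,np_2,np_3}\le \MM{n,np_2,np_1p_3n}.
\]
To see this, write the right-hand side as a single product whose first dimension is $n=(1/p_1)\cdot np_1$. Split that first dimension into $1/p_1$ blocks of size $np_1$, and split the third dimension $np_1p_3$ so that it becomes $np_3$ after transposing the roles. A cleaner way to get the same conclusion is to use the trilinear (tensor) view. The tensor $\langle a,b,c\rangle$ has rank $R(a,b,c)$, which is symmetric in its arguments and submultiplicative. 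From $\langle n,np_2,np_1p_3\rangle \supseteq \langle np_1,np_2,np_3\rangle$ via the identity $\langle 1/p_1,1,1\rangle\otimes\langle np_1,np_2,np_1p_3\rangle$, however, the inequality runs the wrong way.

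I therefore think the intended proof is the exponent-level statement: $\omega$ is convex and $\omega(a,b,c)\le \omega(1,1,a+b+c-2)$ for $a,b,c\le1$. In that form, the plan is to show
\[
\omega(\alpha,\beta,\gamma)\le \omega(1,\beta,\alpha+\gamma-1)
\]
whenever $\alpha,\gamma\le1$, and then apply the same step again in the second coordinate. The inequality itself comes from the fact that $\omega(1,\beta,\cdot)$ grows with slope at most $1$ in its last argument, combined with the observation that $\omega(\alpha,\beta,\gamma)\le \omega(1,\beta,\gamma)-(1-\alpha)$. The latter holds by splitting the first dimension into $n^{1-\alpha}$ blocks, read in reverse. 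This is the lower-bound direction, and it requires the rank superadditivity $R(ka,b,c)\ge k R(a,b,c)$ asymptotically.

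I expect the main obstacle to be precisely this reverse direction. Upper bounds by block splitting are easy, but trading a factor $1/p_1$ in one dimension for a factor $p_1$ in another needs the $1$-Lipschitz and concavity-type properties of $\omega(\cdot)$. I would cite these from \cite{abboud2024time,CEW25} (or Lotti–Romani) rather than reprove them.
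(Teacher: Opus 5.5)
Your proof has a genuine gap at the step you call the key fact. The ``superadditivity'' $k\cdot\MM{a,b,c}\le\MM{ka,b,c}$, i.e.\ $R(\langle ka,b,c\rangle)\ge k\,R(\langle a,b,c\rangle)$, is not a standard fact; it is false. Take $a=1$, $b=c=n$, $k=n$. It then asserts $n\cdot n^{2}\le\MM{n,n,n}=n^{\omega+o(1)}$, which contradicts $\omega<3$. Fast matrix multiplication says precisely that batching $n$ vector--matrix products is cheaper than doing them separately, so rank cannot be superadditive in this sense.

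The inequality you derive from it, $\omega(\alpha,\beta,\gamma)\le\omega(1,\beta,\gamma)-(1-\alpha)$, therefore also fails. Already at $\alpha=0$, $\beta=\gamma=1$ it reads $2\le\omega-1$. Block splitting only gives the reverse direction, $\omega(1,\beta,\gamma)\le\omega(\alpha,\beta,\gamma)+(1-\alpha)$. The $1$-Lipschitz bound in the last argument is true but cannot close the chain by itself. So your route to $\omega(\alpha,\beta,\gamma)\le\omega(1,\beta,\alpha+\gamma-1)$ does not go through, even though that intermediate inequality is true.

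The missing ingredient is convexity of $\omega$, which you mention but never use. Tensor $\langle n^{x},n^{y},n^{z}\rangle^{\otimes p}\otimes\langle n^{x'},n^{y'},n^{z'}\rangle^{\otimes q}=\langle n^{px+qx'},n^{py+qy'},n^{pz+qz'}\rangle$ and use submultiplicativity of rank. This shows $\omega(\cdot,\cdot,\cdot)$ is convex, first for rational weights and then by continuity; it is also symmetric.

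Now write $np_i=n^{\alpha_i}$ with $\alpha_i\le 1$, and set $s=\alpha_1+\alpha_2+\alpha_3-2$, so that $np_1p_2p_3=n^{s}$. For $0\le s<1$ we have $(\alpha_1,\alpha_2,\alpha_3)=\lambda_1(s,1,1)+\lambda_2(1,s,1)+\lambda_3(1,1,s)$ with $\lambda_i=(1-\alpha_i)/(1-s)\ge 0$ and $\sum_i\lambda_i=1$. Convexity plus symmetry then give $\omega(\alpha_1,\alpha_2,\alpha_3)\le\omega(1,1,s)$, which is the claim up to the usual $n^{o(1)}$ factors. The case $s=1$ is trivial. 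For $s<0$ the naive algorithm already costs $n^{2+s}\le n^2$, which is below any $\MM{n,n,\cdot}$ bound.

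Your two-step plan can be salvaged the same way. The pair $(\alpha,\gamma)$ is the convex combination of $(1,\alpha+\gamma-1)$ and $(\alpha+\gamma-1,1)$, so convexity and the symmetry $\omega(x,y,z)=\omega(z,y,x)$ give the first step, and the second step is analogous. The paper itself gives no proof of this claim and simply cites \cite{abboud2024time,CEW25}.
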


\noindent The following claim is a linear approximation of $\MM{n,n,np}$ for $p\in[1/n,1]$, and appears in \cite{huang1998fast,zwick2002all}.

\begin{claim}\label{claim:omega rect}
    $\MM{n,n,np}\leq n^\omega\cdot p^\beta + n^2$, for $p\in[1/n,1]$, where $\beta\approx 0.5475$.
\end{claim}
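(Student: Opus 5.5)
The plan is to derive the bound from two facts about rectangular matrix multiplication exponents: the value of the dual exponent $\alpha$, and convexity of $a\mapsto\omega(1,1,a)$. Neither is stated earlier in this paper; both would be cited from~\cite{huang1998fast,zwick2002all}.

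First, I reparametrize. For $p\in[1/n,1]$ write $np=n^{a}$ with $a\in[0,1]$, so that $p=n^{-(1-a)}$ and $\MM{n,n,np}=n^{\omega(1,1,a)+o(1)}$. The lower limit $p\geq 1/n$ is exactly what guarantees $a\geq 0$.

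Second, I recall the dual exponent $\alpha=\sup\{a:\omega(1,1,a)=2\}$, with $\alpha\approx 0.32$ by known bounds. For every $a\leq\alpha$ an $n\times n$ by $n\times n^{a}$ product takes $n^{2+o(1)}$ time, and this is covered by the additive $n^2$ term in the claim.

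Third, for $a\in[\alpha,1]$ I use that $a\mapsto\omega(1,1,a)$ is convex on $[0,1]$. This follows from the tensor-product / block-decomposition argument of Huang and Pan (see also Zwick): one combines an algorithm for $n\times n$ by $n\times n^{\alpha}$ with one for $n\times n$ by $n\times n$ by interpolating rectangular shapes. Convexity yields the chord bound
\[
\omega(1,1,a)\;\leq\;2+(\omega-2)\cdot\frac{a-\alpha}{1-\alpha}\;=\;\omega-\frac{\omega-2}{1-\alpha}\,(1-a).
\]
Setting $\beta=(\omega-2)/(1-\alpha)$, this becomes $n^{\omega(1,1,a)}\leq n^{\omega}\cdot n^{-\beta(1-a)}=n^{\omega}p^{\beta}$. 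Plugging in the current values of $\omega$ and $\alpha$ gives $\beta\approx 0.5475$, up to the precision with which those constants are known.

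Combining the two ranges of $a$ gives $\MM{n,n,np}\leq n^{\omega}p^{\beta}+n^{2}$, where, as is customary, $n^{o(1)}$ factors are absorbed into $\omega$ and $\alpha$ (or into the $\tilde O$ notation wherever the claim is used).

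The main obstacle is the convexity/interpolation step, since it needs the actual algorithmic construction: splitting the matrices into rectangular blocks and nesting the $n\times n\times n^{\alpha}$ algorithm inside the square algorithm. I would take this directly from~\cite{huang1998fast} instead of reproving it. The remaining steps are arithmetic.
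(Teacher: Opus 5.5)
Your proposal is correct and matches the paper's own justification: set $\beta=(\omega-2)/(1-\alpha)$, use convexity of $x\mapsto\omega(1,1,x)$ to bound it on $[\alpha,1]$ by the chord through $(\alpha,2)$ and $(1,\omega)$, and let the additive $n^2$ term cover the range below $\alpha$. Your explicit treatment of the case $a\le\alpha$ and of the $n^{o(1)}$ slack is a little more careful than the paper's one-line remark, which leaves both implicit.
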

The constant $\beta=(\omega-2)/(1-\alpha)$ arises as follows. Here $\omega=\omega(1,1,1)\approx 2.372$ and $\alpha\approx 0.3214$ is the largest value satisfying $\omega(1,1,\alpha)=2$ \cite{moreasym}.
Since $\omega(1,1,x)$ is convex, the line through the points $(\alpha, 2)$ and $(1, \omega)$ gives a linear upper bound on $\omega(1,1,x)$ for $x\in[\alpha,1]$.

\paragraph{Probabilistic Tools.}

\begin{theorem}[Chernoff Bound {\cite{dubhashi2009concentration}}]\label{thm:chernoff 6}
    Let $X_1, \ldots, X_n$ be independent random variables with values in $[0,1]$ and $X=\sum_i X_i$. For $t\geq 6\Exp{X}$, and $\eps>0$ we have
    \begin{align*}
        \Pr{X\geq t}\leq 2^{-t}\;, &  &
        \Pr{X\leq (1-\eps)\Exp{X}}\leq \exp(-\eps^2\cdot \Exp{X}/2)
    \end{align*}
\end{theorem}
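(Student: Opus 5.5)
The plan is the standard moment-generating-function (Chernoff–Hoeffding) argument, applied separately to the two tails. Write $\mu=\Exp{X}$ and $\mu_i=\Exp{X_i}$. The first step is a bound valid for every real $\lambda$. Since $x\mapsto e^{\lambda x}$ is convex, for $x\in[0,1]$ we have $e^{\lambda x}\le 1+(e^\lambda-1)x$. Taking expectations and using $1+y\le e^y$ gives $\Exp{e^{\lambda X_i}}\le \exp((e^\lambda-1)\mu_i)$. By independence,
\[
\Exp{e^{\lambda X}}=\prod_i\Exp{e^{\lambda X_i}}\le \exp\bigl((e^\lambda-1)\mu\bigr).
\]

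\textbf{Upper tail.} If $\mu=0$ then $X=0$ almost surely. Hence $\Pr{X\ge t}=0$ for $t>0$, and the case $t=0$ is trivial since $2^{0}=1$. Otherwise, apply Markov's inequality to $e^{\lambda X}$ with $\lambda=\ln(t/\mu)>0$:
\[
\Pr{X\ge t}\le e^{-\lambda t}\exp\bigl((e^\lambda-1)\mu\bigr)=\Bigl(\frac{\mu}{t}\Bigr)^t e^{t-\mu}\le\Bigl(\frac{e\mu}{t}\Bigr)^t.
\]
The assumption $t\ge 6\mu$ gives $e\mu/t\le e/6<1/2$, and therefore $\Pr{X\ge t}\le 2^{-t}$.

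\textbf{Lower tail.} First dispose of the degenerate ranges of $\eps$. If $\eps>1$ and $\mu>0$, the event requires $X<0$, which is impossible; if $\mu=0$ the bound reads $1$. If $\eps=1$, then $\Pr{X\le 0}=\prod_i\Pr{X_i=0}\le\prod_i(1-\mu_i)\le e^{-\mu}\le e^{-\mu/2}$, using $\Pr{X_i=0}\le 1-\mu_i$ for $X_i\in[0,1]$.

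For $0<\eps<1$, apply Markov's inequality to $e^{-\lambda X}$ with $\lambda=-\ln(1-\eps)>0$:
\[
\Pr{X\le(1-\eps)\mu}\le e^{\lambda(1-\eps)\mu}\,\Exp{e^{-\lambda X}}\le\exp\Bigl(-\mu\bigl[\eps+(1-\eps)\ln(1-\eps)\bigr]\Bigr).
\]
It remains to verify the elementary inequality $f(\eps)=(1-\eps)\ln(1-\eps)+\eps-\eps^2/2\ge 0$ on $[0,1)$. We have $f(0)=0$ and $f'(\eps)=-\ln(1-\eps)-\eps\ge0$, since $-\ln(1-\eps)\ge\eps$. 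This yields $\Pr{X\le(1-\eps)\mu}\le\exp(-\eps^2\mu/2)$.

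There is no real obstacle here. The only points requiring care are:
\begin{itemize}
\item the convexity step, which lets us treat $[0,1]$-valued variables like Bernoulli variables;
\item checking the constant $e/6<1/2$, which is exactly why the threshold $t\ge 6\Exp{X}$ appears;
\item the degenerate cases $\mu=0$ and $\eps\ge1$, where the choice of $\lambda$ above is undefined.
\end{itemize}
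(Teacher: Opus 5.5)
Your proof is correct and complete, including the degenerate cases $\Exp{X}=0$ and $\eps\ge 1$. The paper gives no proof of its own; it cites Dubhashi--Panconesi, where the bound is derived by exactly this moment-generating-function argument (a convexity reduction to Bernoulli variables, then Markov's inequality with an optimized $\lambda$). One small remark: your computation already works for $t\ge 2e\,\Exp{X}$, so the constant $6$ is a convenient rounding rather than exactly what the argument needs.
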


\begin{lemma}[Second Moment Method]
    Let $X$ be a non-negative integral random variable.
    Then
    \begin{align*}
        \Pr{X>0}\geq \frac{\Exp{X}^2}{\Exp{X^2}}\;.
    \end{align*}
\end{lemma}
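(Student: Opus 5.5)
We prove the inequality with a single application of the Cauchy--Schwarz inequality, after writing $X$ as a product with the indicator of the event $\{X>0\}$. Integrality is not needed; it suffices that $X$ is a non-negative random variable with finite second moment. (If $\Exp{X^2}=\infty$, the right-hand side is $0$ and there is nothing to prove. If $\Exp{X^2}=0$, then $X=0$ almost surely, and the bound should be read with the convention $0/0=0$.)

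The key observation is that, since $X\ge 0$, we have the pointwise identity $X = X\cdot \boldone[X>0]$. Indeed, wherever $X=0$ both sides vanish, and wherever $X>0$ the indicator equals $1$. Taking expectations gives $\Exp{X}=\Exp{X\cdot \boldone[X>0]}$.

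We then apply Cauchy--Schwarz to the product of $X$ and $\boldone[X>0]$:
\begin{align*}
\Exp{X}^2 = \Exp{X\cdot \boldone[X>0]}^2 \le \Exp{X^2}\cdot \Exp{\boldone[X>0]^2} = \Exp{X^2}\cdot \Pr{X>0}.
\end{align*}
The last equality holds because an indicator equals its own square. When $0<\Exp{X^2}<\infty$, dividing both sides by $\Exp{X^2}$ gives exactly the claimed bound $\Pr{X>0}\ge \Exp{X}^2/\Exp{X^2}$.

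There is no real obstacle in this argument. The only points requiring care are the degenerate cases $\Exp{X^2}\in\{0,\infty\}$ handled above, and the fact that non-negativity of $X$ is what makes the identity $X=X\cdot\boldone[X>0]$ valid.
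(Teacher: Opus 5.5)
The paper lists this lemma among its preliminary probabilistic tools and gives no proof, so there is no argument of its own to compare against. Your proof is the standard one and it is correct: you use the pointwise identity $X = X\cdot\boldone[X>0]$ and then apply Cauchy--Schwarz. You are also right that integrality plays no role.

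One small point on the degenerate cases: when $\Exp{X^2}=\infty$, the right-hand side is $0$ only if $\Exp{X}<\infty$. If $\Exp{X}=\infty$, the right-hand side is the undefined ratio $\infty/\infty$, so the statement tacitly assumes finite moments. That assumption holds in every use in the paper, where $X$ is a finite sum of indicators.
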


\begin{lemma}[Reverse Markov's inequality {\cite[(1.6.4)]{doerr2019theory}}]\label{lem:rev mark}
    Let $X$ be a random variable with support contained in $[0,M]$.
    Then, for $R\in\mathbb{R}$ we have
    $\Pr{X>R}\geq \frac{\Exp{X}-R}{M-R}$.
\end{lemma}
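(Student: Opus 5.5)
The statement to prove is the reverse Markov inequality (\cref{lem:rev mark}). I will assume $R<M$, since otherwise the right-hand side is undefined or degenerate; this is the regime in which the lemma is used. The plan is to split the expectation of $X$ according to whether $X>R$ or $X\le R$, and to bound each part by the largest value $X$ can take on that event.

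First I write
\[
\Exp{X}=\Exp{X\cdot \mathbbm{1}[X\le R]}+\Exp{X\cdot \mathbbm{1}[X>R]}.
\]
On the event $X\le R$ we have $X\le R$. On the event $X>R$ we have $X\le M$, because the support of $X$ lies in $[0,M]$. Hence
\[
\Exp{X}\le R\cdot\Pr{X\le R}+M\cdot\Pr{X>R}.
\]

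If $R\ge 0$, I bound $\Pr{X\le R}\le 1$, which gives $\Exp{X}\le R+(M-R)\Pr{X>R}$. If $R<0$, then $X\ge 0>R$ always, so $\Pr{X>R}=1$. The claimed bound then reads $1\ge(\Exp{X}-R)/(M-R)$, which holds since $\Exp{X}\le M$.

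In the case $R\ge 0$, I rearrange $\Exp{X}\le R+(M-R)\Pr{X>R}$ and divide by $M-R>0$, which gives exactly $\Pr{X>R}\ge(\Exp{X}-R)/(M-R)$.

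There is no real obstacle here. The only points needing care are the sign of $R$ and making the assumption $R<M$ explicit, so that the final division is valid.
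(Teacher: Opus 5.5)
Your decomposition of $\Exp{X}$ over the events $\{X\le R\}$ and $\{X>R\}$ is the standard argument. The paper gives no proof of its own and only cites Doerr. You are also right that $R<M$ must be assumed: for $R>M$ the left-hand side is $0$, while the right-hand side equals $(R-\Exp{X})/(R-M)\ge 1$. In the paper's only use, $R=t/100<t=M$, so this restriction is harmless.

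However, your step in the case $R\ge 0$ does not follow as written. From $\Exp{X}\le R\cdot\Pr{X\le R}+M\cdot\Pr{X>R}$, the bound $\Pr{X\le R}\le 1$ only gives $\Exp{X}\le R+M\cdot\Pr{X>R}$. After rearranging, this is $\Pr{X>R}\ge(\Exp{X}-R)/M$. That is strictly weaker than the claim whenever $R>0$ and $\Exp{X}>R$, because $M>M-R$. The term $-R\cdot\Pr{X>R}$ you need is lost precisely by replacing $\Pr{X\le R}$ with $1$.

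The fix is to use the identity $\Pr{X\le R}=1-\Pr{X>R}$ rather than an inequality. Then $R\cdot\Pr{X\le R}+M\cdot\Pr{X>R}=R+(M-R)\cdot\Pr{X>R}$ holds exactly, for every real $R$. Dividing by $M-R>0$ gives the lemma. With this, the case split on the sign of $R$ is no longer needed: your separate treatment of $R<0$ is correct but redundant. The argument also uses only $X\le M$ almost surely, not $X\ge 0$.
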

All logarithms in this paper are base 2.

\section{Detecting $r$-Heavy Diamonds}
\label{sec:r-heavy}
\renewcommand{\W}{W}
\newcommand{\algDisc}{{\normalfont\textsc{FindHeavyHelper}}\xspace}
In this section we prove the following theorem:
\ThmHeavyD* 
Recall that $\rmax$ is the largest $r$ such that there is an $r$-heavy induced diamond in $G$, i.e., a diamond with three vertices contained in the same clique of size $r$.
Instead of working with $\rmax$, we prove this for any $r$:
\begin{theorem}
    \label{thm:r-heavy helper}
    There is a randomized algorithm $\algDisc(G, r)$ that finds an induced diamond in time $\To(\runtime)$ \whp, assuming $G$ has an $r$-heavy induced diamond and no $2r$-heavy induced diamond.
\end{theorem}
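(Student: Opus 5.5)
We follow the outline of the technical overview. Fix an $r$-heavy induced diamond, with a triangle contained in a clique $C$ of size at least $r$. The diamond has two triangles sharing the diagonal edge $f=(u,v)$, and $C\supseteq\{u,v,w\}$ with $w$ the third vertex of that triangle. Because the diamond is induced, $W(f)$ is not a clique, so $f$ is violating. Every edge $e\subseteq C$ satisfies $C\subseteq W(e)$, hence $|W(e)|\ge r$. We call these edges revealing.

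\textbf{Step 1: a window on $|W(e)|$.} We first show that the revealing edges worth testing have $|W(e)|<2r$, using the hypothesis that there is no $2r$-heavy diamond. The plan is to replace $C$ by a maximal clique $C^*\supseteq C$ and argue that if an edge $e\subseteq C^*$ has $W(e)$ not a clique, we obtain a diamond with a triangle inside a clique of size $|W(e)|$-ish. More carefully, we choose $C$ as a maximum clique containing a triangle of some induced diamond. If some $e\in C$ had $|W(e)|\ge 2r$ and $W(e)$ were a clique, then $W(e)\supseteq C$ would be a larger clique containing the triangle, contradicting the assumption. So either $|W(e)|<2r$, or $W(e)$ is not a clique. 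If it is not, $e$ is itself violating and lies in a clique of size at least $r$; we can handle this by re-centering the analysis on $e$. We will simply test every edge with $|W(e)|\in[r,2r)$ and also accept detection of any violating edge.

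\textbf{Step 2: sampling and filtering.} We sample $S\subseteq V$ with $|S|=\Theta((n/r)\log n)$. With high probability $S$ contains two vertices of $C$, since $|C|\ge r$. We compute $|N(e)|$ for all pairs in $S$ via the product of $A[S,V]$ with $A[V,S]$, in time $\MM{n/r,n,n/r}\le \tilde O(r(n/r)^\omega)$ by splitting the middle dimension into $r$ blocks. Let $L$ be the set of edges $e\subseteq S$ with $|W(e)|\in[r,2r)$.

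\textbf{Step 3: processing.} For each remaining $e\in L$ we do the following in $O(r^2+n)$ time:
\begin{itemize}
\item check whether $W(e)$ is a clique, and if not, output a diamond;
\item otherwise scan all $z\notin W(e)$ for two neighbors in $W(e)$, which yields a diamond;
\item if neither occurs, delete from $L$ all edges inside $W(e)$.
\end{itemize}
Correctness holds because a deleted revealing edge $e'$ either has $W(e')=W(e)$, in which case it is equivalent to $e$, or has $W(e')\neq W(e)$. In the latter case $W(e)\cap W(e')\supseteq e'$ and $e'$ is violating, so the test of $e$ would already have succeeded. Hence some revealing edge inside $C\cap S$ is eventually tested. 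Testing a revealing $e$ succeeds: if $W(e)$ is a clique, it contains $C\ni u,v$, and the fourth diamond vertex is adjacent to $u,v$ but not to $w$, so it lies outside $W(e)$ and has two neighbors inside.

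\textbf{Step 4: bounding the number of processed edges (main obstacle).} We need to show the loop ends after $O(n/r+n^2/r^3)$ tests. Consider the bipartite incidence graph $F_i$ between the processed cliques $W(e_1),\dots,W(e_i)$ (each of degree at least $r$) and $V$.
\begin{itemize}
\item \emph{4-cycles.} A 4-cycle means two processed cliques share an edge $xy$ with $W(xy)$ not a clique. This makes $e_i$ revealing, a contradiction.
\item \emph{6-cycles.} A 6-cycle through three cliques pairwise meeting at vertices $a,b,c$ makes $abc$ a triangle. The three cliques share this triangle, so some edge of the triangle lies in two distinct maximal cliques and is violating. Again the current edge would be revealing.
\end{itemize}
So if the algorithm survives $i$ steps, $F_i$ has girth at least $8$. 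We then apply the Naor--Verstra\"ete/Neuwirth bound: a $C_4,C_6$-free bipartite graph with sides $i$ and $n$ has $O((in)^{2/3}+i+n)$ edges. Combining this with the edge count $\ge ir$ gives $i=O(n/r+n^2/r^3)$. This cycle-to-diamond argument, especially the 6-cycle case, is what we expect to be most delicate.

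\textbf{Total running time.} The total is $\tilde O\bigl(r(n/r)^\omega+(n/r+n^2/r^3)(r^2+n)\bigr)=\tilde O\bigl(r(n/r)^\omega+(n/r)^3+nr\bigr)$, using $n^2/r\le nr+(n/r)^3$.
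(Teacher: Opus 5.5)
Your outline follows the paper's proof step for step: sampling $S$ at rate $\tilde\Theta(1/r)$, filtering edges of $G[S]$ by $|W(e)|$, the clique test, outside-vertex test and deletion loop, and the girth-$8$ incidence graph with the $C_6$-free extremal bound. Steps~2--4 and the final arithmetic are fine. The gap is in Step~1, in the case you defer: an edge $e\subseteq C\cap S$ that is violating with $|W(e)|\ge 2r$. ``Re-centering the analysis on $e$'' changes nothing, because $e$ is still an edge of the same clique with the same oversized window. It is filtered out of $L$ and never examined, so ``also accept detection of any violating edge'' is vacuous. Hence the sentence in Step~3, ``some revealing edge inside $C\cap S$ is eventually tested,'' is unsupported, and for this algorithm it is false. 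Here is a counterexample.
\begin{itemize}
\item Let $C$ be a clique on $r\ge 4$ vertices. For every pair $\{a,b\}\subseteq C$, add $2r$ new vertices, each adjacent exactly to $a$ and $b$.
\item For such a new vertex $i$ and any $c\in C\setminus\{a,b\}$, the set $\{a,b,c,i\}$ is an induced diamond whose triangle $abc$ lies in $C$. So $G$ has an $r$-heavy diamond.
\item Every clique has at most $r$ vertices, so there is no $2r$-heavy diamond.
\item Every edge $ab$ inside $C$ has $W(ab)=C\cup\{\text{the } 2r \text{ new vertices on } ab\}$, so $|W(ab)|=3r$. Every other edge has the form $ai$ with $W(ai)=\{a,b,i\}$.
\end{itemize}
So no edge has $|W(e)|\in[r,2r)$, or in the paper's window $[r,2r]$. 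The list $L$ is empty for every $S$, and the algorithm reports nothing.

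What is missing is therefore not a sharper argument but a change to the algorithm. Edges of $G[S]$ with $|W(e)|>2r$ must also be handled. Under the hypothesis, the only useful ones are violating, since a clique $W(e)$ of size $\ge 2r$ containing a violating edge would give a $2r$-heavy diamond. Catching them within the time budget needs an additional idea. You will not find it in the paper: its proof of \Cref{lemma:rheavy:exists r-violating} closes the case ``$e$ is violating'' with ``we are done'' without checking that $e\in L(S,r)$, which is exactly the hole above.

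A smaller point concerns your $6$-cycle case. The three cliques need not all contain the triangle $abc$; if any one of them did, you would already be in the $4$-cycle case. The correct argument is the paper's case split. If $c\in W(e_1)$, then $ac$ lies in two distinct processed cliques. Otherwise $c\notin W(e_1)$ has two neighbors $a,b\in W(e_1)$, so Part~2 fires on $e_1$. The same holds for $e_2$ and $e_3$, so whichever of the three is processed first triggers detection.
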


\begin{proof}[Proof of \Cref{thm:r-heavy diamonds} Using \Cref{thm:r-heavy helper}]
    Run $\algDisc(G, 2^i)$ for $i = 0, \ldots, \log n$ in round-robin, stopping when a diamond is found.
    Let $i^* = \lfloor \log_2 \rmax \rfloor$, so $2^{i^*} \le \rmax < 2^{i^*+1}$.
    By definition, $G$ has a $2^{i^*}$-heavy diamond but no $2^{i^*+1}$-heavy diamond, so the assumptions of \Cref{thm:r-heavy helper} hold for $r = 2^{i^*}$.
    Thus $\algDisc(G, 2^{i^*})$ finds a diamond in $\To((\rmax\cdot (n/\rmax)^\omega + (n/\rmax)^3+m))$ time \whp, and round-robin adds only $O(\log n)$ overhead.
\end{proof}

The main structural observation behind our algorithm is that in a diamond-free graph, every edge lies in exactly one maximal clique:
\begin{observation}\label{obs5:diamond from intersection}
    If two cliques $C_1,C_2\in \DD$ satisfy $C_1\neq C_2$ and $\abs{C_1\cap C_2}\geq 2$, then some vertex in $\Vp$ belongs to an induced diamond.
\end{observation}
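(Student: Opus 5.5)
Here $\DD$ is read as the collection of cliques under consideration (for instance the sets $W(e)$ that turn out to be cliques, or arbitrary cliques of $G$), and $\Vp$ as the vertex set spanned by those cliques. The plan is to build an induced diamond from a vertex of $C_1\setminus C_2$, a vertex of $C_2\setminus C_1$, and two shared vertices $x,y\in C_1\cap C_2$.

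First, reduce to the case where neither clique contains the other. If $C_1\subsetneq C_2$, the statement cannot hold without more structure, because $C_1\cup C_2=C_2$ is a clique and contains no diamond. So I will use the standing convention that the cliques in $\DD$ are maximal cliques, or more generally that no member of $\DD$ is contained in another, which holds if they are the distinct maximal cliques containing a given edge. Under that convention there are vertices $a\in C_1\setminus C_2$ and $b\in C_2\setminus C_1$.

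\emph{Case 1: $a$ and $b$ are non-adjacent.} Then $\{a,b,x,y\}$ induces a diamond. Indeed, $x,y,a\in C_1$ and $x,y,b\in C_2$ give the five edges $xy, xa, ya, xb, yb$, and $ab$ is the single missing edge. All four vertices lie in $\Vp$.

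\emph{Case 2: every vertex of $C_1\setminus C_2$ is adjacent to every vertex of $C_2\setminus C_1$.} Then $C_1\cup C_2$ is a clique strictly containing $C_1$, which contradicts maximality. So maximality forces Case 1 for some pair $(a,b)$.

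The only subtle step is pinning down the hypotheses on $\DD$. Without a maximality or no-containment assumption the observation is false, so I would state it explicitly; in the algorithm it holds because each $W(e)$ that is a clique is the unique maximal clique containing $e$. If $\DD$ instead consists of cliques that need not be maximal, I would apply the argument to maximal cliques extending $C_1$ and $C_2$. If those extensions coincide, then $C_1\cup C_2$ is a clique, and I would need a different route, namely showing directly that the edge $xy$ is violating. This bookkeeping is the main obstacle. The combinatorial core, Case 1, is immediate.
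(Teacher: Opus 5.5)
Your Case~1/Case~2 argument correctly proves a different fact: two distinct \emph{maximal} cliques of $G$ sharing two vertices force an induced diamond. That is the characterization the paper cites from Chiarelli et al. It does not prove this observation in the setting the paper's proof actually uses.

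In that setting, $\mathcal{D}$ is the family of clusters $\bigcup_{v\in S}\mathcal{C}_v$ from the neighborhood clusterings of \Cref{thm:alg Partition}. Each $\mathcal{C}_v$ partitions $N(v)\setminus S$ into cliques with no edges between distinct clusters. The set $S$ is the set of \emph{centers}, not the union of the cliques. Both of your key steps break here:
\begin{itemize}
\item The clusters are never maximal cliques of $G$, since each $C\in\mathcal{C}_v$ extends to the clique $C\cup\{v\}$. So the contradiction in Case~2 is unavailable.
\item More importantly, the diamond $\{a,b,x,y\}$ you build lies entirely inside clusters, hence entirely outside $S$. It therefore gives no vertex of $S$ in an induced diamond, which is exactly what the statement claims.
\end{itemize}
Your assertion that the statement is false without maximality comes only from your reading of the notation. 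The non-maximal case you leave open at the end is precisely the relevant one.

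The missing idea is to use a center as the fourth vertex. Take distinct $u_1,u_2\in C_1\cap C_2$. Since $C_1\neq C_2$, after possibly swapping the two cliques there is some $u_3\in C_1\setminus C_2$. Let $v_2\in S$ be the center with $C_2\in\mathcal{C}_{v_2}$. Then:
\begin{itemize}
\item $u_1u_2u_3$ is a triangle, and $v_2$ is adjacent to $u_1$ and $u_2$.
\item $u_3\neq v_2$, because $u_3\notin S$.
\item $u_3v_2\notin E$. Otherwise $u_3\in N(v_2)\setminus S$ would lie in a cluster of $\mathcal{C}_{v_2}$ other than $C_2$, and $u_3u_1$ would be an edge between two clusters of $\mathcal{C}_{v_2}$.
\end{itemize}
So $\{u_1,u_2,u_3,v_2\}$ is an induced diamond whose only missing edge is $u_3v_2$. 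It contains $v_2\in S$, and the argument needs no maximality or non-containment hypothesis.
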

\begin{proof}[Proof of \Cref{obs5:diamond from intersection}]
    Let $u_1,u_2\in C_1\cap C_2$ be two distinct vertices.
    Since $C_1\neq C_2$, we may assume without loss of generality that $C_1$ contains a vertex $u_3\notin C_2$.
    Let $v_2\in\Vp$ be a vertex whose partition contains $C_2$, i.e., $C_2\in\CC_{v_2}$.
    Then $\{u_1,u_2,u_3,v_2\}$ is an induced diamond, since it is a clique whose only missing edge is $(u_3,v_2)$.
    \begin{figure}[H]
        \centering
        \begin{tikzpicture}[scale=1.0, every node/.style={transform shape}]

            \node[circle,draw,fill=blue!20] (v2) at (.5, 1.75) {$v_2$};
            \node[circle,draw,] (u1) at (0, .5) {$u_1$};
            \node[circle,draw,] (u2) at (0, -.5) {$u_2$};
            \node[circle,draw,] (u3) at (-1.25, 0) {$u_3$};

            \def\r{1.25}
            \def\xa{.6}
            \def\ya{1.5}
            \draw[thick, blue!60, rotate=0] (-\xa,0) ellipse ({\r} and {\r});
            \node[blue!60] at (-\xa, -\ya) {$C_1$};
            \draw[thick, red!75, rotate=0] (\xa,0) ellipse ({\r} and {\r});
            \node[red!75] at (\xa, -\ya) {$C_2$};

            \draw[] (u1) -- (u2);
            \draw[] (u1) -- (u3);
            \draw[] (u2) -- (u3);
            \draw[] (v2) -- (u1);
            \draw[] (v2) -- (u2.east);
            \draw[thick,dashed,red!30] (v2) to (u3);
        \end{tikzpicture}
        \caption{$\{u_1,u_2,u_3,v_2\}$ is an induced diamond; the dashed red edge $(u_3,v_2)$ is missing.}
        \label{fig05:diamond from intersection}
    \end{figure}
\end{proof}
This observation also follows from \cite[Lemma~7]{chiarelli2021strong}: a graph is diamond-free if and only if every edge lies in a unique maximal clique.

\paragraph{Roadmap.}
The next paragraphs contain the notation and definitions used in the algorithm.
The first part of the algorithm includes finding a set of edges $L$, computed using fast matrix multiplication.
The second part of the algorithm (\Cref{ssec:find violating edge2}) explains how to process this set of edges efficiently.

\paragraph{Preliminaries and Definitions.}
For every edge $e=(u,v)\in E$, we use $\W(e)=(N(u)\cap N(v))\cup\set{u,v}$ to denote the set of common neighbors of $u$ and $v$, together with $u$ and $v$ themselves.
We refer to an edge that lies in two distinct maximal cliques as a \emph{violating edge}.
When $G$ is not diamond-free, it contains at least one violating edge, and if $G$ contains an $r$-heavy induced diamond, then there must be a violating edge that lies in a clique of size at least $r$.
The other direction is also true: if $G$ has a violating edge that lies in two maximal cliques $C_1,C_2$, then $G$ has a $|C_1|$-heavy induced diamond:
Consider an edge $e=(u,u')$ inside two maximal cliques $C_1,C_2$, and let $z\in C_2\setminus C_1$.
Let $v$ be any vertex in $C_1\setminus N(z)$. Then $(v,u,u',z)$ is an induced diamond, since the only missing edge is $(v,z)$, and $(v,u,u')$ lies in $C_1$, proving the existence of a $|C_1|$-heavy induced diamond.
We say that an edge $e'$ is \emph{revealing} if it lies in a maximal clique that contains a violating edge.
We restrict our attention to violating and revealing edges that are part of a clique of size at least $r$, referred to as $r$-violating edges and $r$-revealing edges, respectively.
Both $r$-violating and $r$-revealing edges are part of at least $r-2$ triangles.

Our algorithm consists of two steps: finding a set of edges $L$ that contains at least one $r$-revealing edge \whp and processing these edges one by one until we either find an induced diamond or exhaust all edges in $L$.
We next explain how to find the set $L$.

\paragraph{Finding $L$.}
Given a subset of vertices $\Vp$, define a subset of edges $L(\Vp,r)$ by
\begin{align*}
    L(\Vp,r)\triangleq \set{e\in G[\Vp] \mid |\W(e)| \in[r,2r]}\;.
\end{align*}
In words, this is the set of all edges with both endpoints in $\Vp$ whose common neighborhood has size in $[r-2,2r-2]$.
Computing for every edge $e$ in $G[\Vp]$ the exact number of triangles it is in, i.e., $\abs{\W(e)}-2$, can be done using fast matrix multiplication in time $\MM{|\Vp|,|\Vp|,n}$.
We show that if we sample $\Vp$ randomly, then \whp $L(\Vp,r)$ contains at least one $r$-violating edge.

\begin{lemma}
    \label{lemma:rheavy:exists r-violating}
    Let $\Vp$ be a random subset of vertices where each vertex is included independently with probability $p=\min(\frac{128\log n}{r},1)$.
    If $G$ has an $r$-heavy induced diamond and no $2r$-heavy induced diamond,
    then \whp $L(\Vp,r)$ contains at least one $r$-revealing edge.
    The randomness is only over the choice of $\Vp$.
\end{lemma}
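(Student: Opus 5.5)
Fix an $r$-heavy induced diamond. Pick a violating edge $f$ in a clique of size at least $r$, and let $C$ be a maximal clique containing $f$, so $|C|\ge r$. The plan is to show that $\Vp$ contains, with high probability, two vertices $u,v\in C$, and that the edge $e=(u,v)$ has $|\W(e)|\in[r,2r]$. Since $e$ lies in the maximal clique $C$, which contains the violating edge $f$, $e$ is $r$-revealing and belongs to $L(\Vp,r)$.

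\emph{Lower bound.} Every edge $e$ inside $C$ satisfies $C\subseteq \W(e)$, so $|\W(e)|\ge |C|\ge r$.

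\emph{Upper bound.} This is the step where the assumption of no $2r$-heavy diamond enters, and I expect it to be the main obstacle. First, $|C|<2r$. Indeed, $f$ lies in two distinct maximal cliques $C$ and $C_2$. The argument in the preliminaries (take $z\in C_2\setminus C$ and $v\in C\setminus N(z)$) produces a $|C|$-heavy induced diamond. So $|C|\ge 2r$ would contradict the hypothesis.

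It remains to bound $\W(e)$ itself, which may be larger than $C$. Two cases:
\begin{itemize}
\item If $\W(e)$ is a clique, it contains $C$, so by maximality $\W(e)=C$ and $|\W(e)|<2r$.
\item If $\W(e)$ is not a clique, then $e$ is itself violating; I treat this case as still fine for detection. If $|\W(e)|>2r$, I try to extract a large clique containing $e$ inside $\W(e)$ and obtain a contradiction. This is not immediate, since $\W(e)$ can be large without $e$ lying in a large clique.
\end{itemize}
To avoid the second case, I will restrict the choice of $(u,v)$. I take the edge of $C$ minimizing $|\W(\cdot)|$, or else argue that most pairs $u,v\in C$ have $\W(u,v)=C$. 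The intended argument for the latter: if some $z\notin C$ is adjacent to two vertices of $C$, then by the same construction $z$ together with $C$ yields an induced diamond with a triangle in $C$, which is $|C|$-heavy. That alone gives no contradiction, since $|C|<2r$. So instead I aim to show that the total excess $\sum_{e\subseteq C}|\W(e)\setminus C|$ is small.

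Concretely, each $z\notin C$ is adjacent to fewer than $|C|-1$ vertices of $C$, by maximality of $C$. If some pair $u,v\in C$ had more than $2r-|C|$ outside common neighbours, I would try to build a clique of size at least $2r$ containing a violating edge. If that fails, I fall back to weakening the target: use the bound $|\W(e)|\in[r,2r]$ only for pairs sampled from $C$ whose common neighbourhood equals $C$. Then I show that a constant fraction of pairs in $C$ have this property, via a counting argument on the diamonds formed with vertices outside $C$.

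\emph{Sampling.} With $p=\min(128\log n/r,1)$ and $|C|\ge r$, the number of sampled vertices of $C$ has expectation at least $128\log n$. By the Chernoff bound (\Cref{thm:chernoff 6}), at least two vertices of $C$ (indeed $\Omega(\log n)$ of them) are sampled with probability $1-n^{-\Omega(1)}$. If only a constant fraction of the pairs in $C$ qualify, I partition $C$ into the qualifying structure and sample $\Omega(\log n)$ vertices. Among $\Omega(\log n)$ random vertices of $C$, a qualifying pair appears with high probability. This again follows by Chernoff, applied to a perfect matching of $C$ into qualifying pairs where possible.
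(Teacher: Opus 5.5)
Up to the violating case, your argument is the paper's: the maximal clique $C$ (the paper's $Q$) through the heavy triangle has $|C|\in[r,2r)$, Chernoff gives two sampled vertices of $C$, and if the resulting edge $e$ is not violating then $W(e)=C$. The gap is the case you flagged yourself: a sampled pair $e\subseteq C$ that is violating with $|W(e)|>2r$. None of your proposed ways out can work, because each rests on false claims. There need not be any pair $u,v\in C$ with $W(u,v)=C$. The excess $\sum_{e\subseteq C}|W(e)\setminus C|$ can be arbitrarily large. The outside common neighbours of a pair can be pairwise non-adjacent, so no large clique can be built from them.

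In fact the statement is false. Fix $r\ge 4$ and let $C$ be an $r$-clique. For every pair $\{a,b\}\subseteq C$, attach an independent set $I_{ab}$ of $r+1$ new vertices, each adjacent to exactly $a$ and $b$. Every clique meeting some $I_{ab}$ has at most $3$ vertices, so the clique number is $r$ and there is no $2r$-heavy diamond. On the other hand, $\{a,b,c,z\}$ with $c\in C\setminus\{a,b\}$ and $z\in I_{ab}$ is an induced diamond (missing edge $cz$) whose triangle $\{a,b,c\}$ lies in $C$, so it is $r$-heavy. Yet every edge $ab$ of $C$ has $W(ab)=C\cup I_{ab}$, of size $2r+1$. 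Every remaining edge $az$ has $W(az)=\{a,b,z\}$, of size $3<r$. Hence $L(S,r)=\emptyset$ for every $S$.

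The paper's proof does not close this case either. It dismisses a violating $e$ by noting that it is $r$-violating and hence $r$-revealing, but it never checks $|W(e)|\le 2r$, i.e., that $e\in L(S,r)$. So your suspicion was correct, and the missing step cannot be supplied. The lemma, and the definition of $L$, must change, for instance by dropping the upper cap on $|W(e)|$ for violating edges. Such edges can still be certified in $O(r^2)$ time: test whether some fixed $2r$ vertices of $W(e)$ containing both endpoints of $e$ form a clique. This works because, absent $2r$-heavy diamonds, a violating edge lies in no clique of size $2r$. Your closing matching-plus-Chernoff step would also need more care even if a constant fraction of pairs qualified, since qualifying pairs need not contain a perfect matching of $C$; but this is moot.
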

\begin{proof}[Proof of \Cref{lemma:rheavy:exists r-violating}]
    Let $(v,u_1,u_2,z)$ be an $r$-heavy induced diamond with missing edge 
    $(v,z)$, where the vertices $v,u_1,u_2$ belong to the same maximal clique $Q$ whose size is in $[r,2r)$. Therefore, $f=(u_1,u_2)$ is an $r$-violating edge.

    Assume that $|Q\cap\Vp|\geq 2$, and let $s_1,s_2\in Q\cap\Vp$ be two distinct vertices in this set.
    We show that the edge $e=(s_1,s_2)$ belongs to $L(\Vp,r)$ and is $r$-revealing.
    If $e$ is violating, then it is also $r$-violating since it lies in $Q$, and therefore also $r$-revealing and we are done.
    Otherwise, $W(e)$ is a maximal clique that must contain $Q$.
    Therefore $W(e)$ contains the violating edge $f$, making $e$ $r$-revealing.
    Note that $|W(e)|<2r$, otherwise $f$ is $2r$-violating, contradicting the assumption that there are no $2r$-heavy diamonds in $G$.

    To see that $|Q\cap\Vp|\geq 2$ \whp, let $X$ be the random variable that counts the number of vertices from $Q$ that are included in $\Vp$.
    We have $\Exp{X}=p\abs{Q}\geq 128\log n$, so by Chernoff's inequality 
    $\Pr{X< 2}\leq \exp(-\Exp{X}/8)\leq 1/n^8$, which completes the proof.
\end{proof}

\subsection{Processing Edges in $L$}
\label{ssec:find violating edge2}
\newcommand{\algEdge}{\normalfont{\textsc{ProcessEdges}}\xspace}
For brevity, we write $L$ instead of $L(\Vp,r)$.
Below the algorithm $\algEdge(G,\Vp,L)$ is presented.

\begin{algorithm}[H]
    \newcommand{\gray}[1]{{\color{gray}{#1}}}
    \caption{Algorithm $\algEdge(G,\Vp,L)$: Process All Edges in $L$}
    \label{alg:process helper}
    \KwIn{The graph $G$, a set $\Vp$ of vertices and $L$ of edges.}
    \KwOut{A vertex in an induced diamond, or $\bot$ if no diamond is found.}
    \setcounter{AlgoLine}{0}
    \BlankLine
    \DontPrintSemicolon

    \While{$L$ is not empty}{
        Pop an edge $(x,y)$ from $L$.\;
        \smallskip
        \gray{\tcp{Part 1}}
        Let $W(x,y)\gets (N(x)\cap N(y))\cup\set{x,y}$.\;
        \smallskip
        \lIf{$G[W(x,y)]$ is not a clique}{
            \Return{$x$} \hfill $\triangleright$ \texttt{$(x,y)$ is part of a diamond.}
        }
        \medskip
        \gray{\tcp{Part 2}}
        Initialize an array $R$ of size $n$ with all entries $0$.\;
        \lFor{$v\in W(x,y)$}{$R[v] \gets -\infty$ \hfill $\triangleright$ \texttt{Mark vertices in $W(x,y)$.}}
        \smallskip
        \For{$v\in W(x,y)$}{
            \For{$z\in N(v)$ with $R[z]\neq -\infty$}{
                $R[z] \gets R[z]+ 1$.\;
                \lIf{$R[z] = 2$}{
                    \Return{$z$} \hfill $\triangleright$ \texttt{$z$ is in a diamond.}
                }
            }
        }
        \medskip
        \gray{\tcp{Part 3}}
        $S'\gets \Vp\cap W(x,y)$.\;
        \For{$x',y'\in S'$ such that $(x',y')\in L$}{
            Remove $(x',y')$ from $L$.\;
        }
    }
    \Return{$\bot$} \hfill $\triangleright$ \texttt{No diamond found.}
\end{algorithm}

We give an overview of the algorithm.
\paragraph{Part 1.}
We check whether $W(e)$ is a clique. If not, $e$ is violating and we have 
found an induced diamond. This takes $O(|W(e)|^2)=O(r^2)$ time, since edges 
in $L$ satisfy $|W(e)|\in [r,2r]$.

\paragraph{Part 2.}
We check whether some vertex $z\notin W(e)$ has at least two neighbors in 
$W(e)$. If so, letting $w,w'$ be two such neighbors, the vertex $z$ together 
with $w,w'$ and an endpoint of $e$ not adjacent to $z$ forms an induced 
diamond.
This step runs in $O(r^2+n)$ time: we initialize a counter $R[z]$ for each 
$z\notin W(e)$, and increment it for each edge between $z$ and $W(e)$, 
stopping when any counter reaches $2$.
Processing edges within $W(e)$ takes $O(r^2)$ time, and we process at most $n$ 
edges with exactly one endpoint in $W(e)$ before either finding a diamond or 
exhausting all such edges.

\paragraph{Part 3.}
Computing $S'=\Vp\cap W(e)$ takes $O(r)$ time, so removing all edges in $L$ 
with both endpoints in $S'$ takes $O(r^2)$ time.
~\\This completes the description of the algorithm.
We now prove its correctness and analyze its running time.

\begin{lemma}
    \label{lemma:rheavy:main}
    If $L$ contains an $r$-revealing edge and $G$ has no $2r$-heavy induced diamond,
    then $\algEdge(G,\Vp,L)$ finds a vertex in an induced diamond.
    Its running time is $O((n/r)^3 + n^2/r + nr)$.
\end{lemma}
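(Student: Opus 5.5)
We split the argument into correctness and running time.

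\textbf{Correctness.}
Fix an $r$-revealing edge $e^*\in L$. Then $e^*$ lies in a maximal clique $Q$ of size at least $r$ that contains a violating edge $f$.

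First we check the two returns. A return in Part~1 is correct because if $W(x,y)$ is not a clique, there exist $a,b\in W(x,y)$ with $(a,b)\notin E$, and $\{x,y,a,b\}$ is an induced diamond containing $x$. A return in Part~2 is correct by the argument already given in the overview. The vertex $z\notin W(x,y)$ cannot be adjacent to both $x$ and $y$, so $z$, its two neighbors in $W(x,y)$, and a non-neighbor endpoint of $(x,y)$ form an induced diamond.

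Next we show that the algorithm cannot exhaust $L$ without returning. Suppose $e^*$ is popped. If $W(e^*)$ is not a clique, Part~1 returns. Otherwise $W(e^*)=Q$, since $Q\subseteq W(e^*)$ and $Q$ is maximal. Write $f=(u_1,u_2)$ and let $C'\neq Q$ be a second maximal clique containing $f$. Any $z\in C'\setminus Q$ lies outside $W(e^*)$ and is adjacent to both $u_1,u_2\in W(e^*)$. So Part~2 returns.

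It remains to handle the case where $e^*$ is removed in Part~3 while some earlier edge $e$ is processed. In that case $e^*\subseteq W(e)$, and $W(e)$ is a clique, since otherwise Part~1 would have returned. If $W(e)=W(e^*)=Q$, then $e$ is processed exactly as $e^*$ would be, and Part~2 returns. Otherwise $W(e)$ is a clique containing $e^*$, so $W(e)\subseteq W(e^*)$, and the inclusion is strict. Take $w\in W(e^*)\setminus W(e)$. This $w$ lies outside $W(e)$ and is adjacent to both endpoints of $e^*$, which lie in $W(e)$, so Part~2 on $e$ returns. The hypothesis that there is no $2r$-heavy diamond is used only to guarantee $|W(e)|\le 2r$ for the edges we touch.

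\textbf{Running time.}
Each processed edge costs $O(r^2+n)$. Part~3 costs $O(r^2)$ and removes edges, so it is dominated. It therefore suffices to show that at most $O(n/r+n^2/r^3)$ edges are processed, since
\[
(n/r+n^2/r^3)(r^2+n)=O(nr+n^2/r+n^3/r^3+n^2/r^2),
\]
which is within the claimed bound. (The term $n^2/r^2$ is at most $n^2/r$.)

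To bound the number of processed edges, let $e_1,e_2,\dots$ be the edges processed without returning. Each $W(e_j)$ is a clique of size at least $r$ with no outside vertex having two neighbors in it, and no two of these cliques coincide, by the Part~3 removal.

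\emph{No $4$-cycle in $F_i$.} Suppose $W(e_j)$ and $W(e_k)$ with $j<k$ share two vertices $a,b$. Then $b\in W(e_k)\setminus W(e_j)$, or a vertex of $W(e_k)\setminus W(e_j)$ exists by distinctness and maximality. That vertex is outside $W(e_j)$ and adjacent to $a$ and to another vertex of $W(e_j)$, contradicting that Part~2 did not fire on $e_j$.

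\emph{No $6$-cycle in $F_i$.} Suppose three cliques $C_1,C_2,C_3$ pairwise meet at distinct vertices $a\in C_1\cap C_2$, $b\in C_2\cap C_3$, $c\in C_3\cap C_1$. Then $b$ is adjacent to $a$ (both in $C_2$) and to $c$ (both in $C_3$). Since $a,c\in C_1$, this gives $b\notin C_1$ with two neighbors in $C_1$. We need some processed cliques among these to witness this, and the Part~2 check on $C_1$ already forbids it. Both conditions thus follow directly from the invariant that every processed clique $W(e_j)$ has no outside vertex with two neighbors in it.

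\emph{Counting.} $F_i$ is bipartite with $i$ left vertices of degree at least $r$, $n$ right vertices, and girth at least $8$, so it has at least $ir$ edges. The Naor--Verstra\"ete / Neuwirth bound for $C_4,C_6$-free bipartite graphs gives
\[
|E|=O\bigl((in)^{2/3}+i+n\bigr).
\]
Combining $ir\le C\bigl((in)^{2/3}+i+n\bigr)$ with $r$ large yields $i=O(n^2/r^3+n/r)$.

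The main obstacle I expect is this extremal step. I would rely on the cited bounds with explicit constants, for example $64$. The care needed is in verifying that the $6$-cycle case is excluded from the Part~2 invariant for every processed clique, including the case where a cycle vertex coincides with an endpoint.
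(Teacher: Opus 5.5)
Your proposal is correct and takes essentially the same route as the paper. Correctness holds because Parts~1--2 catch any $r$-revealing edge, and Part~3 can discard one only while processing an edge on which Part~1 or~2 already fires; the running time follows because the clique--vertex incidence graph $F_i$ has no $4$- or $6$-cycles, combined with the cited extremal bound. Minor fixes: in the $4$-cycle case pick any vertex of $W(e_k)\setminus W(e_j)$ (not $b$, which lies in both cliques); in the $6$-cycle case justify $b\notin C_1$ by the already-excluded $4$-cycle (otherwise $\{a,b\}\subseteq C_1\cap C_2$); and $r^2\cdot n^2/r^3=n^2/r$ rather than $n^2/r^2$, which is still within the claimed bound.
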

\Cref{thm:r-heavy helper} follows from \Cref{lemma:rheavy:exists r-violating} and \Cref{lemma:rheavy:main}.
To prove the correctness of \Cref{alg:process helper}, we prove the following two claims:
\begin{claim}\label{claim:rh:part 12 correctness}
    If $e$ is an $r$-violating edge or an $r$-revealing edge, then Part 1 or Part 2 of \Cref{alg:process helper} finds a vertex in an induced diamond.
\end{claim}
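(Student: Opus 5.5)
Let $e=(x,y)$ be the popped edge and $W=W(e)$. I split on whether $G[W]$ is a clique. If it is not, Part~1 returns $x$, and I only need to check that $x$ really lies in an induced diamond. Take non-adjacent $a,b\in W$. Neither $a$ nor $b$ is an endpoint of $e$, because $x$ and $y$ are adjacent to every vertex of $W$. So both lie in $N(x)\cap N(y)$, and $\{x,y,a,b\}$ is an induced diamond with missing edge $(a,b)$. This case therefore covers every violating edge. Indeed, if $e$ lies in two distinct maximal cliques $C_1,C_2$, then $W\supseteq C_1\cup C_2$, and $C_1\cup C_2$ cannot be a clique since that would contradict the maximality of $C_1$ and $C_2$.

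Now assume $G[W]$ is a clique, so $e$ is not violating. Since $e$ is $r$-revealing, $W$ is the unique maximal clique containing $e$, and by definition it contains a violating edge $f=(u_1,u_2)$. Because $f$ is violating, it lies in a second maximal clique $C\neq W$. Maximality of $W$ means $C\not\subseteq W$, so I pick $z\in C\setminus W$. Then $z$ is adjacent to both $u_1,u_2\in W$, i.e.\ $z\notin W$ has at least two neighbours in $W$. Part~2 scans all edges leaving $W$ and so necessarily finds some vertex $z'\notin W$ with $R[z']=2$. It may find a different vertex from $z$, so I also need that any such $z'$ lies in a diamond.

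For that, let $w,w'\in W$ be two neighbours of $z'$. Since $z'\notin W$, $z'$ is not a common neighbour of $x$ and $y$, so it is non-adjacent to at least one endpoint of $e$, say $x$. If $x\notin\{w,w'\}$, then $\{x,w,w',z'\}$ is an induced diamond: $x,w,w'$ form a triangle inside the clique $W$, $z'$ is adjacent to $w,w'$, and the only missing edge is $(x,z')$. The case $x\in\{w,w'\}$ cannot occur, since $z'$ is not adjacent to $x$ while it is adjacent to both $w$ and $w'$.

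The main point needing care is this second case. One has to combine maximality of $W$ with the existence of a second maximal clique through $f$ to produce an outside vertex with two neighbours in $W$, and then check that whichever such vertex Part~2 stops at really completes an induced diamond. I would also observe that the $r$-subscripts play no role in correctness; they matter only for the running-time bound.
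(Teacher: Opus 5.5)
Your proof is correct and follows the paper's argument: a non-clique $W(e)$ is caught by Part~1, and otherwise $W(e)$ is the maximal clique containing the violating edge, whose second maximal clique supplies an outside vertex with two neighbours in $W(e)$ for Part~2. Your extra check that whichever vertex Part~2 stops at completes a diamond matches the paper's overview of Part~2; one small slip is that $C\not\subseteq W$ follows from the maximality of $C$ (distinct maximal cliques are incomparable), not from the maximality of $W$.
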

\begin{proof}[Proof of \Cref{claim:rh:part 12 correctness}]
    If $e$ is an $r$-violating edge, then $\W(e)$ is not a clique, and therefore we find an induced diamond in Part 1 of the algorithm.
    We prove the claim for $r$-revealing edges.
    Let $e=(v_1,v_2)$ be an $r$-revealing edge in $L$, which lies in a maximal clique $Q$ together with an $r$-violating edge $h=(x,y)$.
    Let $\W(e)=(N(v_1)\cap N(v_2))\cup\set{v_1,v_2}$ be the set computed in Part 1 of the algorithm.
    If $Q\neq \W(e)$, then $\W(e)$ is not a clique, and we would have found an induced diamond in Part 1 of the algorithm. Assume that $Q=\W(e)$.
    Since $h$ is violating, it is also in a different maximal clique $Q'\neq Q$.
    Let $z\in Q'\setminus Q$, so $z\notin \W(e)$.
    Moreover, $z$ has two neighbors in $\W(e)$, namely $x$ and $y$, proving the claim.
\end{proof}

\begin{claim}\label{claim:rh:part 3 correctness}
    No $r$-revealing edge is removed from $L$ during Part 3 of \Cref{alg:process helper}.
\end{claim}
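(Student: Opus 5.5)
We prove \Cref{claim:rh:part 3 correctness} by looking at a single moment at which edges are removed. Fix an iteration that pops the edge $e=(x,y)$ and reaches Part~3, and let $e'=(x',y')$ be an edge with $x',y'\in\Vp\cap W(e)$ that gets removed from $L$. We will show that if $e'$ were $r$-revealing, then the algorithm would already have returned in Part~1 or Part~2 while processing $e$, and so it would never have reached Part~3. This contradiction gives the claim.

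First we record what is known when Part~3 is reached. Part~1 did not return, so $W(e)$ is a clique. Since $W(e)$ contains every common neighbour of $x$ and $y$, it is a maximal clique, and it is the unique maximal clique containing $e$. Part~2 did not return, so no vertex $z\notin W(e)$ has two neighbours in $W(e)$.

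Now suppose $e'$ is $r$-revealing. Then $e'$ lies in a maximal clique $Q$ that contains a violating edge $h=(a,b)$. We split into two cases.

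\textbf{Case $Q=W(e)$.} The edge $h$ is violating, so it lies in a second maximal clique $Q'\neq Q$. Take $z\in Q'\setminus Q$. Then $z\notin W(e)$, and $z$ is adjacent to both $a$ and $b$, which are in $W(e)$. So Part~2 would have returned when processing $e$, a contradiction. This is the same argument as in the proof of \Cref{claim:rh:part 12 correctness}.

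\textbf{Case $Q\neq W(e)$.} Both $Q$ and $W(e)$ are maximal cliques containing the two endpoints of $e'$, so $e'$ lies in two distinct maximal cliques and is itself violating. Because $Q\ne W(e)$ and $Q$ is maximal, $Q\not\subseteq W(e)$, so we can pick $z\in Q\setminus W(e)$. This $z$ is adjacent to both $x'$ and $y'$, which lie in $W(e)$. Again Part~2 would have returned, a contradiction.

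The only step needing care is the passage from ``Part~1 did not return'' to ``$W(e)$ is the unique maximal clique containing $e$, and so $e'$ is violating in the second case''. The rest follows from the Part~2 test. Note that the argument never uses the $r$ in ``$r$-revealing'', since Part~3 is only reached when neither test fires. It follows that no revealing edge of any size is ever removed, and in particular no $r$-revealing edge is.
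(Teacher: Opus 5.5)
Your proof is correct and follows the paper's argument: the same case split on whether the maximal clique $Q$ witnessing that $e'$ is revealing coincides with $W(e)$, with each case contradicting the fact that Parts~1 and~2 did not fire on $e$. The differences are cosmetic. You exhibit the Part~2 witness $z$ directly and make explicit that reaching Part~3 forces $W(e)$ to be the unique maximal clique through $e$. The paper instead routes both cases through \Cref{claim:rh:part 12 correctness}.
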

\begin{proof}[Proof of \Cref{claim:rh:part 3 correctness}]
    Let $f=(u_1,u_2)$ be an $r$-revealing edge in $L$, lying in a maximal clique $Q$ together with an $r$-violating edge $h=(x,y)$.
    Suppose we remove $f$ from $L$ after processing edge $e=(v_1,v_2)$.
    Then $f\in W(e)$, otherwise it would not be removed.
    If $W(e)\neq Q$, then $f$ is $r$-violating, so processing $e$ finds an induced diamond by \Cref{claim:rh:part 12 correctness}.
    If $W(e)=Q$, then $e$ is $r$-revealing because $\W(e)$ contains the $r$-violating edge $h$. Then \Cref{claim:rh:part 12 correctness} applies to $e$.
\end{proof}
The correctness of \Cref{alg:process helper} follows from \Cref{claim:rh:part 12 correctness,claim:rh:part 3 correctness}.
We analyze the total running time of processing all edges in $L$.
Consider the set of edges in $L$ processed by the algorithm $(e_1,e_2,\ldots,e_q)$, in the order they were processed.
Define $W_i=\W(e_i)$, and let $\QQ_i=\set{W_1,W_2,\ldots,W_i}$.
We define an auxiliary bipartite graph $F_i=(\QQ_i,V,E_i)$.
We add an edge between a clique $W\in \QQ_i$ and a vertex $v\in V$ if $v\in W$.
The following two lemmas are the main ingredients in the runtime analysis.
\begin{lemma}
    \label{lemma:rheavy:cycle implies diamond}
    If $F_i$ contains a $4$-cycle or a $6$-cycle, then the algorithm detects an induced diamond, no later than when processing edge $e_i$.
\end{lemma}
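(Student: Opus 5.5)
Let $i$ be the smallest index such that $F_i$ contains a $4$-cycle or a $6$-cycle. Such a cycle must use the clique-vertex $W_i$, since $F_{i-1}$ has none. The plan is to show that processing $e_i$ detects an induced diamond (or one was found earlier). It suffices to show either that $W_i$ is not a clique, so Part~1 fires, or that some vertex $z\notin W_i$ has two neighbors in $W_i$, so Part~2 fires. A basic fact I will use throughout: if $W(e)$ passes Part~1 and Part~2 without detection, then $W(e)$ is a clique and every vertex outside it has at most one neighbor in it. Hence $W(e)$ is a maximal clique, and every edge inside it lies in no other maximal clique. Since each $e_j$ with $j<i$ was processed without detection, each such $W_j$ has this ``isolated maximal clique'' property.

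\textbf{The $4$-cycle case.} A $4$-cycle through $W_i$ has the form $W_i - a - W_j - b - W_i$ with $j<i$ and distinct $a,b\in W_i\cap W_j$. If $W_i$ is not a clique, Part~1 fires. Otherwise $W_i$ is a clique containing $a,b$ and $W_i\neq W_j$, since the sets are distinct nodes. Note that if $W_i=W_j$ as sets, $e_i$ would lie inside $W_j\cap\Vp$ and would have been removed in Part~3, so this cannot occur. Because $W_j$ is maximal and $W_j\neq W_i$, there is some $z\in W_j\setminus W_i$ or some $z\in W_i\setminus W_j$.
\begin{itemize}
\item If $z\in W_j\setminus W_i$, then $z$ is adjacent to both $a$ and $b$, so Part~2 fires when processing $e_i$.
\item If $z\in W_i\setminus W_j$, then $z$ is adjacent to $a,b\in W_j$ and lies outside $W_j$. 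This contradicts the isolated property of $W_j$, since Part~2 would already have fired at step $j$.
\end{itemize}

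\textbf{The $6$-cycle case.} A $6$-cycle through $W_i$ has the form $W_i - a - W_j - b - W_k - c - W_i$ with distinct cliques and distinct vertices. Here $a,c\in W_i$, $a,b\in W_j$ and $b,c\in W_k$, and $j,k<i$. By minimality, $F_{i-1}$ has no $4$-cycle, so $|W_j\cap W_k|\le 1$. Assume $W_i$ is a clique, otherwise Part~1 fires. Then $a\sim c$; also $a\sim b$ inside $W_j$ and $b\sim c$ inside $W_k$, so $\{a,b,c\}$ is a triangle.

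Consider whether $b\in W_i$. If $b\notin W_i$, then $b$ is outside $W_i$ with two neighbors $a,c\in W_i$, so Part~2 fires. If $b\in W_i$, then $W_i\cap W_j\supseteq\{a,b\}$, which is a $4$-cycle in $F_i$, and we are back in the $4$-cycle case.

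For the cycle to be a genuine $6$-cycle we may assume no chord gives a $4$-cycle, but the case analysis above covers both possibilities anyway. Independently of $W_i$, one can also check the earlier cliques directly. Since $\{a,b,c\}$ is a triangle and $c\notin W_j$ (else $|W_j\cap W_k|\ge2$), the vertex $c$ has two neighbors $a,b$ in $W_j$ while lying outside it. This contradicts the isolated property of $W_j$ once we know $a\sim c$. That adjacency, however, is only guaranteed via $W_i$ being a clique, which is exactly the Part~1 check at step $i$. So in all cases detection happens by step $i$.

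\textbf{Main obstacle.} The delicate point is the bookkeeping between earlier and current steps. We must ensure that an earlier $W_j$ which passed Parts~1--2 really is a maximal clique with the one-neighbor property. We must also handle coincident cliques $W_i=W_j$: these are excluded by the Part~3 removal, or harmlessly treated as the same node of $F_i$. I expect verifying this removal argument to need care, since $e_i$'s endpoints must lie in $\Vp\cap W_j$, which holds because $L\subseteq G[\Vp]$.
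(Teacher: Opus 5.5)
Your proof is correct and is essentially the paper's argument. For a $4$-cycle, two distinct maximal cliques sharing two vertices force a vertex outside one of them with two neighbors inside it; for a $6$-cycle, you split on whether the cycle vertex opposite a chosen clique lies in that clique, exactly as the paper does with $W(e_1)$ and $u_3$. The only difference is bookkeeping. You anchor the argument at the newest clique $W_i$ via a minimal index, and you use directly the invariant that every earlier $W_j$ passed Parts 1 and 2 (so it is a maximal clique that no outside vertex meets twice), instead of going through revealing edges. This makes explicit two facts the paper uses silently: that the other sets on the cycle really are cliques, and that Part 3 rules out a repeated set $W_j=W_i$.
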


\begin{lemma}
    \label{lemma5:1:computation}
    If $F_i$ has no $6$-cycle, and $r\geq 60$, then
    \begin{align*}
        \abs{E(F_i)} < 64(n+ n^2/r^2)\;, &  & i < 64(n/r + n^2/r^3)\;.
    \end{align*}
\end{lemma}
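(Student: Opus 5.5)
The statement to prove is \Cref{lemma5:1:computation}. The plan is to combine two facts: an extremal bound on the number of edges of an unbalanced bipartite graph without short cycles, and the degree structure of $F_i$.

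First I fix the degrees. Every processed edge $e_j$ lies in $L$, so $|W_j|\in[r,2r]$. Hence each vertex $W_j\in\QQ_i$ has degree in $[r,2r]$ in $F_i$, which gives $r\cdot i\le |E(F_i)|$. The same reasoning also yields a lower bound on the clique side.

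Next I must justify working with girth at least $8$, since the lemma only assumes there is no $6$-cycle. In the intended application (\Cref{lemma:rheavy:cycle implies diamond}) a $4$-cycle also ends the algorithm. I would state the proof for $F_i$ with no $4$-cycle and no $6$-cycle. If a $4$-cycle must be excluded directly, I would argue as follows. A $4$-cycle means $|W_j\cap W_k|\ge 2$ for some $j\ne k$. Then Part~3 after processing $e_j$ would have removed any edge whose $W$ coincides with $W_j$, so the distinct cliques share at most one vertex unless a violating edge is present.

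Then I apply the classical bound for bipartite graphs of girth at least $8$ with parts of sizes $a=i$ and $b=n$ \cite{naor2005note,Neuwirth2001Girth8}:
\[
|E|\le c\bigl((ab)^{2/3}+a+b\bigr).
\]
Write $m=|E(F_i)|$. Since $m\ge ri$, we have $i\le m/r$, and therefore
\[
m\le c\bigl((mn/r)^{2/3}+m/r+n\bigr).
\]
For $r\ge 60$, which exceeds $2c$ once the constant $c$ is tracked, the $m/r$ term can be absorbed into the left-hand side. This leaves $m\lesssim (mn/r)^{2/3}+n$. If the first term dominates, then $m^{1/3}\lesssim (n/r)^{2/3}$, so $m\lesssim n^2/r^2$. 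Otherwise $m\lesssim n$. Altogether $m<64(n+n^2/r^2)$. Dividing by $r$ gives $i\le m/r<64(n/r+n^2/r^3)$.

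The main obstacle is pinning down the constants. This means checking that the cited girth-$8$ bound holds with a constant small enough that the $m/r$ absorption works once $r\ge 60$, and that the final factor $64$ comes out. If the sharp constant is unavailable, I would reprove the bound directly by counting paths of length $4$. Each vertex $v$ of $V$ with degree $d_v$ creates $\binom{d_v}{2}$ pairs of cliques meeting at $v$. With no $4$-cycle these pairs are distinct across different $v$, and with no $6$-cycle the resulting ``clique graph'' on $\QQ_i$ is triangle-free in the appropriate sense. Applying Cauchy--Schwarz to $\sum_v d_v=m$ together with the clique degrees $\ge r$ would then give the bound with explicit constants.
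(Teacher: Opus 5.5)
Your skeleton matches the paper's proof. Clique-side degrees $\ge r$ give $r\cdot i\le \abs{E(F_i)}$. You then substitute $i\le m/r$ into an unbalanced extremal bound, absorb the $m/r$ term using $r\ge 60$, and split on whether the $n$ term or the $(mn/r)^{2/3}$ term dominates.

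The gap is the extra hypothesis you introduce by choosing a girth-$8$ bound. That bound needs $F_i$ to be $C_4$-free, which the lemma does not assume, and your argument for it does not work. Part~3 only guarantees that the processed sets $W_j$ are pairwise distinct, because a later edge with both endpoints in $W_j\cap\Vp$ is deleted from $L$. It does nothing to stop two distinct $W_j,W_k$ from sharing two vertices. Your escape clause ``unless a violating edge is present'' is exactly the $4$-cycle case you set out to exclude, so the reasoning is circular. Your fallback path-counting argument has the same dependence, since it uses ``with no $4$-cycle these pairs are distinct.''

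You could recover $C_4$-freeness from the run of the algorithm. A shared pair is violating, so the earlier of $e_j,e_k$ is revealing and the algorithm halts before the later one is processed, as in the $4$-cycle case of \Cref{lemma:rheavy:cycle implies diamond}. But that proves a context-dependent variant, not the lemma as stated.

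The repair is that no $C_4$ condition is needed. \Cref{thm5:1:c6 extremal} \cite{naor2005note} bounds $C_6$-free bipartite graphs alone: $\mathrm{ex}(a,b,C_6)\le 3\bigl((ab)^{2/3}+a+b\bigr)$, so $c=3$. Take $a=n$ and $b=i\le m/r\le m/60$; this gives $\frac{19}{60}m\le (nm/r)^{2/3}+n$. If $m<30n$ you are done. Otherwise $n\le \frac{2}{60}m$, so $\frac{17}{60}m\le (nm/r)^{2/3}$. Hence $m^{1/3}\le\frac{60}{17}(n/r)^{2/3}<4(n/r)^{2/3}$, and so $m<64(n/r)^2$. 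Then $i\le m/r$ gives the second inequality. This is exactly how the paper settles the constants.
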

\begin{proof}[Proof of \Cref{lemma:rheavy:main} Using \Cref{lemma:rheavy:cycle implies diamond} and \Cref{lemma5:1:computation}]
    The maximum number of cliques that the algorithm processes before the auxiliary graph $F_i$ contains a $6$-cycle is bounded by $64(n/r+n^2/r^3)$ by \Cref{lemma5:1:computation}.
    Since the algorithm finds an induced diamond as soon as $F_i$ contains a $6$-cycle by \Cref{lemma:rheavy:cycle implies diamond}, the algorithm processes at most $i\leq 64(n/r+n^2/r^3)$ edges, and since each edge requires $O(n+r^2)$ time, the total running time is $O((n+r^2)\cdot (n/r+n^2/r^3))=O((n/r)^3 + n^2/r + nr)$ as required.
\end{proof}

We prove the two lemmas.

\begin{proof}[Proof of \Cref{lemma:rheavy:cycle implies diamond}]
    \renewcommand{\Q}{W}
    \renewcommand{\QQ}{\mathcal{Q}_i}
    A $4$-cycle in $F_i$ corresponds to two cliques $W(e_i),W(e_j)\in \QQ$ and two vertices $u_1,u_2\in W(e_i)\cap W(e_j)$.
    This means that $(u_1,u_2)$ is an edge that lies in two distinct maximal cliques, making it a violating edge, and an $r$-violating edge since both $W(e_i)$ and $W(e_j)$ have size in $[r,2r]$.
    After processing the first edge among $e_i,e_j$, say $e_i$, the algorithm learns that $e_i$ is an $r$-revealing edge, and terminates by \Cref{claim:rh:part 12 correctness}.

    We prove that if $F_i$ contains a $6$-cycle, then the algorithm detects an induced diamond.
    We first show that if $F_i$ contains a $6$-cycle, then $G$ has an induced diamond, and then explain why this diamond implies the rest of the lemma.
    Assume that $F_i$ contains a $6$-cycle as a subgraph (see \Cref{fig05:c6-to-diamond}).
    \begin{figure}[h!]
        \centering
        \begin{tikzpicture}[scale=.42,
            vertex/.style={circle, draw, fill=white, minimum size=28pt, inner sep=1pt, font=\Large, thick},
            vertexe/.style={rectangle,rounded corners=8pt, draw, fill=gray!10, minimum size=28pt, inner sep=1pt, font=\Large, thick},
            edge/.style={thick, >={Stealth[round]}, shorten >=1pt, shorten <=1pt},
            blue edge/.style={thick, color=blue!60, >={Stealth[round]}, shorten >=1pt, shorten <=1pt},
            red edge/.style={thick, dashed, color=red, >={Stealth[round]}, shorten >=1pt, shorten <=1pt},
            oval d1/.style={draw=blue!60!cyan, fill=blue!30, fill opacity=0.5, line width=.2pt},
            oval d2/.style={draw=red!60,       fill=red!30,  fill opacity=0.5, line width=.2pt},
            oval d3/.style={draw=teal!70,      fill=teal!30, fill opacity=0.5, line width=.2pt},
            oval label/.style={font=\Large, font=\bfseries}
            ]

            \def\rU{3.2}      
            \def\rE{5.1}      
            \def\rC{2.7}      
            \def\rN{3.15cm}   
            \def\rD{4cm}    

            \coordinate (cu3) at ( 90:\rU); 
            \coordinate (cu2) at (210:\rU); 
            \coordinate (cu1) at (330:\rU); 

            \coordinate (cv1) at (270:\rE);  
            \coordinate (cv2) at (150:\rE);  
            \coordinate (cv3) at ( 30:\rE);  

            \coordinate (mid1) at (270:\rC);
            \draw[oval d1] (mid1) circle (\rD);
            \node[oval label, blue!60!cyan] at ($(mid1) + (+2.5, -1.65)$) {$W(e_1)$};

            \coordinate (mid2) at (150:\rC);
            \draw[oval d2] (mid2) circle (\rD);
            \node[oval label, red!60] at ($(mid2) + (-2.7, -1.0)$) {$W(e_2)$};

            \coordinate (mid3) at (30:\rC);
            \draw[oval d3] (mid3) circle (\rD);
            \node[oval label, teal!70] at ($(mid3) + (2.7, -1.0)$) {$W(e_3)$};

            \node[vertex] (u1) at (cu1) {$u_1$};
            \node[vertex] (u2) at (cu2) {$u_2$};
            \node[vertex] (u3) at (cu3) {$u_3$};

            \node[vertexe] (v1) at (cv1) {$e_1$};
            \node[vertexe] (v2) at (cv2) {$e_2$};
            \node[vertexe] (v3) at (cv3) {$e_3$};


            \draw[blue edge] (u1) -- (u2);
            \draw[blue edge] (u2) -- (u3);
            \draw[blue edge] (u3) -- (u1);

            \draw[blue edge] (v1) -- (u1);
            \draw[blue edge] (v1) -- (u2);

            \draw[red edge] (v1) to[out=110, in=290] (u3);

            \draw[edge] (v2) -- (u2);
            \draw[edge] (v2) -- (u3);

            \draw[edge] (v3) -- (u3);
            \draw[edge] (v3) -- (u1);

        \end{tikzpicture}
        \caption{A $6$-cycle in the graph $F_i$: $u_1\to W(e_1)\to u_{2}\to W(e_2)\to u_{3}\to W(e_3)\to u_{1}$.
        }
        \label{fig05:c6-to-diamond}
    \end{figure}

    Let $u_1\to W(e_1)\to u_{2}\to W(e_2)\to u_{3}\to W(e_3)\to u_{1}$, 
    be a $6$-cycle in $F_i$, where $W(e_j)\in \QQ$, and $u_{j}\in V$, for every $1\leq j\leq 3$.
    We use $(v_1,v_2)$ to denote the endpoints of $e_1$.
    We show that $e_1$ is an $r$-revealing edge, and therefore by \Cref{claim:rh:part 12 correctness} the algorithm finds an induced diamond while processing $e_1$.
    We consider two cases, based on whether $u_3\in W(e_1)$ or not:
    \begin{itemize}
        \item If $u_3\in W(e_1)$, then $(u_1,u_3)$ is a violating edge; both in $W(e_1)$ and $W(e_3)$.
              Thus, $e_1$ is an $r$-revealing edge.
        \item If $u_3\notin W(e_1)$, then it is not a neighbor of say $v_1$, but it has two neighbors $u_1,u_2\in W(e_1)$, thus $(u_1,u_2)$ is an $r$-violating edge, and $e_1$ is an $r$-revealing edge.
    \end{itemize}
\end{proof}

We prove \Cref{lemma5:1:computation}.
We need one more theorem from extremal combinatorics that bounds the number of edges in an unbalanced bipartite graph that does not contain a $6$-cycle as a subgraph.
\begin{theorem}[{\cite[Theorem 1]{naor2005note}}]\label{thm5:1:c6 extremal}
    $\mathrm{ex}(a,b,C_6)\leq 3\cdot \brak{(a\cdot b)^{2/3} + a + b}$.
\end{theorem}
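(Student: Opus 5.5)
Since this is a cited extremal bound, I would follow the counting strategy of Naor and Verstra\"ete / de Caen and Sz\'ekely rather than anything specific to this paper. The plan has three steps: regularize the degrees, build an auxiliary graph on one side, and then count short paths in it.

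Let $F$ be a $C_6$-free bipartite graph with parts $A,B$, $|A|=a$, $|B|=b$, and $e$ edges. Assume $e\ge 3(a+b)$, since otherwise we are done. Repeatedly delete any vertex of $A$ of degree below $e/(4a)$ and any vertex of $B$ of degree below $e/(4b)$. This removes at most $e/2$ edges in total. We are left with a nonempty $C_6$-free subgraph $F'$ with $e'\ge e/2$ edges, in which every $A$-vertex has degree at least $d_A:=e/(4a)$ and every $B$-vertex has degree at least $d_B:=e/(4b)$. The additive terms $a+b$ in the bound absorb the regime where these minimum degrees drop below a constant.

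Next, build an auxiliary graph $H$ on $A$. For each $v\in B$, order $N_{F'}(v)$ arbitrarily and add the consecutive pairs, or better a matching of size $\lfloor \deg(v)/2\rfloor$ inside $N_{F'}(v)$, labeling each such edge by $v$. Thus $H$ has about $e'/2$ edges, and each label class is a matching. Now use $C_6$-freeness:
\begin{itemize}
\item If two $H$-edges $xy$ and $yz$ with distinct labels $v\neq v'$ are closed by a third $H$-edge $xz$ with a label $v''\notin\{v,v'\}$, then $x v y v' z v'' x$ is a $6$-cycle in $F'$.
\item Hence $H$ has no rainbow triangle.
\item A similar argument shows that $H$ cannot contain many parallel copies of a pair in different labels (these would be $C_4$'s in $F'$ extending to $C_6$'s).
\end{itemize}

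The counting step is where the exponent $2/3$ must come from, and it is the main obstacle. I would count rainbow paths of length $3$ in $F'$ between $A$ and $B$, i.e.\ $A$-$B$-$A$-$B$ walks with distinct vertices.
\begin{itemize}
\item Lower bound: by the minimum degrees, their number is at least $e'\,(d_A-1)(d_B-1)\gtrsim e^3/(ab)$.
\item Upper bound: $C_6$-freeness should bound the number of such paths between a fixed pair $(x,w)\in A\times B$. Two internally disjoint $3$-paths from $x$ to $w$ form a $C_6$. So the paths between $x$ and $w$ must pairwise intersect, which I expect forces them through a single common vertex and gives $O(\deg x+\deg w)$ paths per pair.
\item Summing the upper bound over adjacent-or-not pairs yields roughly $O(e\cdot(\text{typical degree}))$ paths, which alone is too weak. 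I would therefore instead bound the number of pairs $(x,w)$ carrying at least two internally disjoint such paths, which should be zero.
\item Combining the two counts gives $e^3/(ab)\lesssim a\cdot b$, i.e.\ $e\lesssim (ab)^{2/3}$.
\end{itemize}

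The delicate part is handling non-disjoint paths that share a middle vertex. This is exactly where the labeled matching structure of $H$, with no rainbow triangles, is used: it controls such overlaps and keeps the constant at $3$. If the direct count fails, I would fall back to Naor--Verstra\"ete's argument on $H$. Their lemma is that a graph with no rainbow triangle, whose label classes are matchings, has a large subgraph of girth at least $5$. Applying this to $H$ and then using the even-cycle bound $\mathrm{ex}(n,\{C_3,C_4\})$ together with the degree constraints from $B$ should recover $(ab)^{2/3}$.
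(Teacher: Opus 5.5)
The paper does not prove this statement; it imports it from Naor--Verstra\"ete. Your sketch therefore has to stand on its own, and it has a genuine gap exactly where you call the argument delicate. Two parts are fine: the lower bound $e'(d_A-1)(d_B-1)$ on $A$--$B$ paths of length $3$, and the observation that two internally disjoint such paths from $x$ to $w$ close a $C_6$. The step ``combining the two counts gives $e^3/(ab)\lesssim ab$'' is where it breaks. It needs $O(1)$ paths per pair $(x,w)$, and $C_6$-freeness does not provide this. Two paths $xvyw$, $xvy'w$ sharing $v$ span the $4$-cycle $vywy'$, and two paths $xvyw$, $xv'yw$ sharing $y$ span the $4$-cycle $xvyv'$. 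A $C_6$-free graph may contain unboundedly many such $4$-cycles. For example, take $K_{2,s}$ with $B$-side $\{v,w\}$ and add an $A$-vertex $x$ adjacent to $v$. This graph is $C_6$-free, since a $6$-cycle needs three $B$-vertices, yet it has $s$ paths from $x$ to $w$. If you place the two-vertex side in $A$ instead, $H$ gets $s$ parallel copies of one pair, so your bullet on parallel copies is also false. What you have written is the standard path-counting argument for bipartite graphs of girth at least $8$. For $C_6$-free graphs, the $4$-cycles are the entire difficulty.

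Your remedies do not address this. Each label class of $H$ is a matching, so every triangle of $H$ is automatically rainbow, and ``no rainbow triangle'' just means $H$ is triangle-free. Meanwhile, $4$-cycles of $F'$ show up in $H$ as parallel edges, not as triangles. The fallback lemma is false. The $1$-subdivision of $K_{s,s}$ is $C_6$-free, and its auxiliary graph is $H=K_{s,s}$ with distinct labels, whose girth-$5$ subgraphs have only $O(s^{3/2})=o(e(H))$ edges. Even if $H$ had a girth-$5$ subgraph with a constant fraction of its edges, $\mathrm{ex}(a,\{C_3,C_4\})=O(a^{3/2})$ would only give $e=O(a^{3/2}+b)$. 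That is $n^{3/2}$ instead of $n^{4/3}$ when $a=b=n$.

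The missing idea is to reduce to the $C_4$-free case first. By a lemma of Gy\H{o}ri, every $C_6$-free bipartite graph has a $C_4$-free subgraph with at least half of its edges. After that step and your regularization, the graph has girth at least $8$. Two distinct $3$-paths from $x\in A$ to $w\in B$ would then create a $C_4$ or a $C_6$, so each pair carries at most one. Your count then gives $e'(d_A-1)(d_B-1)\le ab$, hence $e=O((ab)^{2/3}+a+b)$. This proves the bound up to an absolute constant, which suffices for \Cref{lemma5:1:computation} with its numbers adjusted. The exact constant $3$ needs the finer argument of Naor--Verstra\"ete, and your claim that the matching structure ``keeps the constant at $3$'' is unsupported.
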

In words, any bipartite graph with parts of sizes $a$ and $b$ and at least $3\cdot \brak{(a\cdot b)^{2/3} + a + b}$ edges contains a $6$-cycle as a subgraph.
Using \Cref{thm5:1:c6 extremal}, we show that if $F$ has no $6$-cycle, then it must be sparse.

\begin{proof}[Proof of \Cref{lemma5:1:computation}]
    \newcommand{\ee}{m_F}
    We use $\ee$ for $\abs{E(F_i)}$, and $N$ for $\abs{\QQ_i}$.
    Since every clique in $\QQ_i$ is of size at least $r$, it is incident to at least $r$ edges in $F_i$, so $\ee\geq r\cdot N$, implying that $N\leq \ee/r$.
    We plug this into \Cref{thm5:1:c6 extremal}, obtaining that if $F_i$ has no $6$-cycle, then
    \begin{align*}
        \ee           & \leq 3\cdot \brak{(n\cdot N)^{2/3} + n + N} \\
        \frac{\ee}{3} & \leq (n\cdot (\ee/r))^{2/3} + n + \ee/r\;.
    \end{align*}
    Since $r\geq 60$, we can replace $\ee/r$ by $\ee/60$. We get
    \begin{align*}
        \frac{19\ee}{60} & \leq (n\cdot\ee/r)^{2/3} + n\;.
    \end{align*}
    Assume that $\ee \geq 30n$, since otherwise the lemma holds trivially.
    \begin{align*}
        \frac{17\ee}{60} & \leq (n\cdot\ee/r)^{2/3}\;.
    \end{align*}
    By rearranging, we obtain:
    \begin{align*}
        (\ee)^{1/3} & \leq \frac{60}{17}\cdot (n/r)^{2/3} < 4\cdot (n/r)^{2/3} \\
        \ee         & < 64\cdot (n/r)^{2}\;.
    \end{align*}
    By plugging this back into $N\leq \ee/r$, we obtain
    $N\leq \ee/r < 64\cdot (n^2/r^3)$, which concludes the proof.
\end{proof}

\section{Detecting $r$-Light Diamonds}
\label{sec:r-light}
In this section we prove the following.
\ThmLightD* 

Recall that a diamond is $r$-light if it does not contain three vertices inside a clique of size $r$, and $\rmax$ is the largest integer $r$ such that $G$ contains an $r$-heavy diamond.

\paragraph{Roadmap.}
In \Cref{ssec:reduction} we present the color-coding reduction and the colorful diamond detection algorithm.
In \Cref{ssec:subsampling} we describe the sampling framework of~\cite{CEW24,CEW25} and our algorithm \algSensitive, which is based on the former.
In \Cref{ssec:main-light} we prove the main result of the section, \Cref{thm:r-light diamonds}.
In \Cref{ssec:refine} we prove several tools used in the previous subsection,
which may be of independent interest; specifically, we refine the sampling framework of~\cite{CEW24,CEW25}.
\subsection{Colorful Induced Diamond Detection}
\label{ssec:reduction}

This subsection has two main components. First, we reduce the problem of detecting an induced diamond in a graph $G$ with $n$ vertices, $t$ induced diamonds, and $x$ vertices that participate in an induced diamond to the problem of detecting a colorful induced diamond in a graph $G^\prime$ with $\TO{n}$ vertices, $\TO{m}$ edges, $\tilde{\Theta}(t)$ colorful induced diamonds, and $\tilde{\Theta}(x)$ vertices that participate in a colorful induced diamond.
Second, we provide an algorithm for detecting a colorful induced diamond in $G^\prime$. 
Note that a colorful induced diamond is an induced diamond with vertices of distinct colors.

\begin{lemma}\label{lemma:reduction}
    There exists a randomized $\TO{n+m}$ time reduction that, given a graph $G$ with
    $n$ vertices, $m$ edges, $t$ induced diamonds and $x$ vertices that participate in an induced diamond, outputs a graph $G^\prime$ with $\tilde{\Theta}(n)$ vertices, $\tilde{\Theta}(m)$ edges, $\tilde{\Theta}(t)$ colorful induced diamonds, and $\tilde{\Theta}(x)$ vertices that participate in a colorful induced diamond, with probability $1-O(1/n^8)$.
\end{lemma}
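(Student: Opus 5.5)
The reduction is plain color-coding with independent repetitions, glued as a disjoint union. We draw $k=\Theta(\log n)$ independent uniformly random colorings $\phi_1,\dots,\phi_k: V(G)\to[4]$. For each $j$ we make a copy $G_j$ of $G$ colored by $\phi_j$. The output $G'$ is the disjoint union $G_1\sqcup\dots\sqcup G_k$, with no edges between copies. Computing this takes $O(k(n+m))=\TO{n+m}$ time. The vertex and edge counts are exactly $kn=\tilde{\Theta}(n)$ and $km=\tilde{\Theta}(m)$.

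The structural step is to note that, since the copies are disjoint and there are no edges between them, every induced diamond of $G'$ lies inside a single copy $G_j$. Such a diamond is colorful in $G'$ exactly when the corresponding diamond $D$ of $G$ receives four distinct colors under $\phi_j$. Hence the number of colorful induced diamonds of $G'$ is $T=\sum_j\sum_{D}\mathbf{1}[D \text{ colorful under }\phi_j]$. The upper bound $T\le kt=\tilde{O}(t)$ is deterministic. Similarly, the number of vertices in colorful diamonds is at most $kx$.

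For the lower bounds:
\begin{itemize}
\item \emph{Diamonds.} A fixed diamond is colorful under one coloring with probability $4!/4^4=3/32$. The indicators $Z_j=\sum_D \mathbf{1}[D\text{ colorful under }\phi_j]$ are independent across $j$, with $Z_j\in[0,t]$ and $\Exp{Z_j}=3t/32$. By the reverse Markov inequality (\Cref{lem:rev mark}) with $R=3t/64$, $\Pr{Z_j>3t/64}\ge (3t/64)/(t-3t/64)\ge 3/61$. By Chernoff over the $k=c\log n$ independent copies, at least one $j$ has $Z_j\ge 3t/64$ with probability $1-(58/61)^{c\log n}\ge 1-1/n^{8}$ for a large enough constant $c$. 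This gives $T=\tilde\Omega(t)$, and in fact $T=\Theta(t)$ up to the $k$ factor.
\item \emph{Vertices.} Fix for each vertex $v$ in some diamond one witness diamond $D_v$. Let $Y_j$ count the vertices $v$ with $D_v$ colorful under $\phi_j$; then $Y_j\in[0,x]$ and $\Exp{Y_j}=3x/32$. The same reverse-Markov-plus-independent-repetition argument gives some $j$ with $Y_j\ge 3x/64$, with probability $1-1/n^8$. Every vertex counted by $Y_j$ participates (in copy $j$) in a colorful diamond, so the count is $\tilde\Omega(x)$. A union bound over the two events gives total failure probability $O(1/n^8)$.
\end{itemize}

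The main obstacle is that the colorful-diamond counts across different diamonds within one coloring are not independent, so Chernoff cannot be applied per coloring. Boundedness plus reverse Markov handles this: it gives a constant success probability per copy, and the independent repetitions then amplify it to high probability. One also must be careful that "$\tilde\Theta$" absorbs the $k=\Theta(\log n)$ blow-up in the upper bounds, which is acceptable as stated. If one insisted on a single copy instead of a union, the lower bounds would only hold with constant probability, which is why the disjoint union is essential.
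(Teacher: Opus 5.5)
Your proposal is correct and follows the paper's proof. Both use $\Theta(\log n)$ disjoint copies with independent random $4$-colorings, reverse Markov per copy for a constant success probability (the paper takes $R=t/100$ and gets $1/12$; your $3/61$ is equally fine), independence across copies to amplify to $1-1/n^8$, and a union bound over the diamond and vertex events. Your witness-diamond variable $Y_j$ is a nice touch: it makes rigorous the vertex case that the paper omits as ``nearly identical,'' where $\Exp{x_i}$ is only bounded below by $3x/32$ rather than equal to it.
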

We describe the reduction and prove that it satisfies the desired properties \whp.
\begin{mdframed}[frametitle={The Reduction}, frametitlealignment=\centering, backgroundcolor=gray!10]
    \begin{enumerate}
        \item Take $r$ copies of $G$ denoted by $(G_1,\ldots,G_{r})$, where $r=100\log n$.
        \item Sample $r$ uniformly random colorings $\varphi_i:V(G_i)\to [4]$. \Comment{$\varphi_i$ is a random function that maps each vertex in $G_i$ to a color in $[4]$ uniformly at random.}
        \item Let the new graph be $G^\prime \triangleq \bigsqcup_{i=1}^r G_i$,
        and define a coloring $\phi:V(G^\prime)\to[4]$ as $\phi(v) = \varphi_i(v)$ for every $v \in V(G_i)$ and every $i \in [1,r]$.
    \end{enumerate}
\end{mdframed}

To prove \Cref{lemma:reduction} we use the following claim.
\begin{claim}\label{claim:z0}
    For every $i \in [1,r]$, denote by $t_i$ and $x_i$ the number of colorful induced diamonds and vertices that participate in a colorful induced diamond in $G_i$, respectively. Then, the following inequalities hold for every $i \in [1,r]$:
    \begin{enumerate}
        \item $\Pr{t_i\geq t/100}\geq 1/12$.
        \item $\Pr{x_i\geq x/100}\geq 1/12$.
    \end{enumerate}
\end{claim}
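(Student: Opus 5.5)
Both items follow from the same computation: a first-moment bound on the expected count, converted into a constant-probability lower bound by the reverse Markov inequality (\Cref{lem:rev mark}). Fix $i$ and write $\varphi=\varphi_i$. A fixed induced diamond $D$ on vertex set $\{a,b,c,d\}$ is colorful under a uniformly random $\varphi:V\to[4]$ exactly when its four vertices receive four distinct colors. This happens with probability $4!/4^4=24/256=3/32$. By linearity of expectation, $\Exp{t_i}=\tfrac{3}{32}t$. We also always have $0\le t_i\le t$, since every colorful induced diamond in $G_i$ is an induced diamond of $G$. Applying \Cref{lem:rev mark} with $M=t$ and $R=t/100$ gives
\[
\Pr{t_i>t/100}\ \ge\ \frac{\tfrac{3}{32}t-\tfrac{t}{100}}{t-\tfrac{t}{100}}\ =\ \frac{3/32-1/100}{99/100}\ \approx\ 0.0846\ >\ 1/12,
\]
which proves item 1. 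The degenerate case $t=0$ is trivial.

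For item 2 we need, for every vertex $v$ that lies in some induced diamond, a lower bound on the probability that $v$ lies in a \emph{colorful} induced diamond. Fix one induced diamond $D_v\ni v$. With probability $3/32$ the diamond $D_v$ is colorful, in which case $v$ is counted in $x_i$. Hence $\Exp{x_i}\ge \tfrac{3}{32}x$. We also have $0\le x_i\le x$, because a vertex in a colorful induced diamond of $G_i$ is in particular in an induced diamond of $G$. Running the identical reverse-Markov computation with $M=x$ and $R=x/100$ gives $\Pr{x_i>x/100}\ge 1/12$.

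I do not expect a real obstacle here. The only points that need care are that the bounds $t_i\le t$ and $x_i\le x$ hold deterministically, which is what \Cref{lem:rev mark} requires, and checking the numeric constant $(3/32-1/100)/(99/100)>1/12$. This holds because $3/32-1/100\approx0.0838$, while $(1/12)(0.99)=0.0825$. Independence across the copies $G_i$ is not needed for this claim; it becomes relevant only later, when the copies are combined via Chernoff to get $\tilde\Theta(t)$ and $\tilde\Theta(x)$ \whp in \Cref{lemma:reduction}.
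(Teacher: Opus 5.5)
Your proposal is correct and matches the paper's proof: compute $\Exp{t_i}=\frac{3}{32}t$ by linearity, then apply the reverse Markov inequality with $M=t$ and $R=t/100$. For item~2, the paper omits the argument as ``nearly identical,'' and your bound $\Exp{x_i}\ge\frac{3}{32}x$, obtained by fixing one diamond through each participating vertex, is exactly the step needed to fill it in.
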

\begin{proof}[Proof of \Cref{claim:z0}]
    First, we prove that for every $i$, we have that $\Pr{t_i\geq t/100}\geq 1/12$ using \Cref{lem:rev mark}.
    We compute $\Exp{t_i}$. It suffices to compute $\Exp{t_1}$, as all $t_i$ are identically distributed.
    Let $Z_j$ be the indicator random variable that is equal to $1$ if the $j^{\text{th}}$ diamond in $G$ is colorful with respect to $\varphi_1$, and $0$ otherwise.
    We have that $t_1=\sum_{j=1}^t Z_j$, and therefore $\Exp{t_1}= \sum_{j=1}^t \Exp{Z_j}= t\cdot \Pr{Z_j=1}$.
    The probability that a fixed diamond is colorful with respect to $\varphi_1$ is $q\triangleq\frac{4!}{4^4}=\frac{6}{64}=\frac{3}{32}$.
    We therefore have that $\Exp{t_1}=q\cdot t$.
    We apply \Cref{lem:rev mark} (Reverse Markov's inequality) with $R=t/100$ and $M=t$ to get:
    \[
        \Pr{t_1\geq \frac{t}{100}}
        \geq \frac{\Exp{t_1}-\frac{t}{100}}{t-\frac{t}{100}}
        =    \frac{q\cdot t-\frac{t}{100}}{t(1-\frac{1}{100})}
        =    \frac{q-\frac{1}{100}}{1-\frac{1}{100}} \geq q-\frac{1}{100}\geq \frac{1}{12}\;.
    \]
    This completes the proof of the first part of \Cref{claim:z0}.
    The proof that $\Pr{x_i \geq \frac{x}{100}} \geq 1/12$ is nearly identical and is thus omitted.
\end{proof}

\begin{proof}[Proof of \Cref{lemma:reduction}]
    Clearly, $\Gp$ contains $\tilde{\Theta}(n)$ vertices, $\tilde{\Theta}(m)$ edges, $\TO{t}$ diamonds and $\TO{x}$ vertices that participate in a diamond.
    We show that $\Gp$ contains $\tilde{\Omega}(t)$ colorful induced diamonds and $\tilde{\Omega}(x)$ vertices that participate in a colorful induced diamond, with probability at least $1-O(1/n^8)$.

    We show that with high probability, there exist some $i,j \in [1,r]$ such that $t_i \geq t/100$ and $x_{j} \geq x/100$.
    Formally, we use $\EE_t$ to denote the event that there exists $i\in[1,r]$ such that $t_i\geq t/100$. Similarly, we define $\EE_x$ as the event that there exists $j\in[1,r]$ such that $x_j\geq x/100$. Since the colorings are independent, we get using \cref{claim:z0} that:

    \[
        \Pr{\EE_t} = 1- \prod_{i=1}^r \Pr{t_i < t/100} \geq 1 - (11/12)^r  \geq 1-1/n^8
    \]
    and similarly $\Pr{\EE_x} \geq 1-1/n^8$. Hence, we get:
    \[
        \Pr{\EE_t\land \EE_x}\geq \Pr{\EE_t}+\Pr{\EE_x}-1 \geq 1-2/n^8\;.
    \]
    This shows that in $\Gp$, there are $\Omega(t)$ colorful induced diamonds and $\Omega(x)$ vertices that participate in such diamonds. The reduction clearly takes $\TO{n+m}$ time, as it requires copying the graph $G$ for $r$ times and sampling $r$ random colorings.
\end{proof}

\begin{remark}
    In \Cref{ssec:main-light} we use the exact same reduction to get a graph $\Gp$ that is colored with $3$ colors instead of $4$, using a random $3$-coloring instead of a random $4$-coloring.
\end{remark}

We now provide an algorithm for detecting a colorful induced diamond in a $4$-colored graph, which we apply on the output of the reduction in \Cref{lemma:reduction}.
We provide an adaptation of the randomized diamond detection algorithm by \cite{williams2014finding} to the colorful and unbalanced setting.
The running time depends on the sizes $n_1 \geq n_2 \geq n_3 \geq n_4$ of the color classes.

\begin{lemma}\label{lem:colorful-det}
    There exists an algorithm $\algCD$ that determines whether a $4$-colored graph $G$ on $n$ vertices contains a colorful induced diamond in time $\TO{\MM{n_1,n_2,n_3}}$. The algorithm succeeds \whp.
\end{lemma}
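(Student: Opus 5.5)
The plan is to adapt the algebraic approach of \cite{williams2014finding}: compute a weighted count over colorful $4$-sets in which colorful $4$-cliques cancel and colorful induced diamonds survive, then test the resulting polynomial for non-zeroness by random evaluation. Let $V_1,\ldots,V_4$ be the color classes. Fix a finite field $\mathbb{F}=\mathrm{GF}(2^k)$ with $k=\Theta(\log n)$. Assign to every vertex $v$ an independent uniformly random value $x_v\in\mathbb{F}$. We compute
\[
S \;=\; \sum_{\{a,b\}\sqcup\{c,d\}=[4]}\ \sum_{\substack{u\in V_a,\,v\in V_b\\ (u,v)\in E}} x_u x_v\, A_{c}(u,v)\, A_{d}(u,v),
\qquad A_{c}(u,v)\triangleq\sum_{w\in V_c\cap N(u)\cap N(v)} x_w ,
\]
where the outer sum ranges over the $6$ ways to choose an unordered color pair $\{a,b\}$ for the diagonal edge.

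First I check what $S$ is as a polynomial in the variables $x$. Expanding, each term corresponds to a colorful $4$-set $Q=\{u,v,w,w'\}$ with $u\in V_a$, $v\in V_b$, $w\in V_c$, $w'\in V_d$. Such a term appears exactly when $(u,v)$ is an edge and $w,w'$ are both common neighbors of $u$ and $v$. Its monomial is $x_Q=\prod_{y\in Q}x_y$, which is multilinear and distinct for distinct $Q$.

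So the coefficient of $x_Q$ equals the number of edges of $G[Q]$ whose two endpoints are both adjacent to the other two vertices of $Q$. There are three cases:
\begin{itemize}
\item If $G[Q]$ is a $4$-clique, all $6$ edges qualify, so the coefficient is $6=0$ in characteristic $2$.
\item If $G[Q]$ is an induced diamond, exactly one edge qualifies, namely the diagonal, so the coefficient is $1$.
\item Otherwise no edge qualifies: the edge would have to lie in two triangles sharing it, which forces $G[Q]$ to contain a diamond, i.e., to be $K_4$ or an induced diamond.
\end{itemize}
Hence $S$, viewed as a polynomial, equals $\sum_{Q\text{ colorful induced diamond}} x_Q$. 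This polynomial is non-zero if and only if a colorful induced diamond exists, and it has degree $4$. By Schwartz--Zippel, a random evaluation is non-zero with probability at least $1-4/2^k$ when such a diamond exists, and it is always zero otherwise. Repeating $O(\log n)$ times, or taking $k$ large enough, gives success \whp.

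Next comes the running time. For a fixed split $\{a,b\},\{c,d\}$, the matrix $A_c$ restricted to $V_a\times V_b$ is the product of the $|V_a|\times|V_c|$ matrix $M_{ac}[u,w]=[u\sim w]\,x_w$ with the $|V_c|\times|V_b|$ adjacency matrix between $V_c$ and $V_b$, computed over $\mathbb{F}$. This costs $\MM{n_a,n_c,n_b}$ field operations, each $\TO{1}$. Since the three sizes are three distinct color classes, and $\omega(\cdot,\cdot,\cdot)$ is symmetric and monotone, this is at most $\MM{n_1,n_2,n_3}$. The same holds for $A_d$. Summing $x_ux_vA_c(u,v)A_d(u,v)$ over edges then takes $O(n_an_b)$ additional time. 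Over the $6$ splits the total time is $\TO{\MM{n_1,n_2,n_3}}$.

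The main point requiring care is the cancellation argument. I must verify that only $K_4$ and induced diamonds yield a qualifying edge: for four vertices, an edge $(u,v)$ together with two common neighbors means all edges except possibly $(w,w')$ are present. I must also verify that distinct $4$-sets give distinct monomials, so that diamonds cannot cancel among themselves. Characteristic $2$ handles the factor $6$ from $K_4$. The remaining details, namely field arithmetic in $\TO{1}$ time and the amplification, are routine.
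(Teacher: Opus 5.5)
Your proof is correct. It shares the paper's core identity: a colorful $4$-set is counted once for each edge whose endpoints are both adjacent to the two remaining vertices, so $6$ times for a colorful $K_4$ and once, via the chord, for a colorful induced diamond. It also shares the cost analysis, in which every product involves three distinct color classes and is therefore dominated by $\MM{n_1,n_2,n_3}$.

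Where you genuinely differ is the identity-testing step. The paper computes the integer quantity $Z_\phi(G[U])=6s'+t'$ using only products of $0/1$ matrices. It randomizes by keeping each vertex with probability $1/2$ and tests whether $Z_\phi(G[U])\not\equiv 0 \pmod 6$. This amounts to evaluating the diamond polynomial at random Boolean points: your $S$ with $x_v\in\{0,1\}$, computed over the integers, is exactly $Z_\phi(G[U])$ for $U=\{v : x_v=1\}$. Because the domain $\{0,1\}$ is too small for Schwartz--Zippel, the paper has to invoke Lemma~2.2 of \cite{williams2014finding}, which gives only a $1/16$ success probability per trial and hence needs $O(\log n)$ repetitions.

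You instead evaluate at random points of a large field of characteristic $2$. There the $K_4$ multiplicity $6$ vanishes, and plain Schwartz--Zippel bounds the failure probability of a single trial by $4/2^k$. The price is that your matrix products are over $\mathrm{GF}(2^k)$ rather than over the integers with small entries. You should therefore remark that fast rectangular matrix multiplication applies there. Either note that the known bilinear algorithms work over any field, or represent field elements as polynomials over $\mathrm{GF}(2)$ and use Kronecker substitution to reduce to integer products with polylogarithmic overhead. In exchange, you avoid the less standard mod-$6$ Boolean-evaluation lemma entirely.
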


\paragraph{Overview of the algorithm of \cite{williams2014finding}.}
Let $A$ denote the adjacency matrix of $G$. The algorithm begins by computing $Z(G)$, defined as the sum of the number of pairs of common neighbors for every edge.
As observed by Kloks--Kratsch--M\"uller~\cite{kloks2000finding}, this sum relates to the number of $K_4$'s and induced diamonds in $G$, as follows:
\[
    Z(G) = \sum_{(u,v)\in E} \binom{A^2[u,v]}{2} = 6\cdot \#K_4 + \#\mathsf{Diamonds}\,.
\]
This implies that if $Z(G) \not\equiv 0 \pmod 6$, the graph must contain an induced diamond. However, if $Z(G)$ is divisible by $6$, the test is inconclusive (e.g., the graph might contain $6$ diamonds). 
To resolve this, random noise is introduced by subsampling the vertices. 
A key lemma in \cite{williams2014finding} establishes that if diamonds exist, the count in a random subgraph will not be $0 \pmod 6$ with constant probability. 
The algorithm repeats this process $O(\log n)$ times to achieve high probability. 
The result is a randomized algorithm with one-sided error; the algorithm never reports a diamond if none exists, and if one does, it detects it with high probability.
\medskip
\noindent

Suppose now that the graph is $4$-colored as in the statement of \Cref{lem:colorful-det}. To prove the lemma, we define a colored variant of $Z(G)$ as follows.
Recall that $E$ denotes the edge set of $G$, and let $E_\phi$ denote the set of edges whose endpoints have different colors.
Consider an edge $e=(u,v)\in E_\phi$, where $u$ and $v$ have distinct colors $a,b \in [4]$. Let $c,d \in [4] \setminus \{a,b\}$ be the two remaining distinct colors.
We define $k_c(e)$ as the number of common neighbors of $u$ and $v$ that have color $c$. 
Let $y_e \triangleq k_{c}(e)\cdot k_{d}(e)$ be the number of pairs of common neighbors of $e$ with colors $c$ and $d$.
We define $Z_\phi(G)$ to be the sum of $y_e$'s over all edges:
\[
    Z_\phi(G) = \sum_{e\in E_{\phi}} y_e\;.
\]
The following claim is analogous to the observation in \cite{kloks2000finding}.

\begin{claim}\label{claim:colored exact counting}
    Let $s$ denote the number of colorful $K_4$ instances and $t$ denote the number of colorful induced diamonds in a $4$-colored graph $G$. Then, $Z_\phi(G) = 6s + t$.
\end{claim}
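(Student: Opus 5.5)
We prove the identity $Z_\phi(G)=6s+t$ by double counting. We write $Z_\phi(G)$ as the number of pairs $(e,\{w,w'\})$ such that $e=(u,v)\in E_\phi$ and $w,w'$ are common neighbors of $u$ and $v$ whose colors are exactly the two colors $c,d$ missing from $\{\phi(u),\phi(v)\}$. In every such pair the set $X=\{u,v,w,w'\}$ has four distinct colors, so it is colorful. It contains the edges $uv,uw,uw',vw,vw'$, so $G[X]$ is either a $K_4$ or a $K_4$ minus the single possible non-edge $ww'$. In the second case $G[X]$ is an induced diamond whose diagonal (the edge shared by its two triangles) is $e$.

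We then group the pairs by the underlying colorful $4$-set $X$ and count how many pairs each set contributes.

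If $G[X]$ is a colorful $K_4$, we may take $e$ to be any of its $6$ edges. Since all colors in $X$ are distinct, the other two vertices automatically have the two complementary colors and are common neighbors of $e$. So $X$ contributes exactly $6$ pairs.

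If $G[X]$ is a colorful induced diamond with non-edge $ab$, any contributing $e$ must be an edge of $G[X]$ whose two complementary vertices are both adjacent to both endpoints of $e$. If $e$ contains $a$, then $b$ is one of the complementary vertices and is not adjacent to $a$, so this fails. The same holds if $e$ contains $b$. The only remaining edge is the diagonal $e=X\setminus\{a,b\}$, and it works because $a$ and $b$ are both adjacent to both of its endpoints. So $X$ contributes exactly one pair.

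No other $4$-sets contribute, by the first paragraph. Also, each pair $(e,\{w,w'\})$ determines $X$ uniquely, so the grouping is a partition of all pairs. Summing the contributions gives $Z_\phi(G)=6s+t$. The only point requiring care is the colorfulness requirement: the definition of $y_e$ counts only pairs with the two specific complementary colors, so each unordered pair $\{w,w'\}$ is counted once (never twice), and non-colorful sets never appear. There is no real obstacle here.
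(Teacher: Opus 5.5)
Your proof is correct and follows the same double-counting argument as the paper: each term of $Z_\phi(G)$ comes from a colorful $4$-set that is either a $K_4$, counted once for each of its $6$ edges, or an induced diamond, counted only through its diagonal. Your version is somewhat more explicit than the paper's in checking that grouping by $X$ partitions the counted pairs and that each unordered pair $\{w,w'\}$ is counted exactly once.
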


\begin{proof}[Proof of \Cref{claim:colored exact counting}]
    By definition, $y_e$ counts the pairs of common neighbors of $e=(u,v)$ whose colors are distinct from each other and from the colors of $u$ and $v$.
    Any such pair $\set{w,z}$ forms a colorful set of vertices $\set{u,v,w,z}$.
    If $w$ and $z$ are adjacent, they induce a colorful $K_4$. This $K_4$ contains $6$ edges; for each edge, the other two vertices form a valid pair, so the $K_4$ contributes exactly $6$ to the sum.
    If $w$ and $z$ are non-adjacent, they induce a colorful diamond with $(u,v)$ as the chord (the edge connecting the degree-$3$ vertices). A diamond has only one chord. For any other edge in the diamond, e.g. $e = (u,w)$, the diamond induced by $\set{u,v,w,z}$ is not counted in $y_e$ because $z$ is not a common neighbor of $u$ and $w$ in this diamond. Specifically, in a diamond, only the chord sees two common neighbors. Thus, every colorful induced diamond is counted exactly once.
\end{proof}

\begin{claim}[Algorithm $\algA$]\label{claim:alg for computing Z}
    There is an algorithm $\algA$ that, given a $4$-colored graph $G$, with color classes of sizes $n_1 \geq n_2 \geq n_3 \geq n_4$,
    computes $Z_\phi(G)$ in $O(\MM{n_1,n_2,n_3})$ time.
\end{claim}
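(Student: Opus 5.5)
To compute $Z_\phi(G)$, we decompose the sum by the unordered color pair $\{a,b\}$ of the edge $e=(u,v)$. There are $\binom{4}{2}=6$ such pairs, and for each we compute all the needed values $k_c(e)$ and $k_d(e)$ with rectangular matrix products. For colors $i,j$, let $A_{ij}$ be the $V_i\times V_j$ submatrix of the adjacency matrix, with $|V_i|=n_i$. For each pair $\{a,b\}$ and each remaining color $c$, we have $k_c(u,v)=(A_{ac}A_{cb})[u,v]$ for every $u\in V_a$, $v\in V_b$. The product has inner dimension $n_c$ and outer dimensions $n_a,n_b$, so it costs $\MM{n_a,n_c,n_b}$. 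Since $\omega(\cdot,\cdot,\cdot)$ is symmetric, this equals the cost of multiplying matrices with dimensions $\{n_a,n_b,n_c\}$ in any order.

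Once we have the two matrices $K_c=A_{ac}A_{cb}$ and $K_d=A_{ad}A_{db}$ for the pair $\{a,b\}$, we form
\[
\sum_{u\in V_a,v\in V_b} A_{ab}[u,v]\cdot K_c[u,v]\cdot K_d[u,v].
\]
This is an entrywise computation costing $O(n_an_b)$, which is dominated by the multiplication time. Summing the six contributions gives $Z_\phi(G)$ exactly. Each unordered edge in $E_\phi$ is counted once, because we range over unordered color pairs and the two endpoints of $e$ lie in distinct classes. The entries are at most $n^2$ in magnitude, so word-RAM arithmetic suffices.

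For the running time, each product involves three distinct color classes $\{a,b,c\}\subseteq[4]$, with dimensions $n_a,n_b,n_c$. If $n_4$ is one of them, then it is at most the corresponding larger class, and matrix multiplication time is monotone in each dimension (one can pad with zeros). Hence every product costs at most $\MM{n_1,n_2,n_3}$, again up to permutation by symmetry. There are only $12$ products in total, six pairs times two remaining colors, so the total time is $O(\MM{n_1,n_2,n_3})$.

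I expect no real obstacle here. The only points needing care are the monotonicity/padding argument that lets $n_4$ be replaced by a larger class, and the check that all products use three distinct classes. Both are immediate.
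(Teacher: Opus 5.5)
Your proposal is correct and follows essentially the same route as the paper: read off each $k_c(e)$ as an entry of $A_{ac}A_{cb}$, bound every product by $O(\mathsf{MM}(n_1,n_2,n_3))$ using symmetry and monotonicity, then sum $y_e$ over $E_\phi$. Your accounting of the final summation as $O(n_an_b)$ per color pair is slightly more careful than the paper's $O(n^2)$ remark, because it shows explicitly that this step is dominated by the matrix-multiplication cost.
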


\begin{proof}[Proof of \Cref{claim:alg for computing Z}]
    We first compute the $k_c(e)$ values for every edge $e \in E(G)$ and color $c \in [4]$ as follows.
    For distinct colors $i,j \in [4]$, let $A_{ij}$ be the $n_i \times n_j$ submatrix of $A$ representing edges between color classes $i$ and $j$.
    For an edge $e=(u,v)$ with colors $a,b$, the value $k_c(e)$ is the entry corresponding to $(u,v)$ in the product $A_{ac} A_{cb}$.
    We compute these products for all permutations of colors. The total time is bounded by the sum of matrix multiplication costs for all triplets of sizes. Since $\MM{\cdot}$ is symmetric and monotonic, the cost is dominated by the product of the three largest color classes:
    \[
        \sum_{a} \sum_{b \neq a} \sum_{c \neq a,b} \MM{n_a,n_c,n_b} = O(\MM{n_1, n_2, n_3})\;.
    \]
    Once these values are computed, $Z_\phi(G)$ is obtained by summing $y_e$ for all $e\in E_\phi$ in $O(n^2)$ time.
\end{proof}

\begin{proof}[Proof of \cref{lem:colorful-det}]
    The algorithm $\algCD$ is defined below.
    \begin{mdframed}[frametitle={\normalfont Algorithm $\algCD(G)$}, frametitlealignment=\centering, backgroundcolor=gray!10]
        \textbf{Input:} A $4$-colored graph $G$. \\
        \textbf{Output:} \yes if $G$ contains a colorful induced diamond, \no otherwise.

        \medskip
        \noindent
        Repeat $100\log n$ times:
        \begin{enumerate}[leftmargin=1.5cm]
            \item Sample a subset $U \subseteq V(G)$ by including each vertex with probability $1/2$.
            \item Invoke $\algA(G[U])$ to compute $Z_\phi(G[U])$.
            \item If $Z_\phi(G[U]) \not\equiv 0 \pmod 6$, output \yes and terminate.
        \end{enumerate}
        Output \no.
    \end{mdframed}
    \textbf{Running Time Analysis:} The algorithm runs for $\TO{1}$ iterations. Since $G[U]$ is an induced subgraph of $G$, computing $Z_\phi(G[U])$ using $\algA$ is not slower than computing $Z_\phi(G)$. By \Cref{claim:alg for computing Z}, this takes $O(\MM{n_1,n_2,n_3})$ time, where $n_1 \geq n_2 \geq n_3 \geq n_4$ are the color class sizes of $G$.

    \noindent
    \textbf{Correctness:}
    The correctness of computing $Z_\phi(G[U])$ using $\algA$ follows from \Cref{claim:alg for computing Z}.
    If $G$ contains no colorful diamonds ($t=0$), then neither does $G[U]$ for any $U$. By \Cref{claim:colored exact counting}, $Z_\phi(G[U]) = 6s'$ for some $s'$.
    Since $Z_\phi(G[U]) \equiv 0 \pmod 6$, the algorithm always outputs \no.

    If $G$ contains colorful diamonds ($t > 0$), we utilize the polynomial method from \cite{williams2014finding}. Let $P(x_1,\dots,x_n)$ be the polynomial:
    \[
        P(x_1,\ldots,x_n) \triangleq \sum_{\substack{u < v < w < y \\ \text{is a colorful induced diamond}}} x_u x_v x_w x_y.
    \]
    Let $(a_1,\ldots,a_n) \in \set{0,1}^n$ be the characteristic vector of $U$. Then $P(a_1,\ldots,a_n)$ is the number of colorful induced diamonds in $G[U]$.
    In particular, if $t > 0$ then $P$ is a non-zero multilinear polynomial of degree $4$. By \cite[Lemma 2.2]{williams2014finding}, we get:
    \[
        \mathrm{Pr}_{(a_1,\ldots,a_n) \in \set{0,1}^n}\left[P(a_1,\ldots,a_n) \not\equiv 0 \pmod 6 \right] \geq 1/2^4,
    \]
    which implies that with probability at least $1/16$ over the choice of $U \subseteq V(G)$, $Z_\phi(G[U]) \not\equiv 0 \pmod 6$. Hence, if $t > 0$, the success probability of the algorithm is at least $1-(15/16)^{100 \log n} \geq 1-1/n^{8}$.
\end{proof}

\subsection{The Algorithm \algSensitive}
\label{ssec:subsampling}
\newcommand{\rag}{100\log n}
In this subsection, we present the algorithm \algSensitive, which is the main algorithm we use in this section.
This algorithm follows the subsampling framework of~\cite{CEW24,CEW25} for witness-sensitive induced subgraph detection.
The framework needs two components:
a colorful input graph, i.e., a graph with a $4$-coloring and many colorful induced diamonds, and an algorithm for colorful induced diamond detection. Both components were provided in the previous subsection.
However, as a black box the subsampling framework yields no improvement over the $\tilde O(n^\omega)$ bound already achieved by~\cite{williams2014finding}.
To overcome this limitation, we develop a refined analysis for detecting $r$-light diamonds.
In the next two subsections we refine the analysis by crucially exploiting the additional structure in the graph: namely, that no large clique contains three vertices of a diamond.

\paragraph{The Subsampling Framework.}
A \emph{sampling vector} is a vector $P=(p_1,p_2,p_3,p_4) \in [0,1]^4$. A $4$-colored graph $G$ is sampled using $P$ to obtain a subgraph $H$ by keeping each vertex $v\in V_i$ with probability $p_i$, independently. We denote this by $H\gets G[P]$.
Let $\FF \triangleq \{2^{-j} \mid 0 \le j \le \lceil \log n \rceil\}$ be the set of inverse powers of two, and let $\FF^4$ be the set of all sampling vectors with entries in $\FF$.
The weight of a sampling vector $P$ is defined as $w(P)\triangleq \prod_{i=1}^4 p_i$. We also use $\wc(P)$ to denote the product of the three largest coordinates of $P$, or equivalently, $\wc(P) = \frac{w(P)}{\min(P)}$, where $\min(P) = \min\{p_1,p_2,p_3,p_4\}$.
Let
\[
    \alpha(P)\triangleq \mathrm{Pr}_{H \gets G[P]}\left[{H \text{ contains a colorful induced diamond}}\right]\;,
\]
be the probability that the sampled subgraph $H\gets G[P]$ contains a colorful induced diamond, when sampled according to $P$.
The algorithm \algSensitive is as follows:
\begin{mdframed}[frametitle={\normalfont Algorithm $\algSensitive(G)$}, frametitlealignment=\centering, backgroundcolor=gray!10]
    \textbf{Input:} An $n$-vertex graph $G$ \\
    \textbf{Output:} \yes if $G$ contains an induced diamond, \no otherwise.

    \medskip
    \noindent
    \begin{enumerate}[leftmargin=2em]
        \item (Color coding) Apply the algorithm from \cref{lemma:reduction} on $G$ to obtain a $4$-colored graph $G'$.
        \item (Subsampling) For every $P \in \FF^4$, create $\ell = 2000 \log^2 n$ independent random subgraphs $H_1^P, \ldots, H_\ell^P$, where $H_i^P \gets G'[P]$.
        \item Execute $\algCD(H_i^P)$ for all $P \in \FF^4$ and $i \in [\ell]$. This is done in parallel in a round-robin fashion, terminating immediately if any execution outputs \yes.
        \item If no execution outputs \yes, output \no.
    \end{enumerate}
\end{mdframed}

To analyze the running time of the algorithm, we need the following claim:
\begin{claim}\label{claim:runtime w3}
    The running time of $\algCD(H)$ where $H\gets G[P]$ is $\To(\MM{n,n,n\cdot \wc(P)})$ \whp.
\end{claim}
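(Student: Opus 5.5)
The plan is to combine \Cref{lem:colorful-det} with a concentration bound on the color class sizes of $H$, and then apply \Cref{claim:omega balanced}. Let $G$ have color classes $V_1,\dots,V_4$ with $|V_i|\le n$. In $H\gets G[P]$, the size $n_i^H$ of the $i$-th class is a sum of $|V_i|$ independent Bernoulli$(p_i)$ variables. Hence $\Exp{n_i^H}=p_i|V_i|\le p_i n$.

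\textbf{Concentration.} By the Chernoff bound (\Cref{thm:chernoff 6}) with threshold $\max(6p_in,\,10\log n)$, we get $n_i^H\le 6p_in+10\log n$ with probability at least $1-n^{-10}$. A union bound over the four classes gives, \whp, $n_i^H=\tilde O(\max(p_in,1))$ for every $i$.

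\textbf{Applying \Cref{lem:colorful-det}.} By that lemma, $\algCD(H)$ runs in $\tilde O(\MM{n^H_{(1)},n^H_{(2)},n^H_{(3)}})$ time, where $n^H_{(1)}\ge n^H_{(2)}\ge n^H_{(3)}$ are the three largest class sizes. Each $n^H_{(j)}$ is at most $\tilde O(\max(p_{(j)}n,1))$, where $p_{(1)}\ge p_{(2)}\ge p_{(3)}$ are the three largest coordinates of $P$. The size ordering need not match the coordinate ordering, but this is harmless: the three largest sizes are each bounded by the corresponding order statistic of the bounds, and the bounds are monotone in $p_i$. Matrix multiplication time grows by at most a polylog factor when a dimension is scaled by a polylog factor. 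So it suffices to bound $\MM{np_{(1)},np_{(2)},np_{(3)}}$, with each dimension rounded up to at least $1$.

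\textbf{Conclusion and the delicate point.} By \Cref{claim:omega balanced},
\[
\MM{np_{(1)},np_{(2)},np_{(3)}}\le \MM{n,n,n\,p_{(1)}p_{(2)}p_{(3)}}=\MM{n,n,n\cdot\wc(P)}.
\]
The main obstacle is the additive $\log n$ term when some $p_in<1$. Such a dimension becomes $\tilde O(1)$ instead of $np_i$. I would handle this by replacing $p_i$ with $p_i'=\max(p_i,1/n)$, noting that \Cref{claim:omega balanced} applies to $p'$. Since $p\in\FF$, we have $p_i\ge 1/(2n)$, so $\wc(P')\le 2^3\wc(P)$, and the resulting bound is still $\tilde O(\MM{n,n,n\cdot\wc(P)})$. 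Finally, summing over the $O(\log n)$ repetitions inside $\algCD$ and the induced subgraphs $G[U]$ changes nothing, since induced subgraphs only shrink the class sizes.
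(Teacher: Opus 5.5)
Your proposal is correct and follows the same route as the paper's proof: bound each color class of $H$ by $\tilde O(np_i)$ via Chernoff, plug these bounds into \Cref{lem:colorful-det}, and finish with \Cref{claim:omega balanced}. Your extra care about matching sorted class sizes to sorted coordinates, and about the additive $\log n$ term (using $p_i\ge 1/(2n)$ for $P\in\FF^4$), makes explicit what the paper leaves implicit in its choice $\hat n_i=100np_i\log n$.
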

We need a few more definitions before bounding the running time of \algSensitive.
Let
\begin{align}
    \PP\triangleq\set{P\in \FF^4 \mid \alpha(P)\geq \frac{\rag}{\ell}}\;, \quad \text{and} \quad \wc(\PP) \triangleq \min\set{\wc(P)\mid P\in \PP }
    \;. \label{eq:PP}
\end{align}

The main result of this subsection is the following proposition.
\begin{restatable}{proposition}{propClaimRuntime}\label{claim:runtime}   
    The running time of the algorithm \algSensitive is $\To(\MM{n,n,n\cdot \wc(\PP)})$ \whp. The output is correct \whp.
\end{restatable}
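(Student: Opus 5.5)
The plan has two parts: correctness, then the running time. Both rest on \Cref{claim:runtime w3}, the definition of $\PP$ in \eqref{eq:PP}, and the fact that $|\FF^4|=O(\log^4 n)$.

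\textbf{Correctness.} If $G$ has no induced diamond, then no $H_i^P$ contains a colorful induced diamond. Since $\algCD$ has one-sided error (\Cref{lem:colorful-det}), the algorithm never outputs \yes. If $G$ does contain diamonds, \Cref{lemma:reduction} gives, \whp, a graph $G'$ with at least one colorful induced diamond. Then $P=(1,1,1,1)\in\FF^4$ has $\alpha(P)=1\ge \rag/\ell$, so $\PP\neq\emptyset$.

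For any $P^\ast\in\PP$, the $\ell$ copies are independent. Hence the probability that none of $H_1^{P^\ast},\dots,H_\ell^{P^\ast}$ contains a colorful induced diamond is at most
\[
(1-\alpha(P^\ast))^\ell\le e^{-\rag}\le n^{-100}.
\]
On the copy that does contain one, $\algCD$ answers \yes \whp. A union bound over the $O(\log^6 n)$ executions gives that the output is correct \whp.

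\textbf{Running time.} Fix $P^\ast\in\PP$ with $\wc(P^\ast)=\wc(\PP)$. By the argument above, some execution on a copy $H_i^{P^\ast}$ outputs \yes \whp, and by \Cref{claim:runtime w3} that execution finishes within $T^\ast=\To(\MM{n,n,n\cdot\wc(\PP)})$ steps \whp. Round-robin scheduling over all $|\FF^4|\cdot\ell=\To(1)$ executions means that by the time this execution has received $T^\ast$ steps, every other execution has received at most $T^\ast$ steps as well. The total work is therefore at most $\To(1)\cdot T^\ast$, plus $\To(n+m)$ for the reduction and $\To(n)$ per sample for generating the subgraphs. Both extra terms are dominated by $n^2\le\MM{n,n,n\cdot\wc(\PP)}$.

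\textbf{Main obstacle.} The delicate point is the round-robin accounting: other executions may be slow, and some executions are cut off before finishing. The argument needs only that each execution receives at most as many steps as the successful one, so their own running times never matter.

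If $G$ is diamond-free, $\PP$ is empty. Reading $\wc(\emptyset)$ as $1$ (the case $P=(1,1,1,1)$), the claimed bound becomes $\To(n^\omega)$. That bound holds because every execution then costs at most $\To(\MM{n,n,n})$ by \Cref{claim:runtime w3}, and there are only $\To(1)$ executions.
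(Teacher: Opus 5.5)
Your proposal is correct and follows essentially the paper's argument. It uses the one-sided error of \algCD for the diamond-free case, the bound $(1-\alpha(P))^{\ell}\le e^{-100\log n}$ to guarantee that some sample for a vector $P\in\mathcal{P}$ contains a colorful diamond, and round-robin accounting over the $\ell\cdot|\mathcal{F}^4|=\To(1)$ executions combined with \Cref{claim:runtime w3}. Fixing the minimizer $P^\ast$ directly instead of union-bounding over all of $\mathcal{P}$, and explicitly accounting for the reduction cost and the case $\mathcal{P}=\emptyset$, are harmless and slightly more careful variations.
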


\begin{proof}[Proof of \Cref{claim:runtime} using \Cref{claim:runtime w3}]
    Correctness follows from the one-sided error of $\algCD$.
    Suppose $t > 0$ and fix $P \in \PP$.
    By \Cref{claim:runtime w3}, each $\algCD(H_i^P)$ runs in time $\TO{\MM{n,n,n\cdot \wc(P)}}$ \whp.
    Since $\alpha(P) \geq \frac{\rag}{\ell}$, 
    the probability that no $H_i^P$ contains a colorful induced diamond is at most
    \[\left(1 - \frac{\rag}{\ell}\right)^\ell \leq e^{-\rag} = \frac{1}{n^{100}}\;.\]
    Therefore, \whp at least one $H_i^P$ contains a colorful induced diamond, in which case $\algCD(H_i^P)$ outputs \yes.

    Since the execution of the algorithms stops as soon as one of them outputs \yes, the total running time is dominated by the running time of $\algCD(H_{i}^P)$ times the number of executions, which is $\ell \cdot |\FF^4| = \TO{1}$. Therefore, \whp, the total running time is $\TO{\MM{n,n,n\cdot \wc(P)}}$.
    Since this holds for every $P\in \PP$, the running time is $\To(\MM{n,n,n\cdot \wc(\PP)})$ \whp. 
    This follows from a union bound over all $P\in \PP$, as $|\PP| = \To(1)$.
\end{proof}

\begin{proof}[Proof of \Cref{claim:runtime w3}]
    Let $P=(p_1,p_2,p_3,p_4)$, and assume without loss of generality that $p_1\geq p_2\geq p_3\geq p_4$.
    Let $H\gets G[P]$, and let $n_i$ be the size of the $i$th color class of $H$. 
    Standard concentration bounds imply that \whp, for every $i\in[4]$, $n_i\leq \hat{n}_i$, where $\hat{n}_i = 100 n p_i \cdot \log n$.
    Let $\EE$ be the event that $n_i\leq \hat{n}_i$, for every $i\in[4]$.
    Assuming $\EE$ holds, by \Cref{lem:colorful-det} the running time of $\algCD(H)$ is 
    \begin{align*}
        \MM{\hat{n}_1,\hat{n}_2,\hat{n}_3} 
        &\leq \MM{100 n p_1 \log n, 100 n p_2 \log n, 100 n p_3 \log n} \\
        &\leq \TO{\MM{n p_1, n p_2 , n p_3 }}\\
        &\overset{(\star)}\leq \TO{\MM{n,n,n\cdot (p_1p_2p_3)}}\\
        &= \TO{\MM{n,n,n\cdot \wc(P)}}\;,
    \end{align*}
    where $(\star)$ follows from the monotonicity of rectangular matrix multiplication (see \Cref{claim:omega balanced}).
\end{proof}

Previous work~\cite{CEW24,CEW25} showed that there exists $P\in\PP$ with $w(P) = \To(1/t)$.
However, if $P=(1,1,1,1/t)$, then while $w(P) = 1/t$, we have $\wc(P) = 1$, yielding no speedup. 
Our goal for the rest of this section is to prove that $\PP$ contains a sampling vector $P$ with $\wc(P) = \To(\sqrt{r/t})$.
From that we immediately get a detection algorithm with running time $\To(\MM{n,n,n\sqrt{r/t}})$, as stated in \Cref{thm:r-light diamonds}.
For the rest of this section we prove the existence of such a sampling vector.

\subsection{Main Result: Proof of \Cref{thm:r-light diamonds}}
\label{ssec:main-light}
\renewcommand{\cg}{100\log n}

\newcommand{\cliquebound}{\sqrt{t}}
\newcommand{\ts}{\sqrt{t}}
\newcommand{\tsr}{\sqrt{r/t}}
\newcommand{\tsi}{\frac{1}{\sqrt{t}}}
\newcommand{\Ua}{U_{1}}
\newcommand{\Ub}{U_{2}}
\newcommand{\hUa}{\hat{U}_{1}}
\newcommand{\hUb}{\hat{U}_{2}}
\newcommand{\Sa}{S_{1}}
\newcommand{\Sb}{S_{2}}
\newcommand{\Gx}{G_{\psi}}
\renewcommand{\Gp}{G_{\phi}}
\newcommand{\HHb}{\HH^{(2)}}
\newcommand{\HHc}{\HH^{(3)}}
\newcommand{\DDr}{\DD_{r}}
\newcommand{\ra}{\log n}

Recall that an induced diamond is $r$-light if no three vertices of the diamond are contained in an $r$-clique, and $r$-heavy otherwise.
Let $\rmax$ denote the largest integer $r$ such that $G$ contains an $r$-heavy induced diamond.
The main result of this section is the following:
\ThmLightD* 

We prove a slightly stronger version of \Cref{thm:r-light diamonds}:
\begin{theorem}\label{thm:rr-light}
    For every $r$, let $t_r$ denote the number of $r$-light induced diamonds in $G$.
    There exists an algorithm that, given an $n$-vertex graph $G$, detects an induced diamond in $G$ \whp, running in time $\To(\MM{n,n,n \cdot \sqrt{r/t_{r}}})$.
\end{theorem}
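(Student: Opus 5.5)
Throughout, $t=t_r$ and $\DDr$ is the set of $r$-light diamonds. The plan is a win-win argument over the sampling framework of \Cref{ssec:subsampling}.

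\textbf{Setup.} We run two procedures in parallel (round-robin) and stop at the first detection.
\begin{enumerate}
\item The first procedure is $\algSensitive(G)$. By \Cref{claim:runtime} it runs in time $\To(\MM{n,n,n\cdot\wc(\PP)})$.
\item The second procedure detects many \degc vertices. It samples sets $S$ of size $\tf(n\sqrt{r/t})$ for geometrically decreasing guesses of $x_3$, and tests whether some $v\in S$ is a \degc vertex. The test is an adaptation of \cite[Theorem 5.1]{williams2014finding}: compute $A\cdot A_S$ in $\MM{n,n,|S|}$ time and check whether $N(v)$ is $P_3$-free for $v\in S$. If $x_3\geq\tg(\sqrt{t/r})$, a random $S$ of this size hits a \degc vertex \whp. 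Hence this branch costs $\To(\MM{n,n,n\sqrt{r/t}})$.
\end{enumerate}
So it suffices to prove: if $x_3<\sqrt{t/r}$, then $\PP$ contains a vector $P$ with $\wc(P)=\To(\sqrt{r/t})$.

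\textbf{Building the good vector.} Restrict to colorful $r$-light diamonds after color coding (\Cref{lemma:reduction}). The colorful diamonds still number $\tilde\Theta(t)$. By symmetry of the colors, we may assume the chord endpoints get colors $1,2$ and the degree-2 vertices get colors $3,4$. We then refine the sampling vector in three steps, following \Cref{fig-intro:refine-sequence}.

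\emph{First refinement.} View the diamonds as edges of a bipartite graph between $A=V_1\times V_2$ (chords) and $B=V_3\times V_4$. Bucket the chords by degree $[2^i,2^{i+1})$ and pick a heavy bucket $A_i$, i.e., one carrying $\geq t/\log t$ diamonds.
\begin{itemize}
\item If $2^i\geq\sqrt{t/r}$: a single chord spans $2^i$ diamonds, and its endpoints are \degc vertices. The intended conclusion is that then $x_3$ is large and branch 2 wins. This needs care: one chord gives only two \degc vertices, so the heavy bucket must be shown to contain many chords, forcing many \degc vertices. I would rather handle this case via the pairs themselves. Alternatively, the case split can be set on chord degree rather than on $x_3$.
\item Otherwise: set $q_1=\lambda 2^i/t$ and $q_2=4/2^i\le 1/\sqrt{rt}$.
\end{itemize}

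\emph{Second refinement (the key step).} Fix a heavy chord $e=(v_1,v_2)$ of degree $d\approx 2^i$. Consider the graph $\bar G[N(e)]$ of non-edges among its common neighbors. By $r$-lightness, $N(e)\cup\{v_1,v_2\}$ has no clique of size $r$ containing $v_1,v_2$, so $G[N(e)]$ has clique number below $r$. Tur\'an's theorem then gives at least $|N(e)|^2/r$ non-edges, and these $d$ non-edges are spread out.

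Sample $V_3,V_4$ each with probability $p=\sqrt{rq_2}$. The expected number of surviving non-edges is $dp^2\approx r\cdot\Theta(1)$. The claim to prove is that a second moment computation shows at least one survives with constant probability. The variance must be controlled through the maximum degree in the non-edge graph. Here Tur\'an/lightness bounds how concentrated the non-edges can be: a vertex $w$ of non-edge degree $D$ has its non-neighbors inside $N(e)$ forming a structure whose complement must avoid $r$-cliques. This gives $\Exp{X^2}\le \To(\Exp{X}^2+\Exp{X}\cdot D p)$, and I expect $D p=O(r\cdot\text{small})$. This variance bound is the main obstacle; if it fails, I would instead pass to a greedy matching among the non-edges of size $\Omega(|N(e)|/r)$.

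\emph{Third refinement.} Refine $(q_1,p,p)$ into $(p_1,p_2,p,p)$ with $p_1p_2=\lambda q_1$. This uses a general dimension-splitting lemma: bucket by degree on the $V_1$ side and sample $V_1$ and $V_2$ accordingly. Its proof is independent of the lightness structure and follows the argument of \cite{CEW25}.

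\textbf{Conclusion.} The resulting vector has two coordinates equal to $p\le\sqrt{r/t}$, so $\wc(P)\le p_1p_2p=\tf(q_1\cdot p)$. Since $q_1q_2=\lambda/t$ and $p=\sqrt{rq_2}$, the product $p_1p_2$ times $p$ needs $p_1p_2\le 1$. This yields $\wc\le\tf(\sqrt{r/t})$, using that the two smallest coordinates are $p$ and $\min(P)=p$. Plugging into \Cref{claim:runtime} gives $\To(\MM{n,n,n\sqrt{r/t_r}})$, and taking $r=\rmax$ (where all diamonds are light) gives \Cref{thm:r-light diamonds}.
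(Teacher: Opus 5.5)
Your architecture matches the paper's: a \degc branch run in parallel with \algSensitive, and the chain $(1/t)\to(q_1,q_2)\to(q_1,p,p)\to(p_1,p_2,p,p)$. However, two steps fail as written.

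\emph{The hand-off to the \degc branch is misplaced, and the final bound rests on an unjustified assumption.} Splitting on $2^i\ge\sqrt{t/r}$ does not force $x_3$ to be large; as you note yourself, one chord can carry all the diamonds. In the conclusion you use $\min(P)=p$ to get $\wc(P)\le p_1p_2p$. But $\wc(P)=w(P)/\min(P)\ge p_1p_2p$ always, and the third refinement only controls the product $p_1p_2=\To(q_1)$. It may output $p_2=1$ and $p_1=\tilde\Theta(q_1)$, and then $\wc(P)=p^2$, which equals $1$ when $2^i\le r$.

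The missing ingredient is \Cref{claim2:z_1 and p12}: chord endpoints are \degc vertices, so a union bound gives $\alpha(P,\GG^{(4)})\le x_3\cdot\min(p_1,p_2)$. With this, no case split on $i$ is needed. Build $P$ unconditionally and then argue as follows.
\begin{itemize}
\item If $\min(p_1,p_2)\le\sqrt{r/t}$, then $x_3\ge\tg(\sqrt{t/r})$, and \algVertexC wins.
\item If $p_1,p_2\ge\sqrt{r/t}$ and one of them is the minimum, then $\wc(P)\le w(P)/\sqrt{r/t}=\To(\sqrt{r/t})$, since $w(P)=\To(q_1p^2)\le\To(r/t)$.
\item If $p$ is the minimum, then $\wc(P)=w(P)/p=\To(\sqrt{r2^{i}}/t)=\To(\sqrt{r/t})$, because $2^{i-1}\le t$.
\end{itemize}

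\emph{The key second-refinement step is unproven and, as you set it up, false.} Lightness does not bound the clique number of $G[N(e)]$. Only the $r$-light diamonds are counted, and $N(e)$ may contain a large clique whose vertices lie only in heavy diamonds. You must restrict to the set $M$ of common neighbours that occur in some $r$-light diamond with chord $e$. An $r$-clique $K\subseteq M$ would then put $\{v_1,v_2,u\}$, for $u\in K$, inside the clique $K\cup\{v_1,v_2\}$, which is a contradiction.

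Worse, you place the two degree-2 vertices in different classes $V_3,V_4$. Then your $d$ diamonds are only the non-edges between $M\cap V_3$ and $M\cap V_4$, while Tur\'an's non-edges may all lie inside a single class. So $|M|=O(\sqrt{dr})$ does not follow, and the per-chord claim can fail for the coloring at hand. For example, let $N(e)=\{w\}\cup I$, with $w\in V_4$ and $I\subseteq V_3$ an independent set of size $d$ with no edges to $w$. The $d$ colorful diamonds on $e$ are all $r$-light for $r\ge4$. Yet hitting any of them requires sampling $w$, which happens with probability only $p=O(\sqrt{r/d})$.

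The paper fixes this with a two-stage coloring. A $3$-coloring $\psi$ gives both degree-2 vertices color $3$, so $M\subseteq V_3$ and every non-edge of $G[M]$ is a counted light diamond on $e$. Hence $d\in[\ell,2\ell)$, and Tur\'an gives $|M|\le\sqrt{8\ell r}$. Color $3$ is split into $\{3,4\}$ at random only at the end, losing a factor $2$ (\Cref{claim:xyz}).

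With that setup, the variance bound you flagged as the main obstacle is immediate. Every vertex has non-edge degree at most $|M|\le\sqrt{8\ell r}$, so $Dp=O(r)=O(\Exp{X})$ for $p=\sqrt{r/\ell}$. The second moment method then gives constant success probability (\Cref{claim:ee2}).
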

By applying \Cref{thm:rr-light} with $r=\rmax+1$, we get \Cref{thm:r-light diamonds}, since every induced diamond in $G$ is $(\rmax+1)$-light.

To prove \Cref{thm:rr-light}, we fix a specific $r$ and use $\DDr$ to denote the set of all $r$-light induced diamonds in $G$. 
Instead of working directly with $G$, we first apply the reduction from \Cref{ssec:reduction} to get a colorful graph with many colorful $r$-light diamonds.
Define a new graph $\Gx$ by taking $\TO{1}$ independent copies of $G$ with a random $3$-coloring (not a $4$-coloring) $\psi: V(\Gx)\to[3]$, as we did in \Cref{ssec:reduction} with a $4$-coloring.
Let $\DD_\psi$ be the subset of diamonds in $\Gx$ that correspond to diamonds in $\DDr$, where a diamond $D=(v_1,v_2,v_3,v_4)$ with missing edge $(v_3,v_4)$ belongs to $\DD_\psi$ if and only if $\psi(v_1)=1$, $\psi(v_2)=2$, and $\psi(v_3)=\psi(v_4)=3$. We refer to such diamonds as \emph{colorful} with respect to $\psi$.
We use $t=|\DD_\psi|$, where $t=\Omega(|\DDr|)$ by the same argument as in \Cref{ssec:reduction}, i.e., reverse Markov's inequality.

Later on, we extend $\psi$ to a coloring $\phi: V(\Gx)\to[4]$ randomly: every vertex $v\in V(\Gx)$ with $\psi(v)=3$ is assigned a random color $\phi(v)\in\{3,4\}$ independently.
The reason for this two-step coloring is that we want to define several hypergraphs based on $\psi$ first, and reveal the fourth color only later.
We define $\DD_\phi$ to denote the set of colorful diamonds in $\Gx$ with respect to $\phi$, and $E_\phi$ to denote the set of edges in $G$ whose endpoints have distinct colors with respect to $\phi$.

To prove \Cref{thm:rr-light}, we need the following theorem. Recall that
\[
    \alpha(P,\Gp)=\Pr{\Gp[P] \text{ contains a colorful diamond}}\;.
\]
\begin{theorem}
    \label{thm:helper}
    At least one of the following holds:
    \begin{enumerate}[label=(\arabic*)]
        \item There exists $P\in\FF^4$ such that $\wc(P)\leq \TO{\sqrt{r/t}}$ and $\alpha(P,\Gp)\geq \tg(1)$.
        \item The number of \degc vertices satisfies $x_3 \geq  \tg(\sqrt{t/r})$.
    \end{enumerate}
\end{theorem}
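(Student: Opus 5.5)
We follow the refinement sequence sketched in the technical overview, working with the $3$-colored hypergraph $\DD_\psi$: each colorful diamond is a pair $(a,b)$ with $a=(v_1,v_2)\in V_1\times V_2$ the chord and $b=\{v_3,v_4\}$ an unordered non-adjacent pair in $V_3$. Let $B$ be the bipartite multigraph on $A=V_1\times V_2$ and pairs of $V_3$, with $t$ edges. We bucket the chords $a$ by $\deg_B(a)\in[2^i,2^{i+1})$ and fix a bucket $A_i$ carrying at least $t/\log t$ diamonds.

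\textbf{Case 1: $2^i\ge \sqrt{t/r}$.} We aim for (2) by lower-bounding the number of \degc vertices. A chord $a$ in $A_i$ has degree at most $n^2$, so $|A_i|\ge t/(n^2\log t)$, but this is too weak. Instead, both endpoints of any chord are \degc vertices, so we need many distinct endpoints. We use $r$-lightness. Suppose $v_1$ is one \degc vertex. Its neighborhood $N(v_1)$ is then not a disjoint union of cliques of size $<r$ in a way that would let it serve all these chords cheaply. The cleaner route is to count \degc vertices among the $v_3,v_4$ side or among the other endpoint. For a chord $a=(v_1,v_2)$ with common neighborhood $W=N(a)$, the $\ge 2^i$ non-adjacent pairs in $W$ together with the fact that $W$ contains no clique of size $r-2$ extending a diamond triangle imply, by Tur\'an, that the non-edges of $G[W]$ touch at least $\Omega(\sqrt{2^i})$ vertices. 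More carefully, a set hit by $k$ non-edges has at least $\sqrt{k}$ vertices. We then need to argue that such non-edge endpoints are themselves \degc vertices, or that distinct chords supply distinct \degc endpoints. If this fails, we fall back to the standard fact from the earlier subsection that the number of chords times the per-chord degree bound gives $x_3\ge \tg(\sqrt{t/r})$, using that a single vertex $v_1$ can have at most about $n\cdot r$ light diamonds through it with a fixed partner. I expect this bookkeeping to be the most delicate part of Case 1.

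\textbf{Case 2: every heavy bucket has $2^i<\sqrt{t/r}$.} Here we construct the vector for (1). Set $P^{(2)}=(q_1,q_2)$ with $q_1=\lambda 2^i/t$ and $q_2=4/2^i\le 4/\sqrt{rt}$. Sampling chords in $A_i$ with probability $q_1$ hits $\tg(1)$ chords, and each sampled chord keeps one of its $\ge 2^i$ pairs with constant probability when pairs are sampled at rate $q_2$. The key second refinement replaces sampling of pairs at rate $q_2$ by sampling the vertices of $V_3$ independently at rate $p=\sqrt{rq_2}$, after which we reveal $\phi$ on $V_3$ to split it into colors $3,4$.

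For a fixed chord $a$ with $d\in[2^i,2^{i+1})$ non-adjacent pairs in $W=N(a)\cap V_3$, we must show that a $p$-sample of $W$ contains a non-edge with constant probability. Since $a$'s diamonds are $r$-light, $G[W]$ has no clique of size about $r$ containing both chord endpoints' extension, so the non-edge graph of $G[W]$ has no independent set of size $r$. By Tur\'an-type and greedy-matching arguments, it then contains roughly $d/r$ vertex-disjoint non-edges, or a structure with a matching of size $\Omega(\sqrt{d/r})$ or a star. A matching of size $M$ survives with probability $1-(1-p^2)^M$, which is $\Omega(1)$ once $p^2\gtrsim 1/M$. Choosing $p=\sqrt{r q_2}$ is designed exactly for $M\approx d/r$. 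The high-degree star sub-case is handled with a second-moment argument. This is the step I expect to be the main obstacle: proving that $d$ non-edges with no $r$-clique structure always yield survival probability $\Omega(1)$ at rate $p^2=r/d$, and then applying the Second Moment Method over the $\tg(1)$ sampled chords.

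The third refinement splits $(V_1\times V_2)$ sampling at rate $q_1$ into $(p_1,p_2)$ with $p_1p_2=\lambda q_1$. It uses the general, structure-free refinement theorem: bucket by degree again and apply the bipartite sampling lemma. The resulting vector is $P=(p_1,p_2,p,p)$, which, after rounding to $\FF^4$, has $\alpha\ge\tg(1)$ and
\[
\wc(P)\le \max(p_1p_2p,\;p_1p^2,\;p_2 p^2).
\]
Each term is $\To(\sqrt{r/t})$. For the first, $p_1p_2p=\To(q_1\sqrt{rq_2})=\To(\sqrt{r}\,\sqrt{2^i}/t)\le\To(\sqrt{r/t})$. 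The others use $p^2=rq_2\le 4\sqrt{r/t}$.
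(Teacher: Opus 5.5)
\textbf{There is a genuine gap in how you produce the dichotomy between (1) and (2).}

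\emph{Case~1 is false.} You claim that a heavy bucket with $2^i\ge\sqrt{t/r}$ forces $x_3\ge\tg(\sqrt{t/r})$. Take the independent-set configuration of \Cref{fig:indep_extreme}. One chord $(v_1,v_2)$ carries all $t=\Theta(d^2)$ diamonds, and each of them is $4$-light. So the only heavy bucket has $2^i=\tg(t)\ge\sqrt{t/r}$, yet $x_3=2$ in each copy. Here (2) fails, so (1) must be the alternative that holds. No bookkeeping on chord endpoints can rescue Case~1, and your ``fallback'' is not an argument. (The sentence in the technical overview suggesting this split is not what the proof in \Cref{ssec:main-light} actually does.)

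\emph{Case~2 fails in the other direction.} From $2^i<\sqrt{t/r}$ you get $q_2=4/2^i>4\sqrt{r/t}$, not $q_2\le 4/\sqrt{rt}$. Hence $p^2=rq_2$ can be $\Theta(1)$. The structure-free refinement controls only the product $p_1p_2=\To(q_1)$. Nothing therefore prevents $p_2=1$ and $p_1=\To(q_1)$. This happens, for instance, when all chords share a single $v_2$ and each chord carries $O(1)$ diamonds. Then $\wc(P)=p_2p^2$ is nowhere near $\sqrt{r/t}$. In that example the theorem is saved only because $x_3$ is large, and your Case~2 never uses $x_3$.

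\textbf{The missing ingredient is \Cref{claim2:z_1 and p12}.} For every $P=(p_1,p_2,p_3,p_4)$ and $j\in\{1,2\}$, we have $x_3\,p_j\ge\alpha(P,\mathcal{G}^{(4)})$. This follows from a union bound over the color-$j$ chord endpoints, all of which are \degc vertices. With this claim the paper needs no case split on $2^i$:
\begin{enumerate}
\item It takes the smallest heavy bucket, whatever its degree.
\item It proves $\alpha(P^{(3)},\mathcal{H}^{(3)})\ge\tg(1)$ for $P^{(3)}=(q_1,p,p)$ with $p=\sqrt{rq_2}$.
\item It refines to $P=(p_1,p_2,p,p)$ with $w(P)=\To(r/t)$.
\item It splits on the smallest coordinate of $P$. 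If that coordinate is $p_1$ or $p_2$ and lies below $\sqrt{r/t}$, the claim gives (2). Otherwise $\min(P)\ge\tg(\sqrt{r/t})$, where for the coordinate $p$ this uses $q_2\ge 1/t$. Then $\wc(P)=w(P)/\min(P)\le\To(\sqrt{r/t})$, and (1) holds.
\end{enumerate}

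\textbf{Your per-chord survival step is also not established.} The claim of ``roughly $d/r$ vertex-disjoint non-edges'' already fails when $G[M]$ is edgeless on $k$ vertices. Then $d=\binom{k}{2}$, but every matching has at most $k/2$ edges, so a matching bound at rate $p^2=r/d$ gives only $O(r/k)$. That case is not a star either. The paper instead argues as follows. By lightness, $G[M]$ has no $r$-clique, so Tur\'an's theorem gives $k\le\sqrt{8r\ell}$. Hence among $\ell$ chosen non-edges there are at most $2\ell k$ intersecting pairs. The second moment method at $p=\sqrt{r/\ell}$ then gives success probability at least $1/8$.
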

To handle the second case of \Cref{thm:helper}, we use the following theorem, which follows from a modification of the algorithm of \cite{williams2014finding}:
\begin{restatable}[\algVertexC]{theorem}{ThmFindDegc}\label{thm:find degc}
    There exists a randomized algorithm \algVertexC that detects a \degc vertex \whp in time $\To(\MM{n,n,n/x_3})$, where $x_3$ is the number of \degc vertices.
\end{restatable}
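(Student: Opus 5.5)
The plan is to combine a witness-sensitive hitting-set argument with a subset version of the \degc test from \cite[Theorem 5.1]{williams2014finding}. First we establish a subroutine $\textsc{Check}(S)$: given $S\subseteq V(G)$, it decides whether $S$ contains a \degc vertex (and returns one) in time $\To(\MM{n,n,|S|})$.

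A vertex $v$ is \degc iff $G[N(v)]$ contains an induced $P_3$, i.e., iff $G[N(v)]$ is not a disjoint union of cliques. Equivalently, there is an edge $(v,u)$ such that the common neighbourhood $N(v)\cap N(u)$ is not equal to the clique $N(v)\cap N[u]\setminus\{v\}$. In other words, some $u\in N(v)$ and $w\in N(v)\cap N(u)$ have $N(v)\cap N[u]\neq N(v)\cap N[w]$.

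To test this for all $v\in S$ at once with matrix products, we count. For $v\in S$, $G[N(v)]$ is a disjoint union of cliques iff every neighbour $u$ of $v$ satisfies
\[
\sum_{w\in N(v)\cap N(u)}\bigl(|N(v)\cap N(w)|-|N(v)\cap N(u)|\bigr)=0
\]
together with a companion count of non-adjacent pairs inside $N(v)\cap N(u)$. It is cleaner to use the identity
\[
\sum_{u\in N(v)}\binom{d_v(u)}{2}=\#\{\text{triangles in }G[N(v)]\}\cdot 3+\#\{\text{induced }P_3\text{ centred in }G[N(v)]\},
\]
where $d_v(u)=|N(u)\cap N(v)|$.

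\begin{itemize}
\item The values $d_v(u)=(A^2)[v,u]$ for $v\in S$ and all $u$ are computed by an $|S|\times n$ times $n\times n$ product, costing $\MM{|S|,n,n}$.
\item The number of triangles in $G[N(v)]$ equals the number of $K_4$'s through $v$, which is $\tfrac13\sum_{u}A[v,u]\cdot\#\{\text{triangles on edge }(v,u)\text{ inside }N(v)\}$.
\item This last quantity equals $\tfrac13\sum_{u,w}A[v,u]A[v,w]A[u,w]d_v\ldots$. It is easier to compute $T_v=\sum_{u}A[v,u]\,(A\cdot D_v)$, but that is not a single product.
\end{itemize}

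Instead, I would reduce exactly as in \cite{williams2014finding}. The count $Z_v=\sum_{u\in N(v)}\binom{d_v(u)}{2}$ equals three times the number of $K_4$'s through $v$, plus the number of diamonds in which $v$ is a degree-3 vertex. As in \algCD, we apply random subsampling of the other vertices and test $Z_v\not\equiv 0\pmod 3$ using \cite[Lemma 2.2]{williams2014finding}. Since $Z_v$ restricted to a random subset $U$ needs only the entries $(A_U^2)[v,u]$ for $v\in S$, each trial costs $\MM{|S|,n,n}$, and $O(\log n)$ trials give high probability. Having found that some $v\in S$ is \degc, we confirm it and output a diamond with \algVertexOne in $O(n^2)$ time.

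With the subroutine in hand, the main algorithm does not know $x_3$, so we guess it. For $j=0,1,\dots,\log n$, sample $S_j$ by including each vertex independently with probability $\min(1,100\log n/2^j)$, and run $\textsc{Check}(S_j)$. The instances are run in round-robin, stopping at the first success.

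For $j$ with $2^j\le x_3<2^{j+1}$, the set $S_j$ has size $\To(n/x_3)$ \whp and hits a \degc vertex with probability $1-n^{-\Omega(1)}$ by Chernoff. That instance therefore succeeds in time $\To(\MM{n,n,n/x_3})$, using the symmetry of $\MM{\cdot}$. Round-robin adds only a logarithmic factor, and one-sided error ensures no false positives.

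The main obstacle is making $\textsc{Check}(S)$ run in time genuinely proportional to $|S|$. One needs a polynomial identity whose per-vertex count involves only rows of products indexed by $S$, and whose nonvanishing modulo a constant certifies a \degc vertex in $S$. I would first try the per-vertex version of the Kloks--Kratsch--M\"uller identity above, checking carefully that the multilinear polynomial in the sampling variables is nonzero exactly when $v$ is \degc, so that Lemma 2.2 applies.
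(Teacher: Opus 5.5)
Your final route is correct, but it is not the paper's. You localize the \emph{algebraic} approach of \cite{williams2014finding} to each anchor $v\in S$. With $U$ a uniformly random vertex subset, $Z_v(U)=\sum_{u\in N(v)\cap U}\binom{|N(u)\cap N(v)\cap U|}{2}$ equals $3$ times the number of triangles in $G[N(v)\cap U]$ plus $P_v(\mathbf{1}_U)$, where $P_v=\sum x_ax_ux_b$ ranges over induced $P_3$'s $\{a,u,b\}\subseteq N(v)$. All these values come from the single product $A[S,U]\cdot A[U,U]$, costing $\MM{|S|,n,n}$; keep all rows of $S$ so that $v$ need not lie in $U$. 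You then test $Z_v(U)\not\equiv 0\pmod 3$.

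The point you left open is immediate. A $3$-set induces a $P_3$ in at most one way, so every monomial of $P_v$ has coefficient $1$, and hence $P_v\not\equiv 0\pmod 3$ iff $v$ is \degc. By \cite[Lemma 2.2]{williams2014finding} this gives success probability at least $1/8$ per trial. Conversely, a nonzero residue always exhibits an induced $P_3$ in $N(v)$, so the error is one-sided.

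The paper instead localizes the \emph{structural} deterministic algorithm of \cite[Theorem 5.1]{williams2014finding}. Its subroutine \algPart repeatedly halves the edges from each $v\in S$ to its tentative clusters and splits clusters by counts of paths of length one or two. It then verifies the resulting clique partition with $O(\log n)$ further products. Because it only sees $N(v)\setminus S$, it needs $(n,4)$-universal sets to separate diamonds that have several vertices in $S$.

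Your subroutine is simpler and avoids that last issue entirely, since it multiplies by the full adjacency matrix of $G[U]$. The paper's subroutine is deterministic, reports every \degc vertex of $S$, and outputs the clique partitions. None of that is needed for this theorem, since the outer sampling of $S$ is randomized in both, and your doubling/round-robin wrapper matches the paper's. In a write-up, drop the garbled ``equivalently'' sentence and the abandoned triangle-counting bullets.
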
 
The proof of this theorem is deferred to \Cref{app:missing-proofs}. 

\begin{proof}[Proof of \Cref{thm:rr-light} using \Cref{thm:helper}]
    Assume $(1)$ holds.
    Then we can sample a random induced subgraph $H\gets \Gp[P]$ and detect a diamond in $H$ using \algCD.
    $H$ has a colorful induced diamond with probability $\tg(1)$, since $\alpha(P,\Gp)\geq \tg(1)$. By \Cref{claim:runtime w3}, the detection runs in time $\TO{\MM{n,n,n\cdot \wc(P)}}=\TO{\MM{n,n,n\cdot \sqrt{r/t}}}$ \whp. By repeating the process $\TO{1}$ times, we find a colorful diamond in $\Gp$ \whp which proves the theorem.

    Assume $(2)$ holds. Then $x_3=\tg(\sqrt{t/r})$. Hence, we can use \Cref{thm:find degc} to detect an induced diamond in $G$ (ignoring the colors) in time $\TO{\MM{n,n,n/x_3}}=\TO{\MM{n,n,n\cdot \sqrt{{r}/{t}}}}$ \whp.
\end{proof}
We emphasize that we do not need to explicitly find such $P$; if such $P=(p_1,p_2,p_3,p_4)$ exists, then there is a sampling vector $P'=(p_1',p_2',p_3',p_4')$, where $p_i'$ is the smallest power of $2$ larger than or equal to $p_i$, for every $i\in[4]$, and clearly $\alpha(P',\Gp)\geq \alpha(P,\Gp)$ and $\wc(P')=\TO{\wc(P)}$.
Since the algorithm \algSensitive considers all sampling vectors with coordinates that are powers of $2$, it will consider $P'$ as well.

For the rest of this section, we prove \Cref{thm:helper}.
We define a sequence of hypergraphs, where
$V_i$ denotes the set of vertices with color $i$ in $\psi$, for every $i\in[3]$, and $\DD_{\psi}$ is the hyperedge set. We use $U_3$ and $U_4$ to denote the set of vertices with color $3$ and $4$ in $\phi$, respectively.
The final object $\Gp$ is defined using the coloring $\phi$, with edge set $E_{\phi}$ consisting of the edges whose endpoints have distinct colors.
Let:
\begin{align*}
    \GG^{(2)} & =((V_1 \times V_2) \sqcup (V_3 \times V_3),\DD_{\psi})\;, \\
    \GG^{(3)} & =((V_1 \times V_2) \sqcup V_3 \sqcup V_3,\DD_{\psi})\;,   \\
    \GG^{(4)} & =(V_1 \sqcup V_2 \sqcup V_3 \sqcup V_3,\DD_{\psi})\;,     \\
    \Gp       & =(V_1 \sqcup V_2 \sqcup U_3 \sqcup U_4,E_{\phi})\;.
\end{align*}
Note that we abuse the notation and use $\DD_{\psi}$ to denote the hyperedges in all hypergraphs. The two appearances of $V_3$ in a disjoint union are treated as separate copies.
Formally, $\GG^{(2)}$ is a bipartite graph with parts $A = V_1 \times V_2$ and $B = V_3 \times V_3$. An edge connects $(v_1,v_2) \in A$ to $(v_3,v_4) \in B$ if $(v_1,v_2,v_3,v_4)$ is a diamond in $G$ with $\psi(v_1) = 1$, $\psi(v_2)=2$, and $\psi(v_3)=\psi(v_4)=3$.
The hypergraph $\GG^{(3)}$ is the $3$-partite hypergraph on vertex set $V_1 \times V_2 \sqcup V_3 \sqcup V_3$, with a hyperedge connecting $(v_1,v_2),v_3,v_4$ for every diamond $(v_1,v_2,v_3,v_4)$ as before. The hypergraph $\GG^{(4)}$ is the $4$-partite hypergraph on vertex set $V_1 \sqcup V_2 \sqcup V_3 \sqcup V_3$, with a hyperedge connecting $v_1,v_2,v_3,v_4$ for every diamond $(v_1,v_2,v_3,v_4)$ as before.
Finally, $\Gp$ is the $4$-partite \emph{graph} on vertex set $V_1 \sqcup V_2 \sqcup U_3 \sqcup U_4$ and edges $E_\phi$ defined to be the subset of edges of $G$ whose endpoints have distinct colors.

For every $i\in[4]$, let $\alpha(P,\GG^{(i)})$ denote the probability that $\GG^{(i)}[P]$ contains a hyperedge, where $P$ is a sampling vector of dimension $i$. E.g., $\alpha(P,\GG^{(2)})$ is the probability that, when subsampling from the vertex parts of $\GG^{(2)}$ using vector $P = (p,q)$, an edge survives the subsampling. 
We sometimes add the superscript $(i)$ to $P$ to emphasize that it is an $i$-dimensional vector, e.g., $P^{(4)} = (p_1,p_2,p_3,p_4)$.
Recall that our final goal is to show the existence of a sampling vector $P$ with small $\wc(P)$
for $\Gp$ with $\alpha(P,\Gp) = \tg(1)$ (or to show that $x_3$ is large).

Before proving \Cref{thm:helper}, we explain the main idea behind the refinement of sampling vectors (see \Cref{fig:refine-sequence}).
Let $\XX=\XX_1\times\XX_2\times\cdots\times \XX_k$ be a Cartesian product of $k$ sets, and let $\UU\subseteq \XX$ be a subset of $\XX$.
Here $\XX$ is the set of all quadruples of vertices, and $\UU$ is the set of all colorful induced diamonds.
By sampling each element in $\XX$ independently with probability $q$, we get a $1$-dimensional sampling vector, which hits $\UU$ with probability $1- (1-q)^{|\UU|}$.
A $4$-dimensional sampling vector samples each element in $\XX_i$ independently with probability $p_i$, for every $i\in[4]$.
The work of \cite{CEW25} shows that there exists a sampling vector $P=(p_1,p_2,p_3,p_4)$ such that the sampled set contains at least one element from $\UU$ with probability $\tg(1)$ and $w(P)=\To(\frac1{|\UU|})$:
\begin{restatable}[{\cite[Lemma 22]{CEW25}}]{lemma}{LemDisc}\label{lemma:discovery}
    Let $\HH$ be a $k$-partite hypergraph with $n$ vertices and $m$ hyperedges.
    Then, there exists a \emph{simple} vector $P\in[0,1]^k$ such that $m\cdot w(P)\leq \cclog$, for which the random induced graph $C\gets \HH[P]$ contains at least one hyperedge with probability at least $(\eo)^{k}$.
\end{restatable}
Here, instead of refining a $1$-dimensional sampling vector into a $k$-dimensional sampling vector, we refine a $k$-dimensional sampling vector into a $(k+1)$-dimensional sampling vector, which requires a slightly more delicate analysis following the approach of \cite{CEW24,CEW25}. 
See \Cref{thm:refining sampling vector} for the formal statement.
This however is still too weak to get a faster running time.
We explain how to refine $P^{(1)}=(1/t)$ gradually into $P^{(4)}=(p_1,p_2,p_3,p_4)$ while controlling $\wc(P^{(4)})$, as depicted in \Cref{fig:refine-sequence}.

\begin{figure}[H]
    \centering
    \begin{tikzpicture}
        \node[draw, inner sep=4pt, rounded corners=3pt] (P1) {$P^{(1)}=\brak{\frac{1}{t}}$};

        \node[draw, inner sep=4pt, rounded corners=3pt, right=2cm of P1] (P2) {$P^{(2)}=(q_1,q_2)$};
        \node[above=2pt of P2] {\small $q_1\cdot q_2=\lambda/t$};

        \node[draw, inner sep=4pt, rounded corners=3pt, right=2cm of P2] (P3) {$P^{(3)}=(q_1,p,p)$};
        \node[above=2pt of P3] {\small $p=\sqrt{rq_2}$};

        \node[draw, inner sep=4pt, rounded corners=3pt, right=2cm of P3] (P4) {$P^{(4)}=(p_1,p_2,p,p)$};
        \node[above=2pt of P4] {\small $p_1\cdot p_2= \lambda\cdot q_1$};

        \node[draw, inner sep=4pt, rounded corners=3pt, below=1cm of P4, red] (P5) {$\hat{P}^{(4)}=(1,1,1,1/t)$};

        \draw[->, thick] (P1) -- node[above] {\footnotesize $1^{\text{st}}$ refinement} (P2);
        \draw[->, thick] (P2) -- node[above] {\footnotesize $2^{\text{nd}}$ refinement} (P3);
        \draw[->, thick] (P3) -- node[above] {\footnotesize $3^{\text{rd}}$ refinement} (P4);
        \draw[->, thick, red, dashed, bend right=10] (P1.south) to node[pos=0.6, xshift=7.5em, yshift=.8em, text width=6cm] {\small {\color{black!80!white} Using \Cref{lemma:discovery} as a black box}} (P5.west);
    \end{tikzpicture}
    \caption{Refinement sequence from $P^{(1)}$ to $P^{(4)}$. We set $\lambda=48\log n$.}
    \label{fig:refine-sequence}
\end{figure}
We explain how each refinement step is done. Each step follows the same ``recipe'', with slight deviations to get better control over $\wc(P)$.
\paragraph{From $P^{(1)}$ to $P^{(2)}$:}
Consider the bipartite graph $\GG^{(2)}$ with vertex set $A\sqcup B$ and $t$ edges.
Let $A_i$ denote the set of vertices in $A$ with degree in $[2^i,2^{i+1})$, for every $0\leq i\leq \log t$.
One of these sets must contain at least $t/\log t$ edges, let $A_i$ be such a set.
We work with the edge set of $E(A_i)$ instead of the entire edge set of $\GG^{(2)}$.
Define a sampling vector $P^{(2)}=(q_1=\frac{\lambda'\cdot 2^i }{t},q_2=\frac{4}{2^i})$, where $\lambda'=12\log n$. 
Instead of proving that $\alpha(P^{(2)},\GG^{(2)})\geq \tg(1)$,
following the same lines of the proof of \Cref{thm:refining sampling vector}, we prove that $\alpha(P^{(3)},\GG^{(2)}[E(A_i)])\geq \tg(1)$, directly.
We do not use \Cref{thm:refining sampling vector} to get $P^{(2)}$ as a black box, because we need to control the degrees in $A_i$ to further refine $P^{(2)}$ to $P^{(3)}$.
\paragraph{From $P^{(2)}$ to $P^{(3)}$:}
Here, we crucially use the fact that the diamonds we consider are $r$-light. 
This means that for any pair $(v_1,v_2)\in V_1\times V_2$, if $(v_1,v_2,v_3,u_i)$ is an $r$-light diamond for every $i=1,2,\ldots,r+2$, then there must be two indices $i\neq j$ such that $(v_1,v_2,u_i,u_j)$ is also an induced diamond, i.e., $(u_i,u_j)$ is a non-edge in $G$.
Our refinement increases the weight of the sampling vector, i.e.,
$w(P^{(3)})\geq w(P^{(2)})\cdot r$, which the general refinement in \Cref{thm:refining sampling vector} avoids, mostly because it is not clear how to ensure that this improves the probability of hitting a hyperedge.
\paragraph{From $P^{(3)}$ to $P^{(4)}$:}
Here we use the general refinement from \Cref{thm:refining sampling vector} to get $P^{(4)}$ from $P^{(3)}$.
This completes the overview of the refinement steps.

~\\
Before proving \Cref{thm:helper}, we state a lemma that allows us to refine 
$P^{(3)}$ into $P^{(4)}$.
This lemma is a special case of \Cref{thm:refining sampling vector}, which we 
prove in the next subsection.
\begin{restatable}[Refining Sampling Vectors]{lemma}{RefineSimple}\label{lemma:refine simple}
    Let $P^{(3)}=(q,p_3,p_4)$ be a sampling vector.
    Then, there exists a sampling vector $P^{(4)}=(p_1,p_2,p_3,p_4)$ with the following properties:
    \begin{enumerate}[label=(\arabic*)]
        \item $p_1\cdot p_2 = \To(q)$.
        \item $\alpha(P^{(4)},\GG^{(4)})\geq \alpha(P^{(3)},\GG^{(3)})/(2\log n)$.
    \end{enumerate}
\end{restatable}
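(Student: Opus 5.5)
The plan is to condition on the sampled copies of the two $V_3$ parts, reduce to a bipartite hitting problem on $V_1\times V_2$, and pick the split of $q$ into $p_1,p_2$ by degree bucketing. Fix $P^{(3)}=(q,p_3,p_4)$ and couple the two samplings: in both $\GG^{(3)}[P^{(3)}]$ and $\GG^{(4)}[P^{(4)}]$, sample the two $V_3$ parts identically with probabilities $p_3,p_4$. Let $S_3,S_4$ be the sampled sets. They determine a set of pairs
\[
  F=F(S_3,S_4)\triangleq\set{(v_1,v_2)\in V_1\times V_2 \mid \exists\, v_3\in S_3,\,v_4\in S_4 \text{ with } (v_1,v_2,v_3,v_4)\in\DD_\psi}\;,
\]
which we view as a bipartite graph between $V_1$ and $V_2$. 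Conditioned on $(S_3,S_4)$, the sample $\GG^{(3)}[P^{(3)}]$ contains a hyperedge iff some pair of $F$ is sampled, which happens with probability $1-(1-q)^{|F|}$. Hence $\alpha(P^{(3)},\GG^{(3)})=\Exp{1-(1-q)^{|F|}}$. For $\GG^{(4)}$, a hyperedge survives iff the sampled subsets of $V_1$ and $V_2$ span an edge of $F$. So it suffices to choose $(p_1,p_2)$, independently of $(S_3,S_4)$, such that this probability is comparable to $1-(1-q)^{|F|}$ on a large part of the expectation.

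The first step is degree bucketing. For a fixed outcome $F$, partition $V_1$ into classes $X_j$ of vertices whose $F$-degree lies in $[2^j,2^{j+1})$, for $0\le j\le\log n$. At least one class is \emph{dominant}, meaning it is incident to at least $|F|/\log n$ edges of $F$. Let $j(F)$ be the smallest dominant index. Averaging gives
\[
  \sum_j \Exp{\bigl(1-(1-q)^{|F|}\bigr)\boldone[j(F)=j]}=\alpha(P^{(3)},\GG^{(3)})\;,
\]
so some fixed $j^*$ satisfies $\Exp{(1-(1-q)^{|F|})\boldone[j(F)=j^*]}\ge \alpha(P^{(3)},\GG^{(3)})/\log n$. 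This choice of $j^*$ is what makes $P^{(4)}$ independent of the random $F$, and it costs the $\log n$ factor. Set $p_2=2^{-j^*}$ and $p_1=\min\bigl(1,\,c\,q\,2^{j^*}\log n\bigr)$ for a small constant $c$. Then $p_1p_2=\To(q)$, which gives property (1).

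The second step bounds the hitting probability for a fixed $F$ with $j(F)=j^*$. Write $X=X_{j^*}$. Since every vertex of $X$ has degree less than $2^{j^*+1}$, we have $|X|\ge |F|/(2^{j^*+1}\log n)$, and so $p_1|X|\gtrsim q|F|$. Sample $V_1$ first. With probability $1-(1-p_1)^{|X|}\ge 1-e^{-\Omega(q|F|)}\ge\Omega(1)\cdot\bigl(1-(1-q)^{|F|}\bigr)$, at least one $x\in X$ survives. Fix such an $x$. It has at least $2^{j^*}$ neighbours in $V_2$, and each is kept independently with probability $p_2=2^{-j^*}$, so at least one is kept with probability at least $1-1/e$. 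Exposing $V_1$ before $V_2$ sidesteps any dependence between different vertices of $X$ through shared neighbours. Combining the two steps and taking expectation over $(S_3,S_4)$ gives $\alpha(P^{(4)},\GG^{(4)})\ge\Omega(1)\cdot\alpha(P^{(3)},\GG^{(3)})/\log n$.

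The main obstacle is matching the exact constant $1/(2\log n)$ rather than $\Omega(1/\log n)$. The losses come from $1-1/e$ and from comparing $1-e^{-x}$ with $1-(1-q)^{|F|}$. To close this gap, I would first try choosing $p_1$ larger by a further polylogarithmic factor, which (1) tolerates because it only requires $p_1p_2=\To(q)$. With $p_1|X|\ge \Omega(\log n)\cdot q|F|$, the survival probability in $V_1$ becomes close to $\min(1,\cdot)$ and dominates $1-(1-q)^{|F|}$. I would also use several surviving vertices of $X$ to push the $V_2$-hitting probability toward $1$, leaving only the bucketing factor together with a constant factor $2$.
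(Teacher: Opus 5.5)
Your argument is the paper's own. The paper proves the general \Cref{thm:refining sampling vector} by conditioning on the sample of the unmerged parts (here the two copies of $V_3$, so $N_\HH(S)$ is exactly your $F$). It then buckets one side of this bipartite graph by degree, uses probabilities $\min(1,12\log n\cdot q\cdot 2^j)$ and $\min(1,4/2^j)$, and averages over the $O(\log n)$ bucket indices. Your choice of $j^*$ through $j(F)$ is that same averaging.

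The constant you flag is not a real obstacle, but neither of your proposed fixes is the one you need.

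On the $V_1$ side, take $c$ large rather than small. No extra polylogarithmic factor is needed, since the $\log n$ already pays for dominance. If $p_1|X|\ge 2q|F|$ whenever $p_1<1$, then for $q\le 1/2$ you get $1-e^{-p_1|X|}\ge 1-e^{-2q|F|}\ge 1-(1-q)^{|F|}$. For $q>1/2$ you have $p_1=1$ and $X\neq\emptyset$. So this step is lossless, exactly as in the paper's proof of \Cref{claim2:y4}.

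On the $V_2$ side, using several surviving vertices of $X$ fails in general, because all of $X$ may share a single neighbourhood of size $2^{j^*}$. Instead, keep the single-survivor bound $1-1/e$. Alternatively, scale $p_2$ by a constant, as the paper does with $4/2^j$ against minimum degree $2^{j-1}$, giving $1-e^{-2}$. With the $V_1$ step lossless, you get $(1-1/e)/(\log n+1)\ge 1/(2\log n)$ for all $n\ge 14$.

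Finally, with $\lfloor\log n\rfloor+1$ degree classes, a class carrying $|F|/\log n$ edges need not exist. Use the threshold $|F|/(\log n+1)$, for which $c=8$ suffices. Or use the paper's $|F|/(3\log n)$, for which $c=12$ is needed, matching the paper's constant.
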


~\\We proceed by proving several additional claims that we need for the proof of \Cref{thm:helper}.
The following claim shows that when using the same sampling vector $P^{(4)}$, hitting a diamond in $\Gp$ is at least as likely as hitting a hyperedge in $\GG^{(4)}$, up to a factor of $2$.
\begin{claim}\label{claim:xyz}
    $\alpha(P^{(4)},\Gp)\geq \frac{1}{2}\cdot \alpha(P^{(4)},\GG^{(4)})$.
\end{claim}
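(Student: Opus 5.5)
The plan is to compare the two probabilities conditionally on the samples from $V_1$ and $V_2$, and then to reduce the claim to a statement about a random subgraph of $V_3$. Sampling $V_1$ and $V_2$ with probabilities $p_1,p_2$ is done identically in $\GG^{(4)}[P^{(4)}]$ and in $\Gp[P^{(4)}]$, so I would couple these two parts identically. Condition on the surviving sets $S_1\subseteq V_1$ and $S_2\subseteq V_2$. Let $D$ be the set of pairs $(v_3,v_4)\in V_3\times V_3$ for which there are $v_1\in S_1$, $v_2\in S_2$ with $(v_1,v_2,v_3,v_4)\in\DD_\psi$. Every diamond has distinct vertices, so $D$ contains no pairs $(w,w)$.

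After conditioning, the two events are as follows.
\begin{itemize}
\item In $\GG^{(4)}$, the two copies of $V_3$ are sampled independently into sets $X$ (rate $p_3$) and $Y$ (rate $p_4$). The event is that some $(a,b)\in D$ has $a\in X$ and $b\in Y$.
\item In $\Gp$, each $w\in V_3$ receives a uniform color $c_w\in\{3,4\}$ and is kept with probability $p_{c_w}$. The event is that some $(a,b)\in D$ has $a$ kept with color $3$ and $b$ kept with color $4$.
\end{itemize}
I read $D$ as oriented, one orientation per diamond, matching how a diamond is written as a tuple in $\DD_\psi$. Averaging over $S_1,S_2$ then gives the claim, provided the conditional inequality holds with factor $\frac12$ for every fixed $D$.

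For the conditional inequality I would use the natural coupling. Draw $X$, $Y$ and independent uniform colors $c_w$, and declare $w$ kept in $\Gp$ iff either $c_w=3$ and $w\in X$, or $c_w=4$ and $w\in Y$. This reproduces the distribution of $\Gp[P^{(4)}]$ exactly. On the event that $\GG^{(4)}$ is hit, choose canonically a witnessing pair $(a,b)$, determined by $X$ and $Y$ only. For a single pair the calculation is exact:
\[
\Pr{a \text{ kept with color } 3,\ b \text{ kept with color } 4} = \tfrac14 p_3p_4 + \tfrac14 p_4p_3 = \tfrac12 p_3p_4 .
\]
The two color assignments of $\{a,b\}$ contribute symmetric terms because the product $p_3p_4$ commutes. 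So for a single hyperedge the factor $\frac12$ is tight and correct.

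The main obstacle is passing from one hyperedge to the union over all of $D$ while keeping the factor $\frac12$. The naive coupling above only certifies the color pattern $(c_a,c_b)=(3,4)$ for the chosen witness, which gives $\frac14$. To recover the missing term, I would exploit that the $\Gp$-event depends on the colored kept set only through the unordered structure. I would construct a second coupling in which the roles of the copies are exchanged: $w$ is kept iff either $c_w=4$ and $w\in X$, or $c_w=3$ and $w\in Y$. This is again a valid coupling for the colors-and-keep distribution whenever $p_3=p_4$. In general, I would instead first symmetrize by replacing $P^{(4)}$ with its two orderings of $(p_3,p_4)$ and arguing monotonically.

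Concretely, I would try to show that for every fixed $(X,Y)$ hitting $D$, the events ``$c_a=3,c_b=4$'' and ``$c_a=4,c_b=3$ with the alternative keep rule'' together have probability $\frac12$. This would then be combined with the observation that the $\Gp$-success probability is a function of $(p_3,p_4)$ that is symmetric after averaging over colors. If this symmetrization fails for general $p_3\neq p_4$, the fallback is to prove the factor $\frac12$ for $p_3=p_4$, which is the only case used later, where $P^{(4)}=(p_1,p_2,p,p)$. I would then note that the general case follows with constant $\frac14$, which is equally sufficient for the $\tg(1)$ bounds downstream.
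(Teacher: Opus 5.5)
Your single-pair computation, $\tfrac14p_3p_4+\tfrac14p_4p_3=\tfrac12p_3p_4$, is exactly the paper's entire proof: a fixed diamond of $\mathcal{D}_\psi$ becomes $\phi$-colorful with probability $\tfrac12$. You are right that this says nothing about the event that \emph{some} hyperedge survives. However, your route to recovering $\tfrac12$ for that union has a genuine gap. The swapped keep-rule is just a second coupling of the same distribution. Conditioned on $(X,Y)$ and the witness $(a,b)$, it again certifies only one color pattern, $(c_a,c_b)=(4,3)$, and only with probability $\tfrac14$. Both coupled events have probability exactly $\alpha(P^{(4)},G_\phi)$, so adding the two bounds gives $2\alpha(P^{(4)},G_\phi)\ge\tfrac12\alpha(P^{(4)},\mathcal{G}^{(4)})$, which is the constant $\tfrac14$ again. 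The fact that the two patterns have total probability $\tfrac12$ concerns events in two different couplings, and it does not lower-bound any single event.

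In fact no argument can give $\tfrac12$. It fails even in the only case used later ($p_3=p_4$), and even with a canonical orientation of $D$. Let $v_1,v_2$ (of $\psi$-colors $1,2$) be adjacent to each other and to $a<b<c$ (of $\psi$-color $3$), and add the edge $ac$. For $r\ge 5$, $\mathcal{D}_\psi$ consists of $\{v_1,v_2,a,b\}$ and $\{v_1,v_2,b,c\}$, so $D$ is the path $a\to b\to c$, and $\{v_1,v_2,a,c\}$ is a $K_4$, so no other colorful diamond exists. For $P^{(4)}=(1,1,p,p)$,
\[
\alpha(P^{(4)},\mathcal{G}^{(4)})=2p^2-p^4,\qquad \alpha(P^{(4)},G_\phi)=p^2-\tfrac14 p^3,
\]
and the ratio $(1-p/4)/(2-p^2)$ is below $\tfrac12$ for every $p<\tfrac12$. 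For $p_3\ne p_4$ it is worse. Take a star $c\to\ell_1,\dots,\ell_k$ with the $\ell_i$ forming a clique, $P^{(4)}=(1,1,1,p)$ and $kp=2$; the ratio is about $0.37$. If each diamond is entered in both orientations, a single diamond already gives a ratio tending to $\tfrac14$. So your fallback for $p_3=p_4$ cannot succeed either.

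What is true, and what you should actually prove, is your closing remark. Use the first coupling, choose the witness $(v_1,v_2,a,b)$ as a function of $(S_1,S_2,X,Y)$, and note that the colors are independent of these and that $a\neq b$. This gives rigorously $\alpha(P^{(4)},G_\phi)\ge\tfrac14\,\alpha(P^{(4)},\mathcal{G}^{(4)})$ for every $P^{(4)}$ and under any orientation convention. The claim is used only in the proof of \Cref{thm:helper}, where $\tilde{\Omega}(1)$ suffices, so replacing $2\alpha(P,G_\phi)$ by $4\alpha(P,G_\phi)$ there changes nothing.
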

\begin{proof}[Proof of \Cref{claim:xyz}]
    Consider a diamond $D=(u_1,u_2,u_3,u_4)\in \DD_{\psi}$.
    Then the probability that it is also in $\DD_{\phi}$ is the probability that $\phi(u_3)\neq \phi(u_4)$, which is $1/2$.
    This completes the proof.
\end{proof}
By \Cref{claim:xyz}, it suffices to find $P^{(4)}$ such that $\alpha(P^{(4)},\GG^{(4)})\geq \tg(1)$, instead of $\alpha(P^{(4)},\Gp)\geq \tg(1)$ which is part of condition $(1)$ of \Cref{thm:helper}.
The following claim establishes a connection between $x_3$, the weight of a sampling vector $P^{(4)}$, and the probability that a diamond survives in $\GG^{(4)}[P^{(4)}]$.
\begin{claim}\label{claim2:z_1 and p12}
    For any $P^{(4)}=(p_1,p_2,p_3,p_4)$, it holds that $x_3\cdot p_i\geq \alpha(P^{(4)},\GG^{(4)})$, for $i=1,2$.
\end{claim}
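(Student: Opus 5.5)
The plan is a union bound over the vertices that could supply the color-$i$ endpoint of a surviving hyperedge. Fix $i\in\{1,2\}$. For every hyperedge $(v_1,v_2,v_3,v_4)\in\DD_\psi$ of $\GG^{(4)}$, the vertices $v_1,v_2$ are the endpoints of the chord of an induced diamond in $G$. They are therefore \degc vertices: the neighborhood of $v_1$ contains $v_3,v_2,v_4$, which induce a $P_3$ since $(v_3,v_4)$ is a non-edge, and symmetrically for $v_2$. Let $X_i\subseteq V_i$ be the set of vertices of color $i$ that appear as the $i$-th coordinate of some hyperedge. Then every element of $X_i$ is (a copy of) a \degc vertex, so $|X_i|\le x_3$. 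Here $x_3$ should be read as the number of \degc vertices in $\Gx$, or equivalently up to the $\tilde O(1)$ copy factor, which is absorbed in all later uses.

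Next, observe that the event ``$\GG^{(4)}[P^{(4)}]$ contains a hyperedge'' is contained in the event ``some vertex of $X_i$ is sampled''. The reason is that any surviving hyperedge has its $i$-th coordinate in $X_i$, and that coordinate must have survived the sampling. Each vertex of $V_i$ is kept independently with probability $p_i$, so the union bound gives
\[
\alpha(P^{(4)},\GG^{(4)})\le \Pr{X_i\cap \GG^{(4)}[P^{(4)}]\neq\emptyset}\le |X_i|\cdot p_i\le x_3\cdot p_i.
\]

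The only point needing care is the identification of $X_i$ with \degc vertices, in particular the role the coloring plays. Vertices of $V_1$ and $V_2$ are distinct vertices of $\Gx$, and the diamond condition is purely graph-theoretic, so the coloring does not matter. There is no real obstacle beyond this bookkeeping.
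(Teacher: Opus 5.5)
Your proof is correct and is essentially the paper's argument. The paper takes the set of color-$i$ vertices lying in some hyperedge, observes that these are chord endpoints and hence Deg3 vertices (so there are at most $x_3$ of them), and applies the same union bound to get $\alpha(P^{(4)},\mathcal{G}^{(4)})\le x_3\, p_i$. Your remark that $x_3$ must be counted in the disjoint union of copies $G_\psi$ is a bookkeeping point the paper glosses over; you handle it correctly, and it costs only a $\tilde{O}(1)$ factor.
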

\begin{proof}[Proof of \Cref{claim2:z_1 and p12}]
    Let $U_1$ and $U_2$ denote the set of vertices in $V_1$ and $V_2$, respectively, that participate in a hyperedge in $\DD_{\psi}$.
    By definition, every vertex in $U_1\cup U_2$ is a \degc vertex, so $|U_1|+|U_2|\leq x_3$.
    Define a random set $Y=V_1[p_1]\cup V_2[p_2]\cup V_3[p_3]\cup V_4[p_4]$, sampled according to the sampling vector $P^{(4)}$.
    Clearly the probability that $\GG^{(4)}[P^{(4)}]$ contains a hyperedge is equal to the probability that the induced subgraph $\GG^{(4)}[Y]$ contains a hyperedge.
    Define a random variable $Z=Y\cap V_1$.
    Clearly, if $Z$ does not contain any vertex from $U_1$, then $\GG^{(4)}[P^{(4)}]$ does not contain any hyperedge.
    Therefore,
    \begin{align*}
        \alpha(P^{(4)},\GG^{(4)}) & = \Pr{\GG^{(4)}[P^{(4)}] \text{ has a hyperedge}} \\
                                  & \leq \Pr{Z\cap U_1\nemp}                          \\
                                  & \leq \sum_{v\in U_1}\Pr{v\in Z}                   \\
                                  & = |U_1|\cdot p_1 \leq x_3\cdot p_1\;.
    \end{align*}
    The penultimate transition follows from the union bound.
    The same argument holds for $p_2$, completing the proof.
\end{proof}

~\\
We proceed by finding an induced subhypergraph of $\GG^{(2)}$ on the same vertex set (i.e., $V_1 \times V_2 \sqcup V_3 \times V_3$), where all vertices in $V_1 \times V_2$ have similar degree.
Recall that $t=|\DD_{\psi}|$, which is the number of hyperedges in $\GG^{(j)}$ for $j\in\set{2,3,4}$.
For every element $s\in V_1 \times V_2$ let $\deg^{(2)}(s)$ denote the degree of $s$ in the hypergraph $\GG^{(2)}$.
Partition the elements of $V_1 \times V_2$ into $\lceil\log t\rceil + 1$ buckets, where each bucket contains elements of similar degree in $\GG^{(2)}$:
\begin{align*}
    A_i = \set{s\in V_1 \times V_2:\deg^{(2)}(s)\in[2^{i-1},2^i)}\;.
\end{align*}
We say that a set $A_i$ is \emph{heavy} if $\abs{E(A_i)}\geq t/(6\log t)$,
where $E(A_i)$ is the set of hyperedges of $\GG^{(2)}$ that contain an element from $A_i$.
Note that there is at least one index $i$ such that $A_i$ is heavy. Otherwise, the total number of hyperedges would be less than $t/(6\log t)\cdot(\lceil\log t\rceil+1)<t$, which is a contradiction.
Let $i$ be the smallest index such that $A_i$ is heavy.
We restrict our attention to the subset of diamonds $\DD'_\psi$
defined as follows:
\begin{align*}
    \DD'_{\psi} =\set{(v_1,v_2,v_3,v_4)\in \DD_\psi \mid (v_1,v_2)\in A_i}\;.
\end{align*}
Let $\HH^{(j)}$ be the subhypergraph of $\GG^{(j)}$, with the same vertex set as $\GG^{(j)}$, and hyperedge set $\DD'_{\psi}$, for every $j\in[4]$.
Define
\begin{align}
    P^{(2)}=(q_1=\min(1,\frac{2^{i+1}\cdot \cg}{t}),q_2=\frac{1}{2^{i-1}})\;,\label{eq:P2}
\end{align}
where $i$ is the index of the chosen heavy set $A_i$. 
The following lemma allows us to prove \Cref{thm:helper}.
\begin{lemma}
    \label{prop2:balanced2 helper}
    Define $P^{(3)}=(q_1,p,p)$ where $p=\min(1,\sqrt{r\cdot q_2})$, and $q_1$ and $q_2$ are as in $P^{(2)}$.
    Then $\alpha(P^{(3)},\HH^{(3)})\geq \tg(1)$.
\end{lemma}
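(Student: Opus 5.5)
We lower bound the probability that $\HH^{(3)}[P^{(3)}]$ contains a hyperedge by a two-stage argument: first sample the pair-side $V_1\times V_2$ with probability $q_1$, then the two copies of $V_3$ with probability $p$ each. Throughout we restrict to the hyperedges $\DD'_\psi$, i.e., pairs $s=(v_1,v_2)\in A_i$ with $\deg^{(2)}(s)\in[2^{i-1},2^i)$ and $|E(A_i)|\ge t/(6\log t)$.

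\textbf{Step 1: a pair survives.} The number of pairs in $A_i$ is at least $|E(A_i)|/2^i\ge t/(6\cdot 2^i\log t)$. Since $q_1=\min(1,2^{i+1}\cdot\cg/t)$, the expected number of sampled pairs from $A_i$ is $\Omega(\log n)$ (or all of them survive if $q_1=1$). By the Chernoff bound (\Cref{thm:chernoff 6}), at least one $s\in A_i$ survives with probability $1-1/\mathrm{poly}(n)$. Since the pair-side and $V_3$-side samples are independent, it suffices to fix one surviving $s=(v_1,v_2)\in A_i$ and show that, with probability $\tg(1)$, the $V_3$-sampling retains one of the $\deg^{(2)}(s)\ge 2^{i-1}=1/q_2$ hyperedges at $s$.

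\textbf{Step 2: the light-structure.} Fix $s$ and let $W_s\subseteq V_3$ be the common neighbours of $v_1,v_2$ of color $3$. The hyperedges at $s$ are exactly the non-edges $\{v_3,v_4\}$ inside $W_s$. Let $d=\deg^{(2)}(s)$; this counts ordered pairs, up to a factor of $2$, which we absorb. Let $K$ be the set of vertices of $W_s$ incident to at least one such non-edge. Every $w\in K$ together with $v_1,v_2$ lies in an $r$-light diamond. The key consequence of $r$-lightness is the following: the vertex set of any clique $C$ inside $K$ satisfies $|C|\le r-2$, since otherwise $C\cup\{v_1,v_2\}$ would be a clique of size at least $r$ containing $v_1,v_2$ and a vertex of a diamond at $s$. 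Hence the graph $G[K]$ has clique number less than $r$. By Tur\'an's theorem, every subset $T\subseteq K$ with $|T|\ge r$ spans $\Omega(|T|^2/r)$ non-edges.

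\textbf{Step 3: the sampling.} Sample each copy of $V_3$ with probability $p=\sqrt{rq_2}$; we may treat this as sampling $K$ twice and requiring a non-edge between the two samples (vertices in both samples are harmless). I would split into cases on $|K|$ versus $d$.
\begin{itemize}
\item If $|K|\ge \sqrt{rd}$, the first sample $T_1$ has expected size $p|K|\ge \sqrt{rq_2}\sqrt{r/q_2}=r$, roughly. Restricting to a subset of $K$ of size about $\sqrt{r d}$ if $K$ is larger, $T_1$ has size $\Theta(r)$ with constant probability. By Tur\'an, $T_1\cup T_2$ then contains a non-edge between the two parts with constant probability. Concretely, I would argue by a second-moment bound on the number of surviving non-edges: its expectation is $p^2\cdot\Omega(|T|^2/r)=\Omega(1)$.
\item If $|K|<\sqrt{rd}$, the non-edges are denser, and the second-moment calculation on the count $X$ of surviving non-edges gives $\Exp{X}=p^2 d\ge r\ge 1$. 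For $\Exp{X^2}$ we need to bound the pairs of non-edges sharing a vertex, i.e., $p^3\sum_w \deg(w)^2\le p^3 d\cdot|K|$. This gives $\Exp{X}^2/\Exp{X^2}=\Omega(\min(1,p^2d/(p\,|K|)))$, which is $\Omega(1)$ when $p|K|\lesssim p^2 d$. In the remaining middle range, I would again use a Tur\'an-type argument on a random subset.
\end{itemize}

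\textbf{Main obstacle.} I expect the second-moment/case analysis in Step 3 to be the hardest part. It must handle non-edges concentrated on a few high-degree vertices of $K$. The intended fix is to pass to the subgraph of the sub-bucket of vertices of $K$ with similar non-degree, losing only a $\log n$ factor, and then apply Tur\'an to show that this bucket is large relative to $r$. If that fails, a fallback is to pick one vertex $w$ of maximal non-degree $\Delta$ and argue directly: either $\Delta\ge 1/p$, so $w$ is hit with probability $p$ and one of its non-neighbours with constant probability, or all non-degrees are small and the second moment above works.
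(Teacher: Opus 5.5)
Your architecture is the paper's: decompose over the sample of $V_1\times V_2$, show some $s\in A_i$ survives, then show the two $V_3$-samples hit a hyperedge at a fixed $s$ with constant probability via Tur\'an plus a second moment. Two things go wrong, however.

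First, in Step 2 the hyperedges at $s$ are not all non-edges of $W_s$. Only non-edges whose diamond with $(v_1,v_2)$ is $r$-light count, and a vertex $w$ with $\{v_1,v_2,w\}$ inside an $r$-clique can still lie on a non-edge of $W_s$. So with your definition of $K$, the claim that every $w\in K$ lies in an $r$-light diamond at $s$ is false, and $G[K]$ may contain huge cliques. Define $K$ as the set of vertices lying on hyperedges at $s$ (the paper's $M$). Then every non-edge $\{u,u'\}$ of $G[K]$ is a hyperedge at $s$, since both triangles $\{v_1,v_2,u\}$ and $\{v_1,v_2,u'\}$ avoid $r$-cliques. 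Also, $G[K]$ has no clique of size $r-2$. (Minor: in Step 1 the expected number of surviving pairs is $\Omega(\log n/\log t)=\Omega(1)$, not $\Omega(\log n)$; constant success probability still suffices.)

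Second, and more importantly, Step 3 is not a proof. Case 1 rests on the claim that ``by Tur\'an, $T_1\cup T_2$ contains a non-edge between the two parts.'' But Tur\'an only yields non-edges inside $T_1$, and each such non-edge has its other endpoint land in $T_2$ with probability only $p$. You bound only an expectation there. The ``remaining middle range'' and the bucketing/high-degree fallback are never carried out.

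The missing observation is that Tur\'an applied to the whole of $K$ already gives $d\ge |K|^2/(4r)$, i.e.\ $|K|\le 2\sqrt{rd}$. When $|K|\le 2r$ this is trivial, because every vertex of $K$ lies on a non-edge, so $d\ge |K|/2$. Now use $q_2=2^{-(i-1)}\ge 1/d$:
\begin{itemize}
\item $\Exp{X}=p^2d\ge r\ge 1$;
\item $pd=\sqrt{rd}\sqrt{q_2d}\ge\sqrt{rd}\ge |K|/2$.
\end{itemize}
Hence the overlap term of your own second-moment computation satisfies $p^3\sum_w\deg(w)^2=O(p^3d|K|)=O(p^4d^2)=O(\Exp{X}^2)$, and the second moment gives $\Pr{X>0}=\Omega(1)$ for every surviving $s$. (If $p=1$ the claim is trivial.)

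No case split is needed. In your Case 1 you have $\sqrt{rd}\le |K|\le 2\sqrt{rd}$, so it is handled by the same computation as Case 2. The ``obstacle'' of non-edges concentrated on a few vertices cannot arise, because every non-degree is at most $|K|=O(\sqrt{rd})$. This is precisely the paper's proof of \Cref{claim:ee2}, which obtains the bound $1/8$.
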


\begin{proof}[Proof of \Cref{thm:helper} using \Cref{prop2:balanced2 helper}]
    Let $P^{(3)}=(q_1,p,p)$ be as in \Cref{prop2:balanced2 helper}, meaning that $p=\min(1,\sqrt{r\cdot q_2})$, and additionally $q_1$ and $q_2$ are as in $P^{(2)}$.
    We refine $P^{(3)}$ into $P=(p_1,p_2,p,p)$ where $p_1\cdot p_2=\TO{q_1}$ using \Cref{lemma:refine simple}, which guarantees that $\alpha(P,\HH^{(4)})\geq \alpha(P^{(3)},\HH^{(3)})/(2\log n)$.
    By substituting the values of $q_1,q_2$ and $p$, we get
    \begin{align*}
        w(P)=\TO{w(P^{(3)})}=\TO{q_1\cdot p^2}\leq \TO{q_1\cdot r\cdot q_2}=\TO{r/t}\;.
    \end{align*}
    We have that
    \begin{align*}
        2\alpha(P,\Gp)
        \geq \alpha(P,\GG^{(4)})
        \geq \alpha(P,\HH^{(4)})
        \geq \frac{\alpha(P^{(3)},\HH^{(3)})}{2\log n}
        \geq \tg(1)\;.
    \end{align*}
    The inequalities follow for the following reasons.
    The first inequality follows by \Cref{claim:xyz}, the second follows because $\HH^{(4)}$ is a subhypergraph of $\GG^{(4)}$, the third follows by the guarantees of \Cref{lemma:refine simple}, and the last inequality follows by \Cref{prop2:balanced2 helper}.

    We show that one of the conditions of \Cref{thm:helper} holds.
    We use case analysis based on the smallest coordinate of $P$.
    \begin{itemize}
        \item Suppose first that $p_1$ (or $p_2$) is the smallest coordinate of $P$. If $p_1\leq \tsr$ or $p_2 \leq \tsr$, then by \Cref{claim2:z_1 and p12}, we have that $x_3\geq \tg(\sqrt{t/r})$, so (2) holds.
              Otherwise, we have $p_1,p_2\geq  \tg(\sqrt{r/t})$. Therefore,
              $\wc(P)=w(P)/p_1=\TO{\frac{r}{t}/ \sqrt{r/t}}=\TO{\tsr}$,
              as desired.
        \item Assume that the third or fourth coordinate of $P$ is the smallest coordinate of $P$. Both are equal to $p$.
              Therefore,
              \begin{align*}
                  \wc(P)=w(P)/p
                  =\To\left(\frac{r/t}{\sqrt{r\cdot q_2}}\right)
                  =\To\left(\frac{r/t}{\sqrt{r/2^{i-1}}}\right)
                  \leq \To\left(\frac{r/t}{\sqrt{r/t}}\right)
                  =\TO{\tsr}\;,
              \end{align*}
              as desired.
              The penultimate transition follows because $q_2 = 1/2^{i-1} \geq 1/t$.\qedhere
    \end{itemize}
\end{proof}

We are left with proving \Cref{prop2:balanced2 helper}. Recall that we sample from $\HH^{(3)}$ according to $P^{(3)}=(q_1,p,p)$.
We already showed that if we sample each vertex of $A_i$ with probability $q_1$ and each pair in $V_3\times V_3$ with probability $q_2$, then we hit a hyperedge in $\HH^{(2)}$ with constant probability.
We now replace the sampling of pairs in $V_3\times V_3$ with independent sampling of vertices in $V_3$ with probability $p=\min(1,\sqrt{r\cdot q_2})$ and show that we still hit a hyperedge in $\HH^{(3)}$ with constant probability.
In other words, each pair is sampled with probability $p^2=\min(1,r\cdot q_2)$, with the benefit of sampling vertices instead of pairs.

We define three random sets: $Y$ is $A_i[q_1]$, and $Z_1,Z_2$ are two independent copies of $V_3[p]$.
Let $\chi$ be the indicator random variable such that for every $U\subseteq V$ we have $\chi(U)=1$ if the induced hypergraph $\HH^{(3)}[U]$ contains a hyperedge, and $\chi(U)=0$ otherwise.
We have
\begin{align*}
    \alpha(P^{(3)},\HH^{(3)}) = \Pr{\chi(Y\cup Z_1\cup Z_2)=1}\;.
\end{align*}

For every subset $W\subseteq A_i$, we define two events:
\begin{enumerate}
    \item $\EE_1(W)$ is the event that $\set{Y=W}$, i.e., that the sampled elements from $A_i$ are exactly $W$.
    \item $\EE_2(W)$ is the event that $\set{\chi(W\cup Z_1\cup Z_2)=1}$.
\end{enumerate}
In words, the event $\EE_2(W)$, given that we sampled $W$ from $A_i$, states that if we sample vertices in $V_3$ with probability $p$ (independently twice, for $Z_1$ and $Z_2$), then the induced hypergraph on $W$, $Z_1$ and $Z_2$ contains a hyperedge.
We prove \Cref{prop2:balanced2 helper} using the following claims:
\begin{claim}\label{claim:ee0}
    $\alpha(P^{(3)},\HHc)\geq \sum_{\emptyset\neq W\subseteq A_i}\Pr{\EE_1(W)}\cdot \Pr{\EE_2(W)}$, where $P^{(3)}=(q_1,p,p)$ as defined in \Cref{prop2:balanced2 helper}.
\end{claim}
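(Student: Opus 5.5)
We prove \Cref{claim:ee0} by conditioning on the sampled subset $Y$ of $A_i$ via the law of total probability, and then using independence between $Y$ and $(Z_1,Z_2)$.

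First, the events $\set{\EE_1(W)}_{W\subseteq A_i}$ are pairwise disjoint and their union is the whole probability space, since $Y$ takes exactly one value. Hence
\begin{align*}
    \alpha(P^{(3)},\HHc)=\Pr{\chi(Y\cup Z_1\cup Z_2)=1}=\sum_{W\subseteq A_i}\Pr{\EE_1(W)\land \chi(Y\cup Z_1\cup Z_2)=1}\;.
\end{align*}
On the event $\EE_1(W)$ we have $Y=W$, so the intersection $\EE_1(W)\land\set{\chi(Y\cup Z_1\cup Z_2)=1}$ is exactly $\EE_1(W)\land\EE_2(W)$.

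Next, $\EE_1(W)$ depends only on the coin flips defining $Y=A_i[q_1]$. In contrast, $\EE_2(W)$ is a function of $(Z_1,Z_2)$ alone once $W$ is fixed, because $W$ is deterministic there. These coins are independent by construction, as $Z_1,Z_2$ are independent samples of $V_3[p]$ drawn independently of $Y$. Therefore $\Pr{\EE_1(W)\land\EE_2(W)}=\Pr{\EE_1(W)}\cdot\Pr{\EE_2(W)}$.

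Finally, we drop the term $W=\emptyset$. It is nonnegative, and in fact $\EE_2(\emptyset)$ is impossible, since every hyperedge of $\HH^{(3)}$ needs an element of $A_i$. Dropping it gives the stated inequality; in fact it holds with equality.

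One point needs care. $\HH^{(3)}$ has vertex set $(V_1\times V_2)\sqcup V_3\sqcup V_3$, but only hyperedges in $\DD'_\psi$, all of whose first coordinates lie in $A_i$. Sampling only $A_i$ rather than all of $V_1\times V_2$ with probability $q_1$ therefore gives the same probability of containing a hyperedge. This justifies the identity $\alpha(P^{(3)},\HHc)=\Pr{\chi(Y\cup Z_1\cup Z_2)=1}$ asserted before the claim. If one instead samples all of $V_1\times V_2$, monotonicity of $\chi$ still yields the inequality direction. The argument has no real obstacle beyond this bookkeeping.
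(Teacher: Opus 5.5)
Your proof is correct and follows the paper's route. The paper gets \Cref{claim:ee0} as the special case $k=3$, $j=2$ of the Decomposition Step (\Cref{claim:independence of e1e2}), whose proof is exactly yours: the law of total probability over the first-part sample, independence from the remaining parts, and dropping the empty set, which gives equality. Your remark about sampling only $A_i$ rather than all of $V_1\times V_2$ is bookkeeping the paper glosses over, and you handle it correctly.
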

\begin{claim}\label{claim:ee1}
    $\Pr{A_i[q_1]\nemp}\geq 1-1/e$.
\end{claim}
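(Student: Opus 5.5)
We need to prove that $\Pr{A_i[q_1]\nemp}\geq 1-1/e$, where $A_i[q_1]$ samples each element of $A_i$ independently with probability $q_1=\min(1,2^{i+1}\cdot\cg/t)$.

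The plan is to lower bound $|A_i|$ using heaviness and the degree bucket, and then apply the standard inequality $1-x\le e^{-x}$.

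\textbf{Size of $A_i$.} Since $A_i$ is heavy, $|E(A_i)|\ge t/(6\log t)$. Every element of $A_i$ has degree less than $2^i$ in $\GG^{(2)}$. Each hyperedge in $E(A_i)$ contains exactly one element of $V_1\times V_2$, so
\[
|E(A_i)|\le |A_i|\cdot 2^i,
\]
and therefore
\[
|A_i|\ge \frac{t}{6\cdot 2^i\log t}.
\]

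\textbf{The sampling probability.} If $q_1=1$, the claim is immediate: $A_i\neq\emptyset$ because $|E(A_i)|>0$, so every element is kept. Otherwise, since the elements are kept independently,
\[
\Pr{A_i[q_1]=\emptyset}=(1-q_1)^{|A_i|}\le \exp(-q_1|A_i|).
\]
Substituting the bound on $|A_i|$ gives
\[
q_1|A_i|\ \ge\ \frac{2^{i+1}\cdot\cg}{t}\cdot\frac{t}{6\cdot 2^i\log t}\ =\ \frac{2\cdot\cg}{6\log t}\ =\ \frac{100\log n}{3\log t}\ \ge\ 1.
\]
The last inequality uses $t\le n^4$, so $\log t\le 4\log n$. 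Hence $\Pr{A_i[q_1]=\emptyset}\le e^{-1}$, which proves the claim.

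\textbf{Main obstacle.} The only delicate point is the bookkeeping in the second step. We must use the upper end of the degree bucket, $2^i$, together with the heaviness threshold $t/(6\log t)$. We also need the ratio $\log n/\log t$ to be bounded below by a constant, which holds because $t$ is polynomial in $n$: $\Gx$ has $\tilde O(n)$ vertices, so $t\le n^{O(1)}$.
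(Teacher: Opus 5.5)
Your proof is correct and matches the paper's argument essentially step for step: bound $|A_i|\ge t/(6\cdot 2^i\log t)$ using heaviness and the degree cap $2^i$, treat $q_1=1$ separately, and use $(1-q_1)^{|A_i|}\le e^{-q_1|A_i|}$ with $q_1|A_i|\ge \frac{100\log n}{3\log t}\ge 1$. The paper simply asserts $t\le n^4$. Your remark that $t$ is only polynomially bounded in the blown-up graph $\Gx$, which has $\tilde O(n)$ vertices, handles that last step a bit more carefully and leaves the conclusion unchanged.
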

\begin{claim}\label{claim:ee2}
    $\Pr{\EE_2(W)}\geq 1/8$, for every non-empty $W\subseteq A_i$.
\end{claim}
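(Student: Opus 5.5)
Since $W$ is nonempty, the plan is to fix a single element $s=(v_1,v_2)\in W$ and show that already $\HH^{(3)}[\set{s}\cup Z_1\cup Z_2]$ contains a hyperedge with probability at least $1/8$. Because $\chi$ is monotone, this bounds $\Pr{\EE_2(W)}$ from below.

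\textbf{Setting up the structure at $s$.} Since $s\in A_i$, its degree in $\GG^{(2)}$ satisfies $d\triangleq\deg^{(2)}(s)\ge 2^{i-1}=1/q_2$. The hyperedges of $\HH^{(3)}$ through $s$ are exactly the pairs $(v_3,v_4)\in V_3\times V_3$ such that $(v_1,v_2,v_3,v_4)$ is an $r$-light diamond with missing edge $(v_3,v_4)$. Let $N$ be the set of common neighbours of $v_1,v_2$ in $V_3$, let $F$ be the graph on $N$ whose edges are these diamond pairs (non-edges of $G$ inside $N$), and let $M\subseteq N$ be the set of non-isolated vertices of $F$. Then $|E(F)|=\Theta(d)$, and hence $|M|^2\ge d$.

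\textbf{Key structural step (this is where $r$-lightness enters).} I claim that $F[M]$ has no independent set of size $r$. Suppose $I\subseteq M$ were independent in $F$ with $|I|\ge r$, and let $u\in I$. Since $u\in M$, the triple $v_1,v_2,u$ lies in some diamond of $\DD_\psi$. The set $I$ is a clique in $G$: two vertices of $N$ that are non-adjacent in $G$ form a diamond with $v_1,v_2$, and all such diamonds are counted in $F$ (modulo restricting to the colourful $r$-light ones, which needs care). Then $\set{v_1,v_2}\cup I$ is a clique of size at least $r$ containing $v_1,v_2,u$, so that diamond is $r$-heavy, a contradiction. By Tur\'an's theorem (in the form $|E(F[S])|\ge |S|^2/(2r)-|S|/2$), every $S\subseteq M$ therefore spans about $|S|^2/(2r)$ edges of $F$, at least once $|S|\ge 2r$.

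\textbf{Probabilistic step.} If $p=1$, i.e.\ $rq_2\ge 1$, the claim is trivial: $Z_1=Z_2=V_3$ and $s$ has a hyperedge. Otherwise $p^2=r q_2\ge r/d\ge r/|M|^2$, so $p|M|\ge\sqrt r$. I will couple $(Z_1,Z_2)$ with a single sample $T=M[p']$ together with an independent random split of $T$ into two halves, choosing $p'=\Theta(p)$ so that the split halves are dominated by $Z_1,Z_2$. Conditioned on $|T|\ge \Exp{|T|}/2$ (a Chernoff event), Tur\'an gives $\Omega(p^2|M|^2/r)=\Omega(1)$ edges of $F$ inside $T$. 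Each of these crosses the split with probability $1/2$, so a second-moment or direct argument yields a crossing edge with constant probability. Tuning constants (possibly replacing $p$ by $\min(1,c\sqrt{rq_2})$, which only changes $\wc$ by a constant factor) gives the bound $1/8$.

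\textbf{Main obstacle.} I expect the main obstacle to be the regime $\sqrt r\le p|M|<2r$, where $T$ may be smaller than $r$ and Tur\'an gives nothing. There I would instead use the degrees directly. The average $F$-degree on $M$ is $2|E(F)|/|M|$, and $|E(F)|\ge|M|^2/(2r)-|M|/2$ (Tur\'an applied to $M$ itself). So a vertex $u\in Z_1\cap M$ of typical degree has about $p|M|/r$ expected $F$-neighbours in $Z_2$, and I would combine this over the roughly $p|M|$ vertices of $Z_1\cap M$ via a second-moment bound on $X=\#\set{(u,w)\in E(F): u\in Z_1,w\in Z_2}$, whose mean is at least $p^2|E(F)|=\Omega(1)$. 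The remaining difficulty is controlling $\Exp{X^2}$ when a few vertices carry most of the $F$-degree. If the second moment fails, the fallback is to use that high-degree vertices of $M$ are \degc-like and handle them separately.
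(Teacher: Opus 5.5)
Your reduction to a single $s\in W$ via monotonicity of $\EE_2$ matches the paper. So does your structural step: $r$-lightness forbids an $r$-clique in $G[M]$, so Tur\'an applies. The point you flag as needing care also closes easily. For non-adjacent $u,u'\in M$, both triangles $\set{v_1,v_2,u}$ and $\set{v_1,v_2,u'}$ already lie in $r$-light diamonds, hence in no $r$-clique. So $(v_1,v_2,u,u')$ is itself a correctly coloured $r$-light diamond, and $F[M]$ is exactly the complement of $G[M]$.

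The genuine gap is the probabilistic step, which you do not complete. Your primary plan, Tur\'an applied to the sample $T$, only helps when $M$ is nearly Tur\'an-extremal. Tur\'an on $M$ gives $|M|\le 2\sqrt{r|E(F)|}$, while $p^2=rq_2=\Theta(r/|E(F)|)$. Hence $\Exp{|T|}=\Theta(p|M|)=O(r)$ in \emph{every} instance, and it is as small as $\Theta(\sqrt r)$ when, e.g., $G[M]$ is edgeless. So the regime you single out as the main obstacle is the generic case, and there the linear term $|T|/2$ in Tur\'an's bound swamps the quadratic one. Your fallback, a second-moment bound on $X=\#\set{(u,w)\in E(F): u\in Z_1, w\in Z_2}$, is the right tool, and it is the faithful event for the two independent copies $Z_1,Z_2$. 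But you leave $\Exp{X^2}$ uncontrolled. The escapes you float (rescaling $p$, or treating high-degree vertices as \degc-like) would not prove the claim as stated.

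The missing observation is that you already hold both ingredients.
\begin{itemize}
\item With $p=\sqrt{rq_2}$ the mean is $\Exp{X}=\Theta(p^2|E(F)|)=\Theta(r)$, not merely $\Omega(1)$.
\item Every $F$-degree is at most $|M|\le 2\sqrt{r|E(F)|}$. This uses Tur\'an on $M$ when $|M|>2r$, and $|M|\le 2|E(F)|$ otherwise. In other words, read the Tur\'an inequality on $M$ as an \emph{upper} bound on $|M|$, rather than as a lower bound on the average degree.
\end{itemize}
Only terms sharing a vertex in the same role are correlated, so
\[
\Exp{X^2}\le \Exp{X}+2p^3\sum_{u\in M}\deg_F(u)^2+\Exp{X}^2\le \Exp{X}+4p^3\,|M|\,|E(F)|+\Exp{X}^2 .
\]
Moreover $p^3|M|\,|E(F)|=O\bigl((r/|E(F)|)^{3/2}\cdot\sqrt{r|E(F)|}\cdot|E(F)|\bigr)=O(r^2)=O(\Exp{X}^2)$. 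Hence $\Pr{X>0}\ge\Exp{X}^2/\Exp{X^2}=\Omega(1)$.

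This is precisely the paper's argument. It bounds the number of overlapping pairs among $\ell$ chosen non-edges by $2\ell k$, with $k=|M|\le\sqrt{8\ell r}$, and obtains $\Pr{X>0}\ge r/(1+7r)\ge 1/8$. The paper does this for a single sample of $M$, but the same computation goes through for your two-copy $X$. So vertices carrying most of the $F$-degree are harmless: the no-$r$-clique condition caps every degree at exactly the scale at which the overlap term matches $\Exp{X}^2$.
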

\begin{proof}[Proof of \Cref{prop2:balanced2 helper}]
    \begin{align*}
        \alpha(P^{(3)},\HHc)
         & \geq \sum_{\emptyset\neq W\subseteq A_i}\Pr{\EE_1(W)}\cdot \Pr{\EE_2(W)} & \text{by \Cref{claim:ee0}} \\
         & \geq \frac{1}{8}\sum_{\emptyset\neq W\subseteq A_i}\Pr{\EE_1(W)}        & \text{by \Cref{claim:ee2}} \\
         & = \frac{1}{8} \cdot \Pr{A_i[q_1]\nemp}                                          \\
         & \geq \frac{1}{16}\;. & \text{by \Cref{claim:ee1}}& \qedhere
    \end{align*}
\end{proof}

\Cref{claim:ee0} is a special case of a more general statement (see \Cref{claim:independence of e1e2}) that we use in the proof of \Cref{thm:refining sampling vector} and therefore is deferred to the next subsection.
We prove \Cref{claim:ee1,claim:ee2}.
\Cref{claim:ee1} follows from a standard hitting set argument:
\begin{proof}[Proof of \Cref{claim:ee1}]
    Sample each vertex of $A_i$ independently with probability $q_1$.
    Then
    \[
        \Pr{A_i[q_1]=\emptyset}=(1-q_1)^{|A_i|}\leq \exp(-q_1|A_i|)\;.
    \]
    It therefore suffices to show that $q_1|A_i|\ge 1$.
    Since $A_i$ is heavy, $|E(A_i)|\ge \frac{t}{6\log t}$.
    Also, every vertex in $A_i$ has degree $<2^i$, hence $|E(A_i)|\le |A_i|\cdot 2^i$, and so
    \[
        |A_i|\ge \frac{|E(A_i)|}{2^i}\ge \frac{t}{2^i\cdot 6\log t}\;.
    \]
    If $q_1=1$, then $q_1|A_i|\geq 1$ because $A_i$ is non-empty. Otherwise, using $q_1=\cg\cdot \frac{2^{i+1}}{t}$, we get
    \[
        q_1|A_i|\ge \frac{t}{2^i\cdot 6\log t}\cdot \cg\cdot \frac{2^{i+1}}{t}
        = \frac{100}{3}\cdot \frac{\log n}{\log t}\ge 1\;,
    \]
    where the last inequality holds since $t\le n^4$.
\end{proof}
It remains to prove \Cref{claim:ee2}, which is the only place we use the assumption that no three vertices of any diamond in $\DDr$ are contained in an $r$-clique.
The intuition behind the proof is as follows.
We look at a single element $s\in A_i$, where $s=(v_1,v_2)$, and at the set $M$ of vertices $u\in V_3$ such that $(v_1,v_2,u)$ is contained in some diamond in $\DDr$.
Using the fact that no three vertices of any diamond in $\DDr$ are contained in an $r$-clique, we conclude that many pairs of vertices in $M$ are non-edges and therefore correspond to diamonds in $\DDr$ with the edge $(v_1,v_2)$.
In other words, there are many pairs of distinct vertices in $M$ that together with $(v_1,v_2)$ form induced diamonds, and we hit such a pair with good probability.

\begin{proof}[Proof of \Cref{claim:ee2}]
    Fix an arbitrary $s=(v_1,v_2)\in A_i\subseteq V_1\times V_2$.
    We prove that $\Pr{\EE_2(\set{s})}\geq 1/8$.
    Since $\EE_2(\cdot)$ is monotone (i.e., $W\subseteq W'$ implies $\EE_2(W)\subseteq \EE_2(W')$), it follows that for every non-empty $W\subseteq A_i$ and any choice of $s\in W$,
    \[\Pr{\EE_2(W)}\geq \Pr{\EE_2(\set{s})}\geq 1/8\,,\]
    which is exactly the claim.

    Define $M\subseteq V_3$ to be the set of all vertices $u\in V_3$, such that $(v_1,v_2,u)$ is contained in some diamond in $\DD_\psi$.
    Let $\ell=2^{i-1}$, where recall that $s$ is in $[\ell,2\ell)$ diamonds in $\DD_\psi$, otherwise it would not be in $A_i$.
    Let $G_s=G[M]$, be the induced subgraph of $G$ on the vertices in $M$.
    Sample each vertex in $G_s$ independently w.p. $p=\sqrt{r\cdot q_2}=\sqrt{r / \ell}$, where the equation follows by \Cref{eq:P2}.
    We show that the sampled graph contains a pair of vertices with no edge between them, with probability at least $1/8$.

    ~\\Let $k$ be the number of vertices of $G_s$, $m$ the number of edges, where $m=\ch{k}{2}-d$, for some $d\geq 0$.
    Since every pair $(u,u')$ in $M$ such that $(v_1,v_2,u,u')\in \DD_\psi$, must satisfy $(u,u')\notin E(G_s)$, we have that $d\in[\ell,2\ell)$.
    Since every vertex in $M$ is in at least one diamond with $(v_1,v_2)$, it is incident to at least one non-edge in $G_s$, and therefore $k\leq 2d$.
    We claim that
    \begin{align}
        d\geq \frac{k^2}{4r}\;. \label{eq:turan}
    \end{align}
    If $k\leq 2r$, then \Cref{eq:turan} holds trivially.
    Assume $k>2r$.
    If $G_s$ has no $r$ clique, then this follows by Tur\'an's theorem:
    \begin{align*}
        d\geq \ch{k}{2}-\brak{1-\frac{1}{r-1}}\frac{k^2}2 = \frac{k^2}{2(r-1)}-\frac{k}{2}
        \geq \frac{k^2}{4r}
        \;.
    \end{align*}
    To use Tur\'an's theorem, we need to show that $G_s$ does not have an $r$-clique.
    Assume towards contradiction that $K$ is an $r$-clique in $G_s$, and let $u\in M$ be a vertex in $K$.
    Then, the triplet $(v_1,v_2,u)$ is contained in an $r$-clique in $G$ and therefore is not contained in any diamond in $\DD$, contradicting the definition of $M$.
    By rearranging \Cref{eq:turan}, we get $k\leq \sqrt{4d r}\leq \sqrt{8\ell r}$.

    We are now ready to prove that $\Pr{\EE_2(\set{s})}\geq 1/32$, 
    using the second moment method.
    Let $V_p\gets M[p]$ be the set of vertices sampled from $M$ with probability $p=\sqrt{r/\ell}$.
    Let $F=(f_1,\ldots, f_\ell)$ be an arbitrary subset of $\ell$ non-edges in $G_s$, and let
    \begin{align*}
        C=\set{(i,j)\in[\ell]^2\mid \abs{f_i\cap f_j}=1}\;,
    \end{align*}
    be the set of pairs of non-edges in $G_s$ that share exactly one vertex.
    We have $\abs{C}\leq 2\ell k$, as every pair of non-edges in $G_s$ can share at most one vertex.
    By \Cref{eq:turan} $k\leq \sqrt{8r\ell}$ and therefore $\abs{C}\leq 2\sqrt{8r\ell}\cdot \ell$.

    Define an indicator random variable $X_i$ for every $i\in[\ell]$, where $X_i=1$ if both endpoints of $f_i$ are sampled into $V_p$, and $X_i=0$ otherwise. Also, $X=\sum_{i=1}^\ell X_i$.
    We analyze $\Pr{X>0}$, where if $\set{X>0}$, then $\EE_2(\set{s})$ holds.
    By the second moment method we have $\Pr{X>0}\geq  \Exp{X}^2/\Exp{X^2}$.
    We get:
    \begin{align*}
        \Exp{X}=\ell\cdot p^2\;, \quad \Exp{X^2}=\ell p^2 + \abs{C}p^3 + \ell^2p^4\;.
    \end{align*}
    Therefore,
    \begin{align*}
        \Pr{X>0}
        \geq \frac{\Exp{X}^2}{\Exp{X^2}}
        =\frac{\ell^2p^4}{\ell p^2+\abs{C}p^3 + \ell^2p^4}
        \geq\frac{\ell^2p^4}{\ell p^2+2\ell\sqrt{8r\ell}p^3 + \ell^2p^4}
        \geq \frac{\ell p^2}{1+6\sqrt{r\ell}p + \ell p^2}\;.
    \end{align*}
    The penultimate inequality follows from plugging in $\abs{C}\leq 2\sqrt{8r\ell}\cdot \ell$.
    Since $p=\sqrt{r/\ell}$, we get:
    \begin{align*}
        \Pr{X>0}
        \geq \frac{\ell p^2}{1+6\sqrt{r\ell}p + \ell p^2}
        \geq \frac{r}{1+6r+r} \geq \frac{1}{8}\;.
    \end{align*}
    This completes the proof.
\end{proof}

We still need to prove \Cref{lemma:refine simple,claim:ee0}, which we do in the following subsection.

\subsection{Refinement Lemma}
\label{ssec:refine}
The main goal of this subsection is to prove \Cref{lemma:refine simple}:

\RefineSimple* 

Instead of proving \Cref{lemma:refine simple} directly, which requires \emph{refining} a $3$-dimensional sampling vector into a $4$-dimensional sampling vector, we explain how to refine a $k$-dimensional sampling vector into a $(k+1)$-dimensional sampling vector for general $k$.
The work in \cite{CEW24,CEW25} considers refining a $1$-dimensional sampling vector into a $k$-dimensional sampling vector.
Stated differently, let $\XX=\XX_1\times\XX_2\times\cdots\times \XX_k$ be a Cartesian product of $k$ sets, and let $\UU\subseteq \XX$ be a subset of $\XX$.
Then, by sampling every element in $\XX$ independently with probability $q$, we obtain a $1$-dimensional sampling vector that hits $\UU$ with probability $1-(1-q)^{|\UU|}$.
Here, $\XX$ is the set of all quadruples of vertices, and $\UU$ is the set of all colorful induced diamonds.
A $4$-dimensional sampling vector samples each element in $\XX_i$ independently with probability $p_i$, for every $i\in[4]$.
The following lemma shows that there exists a sampling vector $P=(p_1,p_2,p_3,p_4)$ such that the sampled set contains at least one element from $\UU$ with probability $\tg(1)$ and $w(P)=\To(\frac1{|\UU|})$:

\LemDisc* 
The analysis follows similar lines to those in \cite{CEW24,CEW25}, but requires a slightly more delicate argument.
\bigskip\noindent
Let $\GG$ be a $(k+1)$-partite $(k+1)$-uniform hypergraph with vertex sets $(V_1,\ldots,V_{k+1})$.
Let $\HH$ be a $k$-partite $k$-uniform hypergraph with vertex sets $(V_1,\ldots,V_{k-1}, V_k')$, where $V_k'=V_k\times V_{k+1}$.
In other words, the $k$th vertex set in $\HH$ is the Cartesian product of the $k$th and $(k+1)$th vertex sets in $\GG$.
That is,
\begin{align*}
    V(\GG)=\bigcup_{i=1}^{k+1} V_i\;, &  & \text{and } \quad V(\HH)=\brak{\bigcup_{i=1}^{k-1} V_i}\cup V_k'\;.
\end{align*}
Any hyperedge $e=(v_1,\ldots, v_{k+1})\in\GG$, where $v_i\in V_i$ for every $i\in[k+1]$, corresponds to a hyperedge $e'\in \HH$ obtained from $e$ by replacing $v_k$ and $v_{k+1}$ with the element $v_k'=(v_k,v_{k+1})\in V_k'$, i.e., $e'=(v_1,\ldots, v_{k-1},v_k')$.
In other words, the hyperedges of $\GG$ play the role of the elements of $\UU$, and the hyperedges of $\HH$ play the same role when these elements are viewed as $k$-tuples by contracting the last two coordinates into one.

\begin{definition}
    For every $s$-partite $s$-uniform hypergraph $\FF$ with vertex sets $(U_1,\ldots,U_s)$, and a sampling vector $R=(r_1,\ldots,r_s)\in[0,1]^s$, we define $\FF[R]$ to be the random induced subhypergraph obtained from $\FF$ by sampling each vertex $u\in U_i$ independently with probability $r_i$, for every $i\in[s]$.
    We also define $\alpha(R,\FF)$ to be the probability that $\FF[R]$ contains at least one hyperedge.
\end{definition}
The following theorem is the main result of this subsection and generalizes \Cref{lemma:refine simple}:

\begin{theorem}[Refining a Sampling Vector]
    \label{thm:refining sampling vector}
    Given a sampling vector $Q=(q_1,\ldots,q_k)$,
    define $\log n$ sampling vectors $P_i=(q_1,\ldots,q_{k-1},p_k^i,p_{k+1}^i)$ for $i\in[\log n]$, where
    \begin{align*}
        p_k^i=\min(1,12\log n\cdot q_k\cdot 2^i)\;, &  & \text{and } \quad p_{k+1}^i=\min(1,\frac{4}{2^i})\;.
    \end{align*}
    Then there exists $i\in[\ra]$ such that $\alpha(P_i,\GG)\geq \alpha(Q,\HH)/(2\log n)$.
    Note that $w(P_i)\leq w(Q)\cdot 48\log n$, for every $i\in[\ra]$.
\end{theorem}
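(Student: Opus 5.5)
Fix $i$ for each degree class and condition on the sampled parts of $V_1,\ldots,V_{k-1}$. Sample $Y=(V_1[q_1],\ldots,V_{k-1}[q_{k-1}])$ and let $\EE_1(W)=\{Y=W\}$. Both $\alpha(Q,\HH)$ and $\alpha(P_i,\GG)$ decompose over $W$, because the last coordinate(s) are sampled independently of $Y$:
\[
\alpha(Q,\HH)=\sum_W \Pr{\EE_1(W)}\cdot\Pr{\chi_\HH(W\cup V_k'[q_k])=1},\qquad
\alpha(P_i,\GG)=\sum_W\Pr{\EE_1(W)}\cdot \Pr{\chi_\GG(W\cup V_k[p_k^i]\cup V_{k+1}[p_{k+1}^i])=1}.
\]
This is the general form of \Cref{claim:ee0}, and it is essentially an identity. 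It therefore suffices to prove the following pointwise statement: for every fixed $W$ there is an $i=i(W)\in[\log n]$ with
\[
\Pr{\text{$\GG[W,P_i]$ has a hyperedge}}\ \ge\ \Pr{\text{$\HH[W,Q]$ has a hyperedge}}/2.
\]
Averaging then finishes the proof. Indeed, $\sum_W \Pr{\EE_1(W)}\beta(W)\le \sum_i\sum_{W:i(W)=i}\cdots$, so some $i$ captures at least a $1/\log n$ fraction of the mass, which gives the loss $1/(2\log n)$.

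Now fix $W$. Let $B\subseteq V_k\times V_{k+1}$ be the set of pairs $(v_k,v_{k+1})$ that complete $W$ to a hyperedge, and view $B$ as a bipartite graph between $V_k$ and $V_{k+1}$. Then $\beta_\HH(W)=1-(1-q_k)^{|B|}\le \min(1,q_k|B|)$. Bucket the $V_k$-side vertices of $B$ by degree in $[2^{j-1},2^j)$. Some bucket $j$ carries at least $|B|/\log n$ edges; this is a factor that I must absorb. I believe the $12\log n$ in $p_k^i$ is chosen exactly to pay for it, and I take $i=j$. Let $T$ be the set of $V_k$-vertices in that bucket, so $|T|\ge |B|/(2^{j}\log n)$.

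Sample $V_k[p_k^j]$. The probability that it hits $T$ is at least $1-\exp(-p_k^j|T|)\ge 1-\exp(-\min(|T|,12 q_k|B|))$, which is at least $\tfrac12\min(1,q_k|B|)$ up to constants. To see this, split into the cases $q_k|B|\ge 1$ and $q_k|B|<1$ and use $1-e^{-x}\ge x/2$ for $x\le1$. Conditioned on hitting some $v\in T$, $v$ has at least $2^{j-1}$ neighbours in $V_{k+1}$. Sampling $V_{k+1}$ with probability $\min(1,4/2^j)$ hits one of them with probability at least $1-e^{-2}\ge 3/4$, by independence, exactly as in \Cref{claim:ee1}. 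Multiplying the two gives $\beta_\GG(W)\gtrsim \beta_\HH(W)$.

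The main obstacle is constant-chasing so that the final loss is exactly $1/(2\log n)$. The bucketing loss of $1/\log n$ is absorbed into $p_k$, and I would need the hitting estimate to recover $\min(1,q_k|B|)$ with constant at least $1/2\cdot(4/3)$. If the constants do not close, I would fall back to the honest statement with $\tilde\Omega$, which is all that is used in \Cref{lemma:refine simple}.

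The weight claim is immediate: $p_k^ip_{k+1}^i\le 12\log n\cdot q_k2^i\cdot 4/2^i=48\log n\cdot q_k$. The cases where a $\min$ is truncated at $1$ only decrease the product.
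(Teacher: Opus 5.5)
Your proposal is correct and follows essentially the same route as the paper: decompose over the sample $S$ of the first $k-1$ parts, then for each $S$ bucket the $V_k$-side of $N_{\HH}(S)$ by degree and run the two-stage hitting argument of \Cref{claim7:bipartite} with $i$ set to the heavy bucket index, and finally average over $i$. The constants you were worried about do close: when $p_k^j<1$ you have $q_k\le 1/2$, so $1-e^{-12q_k|B|}\ge 1-(1-q_k)^{|B|}=\beta_{\HH}(W)$ directly, and when $p_k^j=1$ the hit probability is $1$ since $T\neq\emptyset$; hence the pointwise loss is only the factor $1-e^{-2}>1/2$.
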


We explain how to use \Cref{thm:refining sampling vector} to prove \Cref{lemma:refine simple}.
\begin{proof}[Proof of \Cref{lemma:refine simple} using \Cref{thm:refining sampling vector}]
    Recall that
    \begin{align*}
        \GG^{(3)} & =((V_1 \times V_2) \sqcup V_3 \sqcup V_3,\DD_{\psi})\;, \\
        \GG^{(4)} & =(V_1 \sqcup V_2 \sqcup V_3 \sqcup V_3,\DD_{\psi})\;,
    \end{align*}
    and that we want to refine the sampling vector $P^{(3)}=(q,p_3,p_4)$ into a sampling vector $P^{(4)}=(p_1,p_2,p_3,p_4)$ that satisfies the two properties in \Cref{lemma:refine simple}.

    \Cref{thm:refining sampling vector} allows us to refine a $k$-dimensional sampling vector into a $(k+1)$-dimensional sampling vector while maintaining the two properties.
    We define $P_i^{(4)}=(p_1^i,p_2^i,p_3,p_4)$, where
    \begin{align*}
        p_1^i=\min\!\bigl(1,12\ra\cdot q\cdot 2^i\bigr)\;, &  & p_2^i=\min\!\bigl(1,4/2^i\bigr)\;,
    \end{align*}
    for every $i\in[\ra]$, exactly as in \Cref{thm:refining sampling vector}.
    We have that $p_1^i\cdot p_2^i \le 48\ra\cdot q$, for every $i\in[\ra]$, so all these vectors satisfy the first property in \Cref{lemma:refine simple}.
    By \Cref{thm:refining sampling vector}, there exists $i\in[\ra]$ such that, for $P^{(4)}=P_i^{(4)}$,
    $\alpha(P^{(4)},\GG^{(4)})\geq \alpha(P^{(3)},\GG^{(3)})/(2\log n)$.
    Thus the second property in \Cref{lemma:refine simple} is also satisfied, which completes the proof.
\end{proof}

For the rest of this subsection, we prove \Cref{thm:refining sampling vector}.
We need the following definitions.
Let $A = \bigcup_{i=1}^{k-1} V_i$. For every subset of vertices $S\subseteq A$, we define $N_{\HH}(S)$ to be the set of all vertices $v'\in V_k'$ such that there exists a hyperedge $e'\in \HH$ that is contained in $S\cup\set{v'}$:
\begin{align*}
    N_{\HH}(S)=\set{v'\in V_k' \mid \exists e'\in E(\HH) \text{ s.t. } e'\subseteq S\cup\set{v'}}\;.
\end{align*}
For every non-empty $S\subseteq A$, we define the following events:
\begin{align*}
    \EE_1(S) & \triangleq \set{(V_1[q_1]\cup \cdots \cup V_{k-1}[q_{k-1}])=S}\;,                                       \\
    \EE_2(S) & \triangleq \set{N_\HH(S) \cap V_k'[q_k]\nemp}\;,                                                        \\
    \FF_i(S) & \triangleq \set{(V_k[p_k^i]\times V_{k+1}[p_{k+1}^i]) \cap N_\HH(S)\nemp}\;, \quad \forall i\in[\ra]\;.
\end{align*}
In words, $\EE_1(S)$ is the event that the set of sampled vertices from the first $k-1$ parts equals $S$.
The event $\EE_2(S)$ is that at least one vertex $v'\in V_k'\cap N_\HH(S)$ is sampled, and $\FF_i(S)$ is the event that, after sampling each vertex $v\in V_k$ independently with probability $p_k^i$ and each vertex $u\in V_{k+1}$ independently with probability $p_{k+1}^i$, there exists $(v_k,v_{k+1})\in N_\HH(S)$ such that both $v_k$ and $v_{k+1}$ are sampled.

We use the following two claims to prove \Cref{thm:refining sampling vector}.
\begin{claim}\label{claim2:y1}
    \begin{align*}
        \alpha(Q,\HH)   = & \sum_{\emptyset\neq S\subseteq A}\Pr{\EE_1(S)}\cdot \Pr{\EE_2(S)}\;, \\
        \alpha(P_i,\GG) = & \sum_{\emptyset\neq S\subseteq A}\Pr{\EE_1(S)}\cdot \Pr{\FF_i(S)}\;.
    \end{align*}
\end{claim}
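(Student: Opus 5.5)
The plan is to condition on the outcome of the sampling in the first $k-1$ parts, which both sampling vectors $Q$ and $P_i$ share. The random set $T \triangleq V_1[q_1]\cup\cdots\cup V_{k-1}[q_{k-1}]$ has the same distribution under $Q$ and under $P_i$, since the first $k-1$ coordinates coincide. For each realization $S$, the events $\EE_1(S)=\{T=S\}$ are pairwise disjoint. Their union, over all $S\subseteq A$ including $S=\emptyset$, is the whole probability space. So the law of total probability gives
\begin{align*}
\alpha(Q,\HH)=\sum_{S\subseteq A}\Pr{\EE_1(S)}\cdot\Pr{\HH[Q]\text{ has a hyperedge}\mid \EE_1(S)}\;.
\end{align*}
An analogous identity holds for $\alpha(P_i,\GG)$.

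Next I identify each conditional probability. Conditioned on $T=S$, the sampled hypergraph $\HH[Q]$ contains a hyperedge iff some hyperedge $e'=(v_1,\ldots,v_{k-1},v')$ has $v_1,\ldots,v_{k-1}\in S$ and $v'$ sampled. Equivalently, some $v'\in N_\HH(S)$ lies in $V_k'[q_k]$, which is exactly $\EE_2(S)$. Similarly, for $\GG[P_i]$ a hyperedge survives iff some $(v_k,v_{k+1})\in N_\HH(S)$ has $v_k\in V_k[p_k^i]$ and $v_{k+1}\in V_{k+1}[p_{k+1}^i]$. Here I use the correspondence $e\leftrightarrow e'$ between hyperedges of $\GG$ and $\HH$, which is $\FF_i(S)$. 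Crucially, $\EE_2(S)$ and $\FF_i(S)$ are determined only by the sampling in the last part(s), namely $V_k'$, resp.\ $V_k,V_{k+1}$. This sampling is independent of the sampling in $V_1,\ldots,V_{k-1}$. Hence the conditional probability equals the unconditional $\Pr{\EE_2(S)}$, resp.\ $\Pr{\FF_i(S)}$, so the conditional term factors as a product.

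Finally, I handle the $S=\emptyset$ term. Every hyperedge of $\HH$ contains one vertex from each of $V_1,\ldots,V_{k-1}$. Therefore $N_\HH(\emptyset)=\emptyset$ whenever $k\ge 2$, and thus $\Pr{\EE_2(\emptyset)}=\Pr{\FF_i(\emptyset)}=0$, so that term drops and the sums range over nonempty $S$. (For $k=1$, $A$ is empty and the statement degenerates; I would remark that the setting assumes $k\ge2$.) The only step requiring care is the independence/factoring argument: one must check that $\EE_2(S)$ is a function solely of $V_k'[q_k]$. This holds because $S$ is a fixed set, not a random one, so $N_\HH(S)$ is deterministic. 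There is no real obstacle beyond stating this cleanly.
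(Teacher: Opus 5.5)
Your proof is correct and follows the paper's route. The paper proves a general Decomposition Step (\Cref{claim:independence of e1e2}) by the law of total probability over the realization $S$ of the first $j-1$ parts, independence of the remaining parts, and $\Pr{\EE_2(\emptyset)}=0$, then instantiates it with $j=k$ for $\HH$ and for $\GG$; you carry out the same argument inline, and your explicit check that the surviving pairs for $\GG$ are exactly $N_\HH(S)$ under the hyperedge correspondence fills in an identification the paper leaves implicit.
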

\begin{proposition}[Preservation Step]\label{claim2:y4}
    For every non-empty $S\subseteq A$, there exists $i\in[\ra]$ such that
    $\Pr{\FF_i(S)}\geq \Pr{\EE_2(S)}\cdot (1-\frac1{e^2})$.
\end{proposition}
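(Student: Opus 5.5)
Fix a non-empty $S\subseteq A$ and write $N=N_\HH(S)\subseteq V_k\times V_{k+1}$. View $N$ as a bipartite graph between $V_k$ and $V_{k+1}$. By independence, $\Pr{\EE_2(S)}=1-(1-q_k)^{|N|}\le \min(1,q_k|N|)$. So it suffices to find $i\in[\ra]$ with $\Pr{\FF_i(S)}\ge (1-e^{-2})\min(1,q_k|N|)$. The plan is the same dyadic bucketing by degree used in the first refinement step.

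\textbf{Step 1: bucketing.} For each $j$, let $B_j\subseteq V_k$ be the vertices whose degree in $N$ lies in $[2^{j-1},2^j)$, and let $N_j$ be the edges of $N$ incident to $B_j$. There are at most $\log n$ buckets (degrees are at most $n$), so some $j$ has $|N_j|\ge |N|/\log n$. Take $i=j$. Then $p_{k+1}^i=\min(1,4/2^i)$ and $p_k^i=\min(1,12\log n\, q_k 2^i)$.

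\textbf{Step 2: a fixed vertex of $B_i$ survives on the $V_{k+1}$ side.} Fix $v\in B_i$. It has at least $2^{i-1}$ neighbours in $N$. The probability that none of them is sampled into $V_{k+1}[p_{k+1}^i]$ is at most $(1-4/2^i)^{2^{i-1}}\le e^{-2}$; when $p_{k+1}^i=1$ this probability is $0$. So each $v\in B_i$ independently has a sampled partner with probability at least $1-e^{-2}$. The vertices of $B_i$ that are sampled in $V_k$ are independent of the $V_{k+1}$ sample. Hence, conditioned on at least one $v\in B_i$ being sampled, picking one such $v$ gives
\begin{align*}
\Pr{\FF_i(S)}\ge (1-e^{-2})\cdot \Pr{B_i[p_k^i]\neq\emptyset}.
\end{align*}

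\textbf{Step 3: hitting $B_i$.} Since every degree in $B_i$ is below $2^i$, we have $|B_i|\ge |N_i|/2^i\ge |N|/(2^i\log n)$. Therefore
\begin{align*}
\Pr{B_i[p_k^i]\neq\emptyset}\ge 1-\exp(-p_k^i|B_i|)\ge 1-\exp(-\min(|B_i|,\,12 q_k|N|)).
\end{align*}
If $q_k|N|\ge 1$, this exponent is large enough to make the probability essentially $1$. If $q_k|N|<1$, we use $1-e^{-x}\ge x(1-x/2)$, which gives roughly $12q_k|N|(1-6q_k|N|)$. This is $\ge q_k|N|$ whenever $q_k|N|\le 1/12$; the intermediate range of $q_k|N|$ needs a separate check.

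\textbf{Expected obstacle.} The main difficulty is the constant bookkeeping, which must give exactly the factor $1-e^{-2}$ uniformly in $q_k|N|$. Two cases need care. The first is the middle regime $q_k|N|\in(1/12,1)$; there I would bound $1-e^{-12x}\ge 1-e^{-12/12}\cdot(\ldots)$ and compare directly with $1-(1-q_k)^{|N|}$. The second is $|B_i|$ being tiny, where the $\min$ with $|B_i|$ binds. If that case fails, I would use $\Pr{\EE_2}\le q_k|N|\le q_k|B_i|2^i$ and note that $p_k^i\ge q_k2^i$, so a union-type lower bound on $1-(1-p_k^i)^{|B_i|}$ still dominates. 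The slack available for both cases is the $\log n$ loss already absorbed in the factor $12\log n$.
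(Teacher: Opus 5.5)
Your Steps 1--2 follow the paper's argument. The paper packages them as the bipartite lemma with $\Lambda_1,\Lambda_2$. They correctly reduce the claim to showing $\Pr{B_i[p_k^i]\neq\emptyset}\ge\Pr{\EE_2(S)}$.

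\textbf{The gap is Step 3.} You never prove this inequality, and the route you set up cannot prove it. By relaxing $\Pr{\EE_2(S)}$ to its upper bound $\min(1,q_k|N|)$, you require $\Pr{B_i[p_k^i]\neq\emptyset}\ge 1$ whenever $q_k|N|\ge 1$. But if $p_k^i<1$, then $\Pr{B_i[p_k^i]\neq\emptyset}=1-(1-p_k^i)^{|B_i|}<1$, so this target is false. ``Essentially $1$'' does not rescue it. Your bound $1-e^{-12q_k|N|}$ even drops below $q_k|N|$ when $q_k|N|$ is very close to $1$.

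This regime does occur. Take $N$ a matching with $m>24\log n$ edges, $q_k=1/m$ and $i=1$. Then $p_k^1<1$ and $q_k|N|=1$.

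The case $p_k^i=1$ is also left open. A ``union-type lower bound'' is not available there, since union bounds give upper bounds, not lower bounds.

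\textbf{The fix.} This is how the paper finishes. Keep $\Pr{\EE_2(S)}=1-(1-q_k)^{|N|}$ exact, and split only on whether $p_k^i$ is capped.

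If $p_k^i=1$, the hitting probability is exactly $1$ because $B_i\neq\emptyset$. This is precisely the case where the $\min$ with $|B_i|$ binds, so your second worry disappears.

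If $p_k^i<1$, then $12\log n\cdot q_k2^i<1$ forces $q_k\le 1/2$. Your own estimate $p_k^i|B_i|\ge 12q_k|N|$ then gives
\[
(1-p_k^i)^{|B_i|}\le e^{-p_k^i|B_i|}\le e^{-12q_k|N|}\le e^{-2q_k|N|}\le (1-q_k)^{|N|},
\]
using $e^{-2x}\le 1-x$ for $x\in[0,1/2]$. This yields $\Pr{B_i[p_k^i]\neq\emptyset}\ge\Pr{\EE_2(S)}$ uniformly, with no case analysis on $q_k|N|$.

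A minor point on Step 2: the events that different $v\in B_i$ have a sampled partner are not independent, since the $v$'s can share neighbours. You only use one $v$, chosen as a function of the $V_k$-sample, which is independent of the $V_{k+1}$-sample. So just drop the word ``independently''.
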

\begin{proof}[Proof of \Cref{thm:refining sampling vector} using \Cref{claim2:y1,claim2:y4}]
    We first show that $\sum_{i\in [\ra]}\alpha(P_i,\GG)\geq \alpha(Q,\HH)\cdot (1-\frac1{e^2})$:
    \begin{align*}
        \sum_{i\in[\ra]}\alpha(P_i,\GG)
         & =\sum_{i\in[\ra]}\ \sum_{\emptyset\neq S\subseteq A}
        \Pr{\EE_1(S)}\cdot\Pr{\FF_i(S)}
         &
         & \text{by \Cref{claim2:y1}}                                \\
         & =\sum_{\emptyset\neq S\subseteq A}
        \Pr{\EE_1(S)}\cdot\brak{\sum_{i\in[\ra]}\Pr{\FF_i(S)}}       \\
         & \ge \sum_{\emptyset\neq S\subseteq A}\Pr{\EE_1(S)}\cdot
        \brak{\Pr{\EE_2(S)}\cdot(1-\frac1{e^2})}
         &
         & \text{by \Cref{claim2:y4}}                                \\
         & = \alpha(Q,\HH)\cdot(1-\frac1{e^2})\;.                  &
         & \text{by \Cref{claim2:y1}}
    \end{align*}

    To complete the proof, we show that there exists $i\in[\ra]$ such that $\alpha(P_i,\GG)\geq \alpha(Q,\HH)/(2\log n)$, using a simple averaging argument.
    If the sum of $\ra$ non-negative numbers is at least $\alpha(Q,\HH)\cdot (1-\frac1{e^2})$, then at least one of them is at least $\alpha(Q,\HH)\cdot (1-\frac1{e^2})/\ra\geq \alpha(Q,\HH)/(2\log n)$. This completes the proof of the theorem.
\end{proof}

To prove \Cref{claim2:y1}, we prove the following more general statement.
\begin{proposition}[Decomposition Step]\label{claim:independence of e1e2}
    Let $\GG$ be a $k$-partite $k$-uniform hypergraph with vertex sets $(V_1,\ldots,V_k)$, where $E\subseteq V_1\times V_2\times \cdots \times V_k$.
    Let $P=(p_1,\ldots,p_k)$ be a sampling vector, where each vertex $v\in V_i$ is sampled independently with probability $p_i$, for every $i\in[k]$.
    Let $j\in\set{2,\ldots,k}$, and let $A=\bigcup_{i=1}^{j-1} V_i$.
    For every $S\subseteq A$, define two events and a set:
    \begin{align*}
        \EE_1(S) & \triangleq \set{(V_1[p_1]\cup \cdots \cup V_{j-1}[p_{j-1}]) =S}\;,                                                                         \\
        \EE_2(S) & \triangleq \set{(V_j[p_j]\times\cdots \times V_k[p_k])\cap N_\GG(S) \nemp}\;,                                                              \\
        N_\GG(S) & \triangleq \set{(v_j,\ldots,v_k)\in V_j\times \cdots \times V_k \mid \exists e\in E \text{ s.t. } e\subseteq S\cup\set{v_j,\ldots,v_k}}\;.
    \end{align*}
    Then, $\alpha(P,\GG)=\sum_{\emptyset\neq S\subseteq A}\Pr{\EE_1(S)}\cdot \Pr{\EE_2(S)}$.
\end{proposition}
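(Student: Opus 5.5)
We prove the Decomposition Step by conditioning on the outcome of the sampling in the first $j-1$ parts. Write $Y\triangleq V_1[p_1]\cup\cdots\cup V_{j-1}[p_{j-1}]$ and $Z\triangleq V_j[p_j]\times\cdots\times V_k[p_k]$. The random set $Y$ is determined only by the coins of vertices in $A$. The random set $Z$ is determined only by the coins of vertices in $V_j\cup\cdots\cup V_k$. Since all vertices are sampled independently, $Y$ and $Z$ are independent. The events $\EE_1(S)$ for $S\subseteq A$ are pairwise disjoint and their union is the whole probability space. By the law of total probability,
\[
\alpha(P,\GG)=\sum_{S\subseteq A}\Pr{\EE_1(S)}\cdot\Pr{\GG[P]\text{ has a hyperedge}\mid \EE_1(S)}\;.
\]

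The first key step is to identify the conditional event. We claim that, given $Y=S$, the sampled hypergraph contains a hyperedge if and only if $Z\cap N_\GG(S)\neq\emptyset$.

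For the forward direction, take a surviving hyperedge $e=(v_1,\ldots,v_k)$. Every $v_i$ with $i<j$ lies in $Y=S$, and $(v_j,\ldots,v_k)\in Z$. Since $e\subseteq S\cup\{v_j,\ldots,v_k\}$, the tuple $(v_j,\ldots,v_k)$ lies in $N_\GG(S)$.

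For the reverse direction, take $(v_j,\ldots,v_k)\in Z\cap N_\GG(S)$. By definition there is a hyperedge $e\subseteq S\cup\{v_j,\ldots,v_k\}$. Because $\GG$ is $k$-partite and the parts are disjoint, the coordinates of $e$ in $V_1,\ldots,V_{j-1}$ lie in $S$. Its coordinates in $V_j,\ldots,V_k$ must be exactly $v_j,\ldots,v_k$. All of these vertices are sampled, so $e$ survives.

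The second key step is to drop the conditioning. The condition $Z\cap N_\GG(S)\neq\emptyset$ depends only on $Z$ once $S$ is fixed. Since $Z$ is independent of $Y$, we get $\Pr{\GG[P]\text{ has a hyperedge}\mid\EE_1(S)}=\Pr{\EE_2(S)}$ whenever $\Pr{\EE_1(S)}>0$. Terms with $\Pr{\EE_1(S)}=0$ contribute nothing, so they need no treatment.

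Finally we remove the term $S=\emptyset$. Since $j\ge 2$, every hyperedge has a vertex in $V_1\subseteq A$. Hence $N_\GG(\emptyset)=\emptyset$ and $\Pr{\EE_2(\emptyset)}=0$. This yields the stated identity.

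The only delicate point is the reverse direction of the equivalence. It relies on partiteness to guarantee that the hyperedge witnessing membership in $N_\GG(S)$ uses exactly the chosen tuple $(v_j,\ldots,v_k)$ and only vertices of $S$ elsewhere.

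\Cref{claim2:y1} follows directly. Its second identity is the case $j=k$ applied to $\GG$ with vector $P_i$: the set $N_\HH(S)$ viewed in $V_k\times V_{k+1}$ is exactly $N_\GG(S)$, so $\EE_2(S)$ becomes $\FF_i(S)$. Its first identity is the case $j=k$ applied to $\HH$ with vector $Q$. \Cref{claim:ee0} is the case $j=2$ for $\HH^{(3)}$, with the first part restricted to $A_i$. Restricting to $A_i$ is harmless because every hyperedge of $\HH^{(3)}$ uses an element of $A_i$. That case in fact gives equality, so the stated inequality holds.
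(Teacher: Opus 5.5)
Your proof is correct and follows the same route as the paper's. You condition on the sampled set $S$ from the first $j-1$ parts via the law of total probability, use independence of the two groups of coins to replace the conditional probability by $\Pr{\EE_2(S)}$, and then discard $S=\emptyset$. You are somewhat more careful than the paper: you spell out both directions of the equivalence (where disjointness of the parts is used), handle zero-probability conditioning events, and justify $N_\GG(\emptyset)=\emptyset$ explicitly.
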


\begin{proof}[Proof of \Cref{claim:independence of e1e2}]
    \newcommand{\AL}{L} 
    \newcommand{\BL}{R} 
    We use $\AL \triangleq \bigcup_{i=1}^{j-1}V_i$ and $\BL \triangleq \bigcup_{i=j}^{k}V_i$.
    Define two random variables $X$ and $Y$:
    \[
        X \triangleq V_1[p_1]\cup\cdots\cup V_{j-1}[p_{j-1}]
        \qquad\text{and}\qquad
        Y \triangleq V_j[p_j]\cup\cdots\cup V_k[p_k].
    \]
    Let $\chi$ be an indicator random variable, defined on subsets of vertices $V'\subseteq V(\GG)$, whose
    value is $\chi(V')=1$ if $\GG[V']$ contains an edge and is $0$ otherwise.
    Then, by definition,
    \[
        \alpha(P,\GG)=\Pr{\chi(X\cup Y)=1}.
    \]

    For each $S\subseteq \AL$, note that $\EE_1(S)=\set{X=S}$ and that the events
    $\set{\EE_1(S)}_{S\subseteq \AL}$ partition the sample space of $X$. Thus, by the law of total probability,
    \begin{align*}
        \alpha(P,\GG)
        = \Pr{\chi(X\cup Y)=1}
        = \sum_{S\subseteq \AL}\Pr{\EE_1(S)}\cdot \Pr{\chi(X\cup Y)=1\mid X=S}.
    \end{align*}

    The random set $Y$ is determined only by sampling vertices in $\BL$, while $X$ is determined
    only by sampling vertices in $\AL$, so $Y$ is independent of $X$. Therefore, for every $S\subseteq \AL$,
    \[
        \Pr{\chi(X\cup Y)=1\mid X=S}=\Pr{\chi(S\cup Y)=1}.
    \]
    Finally, by the definition of $N_\GG(S)$,
    \[
        \set{\chi(S\cup Y)=1}\iff
        \set{(V_j[p_j]\times\cdots\times V_k[p_k])\cap N_\GG(S)\nemp}=\EE_2(S),
    \]
    and thus $\Pr{\chi(S\cup Y)=1}=\Pr{\EE_2(S)}$. Substituting this gives
    \[
        \alpha(P,\GG)=\sum_{S\subseteq \AL}\Pr{\EE_1(S)}\cdot \Pr{\EE_2(S)}\;.
    \]
    For $S=\emptyset$, the probability $\Pr{\EE_2(S)}=0$, and therefore
    \[
        \sum_{S\subseteq \AL}\Pr{\EE_1(S)}\cdot \Pr{\EE_2(S)}
        =\sum_{\emptyset\neq S\subseteq \AL}\Pr{\EE_1(S)}\cdot \Pr{\EE_2(S)}
        \;,
    \]
    which completes the proof.
\end{proof}

We use \Cref{claim:independence of e1e2} to prove \Cref{claim2:y1}.
\begin{proof}[Proof of \Cref{claim2:y1}]
    The first equality is a special case of \Cref{claim:independence of e1e2} applied to $\HH$ with $j=k$.
    The second equality is the same decomposition applied to the $(k+1)$-partite hypergraph $\GG$ with $j=k$, so the first $k-1$ parts are sampled into $S$ and the last two parts are represented by $\FF_i(S)$.
\end{proof}
\Cref{claim:ee0} from the previous section also follows as a special case of \Cref{claim:independence of e1e2} with $k=3$ and $j=2$.
It remains to prove \Cref{claim2:y4}; for this we need the following lemma:
\begin{restatable}{lemma}{LemmaBipartite}\label{claim7:bipartite}
    Let $F=((A,B),E)$ be a bipartite graph with $m$ edges, where every vertex in $A$ has degree in $[d,2d]$. For any sampling vector $P=(p_1,p_2)$, define
    \begin{align*}
        \Lambda_1= \boldone_{\set{p_1<1}}\cdot \exp(-m\cdot p_1/(2d))\;, &  & \text{and } \quad
        \Lambda_2= \boldone_{\set{p_2<1}}\cdot \exp(-d\cdot p_2)\;.
    \end{align*}
    Then, $\Pr{F[P]\text{ contains an edge}}\geq (1-\Lambda_1)\cdot(1-\Lambda_2)$.
\end{restatable}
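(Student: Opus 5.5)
We prove \Cref{claim7:bipartite} by conditioning on the set sampled from $A$, as in \Cref{claim:independence of e1e2}. Let $X\gets A[p_1]$ and $Y\gets B[p_2]$, sampled independently. The graph $F[P]$ contains an edge as soon as both of the following hold: $X\neq\emptyset$, and some fixed vertex $a\in X$ has a neighbor in $Y$. So it suffices to lower bound $\Pr{X\neq\emptyset}$ by $1-\Lambda_1$ and, for every fixed $a\in A$, to lower bound $\Pr{N(a)\cap Y\neq\emptyset}$ by $1-\Lambda_2$. Independence of $X$ and $Y$ then gives
\[
    \Pr{F[P]\text{ contains an edge}}\;\geq\;\sum_{\emptyset\neq S\subseteq A}\Pr{X=S}\cdot\Pr{N(a_S)\cap Y\neq\emptyset}\;\geq\;(1-\Lambda_1)(1-\Lambda_2),
\]
where $a_S$ is an arbitrary vertex of $S$. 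This is the same monotonicity trick used in the proof of \Cref{claim:ee2}.

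\emph{The $A$-side.} If $p_1=1$, then $X=A$. This set is nonempty whenever $m>0$ (for $m=0$ the statement is degenerate, and we may assume $m\geq 1$). Otherwise, since every vertex of $A$ has degree at most $2d$, we have $|A|\geq m/(2d)$. Hence
\[
    \Pr{X=\emptyset}=(1-p_1)^{|A|}\leq \exp(-p_1|A|)\leq \exp(-m p_1/(2d))=\Lambda_1 .
\]

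\emph{The $B$-side.} Fix $a\in A$. Its neighborhood has size $|N(a)|\geq d$. If $p_2=1$, then $Y=B\supseteq N(a)\neq\emptyset$, so this probability is $1=1-\Lambda_2$. Otherwise,
\[
    \Pr{N(a)\cap Y=\emptyset}=(1-p_2)^{|N(a)|}\leq \exp(-d p_2)=\Lambda_2 .
\]

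Combining the two bounds through the decomposition above finishes the proof. There is no real obstacle here. The only care needed is handling the indicator cases $p_i=1$, and using only the lower degree bound on the $B$-side and only the upper degree bound on the $A$-side.
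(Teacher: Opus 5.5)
Your proof is correct and matches the paper's argument: the paper also conditions on the sample $W=A[p_1]$ via \Cref{claim:independence of e1e2}. It lower bounds the probability of hitting $N(W)$ by $1-\Lambda_2$ using $|N(W)|\geq d$, which is the same estimate as your single-vertex bound via $a_S\in S$, and it bounds $\Pr{A[p_1]\neq\emptyset}\geq 1-\Lambda_1$ using $|A|\geq m/(2d)$. Your explicit exclusion of the degenerate case $m=0$ is needed when $p_1=1$, and the paper leaves it implicit.
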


Recall that $F[P]$ is a random induced subgraph obtained from $F$ by sampling each vertex in $A$ independently with probability $p_1$ and each vertex in $B$ independently with probability $p_2$.
We prove \Cref{claim2:y4} using \Cref{claim7:bipartite}.
\begin{proof}[Proof of \Cref{claim2:y4} using \Cref{claim7:bipartite}]
    \newcommand{\mj}{m_S'}
    Fix a non-empty subset $S\subseteq A$, and define a bipartite graph $H_S$
    with vertex set $V(H_S)=(V_k\sqcup V_{k+1})$ and edge set $E(H_S)=N_\HH(S)\subseteq V_k^\prime =V_k\times V_{k+1}$, where $m_S=\abs{E(H_S)}$.
    Then,
    \[
        \Pr{\EE_2(S)} = 1-(1-q_k)^{m_S}\;.
    \]
    We show that there exists $i\in[\log n]$ such that
    \begin{align*}
        \Pr{\FF_i(S)}
        \geq \Pr{\EE_2(S)}\cdot (1-\frac1{e^2})\;.
    \end{align*}
    Recall that $\FF_i(S)$ is the event that $H_S[T_i]$ contains an edge, where $T_i=(p_k^i,p_{k+1}^i)$ and
    \begin{align*}
        p_k^i=\min(1,12\log n\cdot q_k\cdot 2^i)\;, &  & \text{and } \quad p_{k+1}^i=\min(1,\frac{4}{2^i})\;,
    \end{align*}
    as defined in \Cref{thm:refining sampling vector}, for every $i\in[\log n]$.
    Let $\alpha(T_i,H_S)$ denote the probability that $H_S[T_i]$ contains an edge.
    It remains to show that there exists $i\in[\log n]$ such that
    \begin{align*}
        \alpha(T_i,H_S)
        \geq \Pr{\EE_2(S)}\cdot (1-\frac1{e^2}) \;.
    \end{align*}

    Partition the vertices of $V_k$ into degree classes over $H_S$;
    for $j\in[\log n]$, define
    \begin{align*}
        U_j=\set{v\in V_k\mid \deg_{H_S}(v)\in[2^{j-1},2^{j})}\;.
    \end{align*}
    Let $E(U_j)\subseteq E(H_S)$ denote the set of edges in $H_S$ with one endpoint in $U_j$.
    We say that $U_j$ is \emph{heavy} if $\abs{E(U_j)}\geq m_S/(3\ra)$. At least one heavy set must exist; otherwise, the graph would have fewer than $m_S$ edges.
    Let $U_j$ be a heavy set, and let $H_S'=H_S[U_j\cup V_{k+1}]$ be the induced bipartite subgraph of $H_S$ on vertex sets $U_j$ and $V_{k+1}$.
    We claim that
    \begin{align}
        \alpha(T_j,H_S')\geq \Pr{\EE_2(S)}\cdot (1-\frac1{e^2})\;.
    \end{align}
    Since $H_S'$ is a subgraph of $H_S$, this claim implies the desired bound for $\alpha(T_j,H_S)$.
    To prove this, we apply \Cref{claim7:bipartite} to the bipartite graph $H_S'$, which has $m_S'=\abs{E(U_j)}$ edges. Every vertex in $U_j$ has degree in $[2^{j-1},2^j)$, so the minimum degree is $d=2^{j-1}$. We get
    \begin{align*}
        \alpha(T_j,H_S') \geq (1-L_1)\cdot (1-L_2)\;,
    \end{align*}
    where
    \begin{align*}
        L_1 & = \boldone_{\set{p_k^j<1}}\cdot \exp(-\mj\cdot p_k^j/(2d))\;, &  &
        L_2 = \boldone_{\set{p_{k+1}^j<1}}\exp(-d\cdot p_{k+1}^j)\;.
    \end{align*}
    It is enough to prove the following two inequalities:
    \begin{enumerate}
        \item $1-L_1 \ge \Pr{\EE_2(S)}$.
        \item $1-L_2\ge 1-e^{-2}$,
    \end{enumerate}
    These inequalities imply the claim.
    \paragraph{Proof of (1):}
    If $p_k^j=1$, then $L_1=0$ and $1-L_1=1\ge \Pr{\EE_2(S)}$.
    Otherwise, $p_k^j=12\ra q_k 2^j<1$, and in particular $q_k\leq 1/2$.
    Then,
    \begin{align*}
        L_1 = \exp(-\frac{m_S'\cdot p_k^j}{2d})
        \leq \exp(-\frac{m_S}{3\ra}\cdot 12\ra q_k\cdot 2^j/(2\cdot 2^{j}))
        = \exp(-2m_S\cdot q_k)
        \overset{(\star)}{\leq} (1-q_k)^{m_S}= 1-\Pr{\EE_2(S)}
        \;.
    \end{align*}
    The $(\star)$ inequality follows by taking the $m_S$-th root of both sides and using $(1-x)\geq e^{-2x}$ for $x\leq 1/2$, with $x=q_k$.

    \paragraph{Proof of (2):}
    If $p_{k+1}^j=1$, then $L_2=0$ and $1-L_2=1\ge 1-e^{-2}$.
    Otherwise, $p_{k+1}^j=4/2^j=2/d$.
    Thus, $d p_{k+1}^j=2$, and hence
    $1-L_2\ge 1-e^{-2}$.

    This completes the proof of \Cref{claim2:y4}.
\end{proof}

It remains to prove \Cref{claim7:bipartite}.
\begin{proof}[Proof of \Cref{claim7:bipartite}]
    \newcommand{\ca}{1-\exp(-d\cdot p_2)}
    \newcommand{\cbb}{1-\Lambda_2}
    \newcommand{\caa}{1-\Lambda_1}
    For every non-empty subset $W\subseteq A$, we define two events.
    We write $N(W)$ for the set of neighbors of $W$ in $B$.
    \begin{align*}
        \EE_1(W) = \set{A[p_1]=W}\;, &  &
        \EE_2(W) = \set{N(W)\cap B[p_2]\nemp}\;.
    \end{align*}
    By \Cref{claim:independence of e1e2}, it holds that
    \[
        \Pr{F[P]\text{ contains an edge}}
        =   \sum_{\emptyset\neq W\subseteq A}\Pr{\EE_1(W)}\cdot \Pr{\EE_2(W)}\;.
    \]
    We use the following two inequalities:
    \begin{enumerate}
        \item $\Pr{\EE_2(W)}\geq 1-\Lambda_2$, for every non-empty $W\subseteq A$.
        \item $\Pr{A[p_1]\nemp}\geq 1-\Lambda_1$.
    \end{enumerate}
    Therefore,
    \begin{align*}
        \Pr{F[P]\text{ contains an edge}}
         & =   \sum_{\emptyset\neq W\subseteq A}\Pr{\EE_1(W)}\cdot \Pr{\EE_2(W)}          \\
         & \overset{(1)}{\geq} (\cbb)\cdot \sum_{\emptyset\neq W\subseteq A}\Pr{\EE_1(W)}
        =   (\cbb)\cdot \Pr{A[p_1]\nemp}
        \overset{(2)}{\geq}   (\cbb)\cdot \brak{\caa}\;.
    \end{align*}
    It remains to prove $(1)$ and $(2)$.

    \paragraph{Proof of $(1)$:}
    \begin{align*}
        \Pr{\EE_2(W)}
        \geq 1-(1-p_2)^{\abs{N(W)}}
        \geq 1-\exp(-p_2\cdot \abs{N(W)})
        \geq \ca\;.
    \end{align*}
    The last inequality uses the fact that every vertex in $A$ has degree at least $d$, so $\abs{N(W)}\geq d$.
    Note that if $p_2=1$, then $1-(1-p_2)^{\abs{N(W)}}=1$, so $\Pr{\EE_2(W)}=1=1-\Lambda_2$.
    \paragraph{Proof of $(2)$:}
    \begin{align*}
        \Pr{A[p_1]\nemp}
        = 1-(1-p_1)^{\abs{A}}
        \geq 1-\exp(-p_1\cdot \abs{A})
        \geq 1-\exp(-mp_1/(2d))\;.
    \end{align*}
    The last inequality uses the fact that every vertex in $A$ has degree at most $2d$, and there are $m$ edges in total, so $\abs{A}\geq m/(2d)$.
    As before, if $p_1=1$, then $\Pr{A[p_1]\nemp}=1=1-\Lambda_1$.
    This proves $(1)$ and $(2)$, and therefore completes the proof.
\end{proof}

\section{Lower Bounds}
\label{sec:lower-bounds}
In this section, we present lower bounds for the sensitive detection of induced diamonds. We begin with a non-conditional lower bound showing that $\Omega(n^{2})$ time is needed in dense graphs (i.e., essentially reading the input) for every $t \leq n^2/5$. This lower bound holds in a model where access to the input graph is provided via:
\begin{description}
    \item[Adjacency queries:] Given a pair of vertices $u, v \in V(G)$, determine whether the edge $uv$ exists in $G$.
    \item[Degree queries:] Given a vertex $v \in V(G)$, return its degree $\deg(v)$.
    \item[Neighbor queries:] Given a vertex $v \in V(G)$ and an index $i \in [\deg(v)]$, return the $i$-th neighbor of $v$. 
\end{description}

We then present two conditional lower bounds for the problem: one for general algorithms and one that holds only against combinatorial algorithms. Our lower bounds are universal with respect to $t$, holding for every $t \in [0,n^2/5]$. We remark that throughout the section, the time bounds for $t = 0$ are defined as for $t = 1$ to avoid zero-division. 

We begin with the proof of the non-conditional lower bound.
\begin{theorem}
	\label{lem:quadratic-lb}
	For every $t \leq n^2/5$, every randomized algorithm that, given query access to an $n$-vertex graph $G$ with at least $t$ induced diamonds, outputs an induced diamond in $G$ with probability at least $2/3$, requires $\Omega(n^2)$ queries. 
\end{theorem}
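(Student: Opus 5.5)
We will prove the bound with Yao's minimax principle. The plan is to build two input distributions that any algorithm making $o(n^2)$ queries cannot tell apart, or rather cannot locate the witness in. Both are supported on graphs with at least $t$ induced diamonds, but the diamonds are ``hidden'' behind a single non-edge planted in an otherwise highly symmetric dense graph.

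\textbf{The construction.} Split the $n$ vertices into a clique part $K$ of size about $n/2$ and a remaining part $R$. We plant $t$ diamonds as follows. Pick uniformly at random a pair $\{u,v\}\subseteq K$ and delete the edge $uv$. Every other vertex $w\in K$ is a common neighbor of $u$ and $v$, and so is every vertex of $R$ joined to both. Each pair $\{w,w'\}$ of common neighbors that is an edge yields an induced diamond $\{u,v,w,w'\}$ with missing edge $uv$. So a complete graph on $k$ vertices minus one edge already contains $\binom{k-2}{2}$ induced diamonds. We choose $k=\Theta(\sqrt t)$ up to $k=n$ to get at least $t$ diamonds; the condition $t\le n^2/5$ ensures $\binom{n-2}{2}\ge t$. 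The remaining vertices are padded as a disjoint clique or isolated vertices, so that no other diamonds appear.

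To defeat degree queries, we keep all degrees uniform. We delete the edge $uv$ and, to compensate, add an edge from $u$ and from $v$ to padding vertices, or more simply we delete a perfect-matching-like structure. Alternatively, we take the hard instance to be $K_k$ minus one random edge, with a companion gadget that equalizes degrees (e.g.\ add $u x$ and $v y$ for a fixed non-adjacent padding pair and remove $xy$). We then check that no new diamond involving the padding is created, or that any created diamonds are also hidden at a random location.

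\textbf{Indistinguishability.} Conditioned on any $q$ adjacency, degree, and neighbor query answers, the posterior on the location of the missing pair remains nearly uniform over the $\Theta(k^2)$ unqueried pairs. Neighbor queries on $u$ reveal at most one pair per query, and we order neighbor lists randomly so that the position of the missing neighbor is hidden. Any output diamond must contain both $u$ and $v$, since every induced diamond contains the unique missing edge as its non-edge. The output is correct with probability at most $O(q/k^2)+o(1)$ unless the algorithm has effectively found $uv$. Hence $q=\Omega(k^2)$, and taking $k=\Theta(n)$ gives $\Omega(n^2)$ for every $t\le n^2/5$.

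\textbf{Main obstacle.} The delicate step is the degree-equalizing gadget together with random neighbor-list orderings. A neighbor query might otherwise reveal the non-neighbor by omission, and the degrees of $u,v$ would otherwise drop. We expect to handle this by the standard ``switch'' trick: replace edges $ux, vy$ by $uv$-free configurations so that all degrees match the reference graph. We will then argue by a coupling/hybrid argument that each query distinguishes with probability $O(1/n^2)$.
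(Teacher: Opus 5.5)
Your framework matches the paper's: Yao's principle, a uniformly random location that every diamond must pass through, random neighbor orders, and an $O(1/n^2)$ per-query hitting probability. Your count of $\binom{k-2}{2}$ diamonds in $K_k$ minus an edge is also correct. However, there is a genuine gap. The step you call the ``main obstacle'' is the whole content of the lower bound, and as written it fails. In $K_k$ minus $uv$, the endpoints have degree $k-2$ and every other vertex has degree $k-1$, so $O(n)$ degree queries find $u$ and $v$.

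Your concrete fix (add $ux$ and $vy$ and delete $xy$ for a \emph{fixed} padding pair) hides nothing, because $x$ and $y$ are known in advance. Scanning the neighbor list of $x$ (at most $n$ neighbor queries, or adjacency queries $(x,w)$ for $w\in K$) returns $u$, and $y$ likewise returns $v$. Then $\{u,v,w,w'\}$ is a diamond, so $O(n)$ queries suffice. Moreover, if the padding is a clique, deleting the fixed edge $xy$ plants diamonds $\{x,y,z,z'\}$ at a location known beforehand, which the algorithm can output with no queries at all. The alternatives you mention (``a perfect-matching-like structure'', ``the standard switch trick'') are announced but never specified. So there is no concrete graph to which your indistinguishability paragraph applies.

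The missing idea is a regular base graph plus a degree-preserving switch in which \emph{all} modified pairs sit at a uniformly random location, and no diamond avoids that location. The paper starts from the $n/2$-regular, triangle-free biclique $K_{n/2,n/2}$ on $U\sqcup V$. It picks random pairs $\{u_i,u_j\}\subseteq U$ and $\{v_i,v_j\}\subseteq V$, deletes $u_iv_j$ and $u_jv_i$, and adds $u_iu_j$ and $v_iv_j$. Degrees stay $n/2$, so degree queries are useless. A crossing edge has at most one common neighbor, so only $u_iu_j$ and $v_iv_j$ can be diagonals. Each of them has an independent set of $n/2-2$ common neighbors, giving $2\binom{n/2-2}{2}\ge n^2/5$ diamonds, and each query touches one of the four altered pairs with probability $O(1/n^2)$.

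Your clique-minus-an-edge picture can be rescued by the complementary construction. Start from $K_{n/2}\sqcup K_{n/2}$, delete random edges $u_iu_j$ and $v_iv_j$, and add $u_iv_j$ and $u_jv_i$. The graph is again regular, and the two crossing edges lie in no triangle. Hence every diamond lies inside one side and has $u_iu_j$ or $v_iv_j$ as its missing edge. Once you commit to such a construction and verify regularity and that every diamond uses the random location, your random-neighbor-order hitting argument goes through.
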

\begin{proof}
	Fix $t \leq n^2/5$, and assume $n$ is sufficiently large. Consider an input distribution defined by starting with an $n/2 \times n/2$ biclique on vertex sets $U \sqcup V$. We select two pairs of distinct vertices, $\{u_i, u_j\} \in \binom{U}{2}$ and $\{v_i, v_j\} \in \binom{V}{2}$, uniformly at random. 
	We then modify the graph by removing the crossing edges $u_i v_j$ and $u_j v_i$, and adding the edges $u_i u_j$ and $v_i v_j$. 

Let us count the number of induced diamonds in any supported graph. An induced diamond requires a diagonal edge and two non-adjacent common neighbors. The only edges fully within $U$ or $V$ are $u_i u_j$ and $v_i v_j$. Crossing edges between $U$ and $V$ cannot serve as diagonals because any crossing edge has at most one common neighbor. Therefore, any induced diamond must use either $u_i u_j$ or $v_i v_j$ as its diagonal. 
The endpoints of $u_i u_j$ share the common neighborhood $V \setminus \{v_i, v_j\}$. Because this is an independent set of size $n/2 - 2$, any pair of these vertices forms an induced diamond with $u_i u_j$. This yields $\binom{n/2 - 2}{2}$ diamonds. By symmetry, $v_i v_j$ also forms $\binom{n/2 - 2}{2}$ diamonds with pairs from $U \setminus \{u_i, u_j\}$. Thus, the total number of induced diamonds is exactly $2 \binom{n/2 - 2}{2}$. For sufficiently large $n$, we have $2 \binom{n/2 - 2}{2} \geq n^2/5 \geq t$. 

Furthermore, the degrees in the modified graph remain uniformly $n/2$. Hence, we can assume without loss of generality that the algorithm does not make degree queries (as their answers can be hardcoded).

Now, consider a deterministic algorithm that makes $k$ adjacency or neighbor queries and (possibly depending on their answers) outputs a quadruple of vertices $(a,b,c,d)$. We analyze the probability that $(a,b,c,d)$ induces a diamond in $G$. The probability that one of the $k$ queries reveals any of the randomly altered pairs ($(u_i,u_j), (u_i,v_j), (v_i,v_j), (v_i,u_j)$) is $O(k/n^2)$. If a query hits these pairs, the algorithm may indeed correctly output a diamond. If none of the queried edges or neighbors hit the altered pairs, the randomized choice remains uniformly distributed across the remaining unqueried graph. In this case, the probability that the output $(a,b,c,d)$ successfully guesses a quadruple containing a diagonal edge is at most $O(1/(n^2-k))$. 

\begin{equation*}
	\begin{split}
		&\Pr{(a,b,c,d) \text{ is a diamond}} \leq \\
		&\Pr{\text{queries hit}} + \Pr{(a,b,c,d) \text{ is a diamond} \mid \text{queries not hit}} = \\ 
		& O\left( \frac{k}{n^2} + \frac{1}{n^2 - k} \right).
	\end{split}
\end{equation*}

Observe that for the success probability to be at least a constant (e.g., $2/3$), we require $k = \Omega(n^2)$ queries. By Yao's Minimax Principle, this lower bound extends to randomized algorithms, yielding the $\Omega(n^{2})$ time lower bound.
\end{proof}

We now continue to the conditional lower bounds. Our lower bound for general algorithms is based upon the following Unbalanced Triangle Detection (UTD) Hypothesis:

\begin{restatable}[Unbalanced Triangle Detection Hypothesis \cite{CEW24}]{hypothesis}{ConjUTD}
    \label{conj:utd}
    Let $G$ be a tripartite graph with vertex set $A \sqcup B \sqcup C$, where $|A| = |C| = n$ and $|B| = n^b$ for $b \leq 1$. Every randomized algorithm for triangle detection in $G$, in the $\mathsf{word\text{-}RAM}$ model with word-size $O(\log n)$, requires $\MM{n,n^b,n}/n^{o(1)}$ time.
\end{restatable}
This hypothesis was shown to hold true assuming that the current $k$-clique detection algorithms are optimal~\cite{CEW24}.
We now prove the following lower bound:

\SensLB*

\begin{proof}
    Under the UTD hypothesis,~\cite[Section 6]{CEW24} showed the following: for every $p \leq n$, the problem of distinguishing between a triangle-free graph and a graph with at least $p$ triangles (where $n$ is the number of vertices) requires $\MM{n,n/p,n}/n^{o(1)}$ time.

    Their reduction works as follows. Given a UTD instance $G$ with $|A|=|C|=n$ and $|B| = n/p$, it produces in $O(n^2)$ time a tripartite graph $G'$ by creating $p$ independent duplicates of $B$. Then, $G'$ has the following guarantees:
    \begin{itemize}
        \item (NO-case) If $G$ is triangle-free, then $G'$ is also triangle-free.
        \item (YES-case) If $G$ contains a triangle, then $G'$ contains at least $p$ triangles.
    \end{itemize}
    Note that any triangle $(a,b,c)$ in $G$ where $a\in A, b\in B, c\in C$ corresponds to $\binom{p}{2}$ induced diamonds in $G'$ formed by $(a,b_i,c,b_j)$ where $b_i,b_j$ are distinct copies of $b$.
    The only missing edge is $(b_i,b_j)$, since $G'$ is tripartite.
    On the other hand, in the NO-case, $G'$ is clearly diamond-free because a diamond contains a triangle.

    Now, suppose for contradiction that for some value $0 \leq t \leq n^2/3$ and some $\eps > 0$, there is an algorithm $\mathcal{A}$ that, given an $n$-vertex graph with at least $t$ diamonds, finds a diamond in $T(n,t) = \MM{n,n,n/\sqrt{t}}/n^{\eps}$ time.
    We use $\mathcal{A}$ to solve the unbalanced triangle detection problem as follows. We apply the above reduction with $p = \sqrt{2t}+1 \leq  \sqrt{2 n^2/3} + 1 \leq n$, and then run $\mathcal{A}$ on $G'$ for $T(n,t)$ steps.
    If a diamond is found, we report that $G$ contains a triangle; otherwise, we report that $G$ is triangle-free.
    Therefore:
    \begin{itemize}
        \item (NO-case) If $G$ was triangle-free, then $G'$ is diamond-free, thus our algorithm does not find a diamond and correctly reports that $G$ is triangle-free.
        \item (YES-case) If $G$ contains a triangle, then $G'$ contains at least $\binom{p}{2} \geq t$ induced diamonds. Hence, $\mathcal{A}$ finds a diamond and the algorithm correctly reports the existence of a triangle in $G$.
    \end{itemize}
    We conclude that $\mathcal{A}$ solves the UTD problem in
    \begin{align*}
        \MM{n,n/\sqrt{t},n}/n^{\eps} = \MM{n,n/p,n}/n^{\eps}\;,
    \end{align*}
    which is a contradiction to the UTD hypothesis.
       Finally, note that since $G'$ is tripartite, $G'$ contains no clique of size $4$ (or larger). This implies that no diamond in $G'$ contains three vertices that participate in a $K_4$.
\end{proof}

This shows, in particular, that our algorithm \Cref{thm:r-light diamonds} 
which runs in time $\tf(\MM{n,n,n\sqrt{\rmax/t}})$ is conditionally tight for constant $\rmax$; it cannot be improved to $\MM{n,n,n/t^{1/2+\eps_1}}/n^{\eps_2}$ for any $\eps_1,\eps_2 > 0$ even for graphs with no $4$-cliques.

~\\We now turn to lower bounds that hold against combinatorial algorithms. We base our results upon the Combinatorial Triangle Detection Conjecture, which is a popular conjecture known to be subcubically equivalent to the Combinatorial Boolean Matrix Multiplication Conjecture~\cite{williams2010subcubic}.
\begin{restatable}[Combinatorial Triangle Detection (BMM Conjecture)]{conjecture}{Conjbmm}
    \label{conj:bmm}
    There is no combinatorial triangle detection algorithm running in $O(n^{3-\eps})$ time, for any $\eps > 0$.
\end{restatable}
Based on this conjecture, we prove a lower bound on the running time of combinatorial witness-sensitive diamond detection. In the next section we provide a matching upper bound. 
\CombLB*
The proof idea is identical to that in the non-combinatorial setting: we wish to reduce an Unbalanced Triangle Detection instance to a graph where there are (roughly) $\sqrt{t}$ triangles sharing a common edge in the YES-case, and $0$ triangles otherwise. 
We show that the combinatorial analog of the UTD hypothesis is, in fact, implied by the BMM Conjecture (equivalently, Combinatorial Triangle Detection Conjecture).

\begin{lemma}[Combinatorial Unbalanced Triangle Detection]
    \label{lem:comb-utd}
    Let $G$ be a tripartite graph with vertex set $A \sqcup B \sqcup C$, where $|A| = |C| = n$ and $|B| = n^b$ for $b \leq 1$. Every combinatorial randomized algorithm for triangle detection in $G$, in the $\mathsf{word\text{-}RAM}$ model with word-size $O(\log n)$, requires $n^{2+b-o(1)}$ time, unless the BMM Conjecture is false.
\end{lemma}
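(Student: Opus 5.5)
We prove \Cref{lem:comb-utd} by contrapositive, using the standard blocking argument. Suppose some combinatorial algorithm $\mathcal{B}$ detects triangles in tripartite graphs with $|A|=|C|=n$ and $|B|=n^b$ in time $O(n^{2+b-\eps})$ for some fixed $b\le 1$ and $\eps>0$. We will turn $\mathcal{B}$ into a combinatorial triangle detection algorithm for general $N$-vertex graphs running in $O(N^{3-\eps'})$ time, which contradicts \Cref{conj:bmm}.

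First, given a general $N$-vertex graph $H$, build the standard tripartite graph $A\sqcup B\sqcup C$ with three copies of $V(H)$, placing an edge between copies of $u$ and $v$ in different parts whenever $uv\in E(H)$. Then $H$ has a triangle if and only if this tripartite graph has one, and the construction takes $O(N^2)$ time. Second, partition $B$ into $k=\lceil N^{1-b}\rceil$ blocks $B_1,\dots,B_k$, each of size at most $N^b$. For each $j$, run $\mathcal{B}$ on the induced subgraph $G[A\sqcup B_j\sqcup C]$, and report a triangle if any call reports one. This is correct because every triangle uses exactly one vertex of $B$, which lies in exactly one block. To give each call its own input, we copy the $A$--$B_j$ and $B_j$--$C$ adjacency rows. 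This costs $O(N^{1+b})$ per block, so $O(N^2)$ in total, and we reuse the $A$--$C$ adjacency matrix across all calls.

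The total running time is $k\cdot O(N^{2+b-\eps})+O(N^2)=O(N^{1-b}\cdot N^{2+b-\eps}+N^2)=O(N^{3-\eps})$. The reduction uses only combinatorial operations, namely copying and partitioning, so the resulting algorithm is combinatorial and refutes \Cref{conj:bmm}.

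Randomization is handled by amplifying each call of $\mathcal{B}$ to error $1/N^{2}$ with $O(\log N)$ repetitions. A union bound over the $k\le N$ calls then gives overall constant success probability, at the cost of only a logarithmic factor. If the conjecture is understood for randomized combinatorial algorithms, this is enough. The exponent slack $n^{-o(1)}$ in the statement is absorbed the same way, since any polynomial saving $n^{\eps}$ survives the $N^{1-b}$ multiplier.

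The only delicate point is the per-call input cost. This is where the main obstacle lies, and it is mild: we must make sure the $A$--$C$ part, of size $N^2$, is not rebuilt for every block, since that would cost $N^{3-b}$ overall. Treating the shared $A$--$C$ adjacency matrix as read-only input that every call accesses, or arguing that any $\mathcal{B}$ running in $n^{2+b-\eps}<n^2$ time for small $b$ cannot read it fully anyway, resolves this.
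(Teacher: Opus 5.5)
Your proof is correct and uses the same blocking argument as the paper: split $B$ into about $n^{1-b}$ blocks of size $n^b$, run the unbalanced algorithm on each instance $(A,B_j,C)$, and get total time $n^{1-b}\cdot n^{2+b-\eps}=n^{3-\eps}$, which contradicts the BMM Conjecture. Your extra bookkeeping fills in details the paper leaves implicit without changing the argument: reusing the $A$--$C$ adjacency across calls, and amplifying the randomized subroutine before taking a union bound.
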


\begin{proof}[Proof of \cref{lem:comb-utd}]
    Consider a triangle detection instance consisting of three parts $A, B, C$, each of size $n/3$. Suppose there exists a combinatorial algorithm for the unbalanced setting ($n, n^b, n$) running in time $O(n^{2+b-\eps})$ for some $\eps > 0$.
    We can partition $B$ into roughly $s=n^{1-b}$ sets $(B_1,\ldots, B_s)$ of size $n^b$ each. We then solve the instances $(A, B_i, C)$ for every $i \in [s]$ sequentially. 
    Given a combinatorial algorithm $\AC$ that solves the UTD problem on instances of size $(n,n^b,n)$ in $O(n^{2+b-\eps})$ time, 
    we solve all instances in $s\cdot n^{2+b-\eps} = n^{3-\eps}$ time, which refutes \cref{conj:bmm}.
\end{proof}

We are now ready to prove the lower bound against combinatorial algorithms.

\begin{proof}[Proof of \cref{thm:comb-lb}]
    We use the exact same construction and reduction as in the proof of \cref{thm:sens-lb}.
    Now the expression $\MM{a,b,c}$ is replaced with $a\cdot b \cdot c$ since we are only considering combinatorial algorithms.
\end{proof}

\section{Additional Results}
\label{sec:additional-results}
In this section, we provide additional results about combinatorial algorithms, induced $4$-cycles, and $4$-SUM. We begin with our results about combinatorial algorithms for witness-sensitive diamond detection. 

\subsection{Combinatorial Diamond Detection}
\label{ssec:comb}
In this subsection, we prove the following theorem:
\ThmComb*
The high-level approach is similar to the one for non-combinatorial algorithms:
We employ two algorithms analogously to \cref{thm:r-heavy diamonds} and \cref{thm:r-light diamonds}. However, here the win-win approach uses a different parameter $x$, defined as the number of vertices participating in a diamond in $G$ (rather than $\rmax$, the size of the largest clique containing three vertices of a diamond). The resulting algorithms are simpler than their non-combinatorial analogs.

Specifically, for large $x$, we employ the following algorithm analogous to \cref{thm:r-heavy diamonds}, based on the structural neighborhood analysis approach:
\begin{restatable}[\algVertex]{theorem}{ThmAlgVertex}
    \label{thm:vertex in diamond}
    The algorithm \algVertex finds an induced diamond in time $\To((n+m)\cdot (\frac{n}{x}+1))$ \whp, where $x$ is the number of vertices that participate in a diamond in $G$. 
\end{restatable}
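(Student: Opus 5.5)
The approach is a hitting-set argument combined with \algVertexOne (\Cref{thm:is v in D}) and a guess-and-double schedule for the unknown $x$. The algorithm \algVertex runs in phases $j=0,1,\ldots,\lceil\log n\rceil$. In phase $j$ it uses the guess $\hat{x}=n/2^j$. It samples a set $S_j$ of $s_j=\lceil 10\, (n/\hat{x})\ln n\rceil=\lceil 10\cdot 2^j\ln n\rceil$ vertices uniformly at random, capping $s_j$ at $n$ and taking all of $V(G)$ in that case. For each $v\in S_j$ it calls $\algVertexOne(G,v)$. The algorithm returns the first diamond reported, and stops with ``no diamond'' after the last phase. Correctness in the negative direction is immediate, since \algVertexOne only returns genuine diamonds. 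The final phase examines every vertex, so if $G$ has a diamond the algorithm never wrongly reports none.

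For the running time, fix the true $x\ge 1$ and let $j^*$ be the first phase with $\hat{x}\le x$, so $x<2\hat{x}$ and $2^{j^*}\le 2n/x$. In phase $j^*$, each sampled vertex lies in a diamond with probability $x/n\ge \hat{x}/n$. With $s_{j^*}\ge 10(n/\hat{x})\ln n$ independent samples, the probability that none hits a diamond vertex is at most $(1-\hat{x}/n)^{s_{j^*}}\le e^{-10\ln n}=n^{-10}$. When $s_{j^*}$ is capped at $n$, all vertices are tested and success is certain. So \whp the algorithm stops by the end of phase $j^*$. Each call to \algVertexOne costs $O(n+m)$, so the total work up to phase $j^*$ is
\[
O\Bigl((n+m)\sum_{j\le j^*} 2^j\log n\Bigr)=O\bigl((n+m)\,2^{j^*}\log n\bigr)=\tilde O\bigl((n+m)(n/x+1)\bigr).
\]
The geometric sum is dominated by its last term. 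The ``$+1$'' accounts for the regime $x=\Theta(n)$, where at least one call is still needed.

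When a vertex $v$ is certified to lie in a diamond, \algVertexOne already returns the diamond itself, so no extra step is needed to \emph{find} it. The only point needing care is the independence of samples within a phase: sampling with replacement keeps the calculation clean, and duplicate samples only help. I do not expect a real obstacle here. The one subtlety is making sure that wrong guesses in earlier phases cannot hurt. They can only terminate early, with a valid diamond, and they cost at most a constant fraction of phase $j^*$ by the geometric growth of $s_j$.
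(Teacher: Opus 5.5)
Your proof is correct and uses the same hitting-set argument combined with \algVertexOne as the paper does. The only difference is how the unknown $x$ is handled: the paper scans all vertices in one uniformly random order, where each new vertex lies in a diamond with conditional probability at least $x/n$, so the scan stops within the first $O((n/x)\log n)$ positions with high probability, needs no guessing, and its worst case is a single exhaustive pass; your guess-and-double schedule with fresh samples reaches the same $\To((n+m)(n/x+1))$ bound through the geometric sum, at the cost of a little more bookkeeping and a small extra factor in the diamond-free worst case.
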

Both the running time and correctness of \algVertex are guaranteed \whp.
The worst case running time of this algorithm is $O(n(n+m))$, and in that case the output is always correct. 
Moreover, the algorithm never reports that $G$ contains a diamond when it is diamond-free.
This yields an $\To(n^3/x)$ algorithm. We remark that throughout the section, the time bounds for $x = 0$ are defined as for $x = 1$ to avoid division by zero. For the case of small $x$, we prove the following:

\begin{restatable}[\algXlight]{theorem}{Thmxlight}\label{thm:x-light}
    There is an algorithm that, given a graph $G$ with $t$ diamonds and $x$ vertices that participate in a diamond, finds an induced diamond in time $\To(\MM{n,n,n \cdot (\frac{x}{t}+1)})$ \whp.
\end{restatable}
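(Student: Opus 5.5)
The plan is to reuse the sampling framework of \Cref{ssec:subsampling}, running \algSensitive with \algCD, and to show that $\PP$ contains a vector $P$ with $\wc(P)=\To(x/t+1/n)$. Then \Cref{claim:runtime} yields the bound $\To(\MM{n,n,n(x/t+1)})$. When $x\ge t$ the claimed bound is $\To(n^\omega)$, and running \algCD directly on the colored graph of \Cref{lemma:reduction} with $P=(1,1,1,1)$ already achieves it. So the interesting case is $x<t$.

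First apply \Cref{lemma:reduction}. Afterwards the colored graph has $\tilde\Theta(t)$ colorful induced diamonds on $\tilde\Theta(x)$ participating vertices, so up to polylogarithmic factors we may treat both parameters as unchanged. View the colorful diamonds as the hyperedges of the $4$-partite hypergraph $\GG^{(4)}$ on $V_1\sqcup V_2\sqcup V_3\sqcup V_4$. Delete every vertex that lies in no colorful diamond; this does not change $\alpha(\cdot)$, and afterwards each part has at most $x$ vertices.

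Now invoke \Cref{lemma:discovery} with $k=4$ and $m=\tilde\Theta(t)$. It gives a vector $P=(p_1,\dots,p_4)$ with $w(P)=\To(1/t)$ such that $\GG^{(4)}[P]$ contains a hyperedge with probability at least $(1-1/e)^4$. Then $P\in\PP$ after rounding to powers of two, as discussed after \Cref{thm:rr-light}. It remains to bound $\wc(P)=w(P)/\min(P)$.

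The key step is to show $\min(P)\gtrsim 1/x$. By the same union bound as in \Cref{claim2:z_1 and p12}, a hyperedge survives only if some vertex of part $i$ survives, so
\[
\alpha(P)\le |V_i|\,p_i\le x\,p_i \quad\text{for every } i\in[4].
\]
Since $\alpha(P)=\Omega(1)$, this forces $p_i=\Omega(1/x)$ for every $i$. Hence
\[
\wc(P)=\frac{w(P)}{\min(P)}=\To\!\left(\frac{1/t}{1/x}\right)=\To(x/t),
\]
and \Cref{claim:runtime w3} gives running time $\To(\MM{n,n,n\,x/t})$. Combining this with the $x\ge t$ case, where $\MM{n,n,n}$ is used, gives the stated bound $\To(\MM{n,n,n(x/t+1)})$.

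The main obstacle is making sure that restricting to participating vertices and passing through the color-coding reduction lose only polylogarithmic factors. In particular, the surviving probability must remain $\tg(1)$ after the rounding in $\FF^4$, so that $\PP$ is nonempty with the claimed $\wc$. I would handle this exactly as in the proof of \Cref{thm:rr-light}, noting that the algorithm never needs to know $x$ or $t$ because it tries all of $\FF^4$ in round-robin.
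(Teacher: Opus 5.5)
Your proposal is correct and follows the paper's proof essentially verbatim: \Cref{lemma:discovery} supplies a vector with $w(P)=\To(1/t)$ and constant hitting probability, the union bound $\alpha(P)\le x\cdot\min(P)$ (the paper's \Cref{claim:min xt}) forces $\min(P)=\tilde{\Omega}(1/x)$ and hence $\wc(P)=\To(x/t)$, and \Cref{claim:runtime} turns this into the stated running time. Your separate treatment of the case $x\ge t$ is harmless but unnecessary, since $x\le 4t$ and $\wc(P)\le 1$ always.
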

By employing the straightforward combinatorial algorithm for matrix multiplication, we get a combinatorial algorithm that detects a diamond in time $\To(n^3 \cdot \frac{x}{t})$.
We are now ready to complete the main proof of this subsection.
\begin{proof}[Proof of \cref{thm:comb}]
    We execute both algorithms from \cref{thm:vertex in diamond} and \cref{thm:x-light} (using straightforward combinatorial matrix multiplication) in a round-robin fashion, and halt whenever one of the executions outputs a diamond.
    With high probability, the running time is the minimum of their running times:
    \[
        \To\left(\min\left( \frac{n^3}{x}, n^3 \cdot \frac{x}{t} \right) \right).
    \]
    This is always bounded by $\To(n^3/\sqrt{t})$, achieved when $x = \sqrt{t}$.
\end{proof}

Let us begin with the proof of \Cref{thm:vertex in diamond}. It is a simple application of a hitting-set argument in combination with the following theorem:
\ThmAlgVertexOne*

\begin{proof}[Proof of \Cref{thm:vertex in diamond}]
    Sort the vertices in a random order $\pi$, and let $v_i$ be the $i$-th vertex in this order.
    \begin{enumerate}
        \item Check if $v_i$ participates in a diamond using \algVertexOne from \cref{thm:is v in D}. If a diamond is found, return it. Otherwise continue to the next vertex in the order.
    \end{enumerate}
    If no diamond is found after all vertices are checked, report that $G$ is diamond-free.

    \paragraph{Analysis and Correctness.} 
    If the graph is diamond-free, the algorithm always reports that $G$ is diamond-free and takes $O(n(n+m))$ time.
    Suppose that $x > 0$.
    Assume that the permutation $\pi$ is set in an online fashion, meaning that the $i$-th vertex $v_i$ is chosen uniformly at random from the set of vertices that have not been chosen in the first $i-1$ steps.
    Therefore, the probability that the first vertex $v_1$ is part of a diamond is $x/n$. For the $i$-th vertex, assuming no previous vertex was part of a diamond, the probability that $v_i$ is part of a diamond is at least $x/(n-i+1)\geq x/n$.
    Therefore, the probability that no vertex from $D$ is sampled within $i \leq \ell$ rounds is:
    \[
        \left(1- \frac{x}{n}\right)^i \leq \exp\left(-i \cdot \frac{x}{n}\right).
    \]
    For $i = \frac{n}{x} \cdot c\log n$, this probability is at most $1/n^c$, and thus the algorithm finds a diamond and terminates within $i$ rounds \whp. In this case, the running time is $O((n+m) \cdot i) = \To((n+m) \cdot \frac{n}{x})$.
\end{proof}

We now turn to prove \cref{thm:x-light}.
We proceed in two steps.
First, we aim to show that the running time of \algSensitive is $\To(\MM{n,n,n\cdot x/t})$ \whp. Then, we explain how to turn this into a combinatorial algorithm with running time $\To(n^3 \cdot x/t)$.
Ignoring combinatorial algorithms for a moment to show that the running time of \algSensitive is $\To(\MM{n,n,n\cdot x/t})$ \whp, 
For the former, it suffices to show that there exists a sampling vector $P=(p_1,p_2,p_3,p_4)$ such that $\alpha(P,G)=\tg(1)$ i.e., after sampling according to $P$, the probability of getting a colorful diamond is $\tg(1)$, and $\wc(P) = \To(x/t)$, where recall that $\wc(P) = \frac{w(P)}{\min(P)}$ and $w(P) = p_1 \cdot p_2 \cdot p_3 \cdot p_4$.
Recall that $\PP$ denotes the set of sampling vectors $P$ for which $\alpha(P,G)=\tg(1)$, and $\wc(\PP) = \min\{\wc(P) \mid P \in \PP\}$.
Previously we had the following non-combinatorial proposition:
\propClaimRuntime* 

To turn this framework into a combinatorial algorithm, we implement each call to \algCD using straightforward combinatorial matrix multiplication, instead of the fast matrix multiplication used in \Cref{claim:runtime w3}.
We obtain:
\begin{proposition}[Combinatorial \Cref{claim:runtime}]\label{claim:runtime comb}
    The running time of the \emph{combinatorial} algorithm \algSensitive is $\To(n^3\cdot \wc(\PP))$ \whp. The output is correct \whp.
\end{proposition}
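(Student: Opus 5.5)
The proof mirrors that of \Cref{claim:runtime}; the only change is the cost charged to each call of \algCD. Correctness carries over verbatim, since it depends only on the one-sided error of \algCD and not on how the matrix products are computed. If $G$ is diamond-free, every $Z_\phi(G[U])$ is divisible by $6$ by \Cref{claim:colored exact counting}, so the algorithm never outputs \yes. If $t>0$, fix $P\in\PP$; because $\alpha(P)\ge \rag/\ell$, at least one of the $\ell$ independent samples $H_i^P$ contains a colorful induced diamond except with probability $n^{-100}$. On that sample \algCD outputs \yes \whp, by the polynomial argument of \Cref{lem:colorful-det}, which is unaffected by the choice of multiplication routine.

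For the running time, I would first establish a combinatorial analogue of \Cref{claim:alg for computing Z}. Computing every $k_c(e)$ only requires the products $A_{ac}A_{cb}$ over ordered color triples. With the schoolbook algorithm, each product costs $O(n_an_cn_b)$, so $Z_\phi(H)$ can be computed combinatorially in $O(n_1n_2n_3+n^2)$ time, where $n_1\ge n_2\ge n_3\ge n_4$ are the color-class sizes of $H$. It follows that the combinatorial \algCD on $H$ runs in $\To(n_1n_2n_3+n^2)$ time.

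Next comes the combinatorial analogue of \Cref{claim:runtime w3}. For $H\gets G[P]$ with $p_1\ge p_2\ge p_3\ge p_4$, Chernoff (\Cref{thm:chernoff 6}) gives $n_i\le 100np_i\log n$ for all $i$ \whp. The cost is therefore $\To(n^3p_1p_2p_3+n^2)=\To(n^3\cdot\wc(P)+n^2)$. Here \Cref{claim:omega balanced} is not needed, because the product bound is exact for the naive algorithm.

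Finally, I would run all $\ell\cdot|\FF^4|=\To(1)$ executions in round-robin and stop at the first \yes. This makes the total time $\To(1)$ times the time of the successful execution. Taking $P\in\PP$ minimizing $\wc$ and union bounding over the $\To(1)$ vectors in $\PP$ yields $\To(n^3\wc(\PP))$ \whp.

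The one genuine subtlety is the additive $n^2$ term, which comes from summing $y_e$ over edges and from reading the graph. It is absorbed whenever $\wc(\PP)\ge 1/n$. The statement is used only with $\wc$ being $\To(x/t)$ or larger, so I would either note that $\To(n^3\wc(\PP)+n^2)$ is what is meant, or restrict to sampling vectors with $\wc\ge 1/n$. This is legitimate because coordinates lie in $\FF$ and are at least $1/n$, so the three largest have product at least $n^{-3}$; if $\wc<1/n$, the $n^2$ term simply dominates and the bound should be stated with $+n^2$. I expect this bookkeeping, rather than any probabilistic step, to be the only point needing care.
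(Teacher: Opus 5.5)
Your proposal is correct and follows the paper's route: the paper justifies this proposition only by noting that each call to \algCD is implemented with schoolbook rather than fast matrix multiplication in the proof of \Cref{claim:runtime}, so each sample costs $\To(n_1n_2n_3)$, which \whp is $\To(n^3\wc(P))$, while correctness is unchanged. Your caveat about the additive $n^2$ term (or the $\To(n+m)$ cost of the color-coding reduction) is well taken, since the statement as written drops it silently, but it is harmless where the proposition is used, because \Cref{thm:comb} assumes $t\le n^2$ and hence $n^3/\sqrt{t}\ge n^2$.
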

To complete the proof of \cref{thm:x-light}, it remains to show that $\wc(\PP)\leq \tf(x/t)$, which together with \Cref{claim:runtime comb} implies the desired running time of $\To(n^3 \cdot \frac{x}{t})$.
\begin{claim}\label{claim:xt}
    $\wc(\PP)\leq \tf(x/t)$.
\end{claim}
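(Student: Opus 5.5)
We need to exhibit a sampling vector $P$ with $\alpha(P,G')=\tilde\Omega(1)$ (so $\alpha(P)\ge 100\log n/\ell$ after the usual boosting) and $\wc(P)=\tilde O(x/t)$. The plan is to apply \Cref{lemma:discovery} to a suitably chosen sub-hypergraph and then read off $\wc$ from a coordinate bound. We start from the colorful graph $G'$ of \Cref{lemma:reduction}, which has $\tilde\Theta(t)$ colorful diamonds and $\tilde\Theta(x)$ participating vertices. We then view the colorful diamonds as the hyperedges of a $4$-partite $4$-uniform hypergraph $\HH$ on $V_1\sqcup V_2\sqcup V_3\sqcup V_4$. \Cref{lemma:discovery} with $k=4$ gives a vector $P=(p_1,p_2,p_3,p_4)$ with $t\cdot w(P)=\tilde O(1)$ and $\alpha(P,\HH)\ge (1-1/e)^4$. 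Since $\HH[P]$ containing a hyperedge is exactly $G'[P]$ containing a colorful induced diamond, this means $P\in\PP$, possibly after rounding up to powers of two as remarked after \Cref{thm:rr-light}.

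Next we bound $\min(P)$ from below by the analogue of \Cref{claim2:z_1 and p12}. For each $j\in[4]$, let $X_j\subseteq V_j$ be the vertices of color $j$ that lie in some colorful diamond, so $|X_j|\le \tilde O(x)$. Every hyperedge must use a vertex of $X_j$, so by a union bound
\[
\alpha(P,\HH)\le \Pr{X_j[p_j]\neq\emptyset}\le |X_j|\,p_j .
\]
Because $\alpha(P,\HH)\ge (1-1/e)^4$, this gives $p_j\ge \tilde\Omega(1/x)$ for every $j$, and in particular $\min(P)\ge\tilde\Omega(1/x)$. Therefore
\[
\wc(P)=\frac{w(P)}{\min(P)}\le \tilde O\!\left(\frac{1/t}{1/x}\right)=\tilde O(x/t),
\]
and since $\wc(P)\le 1$ trivially, the bound $\wc(\PP)\le\tilde O(\min(1,x/t))$ follows.

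The main obstacle is the bookkeeping between $G$ and $G'$ and the threshold defining $\PP$. First, $x$ and $t$ must be replaced by their colorful counterparts, which are within polylog factors \whp by \Cref{lemma:reduction}; the $\tilde O$ absorbs this. Second, $\PP$ is defined by $\alpha(P)\ge 100\log n/\ell$ with $\ell=2000\log^2 n$. For $n$ large this threshold is below the constant $(1-1/e)^4$, so the vector from \Cref{lemma:discovery} qualifies directly. Finally, we must check that rounding $P$ up to $\FF^4$ keeps $\alpha$ from decreasing, which holds by monotonicity, and changes $\wc$ by at most a constant factor.
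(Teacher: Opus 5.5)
Your proof is correct and follows the paper's argument. \Cref{lemma:discovery} supplies $P$ with $w(P)=\tilde O(1/t)$ and constant hitting probability. The per-coordinate union bound, which you reprove inline and the paper states as \Cref{claim:min xt}, gives $\min(P)\ge\tilde\Omega(1/x)$, hence $\wc(P)\le\tilde O(x/t)$. Your extra bookkeeping is sound and fills in details the paper leaves implicit: passing to the colorful graph, checking the threshold $100\log n/\ell$, and rounding to $\FF^4$.
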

To prove \Cref{claim:xt}, we need the following claim:
\begin{claim}\label{claim:min xt}
    For every sampling vector $P$, we have $\alpha(P,G) \leq x \cdot \min(P)$.
\end{claim}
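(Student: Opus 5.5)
The argument mirrors the proof of \Cref{claim2:z_1 and p12}: we reduce the event that a colorful diamond survives to the event that some vertex lying in a diamond survives in the coordinate where the sampling probability is smallest, and then apply a union bound. Throughout, $G$ denotes the $4$-colored graph on which \algSensitive samples, i.e., the output of \Cref{lemma:reduction}. In that graph the number of vertices that participate in a colorful induced diamond is $\tilde{\Theta}(x)$. Up to the polylogarithmic factor absorbed elsewhere, we may therefore write $x$ for this number, and it is at least the number of vertices in any single color class that lie in a colorful diamond.

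Let $P=(p_1,p_2,p_3,p_4)$, and let $j\in[4]$ be an index with $p_j=\min(P)$. Let $U_j\subseteq V_j$ be the set of color-$j$ vertices that belong to at least one colorful induced diamond, so $|U_j|\le x$. Every colorful diamond has exactly one vertex of each color, and in particular one vertex in $V_j$, which by definition lies in $U_j$. Hence if the sampled subgraph $H\gets G[P]$ contains a colorful induced diamond, some vertex of $U_j$ was sampled. The converse direction is not needed.

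We then bound the probability of the event $\alpha(P,G)$ describes by the union bound:
\[
\alpha(P,G)\le \Pr{V_j[p_j]\cap U_j\neq\emptyset}\le \sum_{v\in U_j}p_j = |U_j|\cdot p_j\le x\cdot \min(P).
\]
This is the whole argument. The only point requiring care is bookkeeping: $x$ must count the vertices in colorful diamonds of the colored graph actually being sampled, not of the original input. This count is at most $\tilde{O}(x)$ by \Cref{lemma:reduction}, so the claim holds as stated, with $x$ interpreted in the colored graph, or up to a $\tilde{O}(1)$ factor otherwise. That factor is harmless for the subsequent use in \Cref{claim:xt}.
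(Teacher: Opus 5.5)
Your proof is correct and is essentially the paper's argument: fix the color class achieving $\min(P)$, note that any surviving colorful diamond must retain one of the at most $x$ vertices of that class lying in a colorful diamond, and apply a union bound. Your bookkeeping remark sharpens the paper's ``Clearly, $|D|\leq x$'': in the $O(\log n)$-copy graph produced by \Cref{lemma:reduction} this count is only $\tilde{O}(x)$, which costs a polylogarithmic factor that is harmless in \Cref{claim:xt}.
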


\begin{proof}[Proof of \Cref{claim:xt} Using \Cref{claim:min xt}]
    By \Cref{lemma:discovery}, there exists a sampling vector $P$ such that $\alpha(P,G) = \tg(1)$, and thus $P \in \PP$, with $w(P) = \To(1/t)$.
    By \Cref{claim:min xt}, we have $\alpha(P,G) \leq x \cdot \min(P)$, and thus $\min(P) \geq \alpha(P,G)/x = \tg(1)/x$.
    This means that
    \begin{align*}
        \wc(P) = \frac{w(P)}{\min(P)} = \frac{\To(1/t)}{\min(P)} \leq \frac{\To(1/t)}{\tf(1/x)} = \tf(x/t)\;,
    \end{align*}
    as desired.
    Clearly $\wc(\PP) \leq \wc(P)$, and thus $\wc(\PP) \leq \tf(x/t)$.
\end{proof}
\begin{proof}[Proof of \Cref{claim:min xt}]
    Fix some index $i\in[4]$.
    Let $D \subseteq V_i$ be the subset of $i$-colored vertices that participate in at least one colorful induced diamond.
    Clearly, $|D|\leq x$.
    If $H \gets G[P]$ contains a colorful induced diamond, then at least one vertex of $D$ must be sampled into $H$.
    Hence,
    \[
        \{H \text{ has diamond}\} \subseteq \left\{\exists v \in D \text{ s.t. } v \in V(H)\right\}.
    \]
    By a union bound,
    \[
        \alpha(P,G)
        \leq \Pr{ \bigcup\limits_{v \in D} \set{v\text{ is sampled by }P}}
        \leq |D|\cdot p_i \leq x\cdot p_i\;.
    \]
    Since this holds for every $i\in[4]$, it must hold for the index minimizing $p_i$, so $\alpha(P,G) \leq x \cdot \min(P)$.
\end{proof}

\subsection{Other Four-Vertex Patterns}
\label{ssec:other-patterns}
The approach above extends to other $4$-vertex patterns.
In~\cite{kloks2000finding}, formulas are given that relate the counts of different $4$-vertex patterns, and in~\cite{williams2014finding} these are combined with subsampling to obtain one-sided error detection algorithms for every $4$-vertex pattern other than a clique or an independent set.
The same adaptation to the colorful setting applies to these patterns as well.

\newcommand{\Vb}{\ch{V}{2}}
\newcommand{\Pb}{P_\phi}
\newcommand{\Eb}{E_\phi}
\newcommand{\Fb}{F_\phi}
\paragraph{Colorful induced $C_4$ detection.}
We describe the adaptation for induced $C_4$, which we use in \Cref{ssec:induced-c4}.
Let $\Vb$ denote the set of all unordered pairs of vertices in $G$, and let $\Pb\subseteq \Vb$ be the set of pairs $(u,v)$ with $\phi(u) \neq \phi(v)$.
We partition $\Pb = \Eb \sqcup \Fb$, where $\Eb$ is the set of colorful edges, and $\Fb$ is the set of colorful non-edges.
For a pair $(u,v) \in \Pb$ with $\phi(u) = a$ and $\phi(v) = b$, let $c,d \in [4] \setminus \{a,b\}$ be the remaining colors.
Define $k_c(u,v)$ as the number of common neighbors of $u$ and $v$ with color $c$, and let $y_{uv} \triangleq k_{c}(u,v)\cdot k_{d}(u,v)$.
The counting formula for colorful induced $C_4$s is:
\begin{align*}
    Z_\phi^{C_4}(G) & =
    \sum_{(u,v)\in \Fb} y_{uv}
    -\sum_{(u,v)\in \Eb} y_{uv}\;.
\end{align*}
Computing $Z_\phi^{C_4}(G)$ has the same running time as computing $Z_\phi(G)$, yielding:
\begin{lemma}\label{lem:colorful-det c4}
    There exists an algorithm $\textsc{DetectColorfulInducedC4}$ that determines whether a $4$-colored graph $G$ on $n$ vertices contains a colorful induced $4$-cycle in time $\TO{\MM{n_1,n_2,n_3}}$
    where $n_i$ is the number of vertices of color $i$ in $G$, and $n_1\geq n_2 \geq n_3 \geq n_4$. The algorithm succeeds \whp.
\end{lemma}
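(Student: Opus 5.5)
The proof has two parts: show that the colored count $Z_\phi^{C_4}(G[U])$ equals the number of colorful induced $C_4$'s in $G[U]$ up to a multiple of a fixed integer, then reuse the subsampling argument from the proof of \Cref{lem:colorful-det} and the running-time analysis of \Cref{claim:alg for computing Z}.

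\textbf{Step 1: combinatorial identity.} Fix a colorful quadruple $\{u,v,w,z\}$. It contributes to $Z_\phi^{C_4}(G)$ once for each of its six pairs $(u,v)$ for which the other two vertices are both common neighbors of $u$ and $v$; the sign is $+$ if $(u,v)$ is a non-edge and $-$ if it is an edge. Each of the four edges from $\{u,v\}$ to $\{w,z\}$ must be present, so for each pair the contribution is determined by which edges the induced subgraph contains. I would tabulate the contribution for every isomorphism type of a colorful $4$-vertex graph. The relevant types are the induced $C_4$, the diamond, $K_4$ and the paw. For the induced $C_4$, the two diagonals are non-edges whose endpoints see the other two vertices, giving $+2$. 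For the diamond, the chord contributes $-1$ and the missing edge $+1$, giving $0$. For $K_4$, all six pairs are edges, giving $-6$. For any type with fewer than four edges, no pair has both other vertices as common neighbors, giving $0$. Anything else, such as $K_4$ minus a perfect matching (which is $C_4$ itself), is handled similarly. The outcome should be an identity of the form
\[
Z_\phi^{C_4}(G)=2\,\#\text{colorful induced }C_4-6\,\#\text{colorful }K_4 .
\]
This is the analogue of \Cref{claim:colored exact counting}. It gives $Z\equiv 2\#C_4 \pmod 6$, so $Z\not\equiv 0 \pmod 6$ forces $\#C_4\not\equiv 0\pmod 3$. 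If my table turns out to give different coefficients, I would switch to the appropriate modulus (for instance, testing $Z/2 \bmod 3$), as is done in \cite{williams2014finding}.

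\textbf{Step 2: computation.} Each $k_c(u,v)$ is an entry of the product $A_{ac}A_{cb}$, now read at non-edges as well as edges. So the products from \Cref{claim:alg for computing Z} suffice, at a total cost of $O(\MM{n_1,n_2,n_3})$, plus $O(n^2)$ time to sum over all colorful pairs.

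\textbf{Step 3: amplification.} Repeat $100\log n$ times: sample $U$ by including each vertex with probability $1/2$, and output \yes\ if the count is nonzero modulo $3$. Correctness relies on the polynomial $P(x)=\sum_{\text{colorful induced }C_4}x_ax_bx_cx_d$, which is nonzero, multilinear, and of degree $4$. By \cite[Lemma 2.2]{williams2014finding} applied modulo $3$, it is nonzero modulo $3$ on a random $0/1$ point with probability at least $2^{-4}$. The error is one-sided because $C_4$-free inputs give $Z\equiv 0$.

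\textbf{Main obstacle.} The delicate point is Step 1: getting the exact coefficients right for every colorful $4$-vertex type, in particular checking that the paw and $K_{1,3}$-like types contribute exactly zero. This matters because these coefficients decide which modulus makes the test sound.
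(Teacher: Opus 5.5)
Your proposal is correct and follows the same route as the paper, which only states the signed formula $Z_\phi^{C_4}$ and reuses the matrix products of \Cref{claim:alg for computing Z} and the subsampling test of \Cref{lem:colorful-det}. Your case table is also right: an induced $C_4$ contributes $+2$, a diamond $0$, a $K_4$ $-6$, and the paw and sparser types $0$. Hence $Z_\phi^{C_4}(G)=2\cdot\#\text{colorful induced }C_4-6\cdot\#\text{colorful }K_4$ holds, the test $Z\not\equiv 0 \pmod 6$ is sound, and your fallback modulus is not needed. The one point to state explicitly is that every monomial of $P$ has coefficient $1$, since distinct $4$-sets give distinct monomials. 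So $P$ is nonzero over $\mathbb{F}_3$, and the Boolean-cube bound of \cite[Lemma 2.2]{williams2014finding} gives the $2^{-4}$ success probability.
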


\paragraph{Approximate counting.}
\newcommand{\ccount}{{\normalfont\textsf{Count}}\xspace}
Dell, Lapinskas, and Meeks~\cite{DellLM22} showed that approximate counting of colorful induced subgraphs reduces to colorful detection.
Given an oracle $O$ that takes as input a graph $G=(V,E)$ and a coloring $\phi:V\to[k]$, and determines whether $G$ contains a colorful induced copy of a fixed graph $H$ on $k$ vertices, they design an algorithm \ccount that outputs a $(1\pm \eps)$-approximation to the number of colorful induced copies of $H$ in $G$. This algorithm calls the oracle $O$ at most $T=(k\log n)^{O(k)}/\eps^{2}$ times and runs in total time $O(nT)$.

\begin{theorem}[Dell-Lapinskas-Meeks {\cite[Theorem 1.1]{DellLM22}}]\label{thm:dlm}
    Let $\GG$ be a $k$-partite hypergraph with vertex set $V=V_1\sqcup V_2\sqcup \ldots\sqcup V_k$, where $\abs{V}=n$,
    and a set $E$ of $m$ (unknown) hyperedges.
    There exists a randomized algorithm $\ccount(\GG,\eps)$ that takes as input the vertex set $V$ and has access to a {\em colorful independence oracle} $O$, which, given $X_1\subseteq V_1,\ldots,X_k\subseteq V_k$, returns whether the subhypergraph of $\GG$ induced by $\cup_i X_i$ contains a hyperedge. \ccount outputs $\hat{m}$ such that $\Pr{\hat{m}=m\apm}\geq 1-1/n^4\;.$
    \ccount runs in $O(nT)$ time and queries $O$ at most $T$ times, where
    $T=(k\log n)^{O(k)}/\eps^{2}$.
\end{theorem}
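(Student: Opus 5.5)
The plan is to follow the Dell--Lapinskas--Meeks strategy: reduce approximate counting to exact counting on small random induced subhypergraphs, using the colorful independence oracle $O$ in two ways. It serves both as an edge \emph{enumerator} on sparse instances and as a tool for \emph{subsampling} dense instances down to sparse ones. I will go through the steps in order.

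\textbf{Step 1: enumeration with few queries.} First build a subroutine that, given $X_1\subseteq V_1,\ldots,X_k\subseteq V_k$ and a threshold $b$, either lists all hyperedges of $\GG[\cup_i X_i]$ or certifies that there are more than $b$ of them. It uses $O(b\cdot k\log n)$ oracle queries, by recursive bisection. The procedure is:
\begin{enumerate}
\item Query $O$ on the current box $X_1\times\cdots\times X_k$; if the answer is \texttt{No}, discard the box.
\item Otherwise split the largest $X_i$ into two halves and recurse on both resulting boxes.
\item When every $X_i$ is a singleton, the box is exactly one hyperedge; record it.
\end{enumerate}
Every surviving branch of the recursion ends in a distinct hyperedge, and each branch has depth $O(k\log n)$. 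Hence the number of queries is $O(b\,k\log n)$, and we abort as soon as more than $b$ hyperedges have been found.

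\textbf{Step 2: geometric subsampling.} For $j=0,1,2,\ldots$, form a random induced subhypergraph by keeping each vertex of each $V_i$ independently with probability $2^{-j}$. Every hyperedge survives with probability exactly $2^{-jk}$, so the surviving count $m_j$ satisfies $\Exp{m_j}=m2^{-jk}$. Run the enumerator of Step 1 with threshold $b=\Theta((k\log n)^{O(k)}/\eps^2)$. Let $j^*$ be the smallest $j$ at which the enumeration succeeds, and output $\hat m = 2^{j^*k} m_{j^*}$, taking a median over $O(\log n)$ independent repetitions to boost the success probability to $1-1/n^4$. The total cost is $\log n$ levels times the Step 1 cost, which matches $T=(k\log n)^{O(k)}/\eps^2$ queries and $O(nT)$ time, since each query is prepared in $O(n)$ time.

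\textbf{Step 3: concentration (the main obstacle).} The hard part is showing that $m_j$ concentrates around $m2^{-jk}$. Hyperedges that share vertices are positively correlated, so the variance of $m_j$ is
\[
\sum_{S}\ \sum_{e,e'\,:\,e\cap e'=S} 2^{-j(2k-|S|)}.
\]
This can be huge when a few vertices lie in many hyperedges, in which case a single Chebyshev bound fails.

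My first attempt would be to stratify by degree, as in \Cref{claim7:bipartite} and the proof of \Cref{thm:refining sampling vector}. Partition the vertices into $O(\log n)$ classes by their degree, and recursively handle each part. At each level, either the high-degree vertices are few, in which case we count their incident hyperedges by recursing on the smaller $(k-1)$-partite link hypergraphs obtained by fixing such a vertex, or degrees are bounded, in which case the variance is $O(\Exp{m_j}^2\cdot \mathrm{poly}(k\log n)/b)$ and Chebyshev applies. This recursion over the $k$ parts is the source of the $(k\log n)^{O(k)}$ factor.

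Finally, a union bound over the $O(\log n)$ levels and the $(\log n)^{O(k)}$ recursive branches, combined with median amplification, yields $\Pr{\hat m = m\apm}\ge 1-1/n^4$.
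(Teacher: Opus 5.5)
The paper does not prove this statement; it quotes it from \cite[Theorem 1.1]{DellLM22} and uses it only as a black box, so your sketch has to stand on its own.

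Your Step 1 is sound: bisection enumeration that aborts after $b$ hyperedges costs $O(b\,k\log n)$ queries. The gap is in Steps 2--3, and it is more than a missing detail. The dichotomy you propose does not rescue the single-rate estimator $2^{j^*k}m_{j^*}$.

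Take $k=2$, and let $\GG$ consist of $s=\sqrt{m/b}$ disjoint stars, each with its centre in $V_1$ and $\ell=\sqrt{mb}$ leaves in $V_2$. At rate $p=2^{-j}$, the number $N_j$ of surviving centres is $\mathrm{Bin}(s,p)$ with mean $\mu_j=sp$. Since $\ell p=\mu_j b$, we get $m_j\approx N_j\mu_j b$.
\begin{itemize}
\item At levels with $\mu_j\ge 2$, the enumeration essentially succeeds only when $N_j=0$, and then the output is $0$.
\item At levels with $\mu_j=O(1)$, the output is $\approx N_j m/\mu_j$, which inherits the Poisson-like fluctuation of $N_j$. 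It is $0$ or off by a constant factor with constant probability, and it is $(1\pm\eps)$-accurate with probability at most about $1/e<1/2$. So taking medians does not help.
\end{itemize}

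All centres have the same degree, so if ``high degree'' means something like $m/\mathrm{polylog}(n)$, this instance lies in your bounded-degree branch. Yet the variance term from pairs sharing a centre is $\approx p^3s\ell^2=\Exp{m_j}^2/\mu_j$, not $O(\Exp{m_j}^2\,\mathrm{poly}(k\log n)/b)$.

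Suppose instead the threshold is $\approx m2^{-j}$, which is what Chebyshev at rate $2^{-j}$ actually needs. Then all $\sqrt{m/b}$ centres are heavy, which is polynomially many. Recursing into each of their link hypergraphs costs that many oracle calls, far beyond $T$. For $k\ge 3$ the same obstruction arises for every set of $i<k$ vertices whose codegree exceeds $\approx m2^{-ji}$, not only for single vertices.

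Separately, ``partition the vertices by degree'' assumes degrees that the algorithm can only learn by recursive counting. Doing that for all vertices already costs $\Omega(n)$ calls, and you never say how heavy vertices or heavy sets are located through the oracle.

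What is missing is therefore the core of the theorem, namely three ingredients:
\begin{itemize}
\item An estimator whose variance is governed by only polylogarithmically many heavy objects. One option is halving one part at a time, where only vertices of degree $\gtrsim \eps^2 m/\mathrm{polylog}(n)$ matter, at the price of compounding $O(k\log n)$ multiplicative errors. Another is sampling different parts at different, degree-adapted rates, like the paper's refinement theorem but with a concentration guarantee rather than mere detection.
\item An oracle-only procedure that identifies, or separately counts, those heavy vertices and heavy sets within $(k\log n)^{O(k)}/\eps^2$ queries.
\item An argument that the accumulated error stays within $1\pm\eps$.
\end{itemize}
Without these, the proposal does not establish the stated correctness guarantee or the query and time bounds.
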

The hypergraph they consider is the following canonical hypergraph.
Given the graph $G$ and coloring $\phi:V(G)\to[k]$, define the $k$-partite hypergraph $\GG$ with parts $V_i=\{v\in V(G):\phi(v)=i\}$ for $i\in[k]$, and hyperedges corresponding to colorful induced copies of $H$ in $G$.

By combining this framework with our colorful detection algorithm for induced diamonds (\Cref{lem:colorful-det}), namely an efficient implementation for the oracle $O$, we obtain an approximate counting algorithm for induced diamonds, as well as all other $4$-vertex patterns excluding a clique and an independent set.
We state the result for induced diamonds:

\begin{corollary}\label{thm:colorful 4-vertex patterns}
    There exists a randomized algorithm ${\normalfont\textsc{ApproxDiamond}}(G,\eps)$ that outputs a value $\hat{t}$ satisfying $(1-\eps)t \leq \hat{t} \leq (1+\eps)t$ in time $\TO{n^\omega/\eps^{O(1)}}$, where $t$ is the number of induced diamonds in $G$. The algorithm succeeds \whp.
\end{corollary}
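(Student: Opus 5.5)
The plan is to plug the colorful detection algorithm of \Cref{lem:colorful-det} into the Dell--Lapinskas--Meeks framework (\Cref{thm:dlm}) as the colorful independence oracle, and then undo the coloring by averaging over random colorings. Concretely, I will work with the canonical $4$-partite hypergraph $\GG$ whose parts are the color classes of a uniformly random $4$-coloring $\phi$ of $V(G)$ and whose hyperedges are the colorful induced diamonds. A query $(X_1,\ldots,X_4)$ asks whether the induced subgraph $G[X_1\cup\cdots\cup X_4]$, colored by $\phi$, contains a colorful induced diamond. Since colorfulness and inducedness are both preserved by passing to induced subgraphs, this is answered by running $\algCD(G[\bigcup_i X_i])$. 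By \Cref{lem:colorful-det} that call costs $\TO{\MM{n,n,n}}=\TO{n^\omega}$. Its error is one-sided, and it succeeds \whp; I will boost it by $O(\log n)$ repetitions so that all $T=(4\log n)^{O(4)}/\eps^2=\To(1/\eps^2)$ queries are answered correctly with probability $1-1/\poly(n)$, by a union bound. \Cref{thm:dlm} then returns $\hat m_\phi=(1\pm\eps)m_\phi$, where $m_\phi$ is the number of colorful induced diamonds. The total time is $O(nT)+T\cdot\TO{n^\omega}=\TO{n^\omega/\eps^2}$.

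The remaining step, and the only real obstacle, is passing from colorful counts to the true count $t$. Each induced diamond is colorful with probability $q=4!/4^4=3/32$, so $\Exp{m_\phi}=qt$. I will sample $k$ independent colorings $\phi_1,\ldots,\phi_k$ and output $\hat t=\frac{1}{qk}\sum_j \hat m_{\phi_j}$. To show this concentrates, I will bound the variance of $m_\phi$. Two diamonds whose colorfulness events are dependent must share at least two vertices. For each diamond $D$, the number of diamonds sharing exactly $s$ vertices with it can be large, so a crude bound like $\mathrm{Var}(m_\phi)\le O(t^2)$ is all I will rely on, giving $\mathrm{Var}(m_\phi)/\Exp{m_\phi}^2=O(1)$. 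Chebyshev then shows that $k=O(1/\eps^2)$ colorings give a $(1\pm\eps)$ approximation with constant probability. A median of $O(\log n)$ such independent estimates raises this to high probability. Combined with the $(1\pm\eps)$ error of each $\hat m_{\phi_j}$, rescaling $\eps$ by a constant gives $(1\pm\eps)t$ overall. The total time is $\TO{n^\omega/\eps^{4}}=\TO{n^\omega/\eps^{O(1)}}$.

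If one wants to avoid the variance argument entirely, there is an alternative route. \Cref{thm:dlm}, as used in \cite{DellLM22}, already handles uncolored counting: it applies color coding internally, with the oracle evaluated on colored induced subgraphs. I would try to invoke that version directly. I would keep the averaging argument above as the self-contained fallback, since it only needs $\mathrm{Var}(m_\phi)\le t^2$, which is trivial because $0\le m_\phi\le t$. The case $t=0$ is immediate: then every $m_\phi=0$, and the one-sided oracle never errs, so the output is $0$.
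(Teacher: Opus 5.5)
Your proposal is correct. Its core step is implementing the colorful independence oracle of \Cref{thm:dlm} by running \algCD on $G[X_1\cup\cdots\cup X_4]$, at cost $\TO{n^\omega}$ per query, with boosting and a union bound over the $T=\To(1/\eps^2)$ queries. That is exactly the paper's argument, which consists of essentially that one sentence.

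Where you differ is the passage from colorful counts to $t$. The paper restates \Cref{thm:dlm} for the canonical hypergraph of a single coloring, which counts only colorful copies, and never spells out how to recover $t$. It implicitly relies on your ``alternative route'': the original theorem of \cite{DellLM22} applies to an arbitrary $4$-uniform hypergraph (here, all induced diamonds on $V(G)$) and does the color coding internally. Its queries on disjoint sets $X_1,\ldots,X_4$ are again just colorful detection on an induced subgraph.

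Your main route makes this step self-contained instead. You average over $O(1/\eps^2)$ independent colorings, use only the trivial bound $\mathrm{Var}(m_\phi)\le t^2$ against $\Exp{m_\phi}=3t/32$, apply Chebyshev, and take a median of $O(\log n)$ trials. This is sound. It costs an extra $1/\eps^2$ factor, $n^\omega/\eps^4$ instead of $n^\omega/\eps^2$, which is still within $\eps^{-O(1)}$.

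One small point: with only $O(\log n)$ repetitions per oracle call, the union bound needs $\eps\ge 1/\mathrm{poly}(n)$. The union bound is over the $\To(1/\eps^2)$ queries and over your $O(\log n/\eps^2)$ invocations of the counting algorithm. Boosting $O(\log(n/\eps))$ times removes this restriction at negligible cost.
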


\subsection{Induced $C_4$ Detection}
\newcommand{\trt}{\sqrt{rt}}
\label{ssec:induced-c4}
In this subsection we extend the $r$-light induced-diamond analysis to $r$-light induced $C_4$.
We say that an induced $C_4$ is $r$-light if none of its edges is contained in an $r$-clique of $G$, or equivalently, no pair of its vertices belongs to an $r$-clique in $G$.
Let $t_r$ be the number of $r$-light induced $C_4$ copies in $G$.
We prove:
\begin{theorem}
    \label{thm:induced-c4}
    There exists a randomized algorithm that detects an induced $C_4$ in $G$ \whp, in time $\TO{\MM{n,n,n\sqrt{r/t_r}}}$, for any integer $r\geq 1$.
\end{theorem}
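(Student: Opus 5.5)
We will follow the template of \Cref{thm:rr-light}: use the colorful induced $C_4$ detector of \Cref{lem:colorful-det c4} inside the subsampling framework, so that by the analogue of \Cref{claim:runtime w3} it suffices to exhibit a sampling vector $P$ with $\wc(P)=\To(\sqrt{r/t_r})$ and $\alpha(P,\Gp)=\tg(1)$. As in \Cref{thm:helper}, we prove that either such a $P$ exists, or some escape case holds that is solved directly in the same time.

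The first step is color coding. We take $\To(1)$ copies of $G$ with a random coloring $\psi$ in which the two endpoints of one diagonal (non-edge) pair of each cycle receive colors $1,2$ and the other pair receives color $3$. Later we split color $3$ into colors $3,4$, losing only a factor $2$ as in \Cref{claim:xyz}. Since every vertex of a $C_4$ plays the same role, this only costs constant factors.

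The second step is the refinement chain. We form the bipartite graph $\GG^{(2)}$ with sides $V_1\times V_2$ (non-adjacent pairs $(v_1,v_2)$) and $V_3\times V_3$ (non-adjacent pairs $(u,u')$ forming the other diagonal). We bucket the $A$-side by degree and pick a heavy bucket $A_i$ with degree $\ell\approx 2^i$, exactly as for diamonds, and set $P^{(2)}=(q_1,q_2)$ with $q_1q_2=\To(1/t_r)$ and $q_2\approx 1/\ell$. The third step is the key refinement to $P^{(3)}=(q_1,p,p)$ with $p=\sqrt{rq_2}$. Fix $s=(v_1,v_2)$ and let $M$ be the common neighbors $u\in V_3$ of $v_1,v_2$ lying in some counted cycle with $s$. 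Pairs in $M$ giving a cycle with $s$ are exactly the non-edges of $G[M]$. If $G[M]$ contained an $r$-clique $K$, then $K\cup\{v_1\}$ would be an $(r+1)$-clique containing the cycle edge $v_1u$, contradicting $r$-lightness. Hence Tur\'an's theorem gives $d\ge k^2/(4r)$, and the second-moment computation of \Cref{claim:ee2} goes through verbatim, giving $\Pr{\EE_2(\{s\})}\ge 1/8$. Combined with \Cref{claim:ee0,claim:ee1}, this yields $\alpha(P^{(3)},\HH^{(3)})=\tg(1)$. The fourth step applies \Cref{lemma:refine simple} to obtain $P^{(4)}=(p_1,p_2,p,p)$ with $p_1p_2=\To(q_1)$ and $w(P^{(4)})=\To(r/t_r)$.

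The main obstacle is the final case analysis. For diamonds, a small $p_1$ or $p_2$ was handled through \Cref{claim2:z_1 and p12} and \algVertexC. For $C_4$ we have no analogous "\degc vertex" detector, so we will instead exploit the symmetry of the cycle. Suppose $p_1\le\sqrt{r/t_r}$. By the analogue of \Cref{claim2:z_1 and p12}, this holds only if few vertices lie in cycles at the roles of $v_1$. Since every vertex also appears at the $u$-roles, we can rerun the whole argument with the roles of the two diagonals swapped, choosing which diagonal is refined first according to the bucket structure. We then aim to show that in at least one orientation all coordinates are at least $\tg(\sqrt{r/t_r})$, which gives $\wc=w/\min=\To(\sqrt{r/t_r})$. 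If this symmetrization fails, the fallback is a balancing lemma: when $p_1$ is tiny, we absorb it by applying the pair-splitting refinement to the other diagonal as well. Making one of these two routes rigorous is the step we expect to be hardest.
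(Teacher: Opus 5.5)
Your first four steps are sound and match the paper: the Tur\'an-plus-second-moment split of the $V_3\times V_3$ side is exactly \Cref{prop3:balanced2 helper}. Your observation that an $r$-clique $K\subseteq M$ would make $K\cup\{v_1\}$ a clique through the cycle edge $v_1u$ is the stronger lightness assumption the paper uses.

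The genuine gap is the final case analysis, which you leave open. This is exactly where $C_4$ departs from diamonds.
\begin{itemize}
\item The analogue of \Cref{claim2:z_1 and p12} points the other way from what you wrote. If $\alpha=\tg(1)$ and $p_1$ is small, then there are \emph{many} vertices in the $v_1$-role (at least $\tg(1/p_1)$), not few. Moreover, for $C_4$ there is no $\To(\MM{n,n,n/x})$-time detector that could exploit this.
\item Lower-bounding $\min(P)$ cannot work in general. \Cref{thm:refining sampling vector} may return $(p_1,p_2)\approx(q_1,1)$, giving $\wc(P)=p_2p^2\approx r/\ell$. This exceeds $\sqrt{r/t_r}$ whenever the chosen bucket has degree $\ell<\sqrt{rt_r}$.
\item Swapping the diagonals does not fix this when the heavy buckets on both sides have degree below $\sqrt{rt_r}$.
\item Your second fallback is not made precise. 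It would need degree control on both sides at once, and it costs another factor $r$ in the weight.
\end{itemize}

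The missing idea is the trichotomy of \Cref{claim:small heavy}, which selects the bucket by comparing its degree with $\sqrt{rt_r}$.

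\emph{Large-degree case.} Suppose some heavy bucket on either side has degree $2^i\ge\sqrt{rt_r}$. Refine that side first. Then $q_2\le 2/\sqrt{rt_r}$, so $p^2=rq_2=O(\sqrt{r/t_r})$. Since $\wc(P)$ is the product of the three largest coordinates:
\begin{itemize}
\item if $\min(P)\in\{p_1,p_2\}$, then $\wc(P)\le p^2$;
\item if $\min(P)=p$, then $\wc(P)=w(P)/p\le\To(r/t_r)/\sqrt{r/t_r}$, because $q_2=\Omega(1/t_r)$.
\end{itemize}

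\emph{Small-degree case.} Otherwise, two rounds of degree bucketing give a subgraph of $\GG^{(2)}$ with $\tg(t_r)$ edges and maximum degree at most $\sqrt{rt_r}$ on both sides. There the balanced pair vector $(q,q)$, with $q$ equal to $\sqrt{r/t_r}$ up to polylogarithmic factors, hits an edge with constant probability by the second-moment bound of \Cref{claim:simpler}; no lightness is needed here. Refining each pair with \Cref{thm:refining sampling vector} gives $p_1p_2,\,p_3p_4\le\To(q)$. Hence $\wc(P)\le\max(p_1p_2,p_3p_4)=\To(\sqrt{r/t_r})$, no matter how unbalanced the individual coordinates are.

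In short, what must be controlled are the pair products $p_1p_2$ and $p_3p_4$ (or $p^2$ on the split side), not the individual coordinates.
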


We show that this case is not resolved immediately by previous techniques.
\paragraph{Why Related Work is Insufficient.} 
Induced $C_4$ detection admits a randomized $\tilde O(n^\omega)$ algorithm via the algebraic approach of~\cite{williams2014finding} (reviewed earlier).
We ask whether one can obtain faster sensitive algorithms for $r$-light induced $C_4$ detection, following our $r$-light induced-diamond results.
Before proceeding, we note the known results for graphs with no $r$-cliques.
If $G$ is triangle-free, then every $C_4$ is induced. Since a $C_4$ can be found in time $O(n^2)$, this yields an $O(n^2)$-time induced-$C_4$ detector for triangle-free graphs.
For larger values of $r$, we also get improvements.
Let $\omega(G)$ denote the clique number of $G$, i.e., the size of the largest clique in $G$.
Let $G$ be a graph with $\omega(G)\leq r$, and average degree $d=2m/n$.
If $G$ is induced-$C_4$-free, then $\omega(G)=\Omega(d^2/n)$ by~\cite{gyarfas2015cliques}.
In other words, we can bound $d$ in terms of $r$ and $n$ as follows:
\begin{align*}
    r & \geq \omega(G) = \Omega(d^2/n) \implies d = O(\sqrt{rn})\;.
\end{align*}
Therefore, $m=nd/2=O(n^{3/2}\cdot \sqrt{r})$.
Since our graph is sparse, we can use the sparse induced $C_4$ detection algorithm of~\cite{williams2014finding} that runs in time $\tilde O(m^\beta)$, where
\[
    \beta=\frac{4\omega-1}{2\omega+1}<1.48\;,
\]
which yields a running time of
\[
    \tilde O\!\left((n^{3/2}\sqrt r)^\beta\right)
    = \tilde O\!\left(n^{3\beta/2}\, r^{\beta/2}\right)
    \approx \tilde O\!\left(n^{2.22}\, r^{0.74}\right)\;,
\]
which improves over $\tilde O(n^\omega)$ whenever $r=o(n^{2\omega/\beta-3})\approx o(n^{0.21})$.
However, even for $r=4$ this is still far from $O(n^2)$ time.
Finally, we note the following deterministic result. The work of Kowaluk and Lingas~\cite{kowaluk2019fast} gives a deterministic induced $C_4$ detection algorithm when the input graph has no clique of size $r$:
\begin{theorem}[{\cite{kowaluk2019fast}}]\label{thm:Kowaluk}
    Deterministic induced $C_4$ detection can be done in time
    \begin{align*}
        \begin{cases}
            O(n^{2.5719}r^{0.3176} + n^{2}r^{2})           & \text{for } r < n^{0.394}    \\
            O(n^{2.5}r^{0.5} \phantom{00000} + n^{2}r^{2}) & \text{for } r \geq n^{0.394}
        \end{cases}
    \end{align*}
    where $r$ is the size of the largest clique in $G$.
\end{theorem}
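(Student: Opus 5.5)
Since this theorem is quoted from~\cite{kowaluk2019fast} and used as a black box, we only outline how we would reprove it. The argument rests on the neighborhood characterization of induced $C_4$. A graph contains an induced $C_4$ if and only if some non-adjacent pair $\{u,v\}$ has two non-adjacent common neighbors. We combine this with the clique-number bound. If a non-adjacent pair $\{u,v\}$ has more than $r$ common neighbors, then $N(u)\cap N(v)$ cannot be a clique, since $\omega(G)\le r$. So it contains a non-edge $\{w,z\}$, and $(u,w,v,z)$ is an induced $C_4$. Hence the algorithm splits into a counting phase and a bounded-witness verification phase.

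\textbf{Step 1 (counting).} Compute $A^2$ over the integers. Then check whether some non-adjacent pair has $A^2[u,v]>r$. If so, find its common neighborhood explicitly in $O(n)$ time and locate a non-edge inside it in $O(r^2)$ time, which exhibits an induced $C_4$. This step costs $O(n^\omega)$. The $r$-dependent bound $n^{2.5719}r^{0.3176}$ suggests the authors avoid full square matrix multiplication. We would mimic that by splitting vertices by degree, say at threshold $\Delta$, and handling the two classes separately. Low-degree middle vertices contribute at most $n\Delta^2$ paths of length two, which can be enumerated directly. High-degree vertices number at most $2m/\Delta$, so their contribution is a rectangular product $\MM{n,2m/\Delta,n}$. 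Balancing $\Delta$ against the rectangular exponent, with $m$ bounded in terms of $r$ in the relevant regime, should give the two cases $r<n^{0.394}$ and $r\ge n^{0.394}$.

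\textbf{Step 2 (verification).} Now every non-adjacent pair has at most $r$ common neighbors. We list all of them, that is, up to $r$ witnesses per pair. Deterministic bounded-witness listing, as in Alon--Naor style derandomization or Lingas's witness algorithms, costs roughly the product time times $\mathrm{poly}(r)$ factors, absorbed into the first term. Listing can also be done combinatorially from the low-degree enumeration together with rectangular products for the high-degree part. For each non-adjacent pair we then test all pairs of its $\le r$ witnesses for adjacency. The total cost is $O(n^2 r^2)$, which gives the additive $n^2r^2$ term.

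\textbf{Main obstacle.} The main obstacle is obtaining exactly the exponents $2.5719$ and $0.3176$ deterministically. This requires a careful choice of the degree threshold and of the rectangular matrix multiplication exponents, plus a deterministic witness-listing subroutine whose overhead is only polylogarithmic or polynomial in $r$. I would first try the threshold $\Delta$ chosen to equalize $n\Delta^2$ with $\MM{n,n^2/\Delta,n}$, and then compute the resulting exponent numerically using the bounds on $\omega(1,k,1)$.
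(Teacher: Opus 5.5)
The paper gives no proof of this statement; it is imported from \cite{kowaluk2019fast}. Judged on its own, your outline has a genuine gap, and it sits exactly where the stated exponents come from.

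The combinatorial frame is fine:
\begin{itemize}
\item an induced $C_4$ exists iff some non-adjacent pair has two non-adjacent common neighbours;
\item any $r+1$ common neighbours of a non-adjacent pair contain a non-edge;
\item once every non-adjacent pair carries a list of at most $r+1$ common neighbours, checking all pairs in the lists costs $O(n^2r^2)$.
\end{itemize}

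The real content of the theorem is listing those witnesses deterministically in $O(n^{2.5719}r^{0.3176})$ time (resp.\ $O(n^{2.5}r^{0.5})$). Step~2 only asserts this (``product time times $\mathrm{poly}(r)$, absorbed into the first term''). That assertion fails for the standard tools. Derandomised witness isolation via cover-free families needs on the order of $r^2$ products. Even a single factor $r$ on $n^{\omega}$ is too much: at $r=n^{1/3}$ it gives $n^{\omega+1/3}\approx n^{2.705}$, against $n^{2.5719}r^{0.3176}+n^2r^2\approx n^{2.678}$.

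Step~1 does not repair this, for three reasons.
\begin{itemize}
\item Computing $A^2$ costs $O(n^\omega)$, already well inside the budget, so nothing is gained by avoiding it.
\item The bound $2m/\Delta$ on the number of high-degree vertices is vacuous for dense graphs.
\item ``$m$ bounded in terms of $r$'' is never justified. It would need a structural fact such as $m=O(n^{3/2}\sqrt r)$ for induced-$C_4$-free graphs of clique number at most $r$ (cf.\ \cite{gyarfas2015cliques}), which you do not invoke. Even then, the rectangular product gives counts, not witnesses.
\end{itemize}

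The missing idea is a block decomposition of the middle index.
\begin{enumerate}
\item Split $V$ into $n/\ell$ blocks $B_j$ of size $\ell$.
\item Compute each Boolean product $A[\cdot,B_j]\cdot A[B_j,\cdot]$. This costs $(n/\ell)\,\MM{n,\ell,n}$ and records, for every pair, which blocks contain a common neighbour.
\item For each non-adjacent pair, scan the flagged blocks until $r+1$ common neighbours are collected. Every flagged block yields at least one, so at most $r+1$ blocks of size $\ell$ are scanned.
\end{enumerate}
The total cost is
\[
O\Bigl(\tfrac{n}{\ell}\,\MM{n,\ell,n}+n^2r\ell+n^2r^2\Bigr).
\]

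Now write $\ell=n^{\mu}$ and $r=n^{\sigma}$, and use $\omega(1,\mu,1)\le 2+\beta(\mu-\alpha)$ for $\mu\ge\alpha$. Here $\beta=(\omega-2)/(1-\alpha)\approx 0.535$ under Le Gall's bounds $\alpha>0.30298$ and $\omega<2.3728639$.
\begin{itemize}
\item Balancing the first two terms gives $n^{2+\lambda}r^{(1-\beta)/(2-\beta)}$. Here $\lambda=(1-\alpha\beta)/(2-\beta)\approx 0.5719$ is the root of $\omega(1,\lambda,1)=1+2\lambda$, and $(1-\beta)/(2-\beta)\approx 0.317$.
\item The optimal $\mu=(1-\alpha\beta-\sigma)/(2-\beta)$ satisfies $\mu\ge\alpha$ exactly when $\sigma\le 1-2\alpha\approx 0.394$.
\item For larger $r$, take $\ell=\sqrt{n/r}\le n^{\alpha}$. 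Then $\MM{n,\ell,n}$ is essentially $n^2$, and the cost is $n^{2.5}r^{0.5}$.
\end{itemize}
This reproduces both cases and the threshold $n^{0.394}$, with no degree splitting and no edge bound.
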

Note that the stated bound never goes below $n^{2.5719}$.
Consequently, this deterministic approach is asymptotically slower (in $n$) than the randomized $\tilde O(n^\omega)$ algorithm of~\cite{williams2014finding}.
This completes the related work discussion.

~\\The first tool we need is a fast colorful detection algorithm for induced $C_4$. Following \Cref{lem:colorful-det c4} and \Cref{claim:runtime w3}, we obtain:
\begin{claim}\label{claim-c4:runtime w3}
    Let $P$ be a sampling vector and let $H\gets \Gp[P]$.
    Then the running time of the algorithm ${\normalfont\textsc{DetectColorfulInducedC4}}(H)$ is $\To(\MM{n,n,n\cdot \wc(P)})$ \whp.
\end{claim}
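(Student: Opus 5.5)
This follows exactly the pattern of the proof of \Cref{claim:runtime w3}, with \Cref{lem:colorful-det c4} taking the place of \Cref{lem:colorful-det}. Write $P=(p_1,p_2,p_3,p_4)$. By symmetry of the four colors we may relabel so that $p_1\geq p_2\geq p_3\geq p_4$; relabelling colors does not change which colorful induced $C_4$'s exist. With this ordering, $\wc(P)=p_1p_2p_3$.

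\textbf{Step 1 (concentration of class sizes).} Let $H\gets \Gp[P]$ and let $n_i$ be the number of vertices of color $i$ in $H$. Then $n_i$ is a sum of independent indicators with mean at most $np_i$. Set $\hat n_i = 100\,np_i\log n$. The Chernoff bound (\Cref{thm:chernoff 6}, upper tail with threshold $t=\hat n_i\geq 6\Exp{n_i}$) gives $\Pr{n_i>\hat n_i}\leq 2^{-\hat n_i}$. This is polynomially small whenever $np_i\geq 1$. If $np_i<1$, we instead use the threshold $100\log n$, which is still $\To(\max(1,np_i))$. A union bound over $i\in[4]$ then shows that the event $\EE=\{n_i\leq \hat n_i\ \forall i\}$ holds \whp. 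Here $\Gp$ has $\To(n)$ vertices because of the color-coding copies, so $\log$ factors are absorbed in $\To$.

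\textbf{Step 2 (runtime under $\EE$).} Condition on $\EE$. By \Cref{lem:colorful-det c4}, $\textsc{DetectColorfulInducedC4}(H)$ runs in $\To(\MM{n_1',n_2',n_3'})$, where $n_1'\geq n_2'\geq n_3'$ are the three largest class sizes of $H$. This is at most $\To(\MM{\hat n_1,\hat n_2,\hat n_3})$ by monotonicity of $\MM{\cdot}$, since the three largest actual class sizes are each bounded by the corresponding three largest $\hat n_i$. Pulling out the polylog factors gives $\To(\MM{np_1,np_2,np_3})$. \Cref{claim:omega balanced} then bounds this by $\To(\MM{n,n,n\,p_1p_2p_3})=\To(\MM{n,n,n\cdot\wc(P)})$.

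The only mildly delicate point is bookkeeping rather than mathematics. The class ordering of $H$ need not match the ordering of the $p_i$. This is harmless, because $\MM{\cdot}$ is symmetric and monotone in each argument, so any three of the $n_i$ are dominated by $\hat n_1,\hat n_2,\hat n_3$ up to polylog factors. The tiny-$p_i$ case is handled by the $\max(1,\cdot)$ convention together with the additive $n^2$-type term inherent in $\MM{n,n,\cdot}\geq n^2$.
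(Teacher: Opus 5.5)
Your proposal is correct and matches the paper's route: the paper gets this claim by rerunning the proof of \Cref{claim:runtime w3}, with \Cref{lem:colorful-det c4} in place of \Cref{lem:colorful-det}. That proof bounds the class sizes by $\hat n_i = 100np_i\log n$ via concentration and then applies \Cref{claim:omega balanced}. Your extra handling of the regime $np_i<1$, and of matching the sorted class sizes of $H$ to the sorted $p_i$, is a sound refinement of bookkeeping the paper leaves implicit.
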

To prove \Cref{thm:induced-c4}, it suffices to show a sampling vector $P$ with $\wc(P)=\TO{\tsr}$ and $\alpha(P,\Gp)=\tg(1)$, where $\Gp$ is the colorful graph obtained from $G$ via the random-coloring reduction of \Cref{ssec:reduction}.
\begin{theorem}[{$\alpha(P,\Gp)$}]
    \label{thm-c4:helper}
    There exists a sampling vector $P$ such that $\alpha(P,\Gp)=\tg(1)$, and $\wc(P)=\TO{\tsr}$.
\end{theorem}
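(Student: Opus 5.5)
We plan to mirror the diamond refinement sequence, with one key difference: for induced $C_4$ there is no ``easy case'' based on \degc vertices. Instead we use the symmetry of the $C_4$ to keep the sampling vector balanced.

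Write an $r$-light colorful induced $C_4$ as $(v_1,v_2,v_3,v_4)$ in cyclic order, with the diagonal non-edges $(v_1,v_3)$ and $(v_2,v_4)$. Here $v_1,v_3$ are the two common neighbours of the non-edge $(v_2,v_4)$, and vice versa. As in \Cref{ssec:main-light}, we take a two-step coloring so that $t=\tg(t_r)$ hyperedges are colorful. We then consider the bipartite graph $\GG^{(2)}$ with parts $A=V_2\times V_4$ (pairs forming the non-edge $(v_2,v_4)$) and $B=V_1\times V_3$.

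\textbf{Step 1: first refinement.} We bucket $A$ by degree $[2^{i-1},2^i)$ and pick a heavy bucket $A_i$. We set $P^{(2)}=(q_1,q_2)$ with $q_1=\tf(2^i/t)$ and $q_2=1/2^{i-1}$. By \Cref{claim:ee1}, $A_i[q_1]\neq\emptyset$ with constant probability.

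\textbf{Step 2: second refinement, using lightness.} Fix $s=(v_2,v_4)\in A_i$, and let $M$ be the set of vertices $u$ such that $u$ appears as $v_1$ or $v_3$ with $s$. Every $u\in M$ is a common neighbour of $v_2,v_4$, and the $C_4$ condition requires pairs in $M$ to be non-edges. Since the $C_4$'s are $r$-light, $G[M]$ has no $r$-clique: a clique $K\subseteq M$ of size $r$ containing $u$ would put the edge $(v_2,u)$ inside the $r$-clique $(K\setminus\{u'\})\cup\{v_2\}$, up to adjusting $r$ by one. The Tur\'an argument of \Cref{claim:ee2} then gives $|M|\le\sqrt{8\ell r}$ with $\ell=2^{i-1}$. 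The second moment computation then shows that sampling $M$ twice with $p=\sqrt{r/\ell}$ hits a good non-edge with probability at least $1/8$. This yields $P^{(3)}=(q_1,p,p)$ with $w(P^{(3)})=\tf(r/t)$.

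\textbf{Step 3: third refinement, balancing by symmetry.} This is the expected main obstacle. Applying \Cref{lemma:refine simple} to split $q_1$ into $(p_2,p_4)$ might make $p_2$ tiny, giving $\wc=\tf(r/t)/\min(p_2,p_4)$ with no control. In the diamond case this was handled by $x_3$; here we instead exploit that the roles are interchangeable. We will run the same argument with the pair roles swapped (base pair $V_1\times V_3$), which produces $(p,p,q_1')$-type vectors. We then argue that we may choose which diagonal pair to contract so that the refinement does not produce a coordinate below $\sqrt{r/t}$.

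Concretely, we first try to make the refinement itself balanced. When splitting $q_1$ over the hypergraph $V_2\times V_4$ restricted to $A_i$, we apply the same Tur\'an/lightness reasoning to the non-edge $(v_2,v_4)$: its endpoints have common neighbourhood $M$ of size at least $2$. Combined with symmetric degree bucketing on both coordinates of $A_i$ (two nested applications of \Cref{claim7:bipartite}), this should give $p_2,p_4\ge\tg(\sqrt{r/t})$ or a symmetric swap. If both coordinates exceed $\sqrt{r/t}$, then $\wc=w/\min\le\tf(r/t)/\sqrt{r/t}=\tf(\sqrt{r/t})$, as in the second case of the proof of \Cref{thm:helper}.

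\textbf{Step 4: finishing.} We conclude with \Cref{claim:xyz} to pass to the $4$-colored graph and with \Cref{claim-c4:runtime w3} for the running time. We anticipate the degree-balancing in Step 3 will require choosing the degree threshold for $A_i$ to be at most $\sqrt{t/r}$ to mirror $q_2\le 1/\sqrt{rt}$; when it is not, we swap to the other diagonal.
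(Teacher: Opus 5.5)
Your Steps 1--2 (with both opposite vertices of a diagonal in one $\psi$-class, sampled from two copies of it) match the paper's treatment of its first two cases. Step 3, however, has a genuine gap, and the balancing you sketch targets the wrong quantity. Whatever split $(p_2,p_4)$ of $q_1$ the refinement returns, dropping the smallest coordinate gives $w_{\mathrm{heavy}}(P)\le\max(p_2p_4\,p,\;p^2)$. The first term is $\tilde O(q_1\sqrt{rq_2})=\tilde O(\sqrt{r2^i}/t)=\tilde O(\sqrt{r/t})$ for every bucket. The second term is $p^2=\min(1,2r/2^i)$, which is $O(\sqrt{r/t})$ exactly when $2^i\gtrsim\sqrt{rt}$. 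So the split never needs balancing. What you need is a heavy bucket of degree at least $\sqrt{rt}$ on the contracted side, and swapping diagonals covers the case where only the other side has one. Your Step~4 threshold ``at most $\sqrt{t/r}$'' points the wrong way: $q_2\le 1/\sqrt{rt}$ means degree at least $\sqrt{rt}$. Lightness of the non-edge $(v_2,v_4)$ does not help here either. It only restricts cliques through edges of the cycle, and says nothing about how the pairs in $A_i$ are spread over $V_2$ versus $V_4$.

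The case your plan misses is when neither side has such a bucket; then no choice of diagonal works. Take adjacent $c,d$, vertices $x_1,\dots,x_k$ adjacent to $d$, vertices $z_1,\dots,z_k$ adjacent to $c$, and the matching $x_iz_i$. Then:
\begin{itemize}
\item the only $4$-cycles are the $k$ induced cycles $c\,d\,x_i\,z_i$;
\item the graph is triangle-free, so all of them are $r$-light;
\item every diagonal pair lies in exactly one cycle;
\item the diagonal pairs $(c,x_i)$, resp.\ $(d,z_i)$, form stars.
\end{itemize}
On either side your Steps 1--2 give $P^{(3)}=(\tilde\Theta(1/t),1,1)$. Any split over a star must keep its center with probability $\tilde\Omega(1)$, so $w_{\mathrm{heavy}}=\tilde\Omega(1)$. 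Yet sampling $c,d$ with probability $1$ and all $x_i,z_i$ with probability $t^{-1/2}$ succeeds with constant probability and has $w_{\mathrm{heavy}}=t^{-1/2}$.

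The paper covers this with a third case (\Cref{claim:small heavy}). Restricting to a heavy bucket of $A$, and then to a heavy bucket of $B$ inside it, gives a subgraph of $\mathcal{G}^{(2)}$ with $\tilde\Omega(t)$ edges and maximum degree at most $\sqrt{rt}$. There, sampling both pair-sides with $q=\tilde\Theta(\sqrt{r/t})$ hits an edge with probability $\tilde\Omega(1)$ by the second moment method (\Cref{claim:simpler}). Two generic refinements then give $p_1p_2,p_3p_4=\tilde O(q)$, hence $w_{\mathrm{heavy}}(P)\le\max(p_1p_2,p_3p_4)=\tilde O(\sqrt{r/t})$. Lightness is not used in that case at all.
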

\begin{proof}[Proof of \Cref{thm:induced-c4} using \Cref{thm-c4:helper}]
    By \Cref{claim-c4:runtime w3}, there exists a randomized algorithm that detects an induced $C_4$ in $G$ \whp in time $\TO{\MM{n,n,n\cdot \wc(P)}}$.
    Using $\wc(P)=\TO{\tsr}$ from \Cref{thm-c4:helper} gives running time $\TO{\MM{n,n,n\cdot \tsr}}$.
\end{proof}

One key difference from induced-diamond detection is that an induced $C_4$ has no distinguished \degc/\degb vertices: all four vertices play the same role.
Thus we cannot use the same win--win argument based on either many \degc vertices or a good sampling vector.
Instead, we exploit the symmetry of $C_4$ to show that a good sampling vector always exists.

As for $r$-light diamonds, we first reduce to a colorful instance (\Cref{ssec:reduction}) so that many $r$-light induced $4$-cycles become colorful.
Recall that an induced $4$-cycle is $r$-\emph{light} if none of its edges is contained in an $r$-clique of $G$.
Let $\Gx$ be the graph obtained by taking $\TO{1}$ independent copies of $G$ and applying a random $2$-coloring $\psi: V(\Gx)\to[2]$ (analogous to the four-coloring in \Cref{ssec:reduction}).
Let $V_i\triangleq \psi^{-1}(i)$ for $i\in[2]$.
Let $\CC_\psi\subseteq \CC_{<r}$ be the set of induced $4$-cycles in $\Gx$ such that for $C=(v,u,v',u')$ (with missing edges $(v,v')$ and $(u,u')$) we have $\psi(v)=\psi(v')=1$ and $\psi(u)=\psi(u')=2$.
We call such cycles \emph{colorful} with respect to $\psi$.
Let $t=|\CC_\psi|$; by the same argument as in \Cref{ssec:reduction}, we have $t=\Omega(|\CC_{<r}|)$.

We later extend $\psi$ to a random coloring $\phi: V(\Gx)\to[4]$.
For every $v\in V(\Gx)$ let $\phi(v)\triangleq 2(\psi(v)-1)+ \gamma(v)$, where $\gamma: V(\Gx)\to[2]$ is a random $2$-coloring.
This means that every vertex $v\in V(\Gx)$ with $\psi(v)=1$ is assigned a random color $\phi(v)\in\{1,2\}$ independently, and every vertex $v\in V(\Gx)$ with $\psi(v)=2$ is assigned a random color $\phi(v)\in\{3,4\}$ independently.

The reason for this two-step coloring is that we want to define several hypergraphs based on $\psi$ first, and reveal the $4$-coloring only later.
Let $\CC_\phi$ denote the set of colorful $4$-cycles in $\Gx$ with respect to $\phi$; it is a random subset of $\CC_\psi$.
Let $E_\phi$ be the set of edges in $\Gx$ whose endpoints receive distinct colors under $\phi$, and let $U_i$ denote the $i$th color class under $\phi$.
We now define the hypergraphs used in the analysis:
\begin{align*}
    \GG^{(2)} & =(((V_1\times V_1)\sqcup(V_2\times V_2)),\CC_{\psi})\;, \\
    \GG^{(3)} & =(((V_1\times V_1)\sqcup V_2\sqcup V_2),\CC_{\psi})\;,  \\
    \GG^{(4)} & =(V_1\sqcup V_1\sqcup V_2\sqcup V_2,\CC_{\psi})\;,      \\
    \Gp       & =(U_1\sqcup U_2\sqcup U_3\sqcup U_4,E_{\phi})\;.
\end{align*}
Note that we abuse notation and use $\CC_{\psi}$ to denote the hyperedges in all hypergraphs.
For every $i\in[4]$, let $\alpha(P,\GG^{(i)})$ denote the probability that $\GG^{(i)}[P]$ contains a hyperedge, where $P$ is a sampling vector of dimension $i$.
We also write $\alpha(P,\Gp)$ for the probability that $\Gp[P]$ contains a colorful $4$-cycle.
Recall that our goal is to show the existence of a light sampling vector $P$ for $\Gp$ with $\alpha(P,\Gp) = \tg(1)$, i.e., \Cref{thm-c4:helper}.

~\\The following claim shows that for the same sampling vector $P^{(4)}$, hitting a cycle in $\Gp$ is within a factor $8$ of hitting a hyperedge in $\GG^{(4)}$.
\begin{claim}\label{claim2:xyz}
    $\alpha(P^{(4)},\Gp)\geq \frac{1}{4}\cdot \alpha(P^{(4)},\GG^{(4)})$.
\end{claim}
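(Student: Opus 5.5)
The proof is a coupling argument modeled on \Cref{claim:xyz}: the only new feature is that both sides of the cycle are split. Recall that $\phi(v)=2(\psi(v)-1)+\gamma(v)$ with $\gamma$ an independent uniform $2$-coloring. In $\GG^{(4)}=(V_1\sqcup V_1\sqcup V_2\sqcup V_2,\CC_\psi)$ every vertex $v\in V_1$ has two copies, sampled with probabilities $p_1$ and $p_2$, and every $u\in V_2$ has two copies, sampled with $p_3$ and $p_4$. In $\Gp$, each vertex $v\in V_1$ lies in exactly one of $U_1,U_2$, according to $\gamma(v)$, and is sampled with the corresponding probability; likewise for $V_2$ with $U_3,U_4$.

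\textbf{Step 1 (coupling).} Draw, for every $v\in V_1$, independent coins $a_v\sim\mathrm{Ber}(p_1)$ and $b_v\sim\mathrm{Ber}(p_2)$, and for every $u\in V_2$ coins $c_u\sim\mathrm{Ber}(p_3)$ and $d_u\sim\mathrm{Ber}(p_4)$, all independent of $\gamma$. These coins realize $\GG^{(4)}[P^{(4)}]$ exactly. Realize $\Gp[P^{(4)}]$ from the same coins by keeping $v\in V_1$ iff it holds the coin of its own color ($a_v$ if $\gamma(v)=1$, $b_v$ if $\gamma(v)=2$), and similarly for $V_2$ with $c_u,d_u$. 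Since $\gamma$ is independent of the coins, this has exactly the law of $\Gp[P^{(4)}]$.

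\textbf{Step 2 (transferring a witness).} Suppose the coins yield a surviving hyperedge. Let $e^*=(v,v',u,u')$ be a canonical one, say the lexicographically first; it depends only on the coins, so it is independent of $\gamma$. If $\gamma$ places $v,v',u,u'$ in the color classes matching the copies in which they survived, then all four vertices survive in $\Gp[P^{(4)}]$. In that case the four vertices receive four distinct $\phi$-colors, so the induced $4$-cycle they span is a colorful induced $C_4$ of $\Gp$. Note that its edges, and the missing pairs $(v,v')$, $(u,u')$, all join distinct colors, so inducedness is witnessed inside $\Gp$. Consequently, $\alpha(P^{(4)},\Gp)$ is at least $\alpha(P^{(4)},\GG^{(4)})$ times the probability, over $\gamma$, of this color match.

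\textbf{Step 3 (the constant; main obstacle).} The naive matching probability is $\Pr{\gamma(v)=1,\gamma(v')=2}\cdot\Pr{\gamma(u)=1,\gamma(u')=2}=1/16$, which is too small. To reach $1/4$, I would use the symmetry of $C_4$: as a labeled tuple, $(v,u,v',u')$ is the same induced cycle as $(v',u,v,u')$, and likewise with $u,u'$ swapped. So I would take $\CC_\psi$, and hence the hyperedge set, to be closed under swapping $v\leftrightarrow v'$ and $u\leftrightarrow u'$.

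With this closure, I would condition on $\gamma$ first and then choose the witness from the orientation compatible with $\gamma$. The event ``the two copies of $V_1$ contain $v$ and $v'$ in either order'' has, by symmetry of the coins, the same probability under both colorings. Hence only the events $\gamma(v)\neq\gamma(v')$ and $\gamma(u)\neq\gamma(u')$ are needed, and each has probability $1/2$.

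Making this rigorous is the delicate part. Because the coin used for each vertex now depends on $\gamma$, the witness must be reselected after $\gamma$ is revealed. My first attempt would be to reparametrize the coins by color, $(\text{own-color coin},\text{other-color coin})$, which keeps them independent of $\gamma$, and then redo Step 2 in these coordinates. If that fails, the fallback is the constant $1/16$, which still suffices for all later uses, since those only need $\tg(1)$.
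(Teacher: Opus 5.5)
Your Steps 1 and 2 are correct and prove $\alpha(P^{(4)},G_{\phi})\geq \frac{1}{16}\,\alpha(P^{(4)},\mathcal{G}^{(4)})$. The coupling realizes both samples with the right laws, and the canonical surviving hyperedge depends only on the coins. That hyperedge becomes a colorful induced $C_4$ of $G_{\phi}[P^{(4)}]$ whenever $\gamma$ puts each of its four vertices into the color class of the copy in which it survived. This event has probability $1/16$ and is independent of the coins. It is the paper's expose-the-sample-first argument, done carefully. The paper only asks that $\gamma$ split $\{v_1,v_2\}$ and $\{u_1,u_2\}$ (probability $1/4$) and concludes that the cycle is present in $G_{\phi}[P^{(4)}]$. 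That overlooks that whether a vertex is kept in $G_{\phi}[P^{(4)}]$ depends on which color $\gamma$ gives it, which is exactly the matching requirement you impose.

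The gap is Step 3, and it cannot be closed, because the constant $1/4$ is false in general. Your symmetry heuristic already fails because the two copies of $V_1$ are sampled at different rates $p_1\neq p_2$. More fundamentally, closing $\mathcal{C}_\psi$ under the swaps enlarges $\alpha(P^{(4)},\mathcal{G}^{(4)})$, since each pair now has two chances to survive. It leaves $\alpha(P^{(4)},G_{\phi})$ unchanged, because colorfulness in $G_{\phi}$ does not see orientations. For instance, let the colored graph $G_{\psi}$ be a single induced $4$-cycle $v,u,v',u'$ with $\psi(v)=\psi(v')=1$ and $\psi(u)=\psi(u')=2$. With the closure,
\[
\alpha(P^{(4)},\mathcal{G}^{(4)})=p_1p_2(2-p_1p_2)\cdot p_3p_4(2-p_3p_4),\qquad \alpha(P^{(4)},G_{\phi})=\tfrac{1}{4}\,p_1p_2p_3p_4,
\]
so the ratio $\frac{1}{4(2-p_1p_2)(2-p_3p_4)}$ tends to $1/16$ as the $p_i$ tend to $0$. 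Without the closure, $1/4$ still fails. Take $V_1=\{v,w_1,w_2\}$ and $V_2=\{u,u'\}$, with each of $v,w_1,w_2$ adjacent to both $u$ and $u'$, plus the single extra edge $w_1w_2$. The induced $4$-cycles are exactly $(v,u,w_j,u')$ for $j=1,2$, and they are $r$-light for $r\geq 4$. Orient both with $v$ in the first copy of $V_1$ and take $P^{(4)}=(1,p,1,1)$. Then $\alpha(P^{(4)},\mathcal{G}^{(4)})=2p-p^2$ while $\alpha(P^{(4)},G_{\phi})=(7p-p^2)/16$, a ratio of $\frac{7-p}{16(2-p)}<\frac14$ for $p<1/3$. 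So your fallback is the right conclusion: $1/16$ is the correct and tight constant. It suffices, because every later use of this claim only needs $\tilde{\Omega}(1)$.
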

\begin{proof}[Proof of \Cref{claim2:xyz}]
    Expose the random sample according to $P^{(4)}$ before exposing the auxiliary coloring $\gamma$.
    If $\GG^{(4)}[P^{(4)}]$ contains no hyperedge, there is nothing to prove.
    Otherwise, fix one hyperedge $C=(v_1,u_1,v_2,u_2)\in\CC_{\psi}$ contained in $\GG^{(4)}[P^{(4)}]$, chosen deterministically from the sampled instance.
    Since $C\in\CC_{\psi}$, we have $v_1,v_2\in V_1$ and $u_1,u_2\in V_2$.
    Conditional on the sampled instance, the coloring $\gamma$ is still independent.
    With probability $1/4$, we have $\gamma(v_1)\neq\gamma(v_2)$ and $\gamma(u_1)\neq\gamma(u_2)$.
    On this event, $C$ is colorful with respect to $\phi$, so $\Gp[P^{(4)}]$ contains a colorful $4$-cycle.
    Thus $\alpha(P^{(4)},\Gp)\geq \frac{1}{4}\alpha(P^{(4)},\GG^{(4)})$, which implies the stated bound.
\end{proof}

By \Cref{claim2:xyz}, it suffices to find $P^{(4)}$ such that $\alpha(P^{(4)},\GG^{(4)})\geq \tg(1)$, instead of $\alpha(P^{(4)},\Gp)\geq \tg(1)$.

\begin{lemma}[{$\alpha(P,\GG^{(4)})$}]
    \label{thm2:helper}
    There exists a sampling vector $P=(p_1,p_2,p_3,p_4)$ such that $\alpha(P,\GG^{(4)})\geq \tg(1)$, and $\wc(P)=\TO{\sqrt{r/t}}$.
\end{lemma}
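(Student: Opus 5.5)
The plan mirrors the diamond case: refine $P^{(1)}=(1/t)$ to $P^{(2)}$ on $\GG^{(2)}$, then to $P^{(3)}$ on $\GG^{(3)}$, then to $P^{(4)}$ on $\GG^{(4)}$. The difference is that the $r$-light structure is used \emph{symmetrically} on both sides. There is no \degc escape clause, so we cannot afford a small coordinate on $V_1\times V_1$ that we fail to control.

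\textbf{Step 1: bucket the $A$-side.} View $\GG^{(2)}$ as a bipartite graph with sides $A=V_1\times V_1$ and $B=V_2\times V_2$ and $t$ edges. Each edge is a pair $((v,v'),(u,u'))$ such that $v,u,v',u'$ is an induced $C_4$, i.e.\ $u,u'$ are common neighbours of the non-adjacent pair $v,v'$, and $u\not\sim u'$.
\begin{itemize}
\item Bucket $A$ by degree exactly as in the proof of \Cref{prop2:balanced2 helper}, and pick a heavy bucket $A_i$ with degrees in $[\ell,2\ell)$, where $\ell=2^{i-1}$.
\item Restrict to the hyperedges through $A_i$, and set $q_1=\TO{\ell/t}$ and $q_2=1/\ell$.
\item \Cref{claim:ee1} transfers verbatim, so sampling $A_i$ with rate $q_1$ is nonempty with constant probability.
\end{itemize}

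\textbf{Step 2: key refinement on the $V_2$ side.} Fix $s=(v,v')\in A_i$ and let $M\subseteq V_2$ be the set of vertices $u$ lying in a colorful $r$-light cycle with $s$.
\begin{itemize}
\item Every $u\in M$ is adjacent to $v$. Hence an $r$-clique inside $G[M]$ would, together with $v$, place the edge $(v,u)$ of the cycle into an $r$-clique of $G$, contradicting lightness. So $G[M]$ is $K_r$-free.
\item The non-edges of $G[M]$ are exactly the pairs forming cycles with $s$. Their number lies in $[\ell,2\ell)$, since cycles through $s$ correspond bijectively to non-edges in $M$.
\item Tur\'an then gives $|M|\le\sqrt{8r\ell}$.
\item The same second-moment computation as in \Cref{claim:ee2}, sampling $V_2$ at rate $p=\sqrt{r/\ell}$, yields probability $\ge 1/8$ of hitting a non-edge.
\end{itemize}
Via \Cref{claim:ee0} this shows $\alpha((q_1,p,p),\GG^{(3)})=\tg(1)$, with weight $q_1p^2=\TO{r/t}$.

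\textbf{Step 3: refine the $V_1\times V_1$ coordinate with a symmetric argument.} The obstacle is this last step. The black-box refinement \Cref{lemma:refine simple} gives $(p_1,p_2,p,p)$ with $p_1p_2=\TO{q_1}$, but $p_1$ or $p_2$ may be tiny. In the diamond case that was rescued by $x_3$; here it is not.

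Instead, I would first condition on the sampled $V_2$ vertices, i.e.\ a pair $(u,u')$ in a cycle, and apply the same Tur\'an argument with the roles of the colour classes swapped. The common neighbourhood $M'$ of $u,u'$ in $V_1$ is $K_r$-free by lightness, and the cycles through $(u,u')$ correspond to non-edges of $G[M']$. So sampling $V_1$ at rate $\sqrt{r/\ell'}$ succeeds, where $\ell'$ is the $B$-side degree.

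To make this consistent, I would first do a joint dyadic bucketing of both sides of $\GG^{(2)}$ by degree, losing $\log^2 t$. This yields a heavy sub-bipartite graph with $A$-degrees $\approx\ell$, $B$-degrees $\approx\ell'$, and $|A|\ell\approx|B|\ell'\approx t$.

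Take $P=(\sqrt{r/\ell'},\sqrt{r/\ell'},\sqrt{r/\ell},\sqrt{r/\ell})$, with weight $r^2/(\ell\ell')$. By symmetry assume $\ell\ge\ell'$, so the minimum coordinate is $\sqrt{r/\ell}$ and
\[
\wc(P)=\frac{r^{3/2}}{\ell'\sqrt{\ell}} .
\]
This is fine only if $\ell'\ge\sqrt{rt}/\ldots$, which need not hold. So when $\ell'$ is small I would instead keep $V_1$ at the Step-1 rate and use $(p_1,p_2)$ from \Cref{lemma:refine simple}. I would then argue that a tiny $p_1$ forces few relevant $V_1$ vertices, contradicting $|B|\ell'\approx t$ with small $\ell'$ via $|M'|\le\sqrt{8r\ell'}$.

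Balancing these two regimes to always get $\wc(P)=\TO{\sqrt{r/t}}$ is the main difficulty I expect. This includes making the second-moment analysis work jointly over both sides, where the covariance counts pairs of cycles sharing three vertices, bounded using that $M,M'$ are small.
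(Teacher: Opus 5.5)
\textbf{Gap: Step 3 is left open, and the two devices you propose there do not work as stated.}

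The joint vector $(\sqrt{r/\ell'},\sqrt{r/\ell'},\sqrt{r/\ell},\sqrt{r/\ell})$ has no proven hitting probability. Your Tur\'an argument on $V_2$ needs a fixed pair $s\in A_i$ already retained. The swapped one on $V_1$ needs a fixed pair $(u,u')$ already retained. You supply no joint second-moment bound to replace these conditionings, and, as you note, the $\wc$ of this vector is too large in general anyway.

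The fallback for small $\ell'$ points the wrong way. The union bound of \Cref{claim2:z_1 and p12} gives $\alpha\le |U_1|\,p_1$, so a tiny $p_1$ forces \emph{many} relevant $V_1$ vertices, not few. For induced $C_4$ there is no \degc-type algorithm that could exploit that.

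The premise of Step 3 is also off: the black-box refinement never produces a coordinate much smaller than the one it refines. In \Cref{thm:refining sampling vector} we have $p_1=\min(1,12\log n\cdot q_1 2^j)\ge q_1$, and $p_2=\min(1,4/2^j)\ge q_1$ whenever $p_1<1$. When $p_1=1$ you may raise $p_2$ to $q_1$ at no cost to $\alpha$ or to the weight bound. Hence $\min(p_1,p_2)\ge q_1=\tilde\Theta(\ell/t)$, so $p_1,p_2$ are tiny only when $\ell$ is.

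\textbf{The missing idea is a case split on the degree of the heavy bucket at threshold $\sqrt{rt}$} (\Cref{claim:small heavy}).

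First case: some heavy $A$-bucket has $\ell\ge\sqrt{rt}$. Then your Steps 1--2 followed by \Cref{lemma:refine simple} already finish the proof. Indeed $q_1=\tilde\Omega(\sqrt{r/t})$, $p=\sqrt{r/\ell}\ge\sqrt{r/t}$ since $\ell\le t$, and $w(P)=\tilde O(q_1p^2)=\tilde O(r/t)$. So every coordinate is $\tilde\Omega(\sqrt{r/t})$ and $\wc(P)=\tilde O(\sqrt{r/t})$.

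Second case: some heavy $B$-bucket has degree at least $\sqrt{rt}$. Do the same with $V_1,V_2$ swapped; this is exactly where your symmetric Tur\'an observation is correct and useful.

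Third case: neither holds. Restrict to a heavy $A$-bucket, and then to a heavy bucket of its $B$-neighbourhood. This yields $\FF^{(2)}\subseteq\GG^{(2)}$ with $\tilde\Omega(t)$ edges and maximum degree $\Delta\le\sqrt{rt}$ on both sides. No Tur\'an argument is needed here. Sampling both parts of $\GG^{(2)}$ at rate $q=\tilde\Theta(\sqrt{r/t})$ retains an edge with probability at least $1/4$ by a plain second-moment bound, since $\Delta q\le mq^2$ (\Cref{claim:simpler}). Refining each of the two coordinates by the black box gives four coordinates, each at least $q$, with total weight $\tilde O(q^2)$. Hence $\wc(P)=\tilde O(\sqrt{r/t})$.
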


Let $A\triangleq V_1\times V_1$ and $B\triangleq V_2\times V_2$ denote the two parts of the hypergraph $\GG^{(2)}$.
The first step is to choose an induced subhypergraph of $\GG^{(2)}$ in which all vertices in $A$ (or in $B$) have similar degrees.
For every element $a\in A$ let $\deg^{(2)}(a)$ denote the degree of $a$ in the hypergraph $\GG^{(2)}$, and similarly for every $b\in B$.
Recall that $t$ is the total number of hyperedges in $\GG^{(2)}$.
Partition the elements of $A$ and $B$ into $\log t + 1$ buckets, where each bucket contains elements of similar degree in $\GG^{(2)}$:
\begin{align*}
    A_i & = \set{a\in A:\deg^{(2)}(a)\in[2^{i-1},2^i)}\;, \\
    B_i & = \set{b\in B:\deg^{(2)}(b)\in[2^{i-1},2^i)}\;.
\end{align*}
We say that a set $A_i$ is \emph{heavy} if $\abs{E(A_i)}\geq t/(6\log t)$,
where $E(A_i)$ is the set of hyperedges of $\GG^{(2)}$ that contain an element from $A_i$, and similarly for $B_i$.
Note that there is at least one index $i\in[\log(t)]$ such that $A_i$ is heavy; otherwise $E(A)< t/(6\log t)\cdot \log(t)<t/6$, a contradiction.
Let $i$ be the smallest index such that $A_i$ is heavy.
There is also at least one index $j\in[\log(t)]$ such that $B_j$ is heavy.

\begin{claim}\label{claim:small heavy}
    At least one of the following holds:
    \begin{enumerate}
        \item There exists an index $i\geq \log(\trt)$ such that $E(A_i)\geq \frac{t}{36\log^2 t}$.
        \item There exists an index $j\geq \log(\trt)$ such that $E(B_j)\geq \frac{t}{36\log^2 t}$.
        \item There exists a subgraph $\FF^{(2)}\subseteq\GG^{(2)}$ with at least $\tg(t)$ edges and maximum degree at most $\trt$.
    \end{enumerate}
\end{claim}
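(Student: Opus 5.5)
We argue by contradiction: assume items 1 and 2 both fail, and build the subgraph $\FF^{(2)}$ of item 3 by deleting the high-degree elements on both sides.

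\textbf{Step 1 (the high-degree edges are few).} Let $i_0=\lceil\log(\trt)\rceil$. Put $A_{\mathrm{high}}=\bigcup_{i\geq i_0}A_i$ and $B_{\mathrm{high}}=\bigcup_{j\geq j_0}B_j$, with $j_0$ defined the same way. Since item 1 fails, every bucket $A_i$ with $i\geq i_0$ has $|E(A_i)|<t/(36\log^2 t)$. There are at most $\log t+1$ such buckets, so
\[
|E(A_{\mathrm{high}})|<(\log t+1)\cdot\frac{t}{36\log^2 t}\leq \frac{t}{12\log t}.
\]
For $t$ larger than a constant this is at most $t/4$, and small $t$ can be absorbed into the $\tg$ notation. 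The same bound holds for $B_{\mathrm{high}}$, using the failure of item 2.

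\textbf{Step 2 (the remaining graph).} Define $\FF^{(2)}$ as the subgraph of $\GG^{(2)}$ obtained by deleting every hyperedge (edge of the bipartite graph $A\sqcup B$) that touches $A_{\mathrm{high}}\cup B_{\mathrm{high}}$. Then
\[
|E(\FF^{(2)})|\geq t-|E(A_{\mathrm{high}})|-|E(B_{\mathrm{high}})|\geq t/2=\tg(t).
\]
Every surviving element lies in some $A_i$ or $B_j$ with index below $i_0$. Its degree in $\GG^{(2)}$, and hence in $\FF^{(2)}$, is less than $2^{i}\leq 2^{i_0-1}$.

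\textbf{Step 3 (main obstacle: the degree cap).} Item 3 requires maximum degree at most $\trt$, but Step 2 only gives $2^{i_0-1}<\trt\cdot 1$ up to rounding. The danger is an off-by-one in the bucket definition $[2^{i-1},2^i)$ versus the threshold $\log(\trt)$, which could leave degrees up to $2\trt$. I would fix this by choosing the cutoff so that bucket $i\geq \log(\trt)$ exactly captures degree $\geq \trt/2$ or so. Any constant-factor slack in the degree bound is harmless for the later use, since the eventual sampling vector tolerates $\TO{\cdot}$ factors. I would state this explicitly, or take $i_0=\lfloor\log\trt\rfloor$, so that elements outside the high buckets have degree below $2^{i_0}\leq\trt$.

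The remaining thing to check is that deleting edges touching high elements on one side does not raise degrees on the other side. This holds because degrees only decrease when passing to a subgraph.
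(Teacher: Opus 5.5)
Your argument is correct, but it follows a different route from the paper's. The paper works by nested restriction. It takes the largest index $i$ for which $A_i$ is heavy (at least $t/(6\log t)$ edges) and declares item 1 if $i\geq\log\sqrt{rt}$. Otherwise it restricts $\mathcal{G}^{(2)}$ to $A_i\cup N(A_i)$, re-buckets $N(A_i)$ by degree in this restricted graph, and takes the largest heavy bucket there. It then either declares item 2 or keeps only the edges between that bucket and its neighbourhood, ending with at least $t/(36\log^2 t)$ edges.

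You instead prove the contrapositive by one global pruning step: delete every edge touching a high bucket on either side, and pay for it with a union bound over at most $\log t+1$ buckets per side. This buys several things:
\begin{itemize}
\item it needs no notion of a heavy bucket and treats the two sides symmetrically;
\item it keeps a constant fraction of the edges rather than a $1/(36\log^2 t)$ fraction;
\item it uses the buckets $B_j$ exactly as defined, via degrees in $\mathcal{G}^{(2)}$.
\end{itemize}
The paper's second stage buckets by degree in the restricted graph, so matching its outcome to item 2 as literally stated needs an extra pigeonhole step over the buckets $B_{j'}$ with $j'\geq j$. The additional structure the paper's construction produces (one side confined to a single degree class) is not needed for the statement or its later use.

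Your Step 3 worry is unfounded, and one of your suggested patches would break Step 1. An element that survives the pruning lies in a bucket with integer index $i<\log\sqrt{rt}$, so its degree is less than $2^{i}<\sqrt{rt}$. Equivalently, with $i_0=\lceil\log\sqrt{rt}\rceil$ you have $i_0-1<\log\sqrt{rt}$, hence $2^{i_0-1}<\sqrt{rt}$ exactly, with no rounding and no factor of $2$. So Step 2 already gives maximum degree strictly below $\sqrt{rt}$, and the proof ends there.

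The cutoff is also not yours to choose, since items 1 and 2 fix it. Taking $i_0=\lfloor\log\sqrt{rt}\rfloor$ would be wrong. When $\log\sqrt{rt}$ is not an integer, the bucket with index $\lfloor\log\sqrt{rt}\rfloor$ has $i<\log\sqrt{rt}$, so the failure of item 1 says nothing about it. Your bound on $|E(A_{\mathrm{high}})|$ would then no longer follow.
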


\begin{proof}[Proof of \Cref{claim:small heavy}]
    Let $i$ be the largest index such that $A_i$ is heavy.
    If $i\geq \log(\trt)$, then condition $(1)$ holds, and we are done.
    Otherwise, let $B'=N_{\GG^{(2)}}(A_i)$ be the set of neighbors of $A_i$ in $\GG^{(2)}$, and let $\HH^{(2)}=\GG^{(2)}[A_i\cup B']$ be the induced hypergraph on $A_i\cup B'$.
    The hypergraph $\HH^{(2)}$ has at least $t/(6\log t)$ hyperedges.
    We partition the set $B'$ into $\log(t)$ sets $B_1',\ldots, B_{\log(t)}'$, exactly as before, but with respect to the degrees in $\HH^{(2)}$:
    \begin{align*}
        B_j' = \set{b\in B':\deg_{\HH^{(2)}}(b)\in[2^{j-1},2^j)}\;.
    \end{align*}
    Let $j$ be the largest index such that $B_j'$ is heavy in $\HH^{(2)}$, meaning it contains at least a $1/(6\log t)$ fraction of the hyperedges of $\HH^{(2)}$, and therefore at least $t/(36\log^2 t)$ hyperedges.
    If $j\geq \log(\trt)$, then condition $(2)$ holds, and we are done.
    Otherwise, let $A'=N_{\HH^{(2)}}(B_j')$ be the set of neighbors of $B_j'$ in $\HH^{(2)}$. Let $\FF^{(2)}$ be a subgraph of $\HH^{(2)}$ with the same vertex set, and with the edge set $E(A',B_j')$.
    The graph $\FF^{(2)}$ has at least $t/(36\log^2 t)$ hyperedges and maximum degree at most $\trt$, so condition $(3)$ holds.
\end{proof}
Assuming that condition $(3)$ of \Cref{claim:small heavy} holds, we prove \Cref{thm2:helper} using the following claim.
\begin{restatable}[Low Maximum Degree]{claim}{ClaimSimpler}
    \label{claim:simpler}
    Let $H$ be a graph with $m$ edges and maximum degree at most $\sqrt{rm}$.
    Keep each vertex of $V(H)$ independently with probability $q=\sqrt{r/m}$, and let $H[q]$ be the induced subgraph on the kept vertices.
    Then, $\Pr{H[q]\text{ contains an edge}}\geq 1/4$.
\end{restatable}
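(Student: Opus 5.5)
We prove the claim with the second moment method, following the same template as the proof of \Cref{claim:ee2}. If $q=\sqrt{r/m}\geq 1$, every vertex is kept and $H[q]=H$ contains an edge, so we may assume $q<1$. For every edge $e\in E(H)$ let $X_e$ be the indicator that both endpoints of $e$ are kept, and let $X=\sum_{e}X_e$. The event $\set{X>0}$ is exactly the event that $H[q]$ contains an edge, so it suffices to show $\Exp{X}^2/\Exp{X^2}\geq 1/4$.

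We compute the two moments.
\begin{itemize}
    \item \emph{First moment.} Clearly $\Exp{X}=mq^2=r$.
    \item \emph{Second moment.} We split the sum $\Exp{X^2}=\sum_{e,e'}\Exp{X_eX_{e'}}$ into three kinds of ordered pairs.
    \begin{itemize}
        \item Pairs with $e=e'$ contribute $mq^2$.
        \item Pairs sharing exactly one vertex contribute $q^3$ each. Their number is $\sum_v \deg(v)(\deg(v)-1)\leq \Delta\cdot\sum_v\deg(v)=2m\Delta$, where $\Delta\leq\sqrt{rm}$ is the maximum degree. So these pairs contribute at most $2m\sqrt{rm}\,q^3$.
        \item Vertex-disjoint pairs contribute at most $m^2q^4$.
    \end{itemize}
\end{itemize}
Substituting $q=\sqrt{r/m}$, the three terms become $mq^2=r$, $2m\sqrt{rm}\,q^3=2r^2$, and $m^2q^4=r^2$.

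This gives
\[
\Pr{X>0}\geq \frac{r^2}{r+2r^2+r^2}=\frac{r}{1+3r}\geq \frac14
\]
for every $r\geq 1$, which completes the proof.

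The only delicate step is bounding the number of adjacent edge pairs. It is essential to use the maximum-degree bound $\Delta\leq\sqrt{rm}$ and not a bound on the number of vertices. The choice $q=\sqrt{r/m}$ is exactly what balances the adjacent-pair term $m\Delta q^3$ against $\Exp{X}^2$, so each of the three terms in $\Exp{X^2}$ is $O(\Exp{X}^2)$ when $\Exp{X}=r\geq 1$.
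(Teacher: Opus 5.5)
Your proof is correct and is essentially the paper's argument: the paper also applies the second moment method with the bound $\Exp{X^2}\le \Exp{X}\,(1+2\Delta q+\Exp{X})$, which is exactly your three-term split, and then uses $\Delta q = r = \Exp{X}$. You also make explicit two points the paper leaves implicit: the trivial case $q\ge 1$, and the fact that the final bound $r/(1+3r)\ge 1/4$ requires $r\ge 1$.
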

\begin{proof}[Proof of \Cref{thm2:helper} using \Cref{claim:simpler}]
    Let $\FF^{(2)}$ be the subgraph of $\GG^{(2)}$ as stated in condition $(3)$ of \Cref{claim:small heavy}.
    Let $q=\tsr$, and define $P^{(2)}=(q,q)$.
    Then, by \Cref{claim:simpler}, we have that $\alpha(P^{(2)},\FF^{(2)})\geq 1/4$.
    We refine $P^{(2)}$ into $P=(p_1,p_2,p_3,p_4)$, where $p_1\cdot p_2=\tf(q)$, and $p_3\cdot p_4=\tf(q)$, such that
    \begin{align*}
        \alpha(P,\HH^{(4)})
        \geq \alpha(P^{(2)},\FF^{(2)})/(2\log n)^2\;,
    \end{align*}
    where such refinement is possible by \Cref{thm:refining sampling vector} (applied twice) and it also guarantees that $p_1,p_2,p_3,p_4\geq q$, so $\min(P)\geq q$.
    Since $\alpha(P^{(2)},\FF^{(2)})\geq \tg(1)$, we get $\alpha(P,\HH^{(4)})\geq \tg(1)$.
    Finally, we show that $\wc(P)=\TO{\tsr}$.
    \begin{align*}
        \wc(P)=\frac{w(P)}{\min(P)}\leq \frac{w(P)}{\tf{q}}=\tf(q)=\tf(\tsr)\;. & \qedhere
    \end{align*}
    \end{proof}
The proof of \Cref{claim:simpler} uses a simple second moment argument, and is given towards the end of this subsection.

\renewcommand{\cg}{\lambda}
We prove \Cref{thm2:helper} under the assumption that either condition $(1)$ or $(2)$ of \Cref{claim:small heavy} holds.
We may assume without loss of generality that (1) holds, i.e., there exists an index $i\geq \log(\trt)$ such that $E(A_i)\geq \frac{t}{36\log^2 t}$.
If the second condition holds, we can rename the vertices in $V_1$ and $V_2$, as well as the roles of $A$ and $B$, and proceed with the same analysis.
Let $E(A_i)$ be the set of hyperedges of $\GG^{(2)}$ that contain an element from $A_i$.
We restrict our attention to the hypergraph $\HH^{(j)}$ with the same vertex set as $\GG^{(j)}$, but with the edge set $E(A_i)$, for every $j\in\{2,3,4\}$.
Define
\begin{align}
    P^{(2)}=(q_1=\frac{2^{i+1}\cdot \cg}{t},q_2=\frac{1}{2^{i-1}})\;,\label{eq2:P2}
\end{align}
where $\cg\triangleq 5000\log^{3}n$, and $i$ is the index of the chosen heavy set $A_i$. Note that $q_1\geq \cg\cdot \tsr$, and $\Omega(1/t)\leq q_2\leq 1/\trt$.
\begin{lemma}
    \label{prop3:balanced2 helper}
    Define $P^{(3)}=(q_1,p,p)$ where $p=\sqrt{r\cdot q_2}$, and $q_1$ and $q_2$ are as in $P^{(2)}$.
    Then $\alpha(P^{(3)},\HH^{(3)})\geq \tg(1)$.
\end{lemma}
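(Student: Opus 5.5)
The plan is to follow the proof of \Cref{prop2:balanced2 helper} for diamonds step by step. The only structural change is in the Tur\'an step: there we used that $(v_1,v_2,u)$ avoids $r$-cliques, and here we use that the edges of an $r$-light induced $C_4$ avoid $r$-cliques.

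First, I apply the decomposition \Cref{claim:independence of e1e2} to $\HH^{(3)}$ with $k=3$ and $j=2$. Let $\EE_1(W)=\set{A_i[q_1]=W}$, and let $\EE_2(W)$ be the event that $W\cup Z_1\cup Z_2$ contains a hyperedge of $\HH^{(3)}$, where $Z_1,Z_2$ are independent copies of $V_2[p]$. The decomposition gives
\[
\alpha(P^{(3)},\HH^{(3)})\ \ge\ \sum_{\emptyset\neq W\subseteq A_i}\Pr{\EE_1(W)}\Pr{\EE_2(W)}.
\]
It then suffices to prove two bounds, as in \Cref{claim:ee1,claim:ee2}: that $\Pr{A_i[q_1]\nemp}\ge 1-1/e$, and that $\Pr{\EE_2(\set{s})}\ge 1/8$ for every $s\in A_i$. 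The second bound extends to every non-empty $W$ because $\EE_2$ is monotone.

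For the first bound I use the hitting-set argument. Every element of $A_i$ has degree less than $2^i$ and $|E(A_i)|\ge t/(36\log^2 t)$, so $|A_i|\ge t/(36\cdot 2^i\log^2 t)$. Hence $q_1|A_i|\ge 2\lambda/(36\log^2 t)\ge 1$ for $\lambda=5000\log^3 n$ and $t\le n^4$, and if $q_1=1$ the bound is trivial. I also check that $p\le 1$. Since $i\ge\log(\trt)$, we have $q_2=1/2^{i-1}\le 2/\trt$, so $p^2=rq_2\le 2\sqrt{r/t}$. This is at most a constant in the relevant regime, and otherwise $p$ is capped at $1$, which only helps.

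For the second bound, fix $s=(v,v')\in A_i$ and set $\ell=2^{i-1}$. Let $M\subseteq V_2$ be the set of vertices $u$ that lie in some hyperedge of $\HH^{(3)}$ together with $s$. Every $u\in M$ is adjacent to both $v$ and $v'$. Each cycle $(v,u,v',u')\in\CC_\psi$ through $s$ corresponds to a non-edge $\set{u,u'}$ of $G_s=G[M]$. Since the degree of $s$ lies in $[\ell,2\ell)$, the number $d$ of such non-edges is $\Theta(\ell)$; the factor $2$ from ordered versus unordered pairs is absorbed into constants. Every vertex of $M$ is incident to one of these non-edges, so $k=|M|\le 2d$.

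The key step, and the one I expect to be the main obstacle, is showing that $G_s$ contains no $r$-clique. Suppose $K\subseteq M$ is an $r$-clique and $u\in K$. Because $v$ is adjacent to all of $M$, the set $K\cup\set{v}$ is a clique, and it contains the edge $(v,u)$. But $(v,u)$ is an edge of some $r$-light induced $C_4$ in $\CC_\psi$, which contradicts lightness. With this in hand, Tur\'an's theorem gives $d\ge k^2/(4r)$, exactly as in \cref{eq:turan}, and therefore $k=O(\sqrt{r\ell})$.

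Finally, I run the second-moment computation verbatim. Fix $\Theta(\ell)$ of these non-edges and let $X$ count those with both endpoints sampled at rate $p=\sqrt{r/\ell}$. The number of pairs of non-edges sharing exactly one vertex is $O(\ell k)=O(\ell\sqrt{r\ell})$. This gives $\Pr{X>0}\ge \ell p^2/(1+O(\sqrt{r\ell}\,p)+\ell p^2)=\Omega(1)$. The one detail to track is that with two independent copies $Z_1,Z_2$, the edge $\set{u,u'}$ must be sampled with $u\in Z_1$ and $u'\in Z_2$. I handle this exactly as in the diamond case, or alternatively by noting that each ordered orientation is hit with the same probability up to constants. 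Combining the two bounds gives $\alpha(P^{(3)},\HH^{(3)})\ge (1-1/e)/8=\tg(1)$.
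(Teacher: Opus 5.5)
Your proposal is correct and follows the paper's proof, which simply declares this lemma identical to that of \Cref{prop2:balanced2 helper} with the three-vertex lightness assumption replaced by the pairwise one; your argument that an $r$-clique $K\subseteq M$ would make $K\cup\{v\}$ a clique containing the cycle edge $(v,u)$ is exactly that substitution. The one step you leave implicit, as does the paper, is that \emph{every} non-edge $\{u,u'\}$ of $G[M]$ yields an $r$-light colorful cycle $(v,u,v',u')$ through $s$, which is what bounds the total number of non-edges by $O(\ell)$ before Tur\'an is applied; this holds because each of its four edges already lies on a light cycle through $s$, namely the cycles that witness $u,u'\in M$.
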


\begin{proof}[Proof of \Cref{thm2:helper} using \Cref{prop3:balanced2 helper}]
    Let $P^{(3)}=(q_1,p,p)$ be as in \Cref{prop3:balanced2 helper}, meaning that $p=\sqrt{r\cdot q_2}$, and additionally $q_1$ and $q_2$ are as in $P^{(2)}$.
    We refine $P^{(3)}$ into $P=(p_1,p_2,p,p)$, where $p_1\cdot p_2=\TO{q_1}$, using \Cref{thm:refining sampling vector}, which guarantees that $\alpha(P,\HH^{(4)})\geq \alpha(P^{(3)},\HH^{(3)})/(2\log n)$.
    By substituting the values of $q_1,q_2$ and $p$, we get
    \begin{align*}
        w(P)=\TO{w(P^{(3)})}=\TO{q_1\cdot p^2}=\TO{q_1\cdot r\cdot q_2}=\TO{r/t}\;.
    \end{align*}
    We have that
    \begin{align*}
        8\alpha(P,\Gp)
        \geq \alpha(P,\GG^{(4)})
        \geq \alpha(P,\HH^{(4)})
        \geq \frac{\alpha(P^{(3)},\HH^{(3)})}{2\log n}
        \geq \tg(1)\;,
    \end{align*}
    The inequalities follow for the following reasons.
    The first inequality follows by the same argument as \Cref{claim2:xyz}, the second follows because $\HH^{(4)}$ is a subhypergraph of $\GG^{(4)}$, the third follows by the guarantees of \Cref{thm:refining sampling vector}, and the last inequality follows by \Cref{prop3:balanced2 helper}.
    We are left with showing that $\wc(P)=\TO{\tsr}$.

    \begin{itemize}
        \item
              If $p_1$ (or $p_2$) is the smallest coordinate of $P$,
              then
              \begin{align*}
                  \wc(P)=\frac{w(P)}{\min(P)}\leq \frac{w(P)}{q_1} =\frac{\To(r/t)}{\tsr}=\To(\tsr)\;.
              \end{align*}
              Here we used $p_1\cdot p_2= \tg(q_1)$ by \Cref{thm:refining sampling vector}, and $q_1=\tg(\tsr)$ by the definition of $P^{(2)}$.
        \item If $p=\min(P)$, then $\wc(P)= \frac{\To(r/t)}{p}\leq\frac{\To(r/t)}{\tsr}=\To(\tsr)$.
              The first inequality follows from $q_2\geq \Omega(1/t)$, which follows by the definition of $P^{(2)}$.
    \end{itemize}
\end{proof}
Note that the statement of \Cref{prop3:balanced2 helper} is identical to \Cref{prop2:balanced2 helper}, and so is the proof.
The only difference is that in \Cref{prop2:balanced2 helper} we use the assumption that no three vertices in some hyperedge belong to the same $r$-clique in $G$.
Here, we have a stronger assumption: no \emph{two} vertices from some hyperedge belong to the same $r$-clique in $G$.
We now complete the proof of \Cref{thm2:helper} by proving \Cref{claim:simpler}.
\begin{proof}[Proof of \Cref{claim:simpler}]
    We use the second moment method.
    Let $E=(e_1,\ldots,e_{m})$ be the set of edges in $H$.
    Let $X_i$ be the indicator random variable that is $1$ if edge $e_i$ is sampled into $H[q]$, and $0$ otherwise.
    Define $X=\sum_{i=1}^{m} X_i$, where $\Exp{X}=m\cdot q^2$.
    We bound $\Exp{X^2}$ by $\Exp{X}(1+ 2\Delta\cdot q + \Exp{X})$, as we previously did in \Cref{prop2:balanced2 helper}.
    Now we apply the second moment method to show that $\Pr{X>0}\geq 1/4$.
    \begin{align*}
        \Pr{X>0}
        \geq \frac{\Exp{X}^2}{\Exp{X^2}}
        \geq \frac{\Exp{X}^2}{\Exp{X}\brak{1 + 2\Delta\cdot q + \Exp{X}}}
        = \frac{\Exp{X}}{1 + 2\Delta\cdot q + \Exp{X}}\;.
    \end{align*}
    If we show that $\Delta\cdot q\leq \Exp{X}$, we get that
    \begin{align*}
        \Pr{X>0}
        \geq \frac{\Exp{X}}{1 + 2\Delta\cdot q + \Exp{X}}
        \geq 1/4\;,
    \end{align*}
    which completes the proof.
    To complete the proof, we plug in $q=\sqrt{r/m}$ and $\Delta=\sqrt{rm}$, to obtain
    \begin{align*}
        \Delta\cdot q = \sqrt{rm}\cdot \sqrt{r/m} = r \leq m\cdot q^2 = \Exp{X}\;. & \qedhere
    \end{align*}
\end{proof}

\subsection{4-SUM}
\newcommand{\algSum}{{\normalfont\textsf{Find4SUM}}\xspace}
\renewcommand{\PP}{\QQ}
The main result of this subsection is as follows.
\begin{theorem}
    \label{thm:4-sum}
    There is an approximate counting algorithm for \dsum that runs in time $n^{2-x/3+o(1)}$, where $t=n^x$ for $x\in[0,1]$ is the number of solutions to the \dsum instance.
\end{theorem}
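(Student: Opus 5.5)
The plan is to run the same witness-sensitive sampling framework as for $r$-light diamonds, with a 4-SUM-specific colorful detector and a refinement step that uses the fact that any three numbers fix the fourth. We first reduce to a colorful instance. Take $\tilde{O}(1)$ copies of the input sets with random $4$-colorings, so that $\tilde\Theta(t)$ solutions $a_1+a_2+a_3+a_4=0$ become colorful, with $a_i$ in color class $V_i$. For colorful detection on sampled sets of sizes $n_1\ge n_2\ge n_3\ge n_4$, we would form the pairwise sums of the two smallest classes, hash them, and search the pairwise sums of the two largest. This costs about $n_1n_2+n_3n_4$, and after sampling with $P=(p_1,p_2,p_3,p_4)$ it becomes $\tilde O(n^2(p_1p_2+p_3p_4))$. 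An alternative split is $(n_1n_3+n_2n_4)$. The aim is therefore a vector $P$ with $\alpha(P)=\tilde\Omega(1)$ and a pairing cost of $\tilde O(n^{2}/t^{1/3})$.

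Next we build the hypergraph and refine. Let $\GG^{(2)}$ be the bipartite graph on $(V_1\times V_2)\sqcup(V_3\times V_4)$ whose $t$ edges are the colorful solutions. Bucket the left vertices by degree and pick a heavy bucket $A_i$ of degree about $d=2^i$. This gives $P^{(2)}=(q_1,q_2)$ with $q_1\approx d/t$ and $q_2\approx 1/d$, as in the diamond section. Fix a pair $s=(a_1,a_2)\in A_i$. Its $d$ partner pairs $(a_3,a_4)$ satisfy $a_3+a_4=-(a_1+a_2)$, so each $a_3$ appears in at most one of them, and the partner pairs form a perfect matching of size $d$. This is the ``three numbers determine the fourth'' property. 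Sampling $V_3,V_4$ each with probability $p$ hits a matched pair with probability about $\min(1,dp^2)$, so $p\approx 1/\sqrt d$ suffices, and the independent edges make the calculation simpler than the second-moment argument in \Cref{claim:ee2}.

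Applying \Cref{lemma:refine simple} (i.e.\ \Cref{thm:refining sampling vector}) symmetrically to the $V_1\times V_2$ side gives $p_1p_2=\tilde O(d/t)$. The matching structure in the other direction should similarly allow $p_1,p_2\approx\sqrt{d/t}$ up to logs. Then $\alpha(P)=\tilde\Omega(1)$ and the cost is $n^2(p_1p_2+p_3p_4)\approx n^2(d/t+1/d)$.

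Balancing gives $d\approx\sqrt t$, but $d$ is fixed by the instance. The remedy is to symmetrize, running the same bucketing from the $(V_3\times V_4)$ side and also over the other pairings $\{1,3\}\{2,4\}$ and $\{1,4\}\{2,3\}$, and to use the flexibility to choose which classes are paired in the detector. This is where I expect the main obstacle: proving that for every instance some pairing and bucket yields cost $n^{2}/t^{1/3}$. I would attempt it with a win-win. If the heavy-bucket degree $d$ satisfies $d\le t^{1/3}$ we use the above vector. Otherwise, by a counting argument, few pairs carry many solutions, and a \Cref{claim2:z_1 and p12}-style bound shows a coordinate can be taken as small as $t^{-1/3}$ for an unbalanced split $(n_1n_3+n_2n_4)$. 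Pinning down the exponent $1/3$ through this case analysis is the delicate part.

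Finally, since $|\FF^4|=\tilde O(1)$, we plug the colorful detector in as the independence oracle of \Cref{thm:dlm}. Round-robin over all $P\in\FF^4$ with subsampled restrictions gives approximate counting with $n^{o(1)}$ overhead, for total time $n^{2-x/3+o(1)}$.
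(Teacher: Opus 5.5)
\textbf{Gap: the low-degree regime.} You have the right ingredients: the pair graph $\GG^{(2)}$, degree buckets, \Cref{thm:refining sampling vector} on the $V_1\times V_2$ side, and the observation that the partner pairs of a fixed $(a_1,a_2)$ form a matching. The case analysis meant to produce the exponent $1/3$, however, fails exactly where a new idea is needed.

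By your own formula, the bucket vector costs $n^2(d/t+1/d)$. This is $\tf(n^2t^{-1/3})$ only for $t^{1/3}\le d\le t^{2/3}$. In your branch ``$d\le t^{1/3}$: use the above vector'' it costs at least $n^2/d$, which is $\Theta(n^2)$ when $d=O(1)$.

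Re-pairing does not rescue this in general. For $d=O(1)$ you have $p_3=p_4=\Theta(1)$, so every pairing costs at least $n^2\max(p_1,p_2)$. Your assertion $p_1,p_2\approx\sqrt{d/t}$ is unjustified: the pairs of a bucket may all share one $a_1$, which forces $p_1=\Omega(1)$.

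What is missing is a vector that splits the $1/t$ mass evenly \emph{across} the bipartition of $\GG^{(2)}$, not within one side. The paper gets it from \Cref{claim-4sum:small heavy}. Bucketing $X=A_1\times A_2$, and then bucketing the neighbourhood of the chosen bucket in $Y=A_3\times A_4$, shows one of two things:
\begin{itemize}
\item some heavy bucket on one side has degree at least $\sqrt t$; or
\item there is a subgraph $\FF^{(2)}\subseteq\GG^{(2)}$ with $\tg(t)$ edges and maximum degree at most $\sqrt t$ on both sides.
\end{itemize}
In the second case, sample every pair on both sides with probability $q\approx 1/\sqrt{|E(\FF^{(2)})|}=\tf(t^{-1/2})$. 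This hits an edge with probability at least $1/4$ by the second-moment bound of \Cref{claim:simpler} with $r=1$. Refining each side separately with \Cref{thm:refining sampling vector} then gives $p_1p_2,\,p_3p_4=\tf(t^{-1/2})$. The pairing $\{1,2\},\{3,4\}$ therefore costs $\tf(n^2/\sqrt t)$ by \Cref{claim-4sum:sp2}, and no balance inside a side is needed.

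\textbf{High-degree regime.} Here you use the wrong tool. \Cref{claim2:z_1 and p12} is an upper bound $\alpha(P)\le x_3p_i$: it forces the coordinates of a good vector to be large, so it cannot show that one may be taken small. The correct step is your matching observation pushed to the extreme. If a heavy bucket has degree $d\ge t^{2/3}$, take $P=(1,1,p,p)$ with $p=\Theta(\log n/\sqrt d)$. Every pair of the bucket is retained, and its partner matching of $\Omega(d)$ disjoint edges is hit with high probability (\Cref{lemma7:refine}). The cross pairing $\{1,3\},\{2,4\}$ then costs $n^2p=\tf(n^2t^{-1/3})$. Your default detector, which sums the two largest classes together, would cost $n^2$ on this vector.

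The middle range $\sqrt t\le d\le t^{2/3}$ is handled by your $(d/t,1/d)$ vector and its mirror image on $Y$. Together the three cases give $\mu_2(\mathcal{Q})=\tf(t^{-1/3})$, hence detection in $\tf(n^2t^{-1/3}+n)$.

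\textbf{Counting step.} The paper obtains counting from the counting-to-detection reduction of \Cref{thm:4-sum-approx-counting}. Plugging the sensitive detector directly into \Cref{thm:dlm} would need an extra argument: its oracle queries may contain few or no solutions, so you would have to show they still run within the witness-sensitive bound.
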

We consider distinguishing between no solution to \dsum and $t$ solutions, for $t\in[1,n]$:
\begin{theorem}[Detection]
    \label{thm-4sum:detection}
    There is a randomized algorithm that distinguishes between no solution to \dsum and $t$ solutions for $t\in[1,n]$, and runs in time $\TO{n^{2}/t^{1/3}+n}$ \whp.
\end{theorem}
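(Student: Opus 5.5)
The plan is to run the sampling framework of \Cref{ssec:subsampling} with a 4-SUM detector in place of \algCD, and to use the refinement technique of \Cref{ssec:refine} to show that some sampling vector $P=(p_1,p_2,p_3,p_4)$ hits a solution with probability $\tg(1)$ while the detector on the sampled instance is cheap.

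\textbf{Setup and detector.} First reduce, by color coding as in \Cref{lemma:reduction}, to a 4-partite instance with lists $A,B,C,D$ and $\tilde\Theta(t)$ solutions $a+b+c+d=0$. For sampled sublists $A',B',C',D'$, the detector uses meet-in-the-middle with hashing. For any pairing $\{\{i,j\},\{k,l\}\}$ of the four lists, it tabulates all pair sums on each side and looks for opposite values. This costs $\TO{n+n^2(p_ip_j+p_kp_l)}$ w.h.p., and we may take the minimum over the three pairings. Exactly as in \Cref{claim:runtime}, the total time is $\TO{n+n^2\cdot \min_{\text{pairings}}(p_ip_j+p_kp_l)}$ for the best vector $P$ with $\alpha(P)=\tg(1)$. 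It therefore suffices to exhibit such a $P$ with this cost at most $\TO{t^{-1/3}}$.

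\textbf{Refinement using the 4-SUM structure.} Form the bipartite graph $\GG^{(2)}$ with parts $A\times B$ and $C\times D$ and one edge per solution, so it has $t$ edges. Bucket the elements of $A\times B$ by degree and fix a heavy bucket of degree $\approx \ell$; this holds $\tg(t)$ edges. Take $q_1=\TO{\ell/t}$, so that the bucket is hit (as in \Cref{claim:ee1}). The key structural fact replaces the Tur\'an step of \Cref{claim:ee2}: any three numbers determine the fourth, so for a fixed pair $(a,b)$ its $\ell$ partner pairs $(c,d)$ form a \emph{matching} between $C$ and $D$. Sampling $C$ and $D$ each with probability $p=\ell^{-1/2}$ therefore hits one of these partner pairs with constant probability, by the second moment method with no correlated pairs (the analogue of \Cref{prop2:balanced2 helper}, with $r=1$). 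Then refine $(q_1,p,p)$ into $(p_1,p_2,p,p)$ via \Cref{lemma:refine simple}, which gives $p_1p_2=\TO{\ell/t}$ and loses only an $O(\log n)$ factor in the hitting probability.

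\textbf{Cost analysis (main obstacle).} The difficulty is that \Cref{lemma:refine simple} does not let us control the balance between $p_1$ and $p_2$; in the worst case $p_1=1$ and $p_2=\TO{\ell/t}$. I would bound the cost by two pairings.
\begin{itemize}
\item The pairing $\{A,B\},\{C,D\}$ costs $n^2(p_1p_2+p^2)=\TO{n^2(\ell/t+1/\ell)}$, which is at most $n^2 t^{-1/3}$ when $\ell\le t^{2/3}$.
\item The pairing $\{A,C\},\{B,D\}$ costs $n^2 p(p_1+p_2)\le \TO{n^2\ell^{-1/2}}$, since $p_1,p_2\le 1$. This is at most $n^2 t^{-1/3}$ when $\ell\ge t^{2/3}$.
\end{itemize}
Taking the minimum gives $\TO{n^2/t^{1/3}}$ in every case. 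The restriction $t\le n$ ensures that all probabilities are at least $1/n$, so the coordinates lie in $\FF$, and that the additive $\TO{n}$ sampling and hashing overhead is the only other term. The main thing I expect to need care with is verifying that the matching argument survives the conditioning in the decomposition step (\Cref{claim:independence of e1e2}), and that rounding coordinates to powers of two keeps both the $\min$ over pairings and the hitting probability within polylog factors.
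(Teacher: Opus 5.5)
There is a genuine gap: your cost analysis does not cover small $\ell$. In the pairing $\{A,B\},\{C,D\}$ the second term is $p^2=1/\ell$. So the bound $\ell/t+1/\ell\le 2t^{-1/3}$ needs $t^{1/3}\le\ell\le t^{2/3}$, not merely $\ell\le t^{2/3}$. The cross pairing gives $\ell^{-1/2}$, which is even worse in that range.

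For $\ell<t^{1/3}$ the vector you build is genuinely bad, not just loosely analyzed. Take an instance whose $t$ solutions are $(a_0,b_j,c_j,d_j)$ with the $b_j,c_j,d_j$ pairwise distinct, e.g.\ $a_0=0$, $b_j=jM$, $c_j=jM^2$, $d_j=-j(M+M^2)$ for large $M$. Then $\GG^{(2)}$ is a perfect matching, so $\ell=1$ and $P^{(3)}=(\tilde O(1/t),1,1)$. Any refinement $(p_1,p_2,1,1)$ with hitting probability $\tilde\Omega(1)$ must have $p_1=\tilde\Omega(1)$, since $a_0$ must be sampled. Every pairing then puts two of $p_1,p_3,p_4$ together, so your cost measure $\min_{\text{pairings}}(p_ip_j+p_kp_l)$ is $\tilde\Omega(1)$ and the bound degrades to $\tilde O(n^2)$. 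Every pair on \emph{both} sides has degree $1$ here, so choosing a different bucket or swapping the roles of $A\times B$ and $C\times D$ does not help.

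The missing ingredient is the third case of \Cref{claim-4sum:small heavy}. Suppose no heavy bucket of large degree exists after restricting to the low-degree part of $A\times B$ and re-bucketing $C\times D$. Then you have a subgraph of $\GG^{(2)}$ with $\tilde\Omega(t)$ edges and maximum degree at most $\sqrt t$ on both sides. Sampling \emph{both} pair-sides at rate $\tilde\Theta(t^{-1/2})$ hits an edge with constant probability by a second-moment bound (\Cref{claim:simpler} with $r=1$). Refining each side separately then gives $p_1p_2,\,p_3p_4=\tilde O(t^{-1/2})$, so the pairing $\{A,B\},\{C,D\}$ costs $\tilde O(n^2/\sqrt t)$ however unbalanced each refinement is. This is exactly what sidesteps the lack of balance control you flagged.

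You also need the symmetric case: a heavy large-degree bucket on the $C\times D$ side, where the matching property holds equally for a fixed $(c,d)$. With both cases added, your remaining argument essentially matches the paper's. That includes the matching property replacing the Tur\'an step, and the $(1,1,p,p)$ vector with the cross pairing when $\ell\ge t^{2/3}$.
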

This improves upon \cite{CEW25}, which runs in time $\max(n^2,\tf(n^2\cdot \sqrt{n/t}))$, for $t\leq n^3$. 
In the regime $t\leq n$, our algorithm is the first non-trivial witness-sensitive algorithm for \dsum.
For $t\geq n^3$ the na\"ive algorithm of sampling four elements and checking if they sum to zero runs in time $O(n^4/t)$, which is better than both algorithms, and also sublinear for $t\gg n^3$.

\paragraph{Preliminaries.} We review the \dsum problem and the actual variant we consider.
A \dsum instance consists of one array $A$ of $n$ integers, and the goal is to determine whether there exist four elements $a_1,a_2,a_3,a_4\in A$ such that $\sum_{i=1}^4 a_i=0$, or to approximate the number $t$ of such quadruples, which we refer to as \emph{solutions}.
A variant of the problem considers four arrays $A_1,A_2,A_3,A_4$ of integers
where the goal is to determine whether there exist four elements
$a_i\in A_i$ such that $\sum_{i=1}^4 a_i=0$, i.e., one element from each array.
We refer to \dsum instances on one array as single-array \dsum instances, and to instances on four arrays as four-array \dsum instances.
There is a well-known reduction from single-array \dsum to four-array \dsum: assuming the entries of $A$ lie in $[-U,U]$, define new arrays $A_i=A + 3^i \cdot 10U$ for $i\in[3]$, and let $A_4=A-U'$, where $U'\triangleq 10U\cdot(3+3^2+3^3)$.
Then, every solution $(a_1,a_2,a_3,a_4)$ to the single-array \dsum instance corresponds to $4!$ solutions in the four-array \dsum instance.

\Cref{thm:4-sum} follows by combining \Cref{thm-4sum:detection} with the following reduction by \cite{CEW25} from approximate counting to detection:
\begin{theorem}[{Approximate Counting to Detection \cite{CEW25}}]
    \label{thm:4-sum-approx-counting}
    Given a randomized algorithm that detects a solution to four-array 
    \dsum in time $f(n,t)$, with probability at least $2/3$, assuming there are at most $t$ solutions, there exists a randomized algorithm that approximates the number of solutions to four-array \dsum within a $(1\pm \eps)$ factor, \whp, in time $\poly(\log n,1/\eps)\cdot f(n,\poly(\log n/\eps))$.
\end{theorem}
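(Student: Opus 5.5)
The plan is to reduce approximate counting to a \emph{threshold test} of the form ``does a random sub-instance contain at least one solution?'', and to implement each such test with the given detector at witness level $\poly(\log n/\eps)$. Concretely, for each guess $T=2^j$, $j=0,1,\ldots,4\log n$, we subsample every array $A_i$ independently with a rate $p_i$, chosen so that the expected number of surviving solutions is $\mu=\Theta(\log n/\eps^2)$. We then estimate the fraction of trials in which a solution survives. If the guess is right ($T\approx t$), the surviving count $Y$ concentrates around $\mu$. If $T\gg t$, then $\Pr{Y>0}\le \Exp{Y}\ll 1$. Scanning $j$ downward from large values, we locate the threshold scale of $t$ up to a constant factor. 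We then refine to a $(1\pm\eps)$ estimate by reading off $\Exp{Y}$ from the empirical value of $\Pr{Y>0}$ in the regime where $Y$ is approximately Poisson with small mean.

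The key structural fact is that in \dsum any three elements determine the fourth. Hence two distinct solutions share at most two elements. I will use this to control $\mathrm{Var}(Y)$ with a second-moment computation, exactly as in the proof of \Cref{claim:ee2}. I would take the sampling vector balanced, $p_i=(\mu/t)^{1/4}$. Then $\Exp{Y^2}$ exceeds $\Exp{Y}^2+\Exp{Y}$ only through pairs of solutions sharing one or two elements. These contribute terms of the form $N_1p^7+N_2p^6$, where $N_k$ is the number of ordered pairs of solutions sharing exactly $k$ elements. If the balanced vector fails because these terms are too large, I would fall back on the refinement machinery of \Cref{thm:refining sampling vector}, ranging over all vectors in $\FF^4$ and bucketing the elements by degree. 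The goal is a vector $P$ with $w(P)=\tf(1/t)$ under which $Y$ is concentrated or Poisson-like. This step, obtaining concentration despite heavily shared elements (for instance, many solutions through one $a_1$), is the main obstacle. If the variance calculation is not clean, my first attempt would be the Dell--Lapinskas--Meeks route of \Cref{thm:dlm}, using the colorful independence oracle with the four arrays as the four parts.

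To make each oracle call cheap, I would run the detector on the sub-instance only with the time budget $f(n,K)$, where $K=\poly(\log n/\eps)$. Every oracle call is made on a subsampled instance whose solution count is either $0$ or, with constant probability, at least $K$. We can enforce the latter by choosing the rates so that $\mu\ge K$ and appealing to the concentration above. On such calls the detector answers correctly with probability $2/3$, which we amplify by taking a majority over $O(\log n)$ runs. Calls on instances with zero solutions never err, since the detector never reports a nonexistent solution. Finally, the total number of calls is $O(\log n)$ guesses times $\poly(\log n,1/\eps)$ trials per guess. Union bounding over the calls gives high-probability correctness and total time $\poly(\log n,1/\eps)\cdot f(n,\poly(\log n/\eps))$.
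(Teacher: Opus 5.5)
The paper gives no proof of this statement; it is imported from \cite{CEW25}. Your proposal has a genuine gap at its core, in the threshold/Poisson estimator.

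That estimator needs the number $Y$ of surviving solutions to concentrate around $t\cdot w(P)$, or to be approximately Poisson. The fact that three elements determine the fourth only rules out overlaps of size three; it gives no bound on $N_1,N_2$. Concretely, take any three-array instance with $t$ solutions of $a_2+a_3+a_4=0$, and let $A_1$ consist of $0$ and huge dummy values. Every solution then uses $a_1=0$, so $\Pr{Y>0}\le p_1$ for every sampling vector. At your balanced rates with $T=t$ this gives $\Pr{Y>0}\le(\mu/t)^{1/4}$, although $\Exp{Y}=\mu$. A scan based on $\Pr{Y>0}$ therefore places the threshold near $T\approx\mu$ instead of $T\approx t$.

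Falling back on \Cref{lemma:discovery} or \Cref{thm:refining sampling vector} does not repair this. They only provide some $P$ with $w(P)=\tf(1/t)$ and $\Pr{Y>0}=\tg(1)$. Together with $\Pr{Y>0}\le t\cdot w(P)$, this determines $t$ only up to polylogarithmic factors. Beyond that, the hitting probability at a fixed $P$ depends on the overlap structure and not just on $t$, so no $(1\pm\eps)$ estimate can be read off from it.

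Switching to \Cref{thm:dlm} is the right move, since it gives a $(1\pm\eps)$ estimate from $\poly(\log n/\eps)$ oracle calls with no concentration assumption. However, your bound on the cost of each call does not go through. The oracle is queried on adaptively chosen sub-instances $(X_1,\ldots,X_4)$, not on your subsampled instances, and nothing keeps their solution counts out of the range $[1,K)$. The oracle must answer correctly even on sub-instances with a single solution; consider an input with $t=1$. Your claim that every call has $0$ or at least $K$ solutions rests on the concentration that fails above. On a query with $1\le s<K$ solutions, a detector cut off at time $f(n,K)$ has no guarantee, and its wrong ``no'' is not a random error that repetition removes. Under the literal hypothesis, which guarantees the detector only on instances with at most $t$ solutions, the same problem appears in reverse, since queries may contain many solutions.

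What is missing is a deterministic way to give every non-empty query at least $K$ solutions. For instance, with $M>2K$, replace $X_1$ by $\{aM+i: a\in X_1, i\in[K]\}$, replace $X_2$ and $X_3$ by $\{bM: b\in X_2\}$ and $\{bM: b\in X_3\}$, and replace $X_4$ by $\{bM-i: b\in X_4, i\in[K]\}$. Each solution of the query then appears exactly $K$ times, so every run succeeds with probability $2/3$ on a non-empty query, and repetition amplifies this. \Cref{thm:dlm} then yields the claimed bound, up to the factor-$K$ growth of two arrays.
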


For the rest of this section, we prove \Cref{thm-4sum:detection}.
Before that, we provide some points on the difference between witness-sensitive detection for \dsum and for induced diamond or $C_4$ detection.
Recall that in \Cref{ssec:subsampling} we had a graph $G$, with $4$-coloring $\phi:V(G)\to[4]$, and $t$ colorful induced diamonds.
We consider the set of all sampling vectors $\PP$ defined by:
\begin{align*}
    \PP \triangleq\set{P\in \FF^4 \mid \alpha(P)\geq \gamma}\;,
\end{align*}
in words, the set of all sampling vectors $P\in \PP$ such that $H\gets G[P]$ contains an induced colorful diamond with probability at least $\gamma$, where $\gamma=\tg(1)$.
We then defined a ``measure'' $\mu(P)$ for every sampling vector $P\in \PP$ by $\mu(P)=\wc(P)\triangleq w(P)/\min(P)$, and we defined $\mu(\PP)=\min_{P\in \PP}\mu(P)$.
We also showed a colorful detection algorithm \textsc{DetectDiamond} that on $H\gets G[P]$ runs in time $\tf(\MM{n,n,n\cdot \mu(P)})$ \whp.
Finally, in \Cref{ssec:main-light} we showed that $\PP$ has a sampling vector $P$ with $\mu(P)=\TO{\sqrt{r/t}}$, which gave us the desired running time.

Here, we follow the same approach with the following differences.
First, we consider \dsum instances instead of colorful graphs with induced diamonds. The main property that we use is that every triplet of elements in $A$ participates in at most one solution, since the fourth element is determined by the first three. This is not the same assumption as $r$-light diamonds or $4$-cycles.
Second, given an instance $A^{(4)}\triangleq(A_1,A_2,A_3,A_4)$ of four-array \dsum, and a sampling vector $P=(p_1,p_2,p_3,p_4)$, we define the sampled instance $B^{(4)}=(B_1,B_2,B_3,B_4)$, where each $B_i$ is obtained by sampling each element of $A_i$ independently with probability $p_i$.
We can again define $\PP$ to be all sampling vectors $P$ such that the sampled instance $B^{(4)}\gets A^{(4)}[P]$ contains a solution with probability at least $\gamma=\tg(1)$.
However, we have to define a new measure $\mu(P)$ for the time it takes to solve $B^{(4)}\gets A^{(4)}[P]$.
Such a measure was already defined in \cite{CEW25} for unbalanced $k$-array \ksum detection; here, we restate it for four-array \dsum.
\begin{theorem}\label{cor:53y}
    Let $A_1,A_2,A_3,A_4$ be four arrays of integers, with sizes $n_i=|A_i|$ for $i\in[4]$.
    Let $\Pi$ be the set of all permutations of $[4]$.
    Detection takes
    \begin{align*}
        \min_{\pi\in \Pi} O(\max(n_{\pi(1)}n_{\pi(2)},n_{\pi(3)}n_{\pi(4)}))\;.
    \end{align*}
\end{theorem}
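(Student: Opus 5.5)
We prove \Cref{cor:53y} with the standard meet-in-the-middle approach for unbalanced $4$-SUM, applied to the best grouping of the four arrays into two pairs. Fix a permutation $\pi\in\Pi$ and write $X=A_{\pi(1)}$, $Y=A_{\pi(2)}$, $Z=A_{\pi(3)}$, $W=A_{\pi(4)}$. A solution $(a_1,a_2,a_3,a_4)$ with $a_i\in A_i$ corresponds bijectively to a pair of pairwise sums $(x+y,\, z+w)$ with $x\in X$, $y\in Y$, $z\in Z$, $w\in W$ and $(x+y)+(z+w)=0$. So detecting a solution reduces to deciding whether the multiset $S_1=\set{x+y}$ intersects the multiset $-S_2=\set{-(z+w)}$.

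The algorithm has three steps.

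\begin{enumerate}
\item Enumerate all $n_{\pi(1)}n_{\pi(2)}$ sums of $S_1$ and insert them into a hash table. With standard universal hashing, and since the integers fit in $O(1)$ words, this takes $O(n_{\pi(1)}n_{\pi(2)})$ expected time.
\item Enumerate all $n_{\pi(3)}n_{\pi(4)}$ sums $z+w$ and query whether $-(z+w)$ is in the table. This takes $O(n_{\pi(3)}n_{\pi(4)})$ expected time.
\item Report a solution if and only if some query succeeds. Storing a witness pair $(x,y)$ alongside each key lets us also return the solution.
\end{enumerate}

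Correctness is immediate from the bijection above. The total time is $O(n_{\pi(1)}n_{\pi(2)}+n_{\pi(3)}n_{\pi(4)})=O(\max(n_{\pi(1)}n_{\pi(2)},\,n_{\pi(3)}n_{\pi(4)}))$. We can compute the sizes $n_i$ up front and pick the $\pi$ minimizing this quantity, at $O(1)$ cost since there are only $24$ permutations. This gives the stated $\min_{\pi}$ bound.

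The only delicate point is making the bound hold in a worst-case or high-probability sense rather than in expectation, because the surrounding framework reasons \whp. We handle this in one of two ways. The first option is a hashing scheme with \whp\ linear construction time, such as perfect hashing. The second is to stop the run after a constant times the expected time and repeat $O(\log n)$ times; this costs only polylogarithmic overhead, which is absorbed in the $\tilde O$ bounds used later. Alternatively, sorting $S_1$ gives a deterministic $O(n_{\pi(1)}n_{\pi(2)}\log n)$ bound with binary-search queries, again within a logarithmic factor.

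Finally, since this result is a restatement from \cite{CEW25}, we may also cite it directly. The argument above is the self-contained justification.
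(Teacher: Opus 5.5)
Your proposal is correct and follows the paper's approach: the paper, restating the bound from \cite{CEW25}, likewise forms the pairwise sums of one pair of arrays and the negated pairwise sums of the other pair, and checks for a common value by hashing, in time $O(n_{\pi(1)}n_{\pi(2)}+n_{\pi(3)}n_{\pi(4)})$ for the best grouping $\pi$. Your extra remarks on turning the expected hashing time into a high-probability or deterministic bound (via sorting, at a $\log n$ cost) are consistent with how the paper uses this result inside $\tilde O$ bounds.
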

We use a simpler bound, which is immediately implied by \Cref{cor:53y}.
\begin{corollary}
    \label{lem:4-sum-detection}
    Assume that $n_i=|A_i|$ and that $n\geq n_1\geq n_2\geq n_3\geq n_4$.
    Then detection takes $O(\max(n_1\cdot n_4,n_2\cdot n_3))$ time.
\end{corollary}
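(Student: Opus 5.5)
The corollary follows from \Cref{cor:53y} by choosing one permutation and computing the resulting bound. Since the time is a minimum over all $\pi\in\Pi$, any single permutation gives an upper bound.

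The natural choice pairs the largest array with the smallest, and the two middle arrays with each other. That is, $\pi$ maps $\{\pi(1),\pi(2)\}=\{1,4\}$ and $\{\pi(3),\pi(4)\}=\{2,3\}$. With this choice, \Cref{cor:53y} bounds the detection time by
\[
O(\max(n_{\pi(1)}n_{\pi(2)},\,n_{\pi(3)}n_{\pi(4)})) = O(\max(n_1 n_4,\, n_2 n_3)),
\]
which is exactly the claimed bound.

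For intuition, this is the usual meet-in-the-middle algorithm. Form the sumset $A_1+A_4$ of size $n_1n_4$ and the negated sumset $-(A_2+A_3)$ of size $n_2n_3$. Then test whether the two sets intersect, using hashing or sorting, in time linear in their sizes up to the stated model assumptions.

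Nothing here is a real obstacle. The only care needed is to check that the labeling convention $n_1\ge n_2\ge n_3\ge n_4$ does not conflict with the array indices used in \Cref{cor:53y}. This is fine because the minimum there ranges over all permutations, so relabeling the arrays is free. One could also remark that this pairing is in fact optimal among the three possible pairings, but the corollary only needs the upper bound.
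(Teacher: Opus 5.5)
Your proposal is correct and matches the paper, which also obtains the bound from \Cref{cor:53y} by taking the pairing $\{1,4\},\{2,3\}$. The paper likewise describes the underlying algorithm as computing the Minkowski sums $A_1+A_4$ and $-(A_2+A_3)$ and testing for an intersection by hashing.
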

The detection algorithm simply computes the Minkowski sum $A_1+A_4$ and $-(A_2+A_3)$ and checks for an intersection using hashing, in time $O(n_1n_4+n_2n_3)$.
We are ready to define our measure for four-array \dsum:
for every sampling vector $P=(p_1,p_2,p_3,p_4)$ we define a \emph{measure} $\mu_2(P)$ by
\begin{align}
    \mu_2(P)\triangleq\min_{\pi\in \Pi}\max(p_{\pi(1)}\cdot p_{\pi(2)},p_{\pi(3)}\cdot p_{\pi(4)})\;. \label{eq:mu2}
\end{align}
We also define $\mu_2(\PP)=\min_{P\in \PP}\mu_2(P)$.
Using simple concentration bounds we get:
\begin{claim}\label{claim:53mu2 single}
    The running time of $\algSum(B^{(4)})$ where $B^{(4)}\gets A^{(4)}[P]$ is bounded by $\TO{n^2\cdot \mu_2(P)+n}$ \whp.
\end{claim}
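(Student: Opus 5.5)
The plan is to mirror \Cref{claim:runtime w3}: a size bound on the sampled arrays, followed by \Cref{lem:4-sum-detection}. Let $P=(p_1,p_2,p_3,p_4)$ and $B^{(4)}=(B_1,B_2,B_3,B_4)\gets A^{(4)}[P]$, so that $|B_i|$ is a sum of at most $n$ independent Bernoulli$(p_i)$ variables with mean $n p_i$. The first step applies \Cref{thm:chernoff 6} with threshold $\hat n_i\triangleq \max(6np_i, c\log n)$ for a suitable constant $c$. This gives $\Pr{|B_i|\geq \hat n_i}\leq 2^{-c\log n}$, and a union bound over $i\in[4]$ shows that \whp $|B_i|\le \hat n_i = O(np_i+\log n)$ for all $i$ simultaneously.

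The second step converts these sizes into a running time. Fix the permutation $\pi\in\Pi$ attaining the minimum in \eqref{eq:mu2}. Instead of using the sorted form of \Cref{lem:4-sum-detection}, I will use the general statement \Cref{cor:53y}, which bounds the detection time by $O(\max(n_{\pi(1)}n_{\pi(2)}, n_{\pi(3)}n_{\pi(4)}))$ for every $\pi$. Concretely, the algorithm computes the Minkowski sums $B_{\pi(1)}+B_{\pi(2)}$ and $-(B_{\pi(3)}+B_{\pi(4)})$ and intersects them by hashing, in addition to $O(n)$ time for reading and sampling the input. Substituting the size bounds gives
\[
(np_{\pi(1)}+\log n)(np_{\pi(2)}+\log n)\le n^2p_{\pi(1)}p_{\pi(2)} + 2n\log n + \log^2 n ,
\]
where each cross term $np_j\log n$ is bounded by $n\log n$ because $p_j\le 1$. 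The same bound holds for the other pair. Taking the maximum of the two, the time is $O(n^2\mu_2(P)+n\log n)=\TO{n^2\mu_2(P)+n}$.

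The third step addresses the one subtle point. $\algSum$ should not have to know the optimal $\pi$ in advance, since $\pi$ depends on $P$, which is fixed and known. Either $\algSum$ computes $\pi$ from the observed sizes $|B_i|$, or it tries all $24$ permutations with constant overhead. With the observed sizes, the minimizing permutation achieves a cost of at most the cost of $\pi$ evaluated at the true sizes, and those are bounded as above. The only real obstacle is the additive $\log n$ in the Chernoff bound when $np_i$ is tiny, and the cross-term estimate above absorbs it into the $+n$ term. This completes the plan.
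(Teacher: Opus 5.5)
Your proposal is correct and follows the route the paper intends (the paper only remarks that the claim follows from simple concentration bounds). You use a Chernoff bound giving $|B_i| = O(np_i+\log n)$ for all $i$ \whp, plug it into the pairing bound of \Cref{cor:53y} for the pairing that realizes $\mu_2(P)$, and absorb the cross terms into the additive $\TO{n}$.

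One small caution: trying all $24$ pairings costs only a constant factor if the runs are interleaved and you stop at the first one that finishes; run sequentially, they cost as much as the worst pairing. The cleaner choice is your other option: sort the observed sizes and pair the largest with the smallest, as in \Cref{lem:4-sum-detection}.
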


\paragraph{The Algorithm.}
We are now ready to describe our detection algorithm for four-array \dsum, which is the \emph{canonical} algorithm: simply try all sampling vectors $P\in\set{1,1/2,1/4,\ldots, 1/n}^4$ in a round-robin fashion.
That is, given an instance $A^{(4)}\triangleq(A_1,A_2,A_3,A_4)$ of four-array \dsum, consider all sampling vectors $P=(p_1,p_2,p_3,p_4)$, and let $B^{(4)}=(B_1,B_2,B_3,B_4)$ be the sampled instance in which each $B_i$ is obtained by sampling each element of $A_i$ independently with probability $p_i$.
We use $B^{(4)}\gets A^{(4)}[P]$ to denote this sampling process.
Then we run the detection algorithm from \Cref{lem:4-sum-detection} on the sampled instance $B^{(4)}$.
We use $\algSum$ to denote the algorithm that we run on each sample $B^{(4)}$.
\begin{claim}\label{claim:53can}
    The running time of the canonical algorithm is $\TO{n^2\cdot \mu_2(\PP)+n}$ \whp.
\end{claim}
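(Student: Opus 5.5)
The approach mirrors the proof of \Cref{claim:runtime} for \algSensitive: run \algSum on independent samples for every $P\in\set{1,1/2,\ldots,1/n}^4$ in round-robin fashion, and halt as soon as one run reports a solution. Correctness is one-sided, since a sampled instance $B^{(4)}$ has a solution only if $A^{(4)}$ does. Hence, if there is no solution, every run ends with a rejection. The cost in that case is the sum over all runs, which is $\To(n^2)$; this is expected, because the witness-sensitive bound only concerns the case $t\geq 1$, where the rate is $\mu_2(\PP)$.

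Now suppose $t\geq 1$, and fix $P^*\in\PP$ attaining $\mu_2(\PP)$. For every $P\in\FF^4$ we create $\ell=\Theta(\log n/\gamma)=\To(1)$ independent samples $B^{(4)}_1,\ldots,B^{(4)}_\ell\gets A^{(4)}[P]$.
\begin{itemize}
\item \emph{Success.} Since $\alpha(P^*)\geq\gamma$, the probability that none of the $\ell$ samples for $P^*$ contains a solution is $(1-\gamma)^\ell\leq n^{-c}$. So \whp some sample $B^{(4)}_j$ for $P^*$ contains a solution, and \algSum on it detects one.
\item \emph{Cost of that run.} By \Cref{claim:53mu2 single}, \algSum on $B^{(4)}_j$ runs in $\To(n^2\mu_2(P^*)+n)$ time \whp. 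Here we use the standard concentration fact that $|B_i|\leq O(np_i\log n+\log n)$ for all $i$ simultaneously.
\item \emph{Round-robin overhead.} The number of parallel executions is $|\FF^4|\cdot\ell=\To(1)$. Round-robin therefore stops within $\To(1)$ times the length of the successful run, for a total of $\To(n^2\mu_2(P^*)+n)$.
\item \emph{Union bound.} The successful sample and the concentration events both hold \whp, and there are only $\To(1)$ of them to union-bound over.
\end{itemize}

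The expected obstacle is a subtlety in how the round-robin is accounted. The success event for $P^*$ and the running-time bound must hold together, so we must argue that some run finishes with a positive answer within the stated budget. The alternative, reasoning about runs that are still in progress, does not give this. The fix is that each execution is simulated step by step. If the $P^*$-run that contains a solution needs $T$ steps, then after $\To(1)\cdot T$ total steps that run has completed and output \yes. Its running-time bound from \Cref{claim:53mu2 single} holds \whp, and correctness does not depend on it.

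To keep the $+n$ term valid, we assume the sampling step itself takes $O(n)$ time per vector. With $\To(1)$ vectors this fits within the bound.
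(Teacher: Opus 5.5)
Your proposal is correct and follows essentially the same route as the paper. Fix a vector $P^*$ attaining $\mu_2(\PP)$, take $\To(1)$ independent samples per vector so that some sample for $P^*$ contains a solution \whp, bound that run by \Cref{claim:53mu2 single}, and absorb the $\To(1)$ round-robin overhead. Your extra remarks only make explicit details the paper leaves implicit: the step-by-step simulation, the $O(n)$ sampling cost, and trying every vector in $\{1,1/2,\ldots,1/n\}^4$ rather than ``each $P\in\PP$''.
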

\begin{proof}[Proof of \Cref{claim:53can}]
    We run the detection algorithm $\algSum$ on $\tilde{\Theta}(1)$ independent samples of $B^{(4)}\gets A^{(4)}[P]$ for each $P\in \PP$.
    Overall, there are $\tg(1)$ instances.
    Let $P\in \PP$ be a sampling vector with $\mu_2(P)=\mu_2(\PP)$.
    For every $B^{(4)}\gets A^{(4)}[P]$ the running time of $\algSum(B^{(4)})$ is $\TO{n^2\cdot \mu_2(P)+n}$ \whp, by \Cref{claim:53mu2 single}.
    Moreover, at least one of the $\tilde{\Theta}(1)$ samples taken by $P$
    contains a solution \whp, meaning that the algorithm detects a solution in time $\TO{n^2\cdot \mu_2(P)+n}$ \whp.
    The overall running time overhead of the round-robin over all $P$ is only a $\poly(\log n)$ factor.
\end{proof}

We move on to the analysis of the canonical algorithm, where the novelty of this subsection lies.
Specifically, we prove that there exists a sampling vector $P\in \PP$ with $\mu_2(P)=\TO{1/t^{1/3}}$.
\paragraph{Analysis.}
The key structural observation here is that every triplet of elements in $A$ participates in at most one solution to \dsum, since the fourth element is determined by the first three.
This allows us to refine the sampling vector in a way that is not possible in general hypergraphs, and therefore to obtain a faster detection algorithm for four-array \dsum.
As before, we translate the problem to hypergraphs, so that we can use our refined sampling technique.
Let $\EE$ denote the set of such quadruples, and let $t=|\EE|$.
We define the following hypergraph to represent the \dsum instance.
\begin{align*}
    \GG^{(1)} & =((A_1\times A_2\times A_3\times A_4),\EE)\;,      \\
    \GG^{(2)} & =((A_1 \times A_2) \sqcup (A_3 \times A_4),\EE)\;, \\
    \GG^{(3)} & =((A_1 \times A_2) \sqcup A_3 \sqcup A_4,\EE)\;,   \\
    \GG^{(4)} & =(A_1 \sqcup A_2 \sqcup A_3 \sqcup A_4,\EE)\;.
\end{align*}

Recall that $\QQ$ is the set of all sampling vectors $P^{(4)}=(p_1,p_2,p_3,p_4)$ for $A^{(4)}=(A_1,A_2,A_3,A_4)$ or $\GG^{(4)}$, with $\alpha(P^{(4)},\GG^{(4)})=\tg(1)$, and that $\mu_2(\QQ)=\min_{P^{(4)}\in \QQ}\mu_2(P^{(4)})$.
Finally, recall that the running time of our canonical algorithm is bounded by $\TO{n^2\cdot \mu_2(\QQ)+n}$ \whp.
By \Cref{claim:53can} it suffices to prove that $\mu_2(\QQ)\leq \tf(1/t^{1/3})$, to complete the proof of \Cref{thm-4sum:detection}.
\begin{lemma}
    \label{lemma-4sum:sp}
    $\mu_2(\PP)\leq 1/t^{1/3}$.
\end{lemma}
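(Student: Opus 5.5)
We follow the refinement recipe of \Cref{ssec:main-light}, now with the structural fact that every triple of elements lies in at most one solution (playing the role of $r$-lightness with $r=1$). The target is a vector $P$ with $\mu_2(P)=\TO{1/t^{1/3}}$ and $\alpha(P,\GG^{(4)})=\tg(1)$.

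\emph{Step 1: degree-bucket the pairs.} Work in the bipartite graph $\GG^{(2)}$ with sides $X=A_1\times A_2$ and $Y=A_3\times A_4$. Bucket the pairs in $X$ by their degree into classes $X_i$ with degrees in $[2^{i-1},2^i)$, and take a heavy class carrying $\tg(t)$ solutions, writing $d=2^{i-1}$. The triple property controls such a pair $s=(a_1,a_2)$. Its $d$ partner pairs $(a_3,a_4)$ use $d$ distinct $a_3$'s and $d$ distinct $a_4$'s, since fixing $a_1,a_2,a_3$ determines $a_4$. So the partners of $s$ form a perfect matching between $d$ elements of $A_3$ and $d$ elements of $A_4$. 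The same argument applied to the $Y$-side shows that a pair $(a_3,a_4)$ is joined only to pairs $(a_1,a_2)$ forming a matching.

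\emph{Step 2: choose the vector by cases on $d$, split at $t^{1/3}$.}

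In the case $d\ge t^{1/3}$, there are roughly $|X_i|\approx t/d$ such pairs. Sample $A_3$ and $A_4$ at rate $p_3=p_4=\TO{1/\sqrt d}$. For a fixed $s$, a matching edge survives with probability $\Omega(1)$; this follows from the second moment method, which is easy here because the $d$ edges of a matching are disjoint and hence independent. We then need to hit a surviving $s\in X_i$ by refining the $X$-coordinate. Splitting $q_1=\TO{d/t}$ via \Cref{thm:refining sampling vector} (decomposition/preservation) gives $p_1p_2=\TO{d/t}$, chosen balanced so that $p_1p_2\le p_3p_4=1/d$ whenever $d^2\ge t$. Then $\mu_2\le\max(p_1p_2,p_3p_4)$. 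Pairing $\{1,3\}$ with $\{2,4\}$ would be even better, and we would optimise over the pairing $\pi$ in \eqref{eq:mu2}.

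In the case $d<t^{1/3}$, every pair in $X$ has small degree. We apply the symmetric bucketing on the $Y$-side, as in \Cref{claim:small heavy}: either some heavy $Y$-class has large degree, which is handled symmetrically, or we obtain a subgraph of $\GG^{(2)}$ with $\tg(t)$ edges and maximum degree $\le t^{1/3}$ on both sides. In that subgraph, the matching structure plus bounded degree make each of the four coordinates behave like a low-degree structure. We then apply \Cref{claim:simpler}-style second moment sampling with $q=\tf(t^{-1/3})$, giving the balanced vector $P=(t^{-1/4},\dots)$-ish. More precisely, I would aim for $p_1p_2=p_3p_4=\tf(t^{-1/3})$ with each coordinate about $t^{-1/6}$ and check $\alpha=\tg(1)$ by a direct second moment computation over solutions. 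Two solutions share at most two elements, since sharing three forces equality. The pairs sharing two elements are counted using the degree bound $t^{1/3}$, and the pairs sharing one element are controlled by the analogous per-element degree.

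\emph{Main obstacle.} The crux is the second moment bound in the low-degree case. The variance from solutions sharing a single element is not controlled by the pair-degree bound alone, since a single element $a_1$ may lie in many solutions. I would first try to add one more bucketing layer on single-element degrees and fold high-degree elements into a coordinate sampled at a higher rate. The resulting product bound $\max(p_1p_2,p_3p_4)$ is what produces the $t^{1/3}$ threshold, and balancing the two cases at $d=t^{1/3}$ is what I expect to yield exactly $\mu_2(\PP)\le \tf(1/t^{1/3})$.
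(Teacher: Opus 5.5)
Your large-degree case is essentially the paper's argument: the partners of $s$ form a matching, you sample $A_3,A_4$ at rate $\tf(d^{-1/2})$, and you hit $X_i$ at rate $\tf(d/t)$. The gap is in the low-degree case, which you flag as the crux but leave open, and the route you sketch for it fails as stated.

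\textbf{Why the balanced vector fails.} A balanced element-level vector with every coordinate near $t^{-1/6}$ cannot work in general. Nothing in the low-degree regime prevents a single element $a_1^\ast\in A_1$ from lying in all $t$ solutions; all pair degrees can equal $1$. Then $\alpha\le p_1\approx t^{-1/6}$, so no element-level second-moment computation can give $\tg(1)$.

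\textbf{The missing idea.} You never need balance inside a pair, so the second moment should be done on pairs, not on elements.
\begin{itemize}
\item On the subgraph $\FF^{(2)}$ with $m'=\tg(t)$ edges and maximum degree below $t^{1/3}$ (hence at most $\sqrt{m'}$), apply \Cref{claim:simpler} with $r=1$. Keeping each pair of $X\sqcup Y$ with probability $q=1/\sqrt{m'}=\tf(t^{-1/2})$ retains an edge with probability at least $1/4$.
\item Then \Cref{claim-4sum:sp2}, i.e.\ two applications of \Cref{thm:refining sampling vector}, turns $(q,q)$ into $(p_1,p_2,p_3,p_4)$ with $p_1p_2=\tf(q)$ and $p_3p_4=\tf(q)$, losing only a polylogarithmic factor in $\alpha$.
\item The refinement theorem chooses an unbalanced split by itself when some element is popular. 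This is exactly the extra bucketing layer you were trying to build by hand.
\item Under the pairing $\{1,2\},\{3,4\}$, $\mu_2$ sees only these two products. So this case gives $\mu_2=\tf(t^{-1/2})$, and the matching property is not needed here at all.
\end{itemize}

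\textbf{Slips in the large-degree case.}
\begin{itemize}
\item \Cref{thm:refining sampling vector} does not let you choose how $q_1$ is split into $p_1p_2$. It only guarantees that one of the $\log n$ candidate splits works.
\item The bound $\max(p_1p_2,p_3p_4)=\tf(\max(d/t,1/d))$ is at most $\tf(t^{-1/3})$ only for $t^{1/3}\le d\le t^{2/3}$. Your condition ``$d^2\ge t$'' has the inequality reversed.
\item Consequently, for $d>t^{2/3}$ the cross pairing $\{1,3\},\{2,4\}$ is required, not an optional improvement. It works for every split: since $p_1,p_2\le 1$, you get $\max(p_1p_3,p_2p_4)\le\tf(d^{-1/2})\le\tf(t^{-1/3})$. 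The paper achieves the same thing by taking $P^{(4)}=(1,1,p,p)$.
\end{itemize}
So the exponent $1/3$ comes from balancing $d/t$ against $d^{-1/2}$ at $d=t^{2/3}$ inside the large-degree case. It does not come from balancing your two cases at $d=t^{1/3}$. The paper splits the cases at $d=\sqrt t$; with the pair-level argument above, your split at $t^{1/3}$ works equally well.
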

For the rest of this section, we prove \Cref{lemma-4sum:sp}.
Let $\PP^{(2)}$ be the set of all sampling vectors $P^{(2)}=(q_1,q_2)$ for $\GG^{(2)}$, with $\alpha(P^{(2)},\GG^{(2)})=\tg(1)$.
Define
\begin{align*}
    \rho(P^{(2)})\triangleq \max(q_1,q_2) \quad \text{and} \quad \rho(\PP^{(2)})\triangleq \min_{P^{(2)}\in \PP^{(2)}}\rho(P^{(2)})\;.
\end{align*}

The quantity $\rho(\PP^{(2)})$ is useful because $(i)$ it is easier to bound than $\mu_2(\PP)$, and $(ii)$ it upper bounds $\mu_2(\PP)$.

\begin{claim}\label{claim-4sum:sp2}
    $\mu_2(\PP)\leq \tf(\rho(\PP^{(2)}))$.
\end{claim}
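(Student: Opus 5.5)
Fix $P^{(2)}=(q_1,q_2)\in\PP^{(2)}$ achieving $\rho(\PP^{(2)})$, so that $\alpha(P^{(2)},\GG^{(2)})=\tg(1)$ and $q_1,q_2\le\rho\triangleq\rho(\PP^{(2)})$. The plan is to refine $P^{(2)}$ twice, once on each side, and then bound $\mu_2$ directly from the refined vector.

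\textbf{First refinement.} View $\GG^{(2)}$ with its second part $A_3\times A_4$ as the contracted part. Apply \Cref{thm:refining sampling vector} with $k=2$. This gives some $i$ and the vector $P^{(3)}=(q_1,p_3,p_4)$ for $\GG^{(3)}$ with
\[
p_3=\min(1,12\log n\cdot q_2 2^{i}),\qquad p_4=\min(1,4/2^{i}),
\]
such that $\alpha(P^{(3)},\GG^{(3)})\ge \alpha(P^{(2)},\GG^{(2)})/(2\log n)$. In particular $p_3p_4\le 48\log n\cdot q_2$.

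\textbf{Second refinement.} Apply the theorem again with the parts reordered, so that $A_1\times A_2$ is the last, contracted part; the statement is symmetric under permuting parts. This yields $P^{(4)}=(p_1,p_2,p_3,p_4)$ with $p_1p_2\le 48\log n\cdot q_1$ and
\[
\alpha(P^{(4)},\GG^{(4)})\ge \alpha(P^{(2)},\GG^{(2)})/(2\log n)^2=\tg(1).
\]
Rounding each coordinate up to a power of two only increases $\alpha$ and changes products by constant factors. Hence the rounded $P^{(4)}$ lies in $\PP$, possibly after adjusting the $\tg(1)$ threshold, which is implicit in the definitions.

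\textbf{Bounding $\mu_2$.} Take the permutation that pairs $\{1,2\}$ and $\{3,4\}$. Then
\[
\mu_2(P^{(4)})\le \max(p_1p_2,\,p_3p_4)\le 48\log n\cdot\max(q_1,q_2)=\tf(\rho(P^{(2)})).
\]
Minimizing over $P^{(2)}\in\PP^{(2)}$ gives $\mu_2(\PP)\le\tf(\rho(\PP^{(2)}))$.

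The main obstacle is only bookkeeping. First, \Cref{thm:refining sampling vector} must apply to the contracted part in either position; this follows from its proof, which only uses the decomposition over the uncontracted parts (\Cref{claim:independence of e1e2}), and that is valid for any ordering. Second, the loss factors $(2\log n)^2$ and $48\log n$ must stay polylogarithmic, which they do. No structural property of \dsum is needed for this claim; the structure enters only when bounding $\rho(\PP^{(2)})$ itself.
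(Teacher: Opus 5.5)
Your proposal is correct and follows the paper's proof. You apply \Cref{thm:refining sampling vector} twice, once contracting each side, to obtain $P^{(4)}$ with $p_1p_2=\tf(q_1)$, $p_3p_4=\tf(q_2)$ and $\alpha(P^{(4)},\GG^{(4)})=\tg(1)$, and then bound $\mu_2(P^{(4)})$ using the pairing $\{1,2\},\{3,4\}$; your remarks on reordering parts and on the polylogarithmic losses only make explicit what the paper leaves implicit.
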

\begin{proof}[Proof of \Cref{claim-4sum:sp2}]
    Given a sampling vector $P^{(2)}=(q_1,q_2)$ for $\GG^{(2)}$ with $\alpha(P^{(2)},\GG^{(2)})=\tg(1)$, we can use \Cref{thm:refining sampling vector} twice, to obtain a sampling vector $P^{(4)}=(p_1,p_2,p_3,p_4)$
    where $p_1\cdot p_2=\tf(q_1)$ and $p_3\cdot p_4=\tf(q_2)$, such that $\alpha(P^{(4)},\GG^{(4)})=\tg(1)$.
    After sampling subarrays using $P^{(4)}$, we get four arrays $B_1,B_2,B_3,B_4$ where $|B_i|=O(np_i)$, for $i\in[4]$.
    We have
    \begin{itemize}
        \item $\abs{B_1}\cdot \abs{B_2}=\tf(n^2p_1p_2)=\tf(n^2 q_1)$
        \item $\abs{B_3}\cdot \abs{B_4}=\tf(n^2p_3p_4)=\tf(n^2 q_2)$
    \end{itemize}
    Therefore
    \begin{align*}
        \mu_2(\PP)\leq \mu_2(P^{(4)})\leq \tf(\max(q_1,q_2))=\tf(\rho(P^{(2)}))\;.
    \end{align*}
    We showed that for every $P^{(2)}\in \PP^{(2)}$, it holds that $\mu_2(\PP)\leq \tf(\rho(P^{(2)}))$. Therefore, $\mu_2(\PP)\leq \tf(\rho(\PP^{(2)}))$, which completes the proof.
\end{proof}

\renewcommand{\trt}{\sqrt{t}}
\renewcommand{\tsr}{\sqrt{1/t}}
Next, we show how to use \Cref{claim-4sum:sp2} to prove \Cref{lemma-4sum:sp}.
Let $X\triangleq A_1\times A_2$ and $Y\triangleq A_3\times A_4$ denote the two parts of the hypergraph $\GG^{(2)}$.
The first step is to choose an induced subhypergraph of $\GG^{(2)}$ in which all vertices in $X$ (or in $Y$) have similar degrees.
For every element $x\in X$ let $\deg^{(2)}(x)$ denote the degree of $x$ in the hypergraph $\GG^{(2)}$, and similarly for every $y\in Y$.
Recall that $t$ is the total number of hyperedges in $\GG^{(2)}$.
Partition the elements of $X$ (and $Y$) into $\log t + 1$ buckets, where each bucket contains elements of similar degree in $\GG^{(2)}$:
\begin{align*}
    X_i & = \set{x\in X:\deg^{(2)}(x)\in[2^{i-1},2^i)}\;, \\
    Y_i & = \set{y\in Y:\deg^{(2)}(y)\in[2^{i-1},2^i)}\;.
\end{align*}
We say that a set $X_i$ is \emph{heavy} if $\abs{E(X_i)}\geq t/(6\log t)$,
where $E(X_i)$ is the set of hyperedges of $\GG^{(2)}$ that contain an element from $X_i$, and similarly for $Y_i$.
Note that there is at least one index $i\in[\log(t)]$ such that $X_i$ is heavy; otherwise $E(X)< t/(6\log t)\cdot \log(t)<t/6$, a contradiction.
Let $i$ be the smallest index such that $X_i$ is heavy.
There is also at least one index $j\in[\log(t)]$ such that $Y_j$ is heavy.

\begin{claim}\label{claim-4sum:small heavy}
    At least one of the following holds:
    \begin{enumerate}
        \item There exists an index $i\geq \log(\trt)$ such that $E(X_i)\geq \frac{t}{36\log^2 t}$.
        \item There exists an index $j\geq \log(\trt)$ such that $E(Y_j)\geq \frac{t}{36\log^2 t}$.
        \item There exists a subgraph $\FF^{(2)}\subseteq\GG^{(2)}$ with at least $\tg(t)$ edges and maximum degree at most $\trt$.
    \end{enumerate}
\end{claim}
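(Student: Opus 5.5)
The plan is to mirror the proof of \Cref{claim:small heavy}, a two-level bucketing argument on the bipartite graph $\GG^{(2)}$ with parts $X$ and $Y$. Here $\trt$ denotes $\sqrt{t}$. First I pick the largest index $i$ for which $X_i$ is heavy, meaning $\abs{E(X_i)}\geq t/(6\log t)$; such an index exists by the averaging argument already given. If $i\geq \log(\trt)$, then condition (1) holds, since $t/(6\log t)\geq t/(36\log^2 t)$.

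Otherwise every element of $X_i$ has degree less than $2^i\leq \trt$. Let $Y'=N_{\GG^{(2)}}(X_i)$ and let $\HH^{(2)}$ be the bipartite graph on $X_i\sqcup Y'$ whose edge set is $E(X_i)$, so it has at least $t/(6\log t)$ edges. I bucket $Y'$ by degree in $\HH^{(2)}$:
\[
Y'_j=\set{y\in Y' : \deg_{\HH^{(2)}}(y)\in[2^{j-1},2^j)}, \qquad j\in[\log t+1].
\]
By the same averaging, some $Y'_j$ carries at least a $1/(6\log t)$ fraction of the edges of $\HH^{(2)}$, hence at least $t/(36\log^2 t)$ edges. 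Take the largest such $j$.

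If $j\geq\log(\trt)$, I want to conclude condition (2). The one point needing care is that (2) is stated in terms of the degree buckets $Y_j$ of the full graph $\GG^{(2)}$, while $Y'_j$ is defined through degrees in $\HH^{(2)}$. I would handle this the way the $C_4$ proof implicitly does: read condition (2) as asserting the existence of a set of $Y$-elements of degree at least $\trt$, uniform up to a factor $2$ in some subgraph, incident to $\tg(t)$ hyperedges. That subgraph, here $\HH^{(2)}$, is all the later argument uses. If a literal statement about the original buckets is required, I would rebucket: each $y\in Y'_j$ has $\deg^{(2)}(y)\geq 2^{j-1}$, so the at least $t/(36\log^2 t)$ edges at $Y'_j$ spread over the buckets $Y_{j'}$ with $j'\geq j$. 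One of these buckets then receives a $1/\log t$ fraction of those edges, which costs one more polylogarithmic factor, absorbed into $\tg(\cdot)$.

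If instead $j<\log(\trt)$, let $\FF^{(2)}$ consist of the edges between $Y'_j$ and $N_{\HH^{(2)}}(Y'_j)$. This subgraph has at least $t/(36\log^2 t)=\tg(t)$ edges. On the $X$ side, degrees are below $\trt$ because these vertices lie in $X_i$ and $i<\log(\trt)$. On the $Y$ side, degrees are below $2^j\leq\trt$. Hence the maximum degree is at most $\trt$, which is condition (3).

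The step I expect to be the main obstacle is the bucket mismatch in condition (2) just described. Every other step is routine averaging, and none of it uses any property specific to \dsum.
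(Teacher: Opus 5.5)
Your argument is the paper's. The paper proves this claim by deferring to the proof of \Cref{claim:small heavy}, which is exactly this two-level bucketing, and your handling of conditions (1) and (3) is correct.

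You are also right that the step ``if $j\ge\log\sqrt{t}$ then (2) holds'' is not literally justified. $Y'_j$ is a bucket of $\mathcal{H}^{(2)}$-degrees, while (2) refers to the buckets $Y_j$ of $\mathcal{G}^{(2)}$-degrees, and the paper passes over this point without comment. But neither of your remedies closes the case. Reading (2) as a statement about a subgraph changes the claim. Rebucketing the edges at $Y'_j$ among the at most $\tfrac12\log t+2$ original buckets $Y_{j'}$ with $j'\ge j$ loses a further $\Theta(\log t)$ factor, which lands below the explicit threshold $t/(36\log^2 t)$. Condition (2) has no $\tilde\Omega(\cdot)$ to absorb this loss, so as written you prove a weaker statement.

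The mismatch disappears if, at the second level, you bucket $Y'$ by degree in $\mathcal{G}^{(2)}$ rather than in $\mathcal{H}^{(2)}$. That is, use the sets $Y_j\cap Y'$, while still measuring heaviness by edges of $\mathcal{H}^{(2)}$. Averaging over the at most $\log t+1$ buckets gives some $j$ such that at least $|E(\mathcal{H}^{(2)})|/(6\log t)\ge t/(36\log^2 t)$ edges of $\mathcal{H}^{(2)}$ meet $Y_j\cap Y'$.
\begin{itemize}
\item If $j\ge\log\sqrt{t}$, all these edges lie in $E(Y_j)$, so (2) holds verbatim.
\item If $j<\log\sqrt{t}$, these edges form a subgraph in which every degree is at most the corresponding $\mathcal{G}^{(2)}$-degree. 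That degree is below $2^i<\sqrt{t}$ on the $X$ side and below $2^j<\sqrt{t}$ on the $Y$ side, so (3) holds.
\end{itemize}

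In fact the nesting is unnecessary. Suppose (1) and (2) both fail, and take $t\ge 2$. Then on each side, the buckets of index at least $\log\sqrt{t}$ together carry fewer than $(\tfrac12\log t+2)\cdot t/(36\log^2 t)<t/10$ edges. So at least $4t/5$ edges join an $X$-element and a $Y$-element that both have $\mathcal{G}^{(2)}$-degree below $\sqrt{t}$, and these edges give (3).
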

The proof of \Cref{claim-4sum:small heavy} follows the proof of \Cref{claim:small heavy}, and is therefore omitted.

Assuming that condition $(3)$ of \Cref{claim-4sum:small heavy} holds, we prove \Cref{lemma-4sum:sp} using \Cref{claim:simpler}, which we restate below:
\ClaimSimpler* 
\begin{proof}[Proof of \Cref{lemma-4sum:sp} Assuming Condition (3)]
    Let $\FF^{(2)}$ be the subgraph of $\GG^{(2)}$ as stated in condition $(3)$ of \Cref{claim-4sum:small heavy}.
    Let $m'=\abs{E(\FF^{(2)})}$, and apply \Cref{claim:simpler} with $r=1$.
    Let $q=1/\sqrt{m'}=\tf(\tsr)$, and define $P^{(2)}=(q,q)$.
    Then, by \Cref{claim:simpler}, we have that $\alpha(P^{(2)},\FF^{(2)})\geq 1/4$, meaning that $P^{(2)}\in \PP^{(2)}$, and therefore that $\rho(\PP^{(2)})\leq \max(P^{(2)})=q=\tf(\tsr)$, proving
    \Cref{lemma-4sum:sp} by \Cref{claim-4sum:sp2}.
\end{proof}

\renewcommand{\cg}{\lambda}
We prove \Cref{lemma-4sum:sp} under the assumption that either condition $(1)$ or $(2)$ of \Cref{claim-4sum:small heavy} holds.
We may assume without loss of generality that (1) holds, i.e., there exists an index $i\geq \log(\ts)$ such that $E(X_i)\geq \frac{t}{36\log^2 t}$.
If the second condition holds, we can rename the vertices in $A_1,A_2$ and $A_3,A_4$, as well as the roles of $X$ and $Y$, and proceed with the same analysis.
Let $E(X_i)$ be the set of hyperedges of $\GG^{(2)}$ that contain an element from $X_i$.
We restrict our attention to the hypergraph $\HH^{(j)}$ with the same vertex set as $\GG^{(j)}$ and the edge set $E(X_i)$ for every $j\in\{2,3,4\}$.
Define
\begin{align}
    P_i^{(2)}=(q_1=\frac{2^{i+1}\cdot \cg}{t},q_2=\frac{1}{2^{i-1}})\;,\label{eq24:P2}
\end{align}
where $\cg\triangleq 64\log^{2}n$, and $i$ is the index of the chosen heavy set $X_i$.
\begin{claim}\label{claim7:p2}
    $\alpha(P_i^{(2)},\GG^{(2)})=\tg(1)$.
\end{claim}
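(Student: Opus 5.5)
We need to show that $P_i^{(2)}=(q_1,q_2)$ with $q_1=2^{i+1}\lambda/t$ and $q_2=1/2^{i-1}$ hits a hyperedge of $\GG^{(2)}$ with probability $\tg(1)$. We restrict attention to the subhypergraph $\HH^{(2)}$, which is a bipartite graph with parts $X$ and $Y$ whose edge set is $E(X_i)$. Since $\HH^{(2)}\subseteq \GG^{(2)}$ and hitting an edge is monotone, $\alpha(P_i^{(2)},\GG^{(2)})\ge \alpha(P_i^{(2)},\HH^{(2)})$. Every vertex of $X_i$ has degree in $[2^{i-1},2^i)$ in $\HH^{(2)}$, and vertices of $X\setminus X_i$ are isolated. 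We may therefore apply \Cref{claim7:bipartite} to the bipartite graph $F=(X_i,Y)$ with $d=2^{i-1}$, so that all degrees lie in $[d,2d]$, and with $m=|E(X_i)|\ge t/(36\log^2 t)$.

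The first step is to bound the two failure terms. For $\Lambda_2$: if $q_2<1$, then $d\cdot q_2=2^{i-1}/2^{i-1}=1$, so $\Lambda_2\le e^{-1}$ and $1-\Lambda_2\ge 1-1/e$. For $\Lambda_1$: if $q_1<1$, then
\[
\frac{m q_1}{2d}\;\ge\; \frac{t}{36\log^2 t}\cdot\frac{2^{i+1}\cdot 64\log^2 n}{t\cdot 2^{i}} \;=\; \frac{128\log^2 n}{36\log^2 t}\;\ge\; 1,
\]
using $t\le n^4$, so that $\log t\le 4\log n$. Hence $1-\Lambda_1\ge 1-1/e$. If $q_1$ exceeds $1$ as written, we cap it at $1$, and then $\Lambda_1=0$.

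With both bounds in hand, \Cref{claim7:bipartite} gives $\alpha(P_i^{(2)},\HH^{(2)})\ge (1-1/e)^2=\Omega(1)$, which proves the claim. The constant $\lambda=64\log^2 n$ absorbs the factor $36\log^2 t$ lost when passing to the heavy bucket.

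The only step needing care is whether the bucket used here should be the heavy bucket taken with respect to $\GG^{(2)}$ degrees or the refined one from \Cref{claim-4sum:small heavy}. In case (1), the degrees of elements of $X_i$ are measured in $\GG^{(2)}$, and $E(X_i)$ consists of all hyperedges at those vertices. So the degrees in $\HH^{(2)}$ coincide with the $\GG^{(2)}$ degrees, which lie in $[2^{i-1},2^i)$, and the hypothesis of \Cref{claim7:bipartite} holds exactly. This is the one point to verify.
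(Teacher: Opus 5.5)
Your argument is the paper's proof: restrict to the edge set $E(X_i)$ and apply \Cref{claim7:bipartite} with $d=2^{i-1}$. Then $dq_2=1$ controls $\Lambda_2$ and the heaviness of $X_i$ controls $\Lambda_1$; your extra care about isolated vertices of $X\setminus X_i$ and about capping $q_1$ at $1$ is fine. One arithmetic slip: $t\le n^4$ only gives $\frac{128\log^2 n}{36\log^2 t}\ge \frac{128}{576}=\frac{2}{9}$, not $\ge 1$. To get $\ge 1$, use $t\le n$, which is the regime of \Cref{thm-4sum:detection}; alternatively keep the factor $1-e^{-2/9}$, which still yields the $\tg(1)$ conclusion.
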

By \Cref{claim7:p2} we have $P_i^{(2)}\in \PP^{(2)}$.
Since $2^i\geq \ts$ by our choice of $i$, it follows that $\max(P_i^{(2)})=q_1$, which implies $\rho(\PP^{(2)})\leq q_1$.
In the case where $2^{i}\leq \tf(t^{2/3})$, we obtain $q_1\leq \tf(1/t^{1/3})$ and hence $\rho(\PP^{(2)})\leq \tf(1/t^{1/3})$; \Cref{lemma-4sum:sp} then follows from the inequality $\mu_2(\PP)\leq \rho(\PP^{(2)})$ given by \Cref{claim-4sum:sp2}.
It therefore suffices to consider the remaining case $2^{i}>\tf(t^{2/3})$, which gives $q_2<\tf(1/t^{2/3})$.

We now leverage the fact that every triplet of vertices participates in at most one hyperedge, which permits a balanced refinement of the sampling vector.
\begin{lemma}
    \label{lemma7:refine}
    Refine $P_i^{(2)}=(q_1,q_2)$ from \Cref{eq24:P2} to $P^{(3)}=(q_1,p,p)$, where $p=5\log n \sqrt{q_2}$.
    Then $\alpha(P^{(3)},\GG^{(3)})=\tg(1)$.
\end{lemma}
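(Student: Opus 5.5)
We mirror the proof of \Cref{prop2:balanced2 helper}. The only structural input is that every triplet lies in at most one solution; it plays the role that $r$-lightness played there. We work with the restricted hypergraph $\HH^{(3)}$ whose hyperedges are $E(X_i)$. By monotonicity $\alpha(P^{(3)},\GG^{(3)})\ge \alpha(P^{(3)},\HH^{(3)})$, so it suffices to lower bound the latter.

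\textbf{Step 1 (decomposition).} Using \Cref{claim:independence of e1e2} with $j=2$, we write
\[
\alpha(P^{(3)},\HH^{(3)})=\sum_{\emptyset\neq W\subseteq X_i}\Pr{\EE_1(W)}\cdot\Pr{\EE_2(W)}.
\]
Here $\EE_1(W)=\set{X_i[q_1]=W}$, and $\EE_2(W)$ is the event that two independent samples $A_3[p],A_4[p]$ contain a pair completing a hyperedge with some element of $W$. Since $\EE_2$ is monotone in $W$, it suffices to prove two things:
\begin{enumerate}
\item $\Pr{X_i[q_1]\neq\emptyset}=\Omega(1)$;
\item $\Pr{\EE_2(\set{s})}=\tg(1)$ for every fixed $s\in X_i$.
\end{enumerate}
The first is the hitting-set computation of \Cref{claim:ee1}. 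Every $s\in X_i$ has degree below $2^i$, so $|X_i|\ge |E(X_i)|/2^i\ge t/(2^i\cdot 36\log^2 t)$. With $q_1=2^{i+1}\lambda/t$ and $\lambda=64\log^2 n$ this gives $q_1|X_i|\ge 1$, or else $q_1\ge 1$ and we cap it at $1$.

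\textbf{Step 2 (fixed $s$).} Fix $s=(a_1,a_2)\in X_i$ with degree $\ell\in[2^{i-1},2^i)$, so $q_2=1/\ell$. The neighbours of $s$ form a bipartite graph $F_s$ between $A_3$ and $A_4$ with $\ell$ edges. Because each triplet lies in at most one solution, every $a_3\in A_3$ has at most one partner $a_4$, and symmetrically. Hence $F_s$ is a \emph{matching} of size $\ell$.

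Sampling both sides with probability $p$, each matching edge survives independently with probability $p^2=25\log^2 n\cdot q_2=25\log^2 n/\ell$. The edges are disjoint, so these events are genuinely independent. Therefore
\[
\Pr{\EE_2(\set{s})}=1-(1-p^2)^\ell\ge 1-e^{-25\log^2 n}.
\]
If $p>1$ we cap it at $1$, and then the probability is $1$. This step needs no second-moment argument; the matching structure makes it cleaner than \Cref{claim:ee2}.

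\textbf{Step 3 (combine).} Summing over $W$ as in the proof of \Cref{prop2:balanced2 helper} gives $\alpha(P^{(3)},\GG^{(3)})\ge \Omega(1)\cdot(1-o(1))=\tg(1)$.

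The main point requiring care is Step 2, specifically verifying that the neighbourhood of $s$ in $\GG^{(2)}$ is a matching when viewed through $A_3\sqcup A_4$. This holds because fixing $a_1,a_2$ and either $a_3$ or $a_4$ determines the remaining element. One should also confirm that the heaviness threshold $t/(36\log^2 t)$ matches the chosen $\lambda$ in Step 1. That check is routine, since $\log t\le 4\log n$.
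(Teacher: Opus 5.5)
Your proof is correct and follows essentially the same route as the paper: first hit $X_i$ with constant probability via $q_1$, then use that the neighbourhood of a fixed $s\in X_i$ is a matching between $A_3$ and $A_4$, so that $1-(1-p^2)^{\ell}\ge 1-e^{-25\log^2 n}$. Your appeal to \Cref{claim:independence of e1e2} makes rigorous the paper's informal ``condition on a retained $x$'' step.

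There are two harmless slips. First, $q_2=1/2^{i-1}\ge 1/\ell$ rather than $q_2=1/\ell$; the inequality goes in your favour. Second, the bound $\log t\le 4\log n$ only gives $q_1|X_i|\ge 2/9$, not $\ge 1$, although you do get $\ge 1$ in the stated regime $t\le n$. Either way $\Pr{X_i[q_1]\neq\emptyset}=\Omega(1)$, so the conclusion stands.
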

We prove \Cref{lemma-4sum:sp} using \Cref{lemma7:refine}.
\begin{proof}[Proof of \Cref{lemma-4sum:sp} Using \Cref{lemma7:refine}]
    By \Cref{lemma7:refine}, there exists a sampling vector $P^{(3)}=(q_1,p,p)$ for $\GG^{(3)}$ with $\alpha(P^{(3)},\GG^{(3)})=\tg(1)$, where $p=5\log n\sqrt{q_2}$.
    ``Refine'' $P^{(3)}$ to $P^{(4)}=(1,1,p,p)$ for $\GG^{(4)}$.
    It follows that $\alpha(P^{(4)},\GG^{(4)})=\tg(1)$ by a coupling argument and \Cref{thm:refining sampling vector}.
    Clearly, $\mu_2(P^{(4)})=p$, as desired.
\end{proof}
We are left with proving \Cref{claim7:p2} and \Cref{lemma7:refine}.

\begin{proof}[Proof of \Cref{claim7:p2}]
    Let $\HH^{(2)}$ be the subhypergraph of $\GG^{(2)}$ with the same vertex set, and with edge set $E(X_i)$.
    By \Cref{claim7:bipartite} we have that
    \begin{align*}
        \alpha(P_i^{(2)},\HH^{(2)})\geq (1-\exp(-mq_1/(2d)))\cdot(1-\exp(-1))\;.
    \end{align*}
    By plugging in the values of $m\geq t/(36\log^2 t)$, $q_1=\frac{2^{i+1}\cdot \cg}{t}$, and $d\leq 2^i$, we get
    \begin{align*}
        mq_1/(2d)\geq \frac{t}{36\log^2 t}\cdot \frac{2^{i+1}\cdot \cg}{t}\cdot \frac{1}{2\cdot 2^i}=\frac{\cg}{36\log^2 t}
        =\frac{50\log^{2}n}{36\log^2 t}\geq 1/4\;,
    \end{align*}
    where the last inequality holds since $t\leq n^4$.
    Therefore,
    \begin{align*}
        \alpha(P_i^{(2)},\HH^{(2)})\geq (1-\exp(-1/4))\cdot(1-\exp(-1))
        =\Omega(1)\;.
    \end{align*}
\end{proof}

\begin{proof}[Proof of \Cref{lemma7:refine}]
    Consider the random subhypergraph $\GG^{(3)}[P^{(3)}]$.
    By the same estimate used in the proof of \Cref{claim7:p2}, the set of retained vertices from the $A_1\times A_2$ part intersects $X_i$ with probability $\Omega(1)$.
    Condition on this event, and fix a retained element $x=(a_1,a_2)\in X_i$.
    Let $N(x)$ be the set of pairs $(a_3,a_4)\in A_3\times A_4$ such that $(x,a_3,a_4)\in \EE(\GG^{(3)})$.
    We define a bipartite graph $G_x$ with vertex set $A_3\sqcup A_4$ and edge set $N(x)$, and then remove its isolated vertices.
    We claim that this graph is a matching, i.e., it has maximum degree at most $1$.
    Indeed, if there are two edges $(a_3,a_4)$ and $(a_3',a_4)$ in $E(G_x)$, then the triplet $(a_1,a_2,a_4)$ participates in two hyperedges $(a_1,a_2,a_3,a_4)$ and $(a_1,a_2,a_3',a_4)$ in $\GG^{(3)}$, contradicting the assumption that every triplet participates in at most one hyperedge.
    Since $x\in X_i$, we have that it is in $[2^{i-1},2^i)$ hyperedges in $\GG^{(2)}$, and therefore $G_x$ has $2^{i-1}$ edges, i.e., $|E(G_x)|=\Theta(1/q_2)$.
    By setting $p=5\log n \sqrt{q_2}$, and sampling each part of $G_x$ independently with probability $p$, we hit an edge with constant probability: each edge is sampled with probability $p^2$, the edges are disjoint, and therefore the probability that we do not hit any edge is at most
    \begin{align*}
        (1-p^2)^{|E(G_x)|}\leq \exp(-p^2|E(G_x)|)=\exp(-25\log^2 n\cdot q_2\cdot 1/q_2)=\exp(-25\log^2 n)\;,
    \end{align*}
    Therefore, conditioned on retaining such an element $x\in X_i$, the sampled vertices from $A_3$ and $A_4$ complete a hyperedge with $x$ with probability $1-o(1)$.
    Since the conditioning event holds with probability $\Omega(1)$, we get $\alpha(P^{(3)},\GG^{(3)})=\tg(1)$.
\end{proof}

\appendix

\section{Obtaining the Exact Running Time in \Cref{thm:simpler}}
\newcommand{\Tl}{T_\mathrm{light}}
\newcommand{\Th}{T_\mathrm{Heavy}}
In this subsection, we explain how to obtain the exact running time in \Cref{thm:simpler} from the two bounds given by \Cref{thm:r-heavy diamonds} and \Cref{thm:r-light diamonds}, which are restated below for convenience.
\ThmHeavyD*
\ThmLightD*

We use the standard linear approximation for rectangular matrix multiplication:
Let $t=n^\tau$ and $\rmax=n^\rho$.
We use $\Th(n,\rho)$, and $\Tl(n,\rho,\tau)$ to denote the running times of the heavy and light algorithms, respectively, as functions of $n$, $\rho$, and $\tau$.
Our goal is to get an upper bound on the running time of the form $n^a/t^b$ which depends only on $n$ and $t$, without any dependence on $\rmax$.
Thus, for every $t$, we find the worst-case value of $\rho$ that maximizes the running time of the faster of the two algorithms.
We have
\begin{align*}
    \Tl(n,\rho,\tau) & \overset{\text{\Cref{claim:omega rect}}}=O(n^{\omega+\frac{\beta}{2}(\rho-\tau)}+n^2)\;, \\
    \Th(n,\rho)      & = O(n^{\rho+(1-\rho)\omega}+n^{3-3\rho}+n^{1+\rho})\;.
\end{align*}
We want to find $\rho(\tau)$ such that $\Th(n,\rho(\tau))=\Tl(n,\rho(\tau),\tau)$.
First, note that if $\rho(\tau)> 1/3$ then the running time of the heavy algorithm is at most $O(n^2)$, while the running time of the light algorithm is always at least $\Omega(n^2)$. Thus, $\rho(\tau)$ is never larger than $1/3$,
which means that the dominant term in $\Th(n,\rho)$ is $n^{3-3\rho}$, and we can ignore the other two terms.
We thus get:
\begin{align*}
    3-3\rho & = \omega+\frac{\beta}{2}(\rho-\tau)\;,                           \\
    \rho    & =\frac{3-\omega+\beta\tau/2}{3+\beta/2} = 0.19177+0.08363\tau\;.
\end{align*}
Substituting this value of $\rho$ into $3-3\rho$ gives the exponent $2.4241-0.25085\tau$. Thus, the running time of the best of the two algorithms is $O(n^{2.4241}/t^{0.25085}+n^2)$.
This also shows that if $t=n^{\tau_1}$ for $\tau_1=1.690665$, the running time $\tf(n^2)$.

Here, we used simplified bounds for rectangular matrix multiplication, and thus the first value of $t$ for which the running time is faster than $n^\omega$ is not tight. 
We show by a simpler calculation that the running time is faster than $n^\omega$ for every $t\geq n^{\tau_0}$, where $\tau_0=(3-\omega)/3$.

For any $(\rmax,t)$ such that $\rmax\leq \frac{t}{\log^8 n}$ and $t\geq \log^{100} n$, the light algorithm already improves upon the $O(n^\omega)$ bound.
Therefore, we can assume that $\rmax=\tilde{\Theta}(t)$, and thus the heavy algorithm takes time $\tf((n/\rmax)^3)$, obtaining $\tf(n^\omega)$ when $\rmax=\tilde{\Theta}(t)$, and $t=n^{\tau_0}$.
This completes the proof of \Cref{thm:simpler}.

\section{Missing Proofs}\label{app:missing-proofs}
In this appendix, we present two missing proofs.
The first one presents an algorithm to check if a given vertex participates in a diamond in linear time:
\ThmAlgVertexOne* 
The second one takes a set of vertices, and checks if any of them contain a \degc vertex, which is a vertex with induced $P_3$ in its neighborhood, using fast matrix multiplication.
\ThmFindDegc* 

The proof of \Cref{thm:is v in D} is inspired by the characterization of diamond-free graphs as graphs where each neighborhood is $P_3$-free~\cite{kloks2000finding}, and the diamond detection algorithm of~\cite{eisenbrand2004complexity}.

\begin{proof}[Proof of \Cref{thm:is v in D}]
    Given a vertex $v$ we want to check if it participates in an induced diamond.
    The algorithm first checks if the connected components of $G[N(v)]$ the graph induced by $N(v)$ are cliques.
    If not, then $v$ is a \degc vertex, and we are left with finding an explicit $P_3$ in one of the components, which together with $v$ must form a diamond.
    If all components are cliques, then $v$ cannot be a \degc vertex, but it may still be a \degb vertex.  In this case, we search for a vertex $z$ outside $N(v)\cup\{v\}$ that is connected to two neighbors of $v$ in the same component, which would yield a diamond together with $v$.

    We can find the connected components of $G[N(v)]$ in $O(n+m)$ time, as well as verifying whether each component is a clique or not.
    If some component $C_i$ is not a clique, we find an induced $P_3$ in $C_i$ as follows.
    We take any vertex $u\in C_i$ that has degree strictly less than $|C_i|-1$ in $G[C_i]$. Since $C_i$ is a connected component but not a clique, there must exist some vertex in $C_i$ at distance exactly $2$ from $u$. We find such a vertex $z\in C_i$ by running a BFS from $u$ inside $C_i$, which takes $O(m)$ time. This yields a path $u\to w\to z$ for some $w\in C_i$. Note that $u\to w\to z$ is an induced $P_3$ because $u$ and $z$ are at distance $2$ (implying no edge $(u,z)$). We report $(u,v,w,z)$ and halt. This takes $O(n+m)$ time in total, dominated by constructing the subgraph $G[N(v)]$ and traversing it.
    If all connected components are a clique, we check if there exists a vertex $z$ outside $N(v)\cup\{v\}$ that is connected to two vertices in the same connected component in $G[N(v)]$.
    If $z$ has two neighbors $(u_1,u_2)\in C_i$, then $(v,u_1,u_2,z)$ form a diamond with only the edge $(v,z)$ missing.
    Checking this for every vertex $z$ outside $N(v)\cup\{v\}$ can be done in $O(\deg(z))$ time, by iterating over $z$ edges and maintaining a count of how many neighbors of $z$ are in each component $C_i$, stopping as soon as we find a component $C_i$ with at least two neighbors of $z$.
    Iterating over all vertices $z$ outside $N(v)\cup\{v\}$ and performing this check takes $\sum_{v\notin N(v)\cup\{v\}}\deg(v)\leq O(m)$ time, since we iterate over the edges of the graph at most once.
\end{proof}

We move on to the proof of \Cref{thm:find degc}.
The crux of the proof is a modification of the algebraic algorithm of~\cite[Theorem 5.1]{williams2014finding}.
Given a subset of vertices $\Vp$, the algorithm finds all
\degc vertices in time $\tf(\MM{n,n,\abs{\Vp}})$:
\begin{restatable}{theorem}{ThmAlgPartition}
    \label{thm:alg Partition}
    There exists a deterministic algorithm \algPart that, given as input a graph $G$ and a subset of vertices $\Vp\subseteq V(G)$, outputs a subset $L\subseteq \Vp$ containing all vertices in $\Vp$ that are \degc vertices in $G$. Furthermore, for every vertex $v\in \Vp\setminus L$, it outputs a partition of $N(v)\setminus \Vp$ into a disjoint union of cliques. Specifically, for each $v \in \Vp \setminus L$, it outputs $\CC_v =\{ C_1 , C_2, \ldots , C_k \}$, where each $C_i$ induces a clique, and there are no edges between $C_i$ and $C_j$ for $i \neq j$. The running time is $O(\MM{n,n,\abs{\Vp}}\cdot \log^2 n)$.
\end{restatable}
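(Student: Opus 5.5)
We reduce everything to computing, for each $v\in\Vp$, the counts needed to recognize whether $G[N(v)]$ is a disjoint union of cliques. Throughout, $A$ is the adjacency matrix and $A_{\Vp}$ is its $|\Vp|\times n$ submatrix of rows indexed by $\Vp$. The key identity is this: $v$ is not a \degc vertex if and only if, for every $u\in N(v)$, the set $(N(u)\cap N(v))\cup\{u\}$ equals the component of $u$ in $G[N(v)]$ and that component is a clique. Equivalently, writing $c_v(u)\triangleq |N(u)\cap N(v)|$ (an entry of $A_{\Vp}A$), $v$ is not \degc iff for every edge $(u,w)$ with $u,w\in N(v)$ we have $N(u)\cap N(v)\setminus\{w\}=N(w)\cap N(v)\setminus\{u\}$.

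\textbf{Step 1: computing degrees inside neighborhoods.} We compute $D=A_{\Vp}A$, which takes $\MM{|\Vp|,n,n}=\MM{n,n,|\Vp|}$ time, giving $c_v(u)$ for all $v\in\Vp$ and $u\in V$. A necessary condition for $v\notin L$ is that every $u\in N(v)$ has $c_v(u)+1=|K_u|$, where $K_u$ is $u$'s component in $G[N(v)]$. Since the components are unknown, we use the triangle-count identity $\sum_{u\in N(v)}\binom{c_v(u)}{2}$ versus the number of induced $P_3$'s in $N(v)$: the number of paths of length two inside $G[N(v)]$ is $\sum_{u\in N(v)}\binom{c_v(u)}{2}$, and this equals $3\cdot\#\text{triangles in }N(v)$ exactly when there is no induced $P_3$. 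The triangle count inside $N(v)$ equals the number of $K_4$'s through $v$, which is not directly available. So the exact-counting route is not clean.

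\textbf{Step 2: labeling components instead.} We therefore follow the approach of \cite[Theorem 5.1]{williams2014finding}. We sample $O(\log n)$ random vectors and compute, for each $v\in\Vp$ and $u\in N(v)$, a ``fingerprint'' of the closed neighborhood $N[u]\cap N(v)$ via $A_{\Vp}\,(A\circ R)$, where $R$ holds random weights; this costs $\MM{n,n,|\Vp|}$ per sample. Then $v$ is not \degc iff the fingerprints of $N[u]\cap N(v)$ coincide for all adjacent $u,w\in N(v)$, and in that case the equivalence classes of equal fingerprints are exactly the cliques $\CC_v$. Checking adjacent pairs per $v$ is too costly, so we instead group $N(v)$ by fingerprint value and verify each group $K$ in two ways. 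First, each $u\in K$ must have $c_v(u)=|K|-1$. Second, $\sum_{u\in N(v)}c_v(u)=\sum_K |K|(|K|-1)$ must hold, which forbids edges between groups. Both checks take $O(n)$ per $v$.

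\textbf{Derandomization and output.} To make the algorithm deterministic as stated, and to explain the $\log^2 n$ factor, we replace the random fingerprints with bit-by-bit products. For each bit position $b$ of vertex ids, we compute $A_{\Vp}(A\circ B_b)$, where $B_b$ masks the columns whose $b$-th bit is $1$, together with the analogous sums of ids. This recovers, for each $(v,u)$, a $\min$/$\max$-type identifier of $N[u]\cap N(v)$ through $O(\log n)$ counting products (witness-finding style). We expect this step to be the main obstacle: making ``same component'' decidable from $O(\log n)$ matrix products with a correct certificate. For the restriction to $N(v)\setminus\Vp$ in the output, we simply ignore rows indexed by $\Vp$ when grouping. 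The total cost is $O(\log^2 n)$ rectangular products of size $|\Vp|\times n\times n$, which is $O(\MM{n,n,|\Vp|}\log^2 n)$.
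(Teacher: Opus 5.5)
Your randomized core is sound and takes a different route from the paper. You group $N(v)$ by random linear fingerprints of the closed sets $N[u]\cap N(v)$, computed with one product $A_{\Vp}\,\mathrm{diag}(r)\,A$ using polynomially bounded weights, and then test $c_v(u)=|K|-1$. This decides \degc status with high probability and returns the cliques, and since it works on all of $N(v)$ it avoids the paper's separation step. The gap is that the statement asks for a \emph{deterministic} algorithm, and the step meant to supply determinism fails.

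Masking by the bits of vertex ids (on the middle index) can at best give, for each $(v,u)$, the size of $N[u]\cap N(v)$ and how many of its elements have bit $b$ set; the sum of ids is a linear combination of these. This data does not determine the set, even among disjoint sets: $\{1,2\}$ and $\{0,3\}$ have the same size and the same bit counts. So if $G[N(v)]$ is two disjoint edges with these ids, all four vertices get one label, and your degree test rejects a vertex that is not \degc. A canonical minimum or maximum witness would separate disjoint cliques. However, minimum-witness Boolean products are not known to be computable with polylogarithmically many matrix products. Deterministic witness finding returns an arbitrary, non-canonical witness, and bit sums recover a witness only when it is unique.

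The verification also has an error. Your second check is just the first one summed over $u\in N(v)$, so it cannot exclude edges between groups. With exact grouping you do not need it: each group lies inside its common set $S=N[u]\cap N(v)$, and the first check forces the group to equal $S$. But for a grouping that may be wrong, the two checks certify nothing. If $G[N(v)]$ is a $6$-cycle and the groups are its two color classes, every vertex has degree $2=|K|-1$ and both checks pass, although $v$ is \degc.

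The paper obtains determinism by manufacturing uniqueness, following Williams--Wang--Williams--Yu:
\begin{itemize}
\item It repeatedly deletes half of $v$'s remaining edges into each current cluster and splits clusters by the number $R_i[v,u]$ of paths of length at most two from $v$, restoring the edges of the zero class.
\item A true clique is never split, and after $O(\log n)$ phases each cluster has a unique leader edge from $v$, which separates any cliques still merged.
\item It then verifies the clustering deterministically: $O(\log n)$ products masked by the bits of the cluster index rule out inter-cluster edges, followed by your degree test.
\item It runs everything over an explicit $O(\log n)$-size family separating $v$ from any three other vertices of $\Vp$, which gives the second $\log n$ factor.
\end{itemize}
To complete your proof you need a deterministic clustering step of this kind, together with a genuine inter-cluster check. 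Otherwise your argument proves only a Monte Carlo version of the statement.
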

Assuming this theorem holds, proving \Cref{thm:find degc} is straightforward:
\begin{proof}[Proof of \Cref{thm:find degc} using \Cref{thm:alg Partition}]
    If we knew the value of $x_3$ (the number of \degc vertices in $G$), we could sample each vertex independently with probability $p=\frac{10\log n}{x_3}$ to obtain a set $\Vp$. Standard hitting set arguments show that with high probability, $\abs{\Vp}=\TO{n/x_3}$ and $\Vp$ contains at least one \degc vertex. We could then run \algPart on $\Vp$ to find a \degc vertex in time $\TO{\MM{n,n,n/x_3}}$.

    Since we do not know $x_3$, we proceed in phases using a doubling search on the probability. We run $\log n$ instances of the algorithm with vertex sets $V_{p_i}$ obtained by sampling each vertex with probability $p_i=\min(1,\frac{10\log n}{2^i})$, for $i=\log n, \log n - 1, \ldots, 1$ (in this order).

    We execute $\algPart(G,V_{p_i})$ (from \cref{thm:alg Partition}) in each phase until we find a \degc vertex or finish all phases. We stop the execution as soon as a \degc vertex is found.

    Let $i^*$ be the first index in the sequence (i.e., the largest $i$) such that $2^{i^*} \leq x_3$.
    We claim that with high probability, $V_{p_{i^*}}$ contains a \degc vertex.
    The probability that no \degc vertex is sampled in $V_{p_{i^*}}$ is:
    \begin{align*}
        (1-p_{i^*})^{x_3} \le \exp(-p_{i^*}\cdot x_3) = \exp\left(-\frac{10\log n}{2^{i^*}} \cdot x_3\right) \leq \exp(-10\log n) = \frac{1}{n^{10}}\;,
    \end{align*}
    where the last inequality follows because $x_3 \ge 2^{i^*}$.
    Thus, the algorithm detects a diamond with high probability at phase $i^*$ (or earlier). The running time of phase $i^*$ is $O(\MM{n,n, |V_{p_{i^*}}|} \cdot \log n) = O(\MM{n,n, n/2^{i^*}} \cdot \log n)$. Since $2^{i^*} = \Theta(x_3)$, this is bounded by $\TO{\MM{n,n,n/x_3}}$, as required.
\end{proof}

The rest of the appendix is devoted to proving \cref{thm:alg Partition}.

\subsection{The Fast Clustering Algorithm}\label{ssec:algv algeb}
\newcommand{\Inc}{I}
In this subsection, we prove \cref{thm:alg Partition}:
\ThmAlgPartition* 
We provide an algorithm that finds all \degc vertices in a subset of vertices $\Vp\subseteq V$, and for every vertex in $\Vp$ that is not a \degc vertex, it computes a partition of its neighborhood into a union of disjoint cliques, where we refer to such partition as a \emph{clustering}, and say that the clustering is \emph{valid} if it is indeed a partition into a union of disjoint cliques.
The algorithm proceeds in two steps: $(i)$ Compute a clustering for every vertex $v\in \Vp$, using the algorithm \algCluster, and $(ii)$ verify that the clustering is valid using the algorithm \algClusterVeri.
Both steps use fast matrix multiplication and take $O(\MM{n,n,\abs{\Vp}}\cdot \log n)$ time.
The following lemma proves the correctness of \algCluster.
\begin{restatable}{lemma}{LemmaAlgCluster}\label{lemma:alg part}
    There is a deterministic algorithm \algCluster that takes a subset of vertices $\Vp$ and computes a clustering for every $v\in \Vp$ in time $O(\MM{n,n,\abs{\Vp}}\cdot \log n)$. For every $v\in \Vp$, if there is no induced $P_3$ in $N(v)\setminus \Vp$, the computed clustering is valid.
\end{restatable}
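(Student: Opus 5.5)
The plan is to adapt the algebraic clustering idea of \cite[Theorem 5.1]{williams2014finding}. The key observation: if $G[N(v)\setminus \Vp]$ is $P_3$-free, then it is a disjoint union of cliques. In that case two vertices $u,w\in N(v)\setminus\Vp$ lie in the same cluster if and only if $u=w$ or $(u,w)\in E$. So it suffices to give each $u\in N(v)$ a ``cluster label'' such that adjacent neighbors get the same label. The natural canonical label is the minimum-index vertex of the clique: $\ell_v(u)\triangleq \min\bigl(\{u\}\cup (N(u)\cap N(v)\setminus \Vp)\bigr)$.

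We then compute, for all pairs $(v,u)$ with $v\in\Vp$ and $u\in N(v)\setminus\Vp$, a ``minimum witness'' of a Boolean product. This is the $\Vp\times n$ by $n\times n$ product $A_{\Vp,\overline{\Vp}}\cdot A$, restricted so that the middle index ranges over $V\setminus\Vp$.

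The computation of $\ell_v(u)$ is done by binary search on the index, in the style of standard minimum-witness computation with a $\log n$ overhead. We process the index range in $\log n$ levels. At each level, every pair $(v,u)$ has a current candidate interval $[a,b]$ for its label. We test whether $N(v)\cap N(u)\setminus\Vp$ meets the left half of that interval.

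To avoid one product per pair, we do the standard trick. At level $j$, partition the indices into $2^j$ dyadic blocks. For each pair we only need to know whether the specific block chosen for it contains a common neighbor. We can compute block-restricted counts $\sum_{w\in B} A[v,w]A[w,u]$ simultaneously for all blocks by encoding the block index into the entries. A cleaner route is to use a deterministic minimum-witness routine for rectangular Boolean product. Its cost is $O(\MM{\abs{\Vp},n,n}\cdot\log n)$ if we allow $\log n$ bits of the index to be resolved bit by bit. At bit $j$, we multiply the matrix $A_{\Vp,\cdot}$ masked by the current prefix against a matrix whose columns select witnesses whose $j$-th bit is $0$. Because the prefix depends on the pair $(v,u)$ rather than on a single row, this is the delicate point, and it is what I expect to be the main obstacle.

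If the dependence on both $v$ and $u$ cannot be handled within a single product, I would fall back to a different route. Instead of exact minima, compute any consistent label by randomized-free hashing. Assign each $w\in V\setminus\Vp$ the vector of its index bits, and compute for each pair $(v,u)$ the count $c_{vu}=\abs{N(v)\cap N(u)\setminus\Vp}$ and the bitwise sums $s^{(j)}_{vu}$. Both are $O(\log n)$ products of size $\abs{\Vp}\times n\times n$.

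We then define the cluster of $u$ within $N(v)$ via the full fingerprint $(c_{vu}+1,\ s^{(j)}_{vu}+\mathrm{bit}_j(u))_j$. These are the size and bit-sums of the closed set $\{u\}\cup(N(u)\cap N(v)\setminus\Vp)$. When the neighborhood is a disjoint union of cliques, this closed set is exactly $u$'s clique, so all clique members share the fingerprint. Distinct cliques are disjoint sets, and the fingerprint determines the set only up to collisions. To be safe, we group by fingerprint and then split the groups, which is fine since verification happens later anyway.

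This yields a clustering for every $v\in\Vp$ in $O(\MM{n,n,\abs{\Vp}}\log n)$ deterministic time, using \Cref{claim:omega balanced} and the symmetry of $\omega$. The clustering is valid whenever $N(v)\setminus\Vp$ is $P_3$-free, since each vertex's closed neighborhood set is then its own clique.
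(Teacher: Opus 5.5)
Your proposal has a genuine gap, and it sits exactly at the step that carries the lemma. Your fingerprint route does give all vertices of a clique $D$ of $N(v)\setminus\Vp$ the same label within the budget. This holds because $\{u\}\cup(N(u)\cap N(v)\setminus\Vp)=D$ for every $u\in D$, and the size and bit-sums come from $O(\log n)$ products of shape $\abs{\Vp}\times n\times n$.

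A valid clustering, however, also needs \emph{distinct} cliques to land in distinct clusters, and bit-sums do not separate them. For example, $\{w_1,w_6\}$ and $\{w_2,w_5\}$ both have size $2$ and bit-sum vector $(1,1,1)$. More generally, every pair $\{a,\bar a\}$ of complementary indices has fingerprint $(2,1,\dots,1)$. So if $N(v)\setminus\Vp$ is the perfect matching on complementary pairs, all $\Theta(n)$ cliques fall into a single group. Other fixed, graph-independent word-size weights do not help either. By pigeonhole there are only $n^{O(\log n)}$ fingerprint values against $2^{\Theta(n)}$ vertex sets. With the size kept as a coordinate, as you do, linearity turns any collision between $X\neq Y$ into one between the disjoint, equal-size sets $X\setminus Y$ and $Y\setminus X$. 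An adversary can plant these as two non-adjacent cliques in $N(v)$. Random weights modulo a large prime would avoid this with high probability, but the lemma asks for a deterministic algorithm.

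Your remedy, ``group by fingerprint and then split the groups,'' is left unspecified, and it is the crux. The obvious splittings (scanning a group with adjacency lookups, or scanning the adjacency lists of its members) can cost $\Theta(n^2)$ per $v$ when one group holds $\Theta(n)$ small cliques. That is $\Theta(\abs{\Vp}\,n^2)$ overall, exceeding $O(\MM{n,n,\abs{\Vp}}\log n)$.

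Deferring to verification does not rescue correctness either. The lemma promises a valid clustering whenever $N(v)\setminus\Vp$ is $P_3$-free. A merged group is not a clique, so \algClusterVeri would reject it and $v$ would be falsely reported as a \degc vertex. Your first route is also not viable, as you suspected: minimum witnesses of a Boolean product are not known to be computable in $\tilde{O}(n^{\omega})$-type time.

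The missing idea is to break symmetry with a left factor that depends on the row $v$ and is refined between rounds; this is what the paper does. For each $v\in\Vp$ it maintains a clustering of $N(v)\setminus\Vp$. In each phase it deletes half of the edges from $v$ into each current cluster (never the last one) and computes $R_i=I_i'+I_i'A$, so $R_i[v,u]$ counts paths of length one or two from $v$ to $u$. When $N(v)\setminus\Vp$ is a disjoint union of cliques, $R_i[v,u]$ equals the number of surviving edges from $v$ into the clique of $u$. It is therefore constant on each clique, and clusters can be split by its value without ever cutting a clique. For the class with value $0$, the deleted edges are restored, so every clique keeps at least one edge to $v$. After phase $i$, the number of edges from $v$ into any cluster is at most $n/2^i$, while each clique inside a cluster contributes at least one of them. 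Hence after $O(\log n)$ phases every cluster is a single clique, at the cost of one $\abs{\Vp}\times n\times n$ product per phase. In terms of your approach, the fix is to replace the fixed fingerprint weights by adaptive $0/1$ weights on the pairs $(v,w)$, which break the symmetry between cliques that no fixed weighting can.
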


A subtle point here is that the computed clustering is only of the subsets $N(v) \setminus \Vp$ rather than the full neighborhood $N(v)$. This is due to technical difficulties and might pose issues in detecting diamonds that contain more than one vertex from $\Vp$. We will soon see how to overcome this issue.

The verification algorithm is defined as follows:
\begin{restatable}{lemma}{LemmaAlgClusterVeri}\label{lemma:alg partveri}
    There is a deterministic algorithm \algClusterVeri that takes a subset of vertices $\Vp$ and a partition $\CC_v$ of the neighborhood of every vertex $v\in \Vp$.
    The algorithm outputs for every $v\in \Vp$ whether the clustering is valid or not, in time $O(\MM{n,n,\abs{\Vp}}\cdot \log n)$.
\end{restatable}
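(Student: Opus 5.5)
The plan is to verify every clustering at once with a constant number of rectangular products against the $n\times|\Vp|$ incidence structure of the clusters. Fix an arbitrary indexing, and for each $v\in\Vp$ and each vertex $u\in N(v)$ let $c_v(u)$ be the label of the cluster of $\CC_v$ containing $u$. Vertices not in $N(v)$ get no label.

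A clustering $\CC_v$ of $N(v)$ is valid if and only if every pair $u,w\in N(v)$ satisfies $(u,w)\in E \iff c_v(u)=c_v(w)$. We therefore have to catch two kinds of defects: a non-edge inside a cluster, and an edge between two different clusters.

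\textbf{Counting cluster-mates.} For $u\in N(v)$, let $s_v(u)=|C_{c_v(u)}|-1$ be the number of its cluster-mates, and let $a_v(u)=|N(u)\cap C_{c_v(u)}|$ be the number of its neighbors inside its own cluster. Also let $b_v(u)=|N(u)\cap N(v)|$. Validity is then equivalent to $a_v(u)=s_v(u)$ and $b_v(u)=a_v(u)$ for all $u\in N(v)$, since together these say that $u$ is adjacent to exactly its cluster-mates within $N(v)$.

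\textbf{Computing the counts.} Cluster sizes, and hence $s_v(u)$, are computed directly in $O(n|\Vp|)$ time. The values $b_v(u)=(A\cdot A_{\cdot,\Vp})[u,v]$ come from a single product of an $n\times n$ matrix with an $n\times|\Vp|$ matrix, costing $\MM{n,n,|\Vp|}$.

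For $a_v(u)$, encode cluster membership with numbers. Assign each cluster label an integer in $[n]$ and define the $n\times|\Vp|$ matrix $B[w,v]=c_v(w)$ if $w\in N(v)$, and $0$ otherwise. Also define $B_2[w,v]=c_v(w)^2$. Consider the quantity
\[
\sum_{w\in N(u)\cap N(v)}\bigl(c_v(w)-c_v(u)\bigr)^2=(AB_2)[u,v]-2c_v(u)(AB)[u,v]+c_v(u)^2\, b_v(u).
\]
It vanishes if and only if every neighbor of $u$ in $N(v)$ lies in $u$'s own cluster, that is, if and only if $a_v(u)=b_v(u)$. This costs three products, each of type $\MM{n,n,|\Vp|}$, using entries of $O(\log n)$ bits. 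Those entry sizes account for the $\log n$ factor in the stated bound.

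\textbf{The remaining condition.} To check $a_v(u)=s_v(u)$, we count the edges inside each cluster. Once we know $a_v(u)=b_v(u)$, the sum $\sum_{u\in C}a_v(u)$ equals twice the number of edges inside $C$. So each cluster $C$ is a clique if and only if $\sum_{u\in C}b_v(u)=|C|(|C|-1)$. Given $b_v$, these sums take $O(n|\Vp|)$ total time.

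\textbf{Expected main obstacle.} The delicate step is encoding the per-$v$ labels so that one product simultaneously handles the different clusterings of all $v\in\Vp$. The column-indexed label matrices $B$ and $B_2$ achieve this, because column $v$ only ever meets the labels belonging to $\CC_v$. I would also double-check that no cancellation can occur. This holds because the squared differences are nonnegative, which is exactly why the squared-difference trick is used instead of a plain count.
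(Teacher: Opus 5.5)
Your proof is correct, but it detects inter-cluster edges by a different route from the paper.

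\textbf{The paper's route.} The paper uses $\lceil\log n\rceil$ products for this step. For each bit position $h$, it forms a $0/1$ matrix $B_h$ that marks, for each $v$, the vertices whose cluster index in $\CC_v$ has bit $h$ set. It computes $M_h=B_h\cdot A$ and flags $v$ if some unmarked $u\in N(v)$ has $M_h[v,u]>0$. Any edge between two distinct clusters is caught at a bit where their indices differ. This is exactly where the $\log n$ factor in the lemma comes from.

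\textbf{Your route.} You replace the bit separation with a single variance-type identity using the integer labels $c_v(w)$ and $c_v(w)^2$. Nonnegativity of the squared differences rules out cancellation, so three products suffice.

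\textbf{The clique step.} This part is essentially the paper's. Its matrix $D=B\cdot A$ is your $b_v(u)$, and it checks $D[v,u]=\abs{C}-1$ vertex by vertex. That is equivalent to your per-cluster sum test, because $a_v(u)\le s_v(u)$ always holds.

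\textbf{Comparison.} Your approach gives $O(\MM{n,n,\abs{\Vp}})$, which removes the logarithm. The paper's bit trick instead keeps the inter-cluster test purely Boolean.

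Your remark that the $O(\log n)$-bit entries ``account for'' the $\log n$ factor is a misattribution. Entries bounded by $n^3$ fit in a single word, so in your algorithm the $\log n$ is pure slack.

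\textbf{Setting.} In the paper, the clusterings are of $N(v)\setminus\Vp$, and $A$ is the adjacency matrix of $G[V\setminus\Vp]$; this is how the lemma is applied to the output of \algCluster. Your argument carries over verbatim under that restriction.
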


We now address the issue of diamonds having multiple vertices in $\Vp$. Dealing with this can be done using randomness by taking a poly-logarithmic number of subsamples of $\Vp$, ensuring that every diamond is ``separated'' (i.e., a \degc vertex $v$ is in the subsample, but the $P_3$ is outside).
Achieving this separation deterministically is well-known in the literature:
\begin{restatable}[{\cite{naor1990small}}]{lemma}{lemmaThreeDishunct}\label{lemma:3 disjunct}
    Let $U\subseteq V$.
    There exists a family $\FF=(F_1,\ldots, F_\ell)$ of size $\ell=O(\log n)$, such that for every four distinct vertices $a,b,c,d\in U$, there exists some set $F_i\in \FF$ where $a\in F_i$ and $\{b,c,d\} \cap F_i = \emptyset$.
    This family can be constructed explicitly in $\To(n)$ time.
\end{restatable}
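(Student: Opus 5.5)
The natural construction for the family in \Cref{lemma:3 disjunct} is a random one, which I would then derandomize with small-bias or $k$-wise independent sample spaces. First, existence. Sample each $F_i$ by including every vertex of $U$ independently with probability $1/4$. For a fixed quadruple $(a,b,c,d)$ of distinct vertices, a single set is good, meaning $a\in F_i$ and $b,c,d\notin F_i$, with probability
\[
  \frac14\cdot\Bigl(\frac34\Bigr)^3=\frac{27}{256}.
\]
The number of quadruples is at most $n^4$. Taking $\ell=C\log n$ sets with $C$ a large enough constant, the failure probability for a fixed quadruple is $(1-27/256)^{\ell}\le n^{-5}$. A union bound then shows that a random family works with probability $1-1/n$. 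This already gives the size bound $\ell=O(\log n)$.

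Second, explicitness and the $\To(n)$ construction time. Here I would invoke the $4$-wise $\eps$-biased (almost $k$-wise independent) sample spaces of \cite{naor1990small}. These give an explicit family of $\poly(\log n)$ many Boolean vectors in $\set{0,1}^n$ such that, restricted to any $4$ coordinates, the distribution is within $\eps$ of uniform. I would apply this to $2$-bit blocks: take two independent such vectors per set and let $v\in F_i$ iff both of $v$'s bits are $1$. The event for a quadruple depends on $8$ bits, so an $8$-wise $\eps$-close space with $\eps=2^{-10}$ gives every quadruple a good set with probability at least $27/256-2^{-10}>0$ under a uniformly random member of the space. Naively this yields a family of size $\poly(\log n)$ rather than $O(\log n)$.

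To get $O(\log n)$ I would use the standard trick: pick $O(\log n)$ members of the small space uniformly and independently. Each member independently succeeds for a fixed quadruple with constant probability, and the union bound over $n^4$ quadruples works as before. This is randomized, however. A fully deterministic version uses the method of conditional expectations over the $\poly(\log n)$-size space, but evaluating the pessimistic estimator over all $n^4$ quadruples costs too much. The cleaner route is to cite the explicit $(n,4)$-splitter, i.e.\ "$(1,3)$-separating" family, construction of Naor--Schulman--Srinivasan. That construction gives exactly this separation property with $O(\log n)$ sets, each computable in $\To(n)$ time, via perfect hashing into $O(1)$ buckets composed with a constant-size explicit family.

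The main obstacle is reconciling $\ell=O(\log n)$ with deterministic $\To(n)$ construction time. My plan there is to rely on the cited splitter construction rather than re-derive it. For the use in \Cref{thm:alg Partition}, a randomized family of size $O(\log n)$ built in $O(n\log n)$ time would suffice anyway, since it succeeds \whp.
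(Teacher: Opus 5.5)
Your proposal is correct and essentially matches the paper, which gives no argument of its own: it observes that the required family is a special case of an $(n,4)$-universal set and cites the explicit $O(\log n)$-size constructions of Naor--Naor and Naor--Schulman--Srinivasan; your random-sampling existence argument and small-bias sketch are correct extras, and your deterministic $\To(n)$ part rests on the same citation. One small correction: a randomized family would not suffice for \Cref{thm:alg Partition} as stated, because that theorem claims a deterministic algorithm and would then be correct only \whp, although a randomized family would suffice for the randomized \Cref{thm:find degc}.
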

The family $\FF$ in \Cref{lemma:3 disjunct} is a special case of $(n,k)$-universal sets (here with $k=4$). Constructions of $\FF$ with $|\FF|=O(\log n)$ are given in \cite{naor1990small,alon1992simple,NaorSS95,alon2002construction}.
Using \Cref{lemma:alg part,lemma:alg partveri,lemma:3 disjunct}, we can prove \Cref{thm:alg Partition}.

\begin{proof}[Proof of \Cref{thm:alg Partition}]
    Let $\FF=(F_1,\ldots, F_\ell)$ be the family from \cref{lemma:3 disjunct} defined over the universe $\Vp$, with $\ell=O(\log n)$.
    For each set $F_i \in \FF$, we execute \algCluster followed by \algClusterVeri on the input set $F_i$. Let $L_i \subseteq F_i$ denote the set of vertices for which the verification fails. By \Cref{lemma:alg part,lemma:alg partveri}, every vertex in $L_i$ is a \degc vertex in $G$.

    We return the union $L' = \bigcup_{i=1}^\ell L_i$.
    We claim that $L'$ contains exactly the set of all \degc vertices in $\Vp$.

    First, if a vertex $v \in \Vp$ is not a \degc vertex, its neighborhood $N(v)$ is a union of disjoint cliques. Consequently, for any $F_i$, the subset $N(v) \setminus F_i$ is also a union of disjoint cliques. By the correctness of \algCluster and \algClusterVeri, $v$ will never be flagged as \degc vertex, so $v \notin L'$.

    Conversely, consider a \degc vertex $v\in \Vp$. By definition, there exists an induced $P_3$, denoted $(a,b,c)$, in $N(v)$.
    By \cref{lemma:3 disjunct}, there exists some set $F_i \in \FF$ such that $v \in F_i$ and $\{a,b,c\} \cap F_i = \emptyset$.
    This implies that $a,b,c \in N(v) \setminus F_i$. Thus, $(a,b,c)$ remains an induced $P_3$ in the graph visible to the algorithm in iteration $i$.
    Specifically, $N(v) \setminus F_i$ cannot be partitioned into disjoint cliques. Therefore, the valid clustering check for $v$ must fail, and $v$ will be included in $L_i \subseteq L'$.

    The total running time is $\sum_{i=1}^{\ell} O(\MM{n,n, |F_i|} \log n) = O(\ell \cdot \MM{n,n,\abs{\Vp}} \log n) = O(\MM{n,n,\abs{\Vp}}\cdot \log^2 n)$.
\end{proof}
Before we explain how to implement \algCluster and \algClusterVeri, we define an auxiliary directed graph $H$ that we work on.
\paragraph{Auxiliary Directed Graph $H$.}
The graph $H$ contains two vertex sets $V_1$ and $V_2$, where $V_1$ is a copy of $\Vp$, and $V_2$ is a copy of $V\setminus \Vp$. For every edge $(u,v)\in E$ between two vertices in $V\setminus \Vp$, we add two directed edges $(u,v)$ and $(v,u)$. For every edge $(u,v)\in E$ where $u\in \Vp$ and $v\notin \Vp$, we add a directed edge $(u,v)$ from $V_1$ to $V_2$.
This completes the description of the directed graph $H$.
Note that $H[V_1]$ is an independent set, $H[V_2]$ is isomorphic to $G[V\setminus \Vp]$, and the edges in $E(V_1,V_2)$ correspond to edges $E_G(\Vp,V\setminus \Vp)$.
The vertices of $V_1$ are the vertices for which we want to compute a clustering of their neighborhood in $H$.

~\\Instead of using one adjacency matrix of size $n\times n$, we use one adjacency matrix $A$ of size $n\times n$ to encode the edges in $H[V_2]$ and one incidence rectangular matrix $\Inc$ of size $\abs{V_1}\times \abs{V_2}$ to encode the edges in $E(V_1,V_2)$, where $\Inc[i,j]=1$ if and only if the $i$-th vertex in $V_1$ is connected to the $j$-th vertex in $V_2$.

We note that we remove $\Vp$ from $V_2$ to avoid the detection of induced diamonds that contain two copies of the same vertex from $\Vp$:
If $(a,b,c)$ is a triangle and $a$ is sampled to $\Vp$, then the four vertices $\set{a,b,c,a'}$, where $a'$ is the copy of $a$ in $V_2$, form an induced diamond in $H$.

\paragraph{Intuition For Clustering.}
Consider a vertex $v\in V_1$ that is not a \degc vertex.
Then, its neighborhood $N(v)$ can be partitioned into a union of disjoint cliques $(C_1,\ldots, C_k)$.
Let $u$ be some vertex in $N(v)\cap C_1$.
Then, the number of paths of length at most two from $v$ to $u$ is exactly $\abs{C_1}$.
If all cliques have different sizes, then by computing the number of paths of length at most two from $v$ to every vertex in $N(v)$, we can recover the clustering of $N(v)$.
To deal with the general case, we iteratively remove edges from $E(v,N(v))$, until there is only a single edge between $v$ and every clique in the clustering of $N(v)$.

After computing this initial clustering, we need to verify that it is valid.
In other words, we proceed as if all vertices in $V_1$ are not \degc vertices, and check whether this assumption is correct.

\subsubsection{The Clustering Algorithm \algCluster}
The algorithm starts with a directed graph $H_0=H$ and incidence matrix $\Inc_0=\Inc$, and proceeds iteratively in phases. We denote the directed graph obtained after $i$ phases by $H_i$ and its incidence matrix by $\Inc_i$, where $H_i$ is a subgraph of $H_{i-1}$ and $\Inc_i$ is obtained from $\Inc_{i-1}$ by removing some of its $1$ entries.
We never remove edges from $H[V_2]$, and therefore the adjacency matrix $A$ of the graph $H[V_2]$ remains unchanged throughout the algorithm.
We use $\CC_v^0 = \set{N(v)}$ to denote the initial clustering, and $\CC_v^i$ to denote the clustering after the $i$-th phase.
The clustering $\CC_v^i$ is a refinement of $\CC_v^{i-1}$, meaning that every cluster in $\CC_v^i$ is a (possibly trivial) subset of some cluster in $\CC_v^{i-1}$.

Each phase has three steps:
\begin{enumerate}
    \item Removing some edges from $E(V_1,V_2)$.
    \item Refining the clustering of every $v \in V_1$.
    \item Restoring some of the edges that were removed in the first step of the current phase.
\end{enumerate}
The algorithm always runs for $r=2\log n$ phases.
\paragraph{Step 1: Removing Edges.}
See \Cref{alg:remove edges} for the pseudocode.
At the beginning of the $i$-th phase, the input is a directed graph $H_{i-1}$ encoded using an incidence matrix $\Inc_{i-1}$, the adjacency matrix $A$, and a clustering $\set{\CC_v^{i-1}}_{v\in V_1}$.
The algorithm removes edges as follows.
For every vertex $v \in V_1$, and every cluster $C \in \CC_v^{i-1}$, remove half of the edges from $E(v,C)$, where $E(v,C)$ is the set of edges between $v$ and $C$, w.r.t. to the graph $H_{i-1}$.
If there is only one edge between $v$ and $C$, we do not remove it.
The choice of the edges to remove is arbitrary.
Denote the resulting graph by $H_{i}'$ and its incidence matrix by $\Inc_{i}'$.
We note that $H_i'$ is not the final graph of the $i$-th phase.
Some of the edges that were removed might be restored in the third step of the current phase.

\begin{algorithm}[H]
    \caption{Removing Edges (Step 1)}
    \label{alg:remove edges}
    \KwIn{The graph $H_{i-1}$ and clustering $\set{\CC_v^{i-1}}_{v\in V_1}$.}
    \KwOut{A subgraph $H_i'$ of $H_{i-1}$.}

    \setcounter{AlgoLine}{0}
    \medskip
    \For{$v\in V_1$}{
        \For{$C\in \CC_v^{i-1}$}{
            Remove half of the edges from $E(v,C)$;
        }
    }
    Output the resulting graph $H_i'$;
\end{algorithm}
\paragraph{Step 2: Refining the Clustering.}
See \Cref{alg:refine Cluster} for the pseudocode.
The input consists of the graph $H_i'$ with incidence matrix $\Inc_i'$, the adjacency matrix $A$, and the clustering $\set{\CC_v^{i-1}}_{v\in V_1}$.

The algorithm computes the matrix $R_i = \Inc_i' + \Inc_i'\cdot A$.
For each vertex $v \in V_1$, we refine the clustering $\CC_v^{i-1}$ using $R_i$ as follows.
Fix a cluster $C \in \CC_v^{i-1}$. We partition $C$ based on the values $R_i[v,u]$. Let $C(y)$ denote the subset of vertices $u \in C$ where $R_i[v,u] = y$:
\[ C(y)=\set{u\in C\mid R_i[v,u] = y}. \]
The new clustering $\CC_v^{i}$ is formed by collecting all non-empty sets $C(y)$, except for $C(0)$.

\paragraph{Step 3: Restoring Edges.}
We restore the edges between $v$ and every vertex in the cluster corresponding to $0$. Specifically, for every $u \in C(0)$, we set $\Inc_i[v,u] = 1$.
We refer to the resulting graph as $H_i$ and to its adjacency matrix as $\Inc_i$.

~\\This completes the description of all three steps of the algorithm.

\begin{algorithm}[H]
    \caption{Refine Cluster and Restore Edges (Steps 2 and 3)}
    \label{alg:refine Cluster}
    \KwIn{A graph $H_i'$ with matrices $\Inc_i',A$, and clustering $\set{\CC_v^{i-1}}_{v\in V_1}$.}
    \KwOut{A graph $H_i$ with adjacency matrix $\Inc_i$ and a refined clustering.}

    \setcounter{AlgoLine}{0}
    \medskip
    $\Inc_i\gets \Inc_i'$;\\
    \lFor{$v\in V_1$}{$\CC_v^{i}\gets \emptyset$ }
    $R_i \gets \Inc_i' + \Inc_i'\cdot A$;\\
    \For{$v\in V_1$}{
    \For{$C\in \CC_v^{i-1}$}{
    \For{$u\in C$}{
    $y\gets R_i[v,u]$; \\
    Add $u$ to a (new) cluster $C(y)$. \Comment{$C(y)=\set{u\in C\mid R_i[v,u] = y}$,}
    }
    Add every non-empty new cluster $C(y)$ with $y\neq 0$ to $\CC_v^{i}$. \\
    \lFor{$u\in C(0)$}{ $\Inc_i[v,u] \gets 1$} \Comment{Restore every edge between $v$ and $u\in C(0)$.}
    }
    }
    \Return{$\set{\CC_v^{i}}_{v\in V_1}$;}
\end{algorithm}

~\\The following claim summarizes the running time of the algorithm.
\begin{claim}\label{claim:running time}
    The algorithm runs in time $O(\MM{n,n,\abs{\Vp}}\cdot r)$.
\end{claim}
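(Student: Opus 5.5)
The claim is about the running time of \algCluster: $r=2\log n$ phases, each costing $O(\MM{n,n,\abs{\Vp}})$ time. I will bound the cost of a single phase and multiply by $r$. Each phase has three steps, and I will bound them separately. Throughout, $\Inc_i'$ is a $\abs{V_1}\times\abs{V_2}$ matrix with $\abs{V_1}=\abs{\Vp}\le n$ and $\abs{V_2}\le n$, while $A$ is a $\abs{V_2}\times\abs{V_2}$ matrix.

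\textbf{Step 2.} This step is the dominant cost. Computing $\Inc_i'\cdot A$ means multiplying an $\abs{\Vp}\times n$ matrix by an $n\times n$ matrix, which takes $\MM{\abs{\Vp},n,n}$ time. By symmetry of $\omega(a,b,c)$ this equals $\MM{n,n,\abs{\Vp}}$. The entries are integers bounded by $n$, so word-RAM arithmetic costs $O(1)$ per operation. Adding $\Inc_i'$ costs $O(n\abs{\Vp})$. The refinement loop then visits each pair $(v,u)$ with $v\in V_1$ and $u$ in some cluster of $\CC_v^{i-1}$. The clusters of $\CC_v^{i-1}$ are disjoint subsets of $N(v)\setminus\Vp$, so this is at most $O(n)$ work per $v$. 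To bucket $C$ by the value $y=R_i[v,u]$, I will use an array indexed by $y\in[0,n]$ that is reset lazily, or hashing, giving $O(\abs{C})$ time per cluster. Altogether this is $O(n\abs{\Vp})$.

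\textbf{Steps 1 and 3.} Step 1 iterates over $v$ and the clusters $C\in\CC_v^{i-1}$. It scans the current edges $E(v,C)$ in $H_{i-1}$ and zeroes half of the corresponding entries of $\Inc_{i-1}$, which costs $O(\abs{C})$ per cluster and $O(n\abs{\Vp})$ in total. Step 3 restores entries for $u\in C(0)$, which is likewise $O(n\abs{\Vp})$ overall.

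\textbf{Conclusion.} Since $\MM{n,n,\abs{\Vp}}\ge n\abs{\Vp}$ (just reading the input already costs this much), every phase costs $O(\MM{n,n,\abs{\Vp}})$. Summing over the $r$ phases gives $O(\MM{n,n,\abs{\Vp}}\cdot r)$. Initialization, meaning building $H$, $A$, $\Inc$, and $\CC_v^0=\set{N(v)\setminus\Vp}$, takes $O(n^2)$ time, which is also within the bound.

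The only subtle point is making sure the per-cluster bookkeeping never exceeds $O(\abs{N(v)})$ per vertex per phase. The grouping by $y$ should not sort with a logarithmic overhead, although even that would only add a $\log n$ factor. The other thing to check is that the clusters remain a partition of a subset of $N(v)\setminus\Vp$, so that their total size is at most $n$. This holds because each refinement only splits clusters and discards $C(0)$.
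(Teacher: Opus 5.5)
Your proof is correct and uses the same accounting as the paper. Each of the $r$ phases is dominated by computing $R_i=I_i'+I_i'\cdot A$ in $\MM{\abs{\Vp},n,n}=\MM{n,n,\abs{\Vp}}$ time, and removing edges, restoring them, and regrouping clusters are all cheaper. The one difference is that the paper regroups each cluster by $y=R_i[v,u]$ with a balanced search tree and charges $O(n^2\log n)$ per phase, whereas your lazily reset array over $y\in[0,n]$ costs $O(n\abs{\Vp})\le\MM{n,n,\abs{\Vp}}$, so the stated bound follows exactly rather than only up to an extra logarithmic factor (harmless for the later $\tilde{O}$ uses).
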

\begin{proof}[Proof of \Cref{claim:running time}]
    Removing half of the edges takes $O(m)$ time.
    Computing the matrix $R_i$ takes $O(\MM{\abs{\Vp},n,n})$ time.
    We show that refining the clusters for all $v \in V_1$ takes $O(n^2 \log n)$ time.

    Fix $v \in V_1$ and a cluster $C \in \CC_v^{i-1}$. We refine $C$ in time $O(\abs{C} \log n)$.
    Iterate over all $u \in C$ and let $y = R_i[v,u]$.
    We maintain the new clusters $C(y)$ in a balanced binary search tree indexed by $y$.
    For each $u$, we search for $y$; if it exists, we insert $u$ into $C(y)$, otherwise we create a new node for $y$.
    Since the search and insertion take $O(\log n)$, refining $C$ takes $O(\abs{C} \log n)$ time.

    Summing over all clusters in $\CC_v^{i-1}$, the time for a fixed $v$ is $O(n \log n)$ as the clusters form a partition of $N(v)$, and therefore $\sum_{C \in \CC_v^{i-1}} \abs{C} = \abs{N(v)} \leq n$.
    Summing over all $v \in V_1$, the total refinement time is $O(n^2 \log n)$.
\end{proof}

A key property of the output of the algorithm is stated in the following claim.
\begin{claim}\label{claim:reps}
    In the graph $H_r$, there is exactly one directed edge between every vertex $v \in V_1$ and every cluster $C \in \CC_v$.
\end{claim}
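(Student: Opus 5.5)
The plan is to prove, by induction on the phase $i$, an invariant on the number of edges $e_i(v,C)\triangleq|E_{H_i}(v,C)|$ for every $v\in V_1$ and every $C\in\CC_v^i$, and then take $i=r=2\log n$. The invariant has two parts:
\begin{align*}
    \text{(a)}\quad e_i(v,C)\geq 1\;, && \text{(b)}\quad e_i(v,C)\leq \max\bigl(1,\lceil n/2^i\rceil\bigr)\;.
\end{align*}
With $r=2\log n$, part (b) gives $e_r(v,C)\leq 1$, and combined with (a) this is exactly the claim. The base case $i=0$ holds trivially, since $\CC_v^0=\set{N(v)}$ and $E_{H_0}(v,N(v))$ consists of at most $n$ edges, namely all edges from $v$ into $N(v)$.

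For the upper bound (b), I would follow one phase. Fix $C\in\CC_v^{i-1}$. Step 1 leaves at most $\lceil e_{i-1}(v,C)/2\rceil$ edges from $v$ into $C$ in $H_i'$, and at least one when $e_{i-1}(v,C)\geq 1$. Since $R_i[v,u]\geq \Inc_i'[v,u]$, every $u\in C$ that still has an edge from $v$ in $H_i'$ has $R_i[v,u]\geq 1$, so it lands in some $C(y)$ with $y\neq 0$. Step 3 restores edges only to vertices of $C(0)$, which are therefore not endpoints of surviving edges. Hence for each new cluster $C(y)$ with $y\neq0$ we get $e_i(v,C(y))\leq e(v,C)$ in $H_i'$, which is at most $\lceil e_{i-1}(v,C)/2\rceil$. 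The induction hypothesis then gives (b).

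The lower bound (a) is the main obstacle: a new cluster $C(y)$ might consist only of vertices reached by length-two paths, with no direct edge from $v$. I would first establish it in the intended regime, where $N(v)\setminus\Vp$ has no induced $P_3$, so the vertices of $C$ lie in cliques $Q_1,\dots,Q_k$ of $G[N(v)\setminus\Vp]$ with no edges between different cliques. In that case, for $u\in Q_j$ we have $R_i[v,u]=|E_{H_i'}(v,Q_j)|$, because $u$ is adjacent to all other vertices of $Q_j$, is counted once via its own edge if present, and has no neighbours in other cliques. So $R_i[v,\cdot]$ is constant on each $Q_j$. Consequently each $C(y)$ with $y\neq0$ is a union of whole cliques, each of which has $y\geq1$ surviving edges from $v$, and (a) follows.

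I would also have to settle how $C(0)$ is treated. Its vertices form whole cliques with no surviving edge, and Step 3 restores all their edges. If $C(0)$ is kept as a cluster (or carried over to the next phase), it therefore satisfies (a) automatically; I would state the statement's convention explicitly so that $\CC_v^i$ stays a partition. For general $v$, which may be \degc, I would first check whether the paper's use of the claim (validity via \algClusterVeri) only needs the inequality ``at most one''. That direction is exactly (b), which holds unconditionally. If that is the case, I would state the claim in that form for invalid clusterings and prove ``exactly one'' only for vertices $v$ that are not \degc.
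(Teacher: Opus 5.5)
Your upper-bound argument (b) is the paper's own proof. It inducts on the phase, splits the edges from $v$ into $C$ into those kept and those cut in Step~1, and observes that a new cluster $C(y)$ with $y\neq 0$ receives only kept edges while $C(0)$ receives only restored, i.e.\ cut, edges.

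Beyond that, you are more careful than the paper in two ways that matter. First, the paper's invariant ``at most $n/2^i$ edges'' fails once $n/2^i<1$, because the last edge into a cluster is never removed; your $\max(1,\lceil n/2^i\rceil)$ is the right form. Second, the paper never argues the ``at least one'' direction at all.

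Your restriction of that direction to vertices $v$ with $P_3$-free $N(v)\setminus S$ is necessary, not just cautious: the claim as stated is false for \degc vertices. Take $N(v)\setminus S=\{a,b,c,d\}$ whose only edges are $ab$ and $bc$, and suppose phase~1 removes $vb$ and $vd$. Then $R_1[v,a]=R_1[v,c]=1$, $R_1[v,b]=2$ and $R_1[v,d]=0$, so $\{b\}$ becomes a cluster with no edge from $v$. Step~1 never removes the last edge into a cluster, so one of $va,vc$ is present in every later $H_i'$. Hence $R_i[v,b]\geq 1$ for all $i$, $b$ never enters $C(0)$, and $vb$ never returns. This is harmless for the paper: \Cref{lemma:alg part} needs ``exactly one'' only in the $P_3$-free case, and \algClusterVeri can simply reject $v$ whenever some cluster has no unique neighbour of $v$ in $H_r$.

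To finish (a) rigorously, fix the reading that Step~3 restores only the edges cut in the current phase. This is how the paper's $y=0$ case reads it; restoring an edge to every vertex of $C(0)$, as the pseudocode literally says, can break the halving bound. Then, writing $N(v)\setminus S=Q_1\sqcup\cdots\sqcup Q_k$, carry two further invariants:
\begin{itemize}
\item Every cluster is a union of whole cliques $Q_j$. This is preserved because $R_i[v,\cdot]$ is constant on each $Q_j$; it is item~(1) in the proof of \Cref{lemma:alg part}.
\item Every $Q_j$ keeps at least one edge from $v$ in every $H_i$. If some edge into $Q_j$ survives Step~1 it stays; otherwise $Q_j\subseteq C(0)$ and its cut edges are restored.
\end{itemize}
The first invariant is what guarantees that the $y\geq 1$ surviving edges into $Q_j$ actually land in $C(y)$. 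The second is what gives (a) for $C(0)$; this does not follow from (a) for the parent cluster $C$ alone, since all edges into $C$ might go to cliques outside $C(0)$. Under this reading, (b) for $C(0)$ (if you keep it as a cluster, which you must for the clustering to stay a partition) is immediate. Its edges are among the at most $\lceil e_{i-1}(v,C)/2\rceil$ edges cut in Step~1, which is exactly the paper's case $y=0$.
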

We assume that every vertex $v \in V_1$ has at least one neighbor in $V_2$, otherwise it is not a \degc vertex and its clustering is trivial.
\begin{proof}[Proof of \Cref{claim:reps}]
    We follow the proof of \cite[Lemma 5.2]{williams2014finding}.
    We prove the claim by induction on the phase number $i$.
    The invariant is that after phase $i$, for every vertex $v \in V_1$ and every cluster $C \in \CC_v^i$, the number of edges directed from $v$ to $C$ is at most $n/2^i$.
    This holds trivially for the base case $i=0$.

    Consider the $i$-th phase. Fix a vertex $v$ and a cluster $C \in \CC_v^{i-1}$.
    Let $E_{total}$ denote the set of edges from $v$ to $C$ at the beginning of the phase. By the induction hypothesis, $\abs{E_{total}} \le n/2^{i-1}$.
    The algorithm partitions $E_{total}$ into two sets: $E_{keep}$ (edges retained in the first step) and $E_{cut}$ (edges removed in the first step).
    We have $\abs{E_{keep}} \le \abs{E_{total}}/2 \le n/2^i$ and similarly $\abs{E_{cut}} \le n/2^i$.

    The clustering is refined into new clusters $C(y)$. We bound the number of edges incident to a new cluster $C' = C(y)$ according to two cases:
    \begin{itemize}
        \item \textbf{Case $y \neq 0$:} The algorithm retains only the edges from $E_{keep}$. Thus, the edges from $v$ to $C'$ are a subset of $E_{keep}$. The number of edges is at most $\abs{E_{keep}} \le n/2^i$.
        \item \textbf{Case $y = 0$:} The algorithm restores the edges that were removed, meaning the edges from $v$ to $C'$ are exactly those in $E_{cut}$ (restricted to the vertices in $C'$). Thus, the number of edges is at most $\abs{E_{cut}} \le n/2^i$.
    \end{itemize}
    In both cases, the number of edges from $v$ to the new cluster $C'$ satisfies the invariant.
\end{proof}

We are ready to prove \Cref{lemma:alg part}, which we restate here for convenience:
\LemmaAlgCluster* 

\begin{proof}[Proof of \Cref{lemma:alg part}]
    \renewcommand{\DD}{\mathcal{P}}
    The running time follows from \Cref{claim:running time}.
    We prove the correctness.
    We need to show that for every $v\in \Vp$ that has no induced $P_3$ in $N(v)\setminus \Vp$, the clustering $\CC_v^r$ is valid.
    Fix such a vertex $v$, and let $\DD=(D_1,\ldots, D_k)$ be the partition of $N(v)$ into vertex-disjoint cliques.
        We show that the algorithm produces a clustering $\CC_v^r=(C_1,\ldots, C_k)$ which is equal to $\DD$
        up to renaming; there exists a permutation $\pi:[k]\to[k]$ such that $C_i = D_{\pi(i)}$ for every $i\in[k]$.
        Let $D$ be a maximal clique in $\DD$, with no edges between $D$ and any vertex outside $D$.
    We prove two claims on $D$:
    \begin{enumerate}
        \item For any $i\in [r]$, there exists $C\in \CC_v^i$ such that $D\subseteq C$. That is, $D$ is contained in some cluster $C \in \CC_v^i$.
        \item Fix two distinct cliques $D,D'\in \DD$. Then $D$ and $D'$ are not in the same cluster in the final partition $\CC_v^r$.
    \end{enumerate}
    From both claims, we conclude that there exists a cluster $C\in \CC_v^r$ such that $C\subseteq D\subseteq C$, meaning that $D=C$.

    \paragraph{Proof of (1).}
    Recall that a cluster $C\in \CC_v^i$ is partitioned to sub clusters based on the value of $R_i[v,u]$ for every vertex $u \in C$, where $R_i[v,u]$ counts the number of paths of length one or two from $v$ to $u$ in the graph $H_i'$.
    For every two vertices $a,b\in D$, the number of such paths is equal to the number of edges between $v$ and $D$ in the graph $H_i'$.
    Since no two vertices in $D$ obtain different values of $R_i[v,u]$, they are always in the same sub-cluster.

    \paragraph{Proof of (2).}
    Consider the first iteration $i$ where $D$ is in some cluster $C$ where there is exactly one edge between $C$ and $v$.
    Such an iteration exists, since $D$ is always in some cluster $C$ by (1). Moreover, by \Cref{claim:reps} after $\log n$ iterations, there is a single edge between $v$ and every cluster $C$ in its neighborhood.
    If in this iteration $C=D$ then we are done, since $D$ is its own cluster.
    Otherwise, there is another maximal clique $D'\in \DD$, which is also in $C$. But there is only one edge between $C$ and $v$, assume that this edge is $(v,u)$.
    If $u\in D$, then the number of paths of length one or two from $v$ to $w$ for every vertex $w\in D$ is $1$, while the number of such paths between $v$ and every vertex in $D'$ is $0$. Therefore, in the next phase, $D$ and $D'$ are separated into different clusters.
    If $u\notin D$, then we get the same conclusion, by symmetry.
\end{proof}
This completes the proof of \Cref{lemma:alg part}. We proceed to explain how the verification algorithm \algClusterVeri works, and prove \Cref{lemma:alg partveri}.

\subsubsection{The Verification Algorithm \algClusterVeri}
The algorithm \algClusterVeri first constructs the auxiliary graph $H$ and its incidence matrix $\Inc$. Then, it takes as input a clustering $\set{\CC_v^r}_{v\in V_1}$ and verifies whether the clustering of each vertex is valid.
That is, for every vertex $v \in V_1$, it checks:
\begin{enumerate}
    \item that there are no inter-cluster edges, i.e., no edges between two distinct clusters in $\CC_v^r$, and
    \item that every cluster $C\in \CC_v^r$ is a clique.
\end{enumerate}
We restate \Cref{lemma:alg partveri} for convenience:
\LemmaAlgClusterVeri*

The first step in the verification is to pick a \emph{leader} for every cluster $C\in \CC_v^r$, defined as the unique neighbor of $v$ in $H_r$ inside $C$ (such a vertex exists by \Cref{claim:reps}).
We then check that the leader is connected to every other vertex in its cluster,
which takes $O(\abs{C})$ time per cluster, summing to $O(n)$ for $v$ and to $O(n^2)$ in total. We proceed to verify the two properties of a valid clustering,
using the following claims.
\begin{claim}[Verify Inter-Cluster Edges]\label{claim:verify inter-cluster}
    There is a deterministic algorithm that takes a subset of vertices $\Vp$ and a partition $\CC_v$ of the neighborhood of every vertex $v\in \Vp$.
    The algorithm outputs for every $v\in \Vp$ whether the clustering contains inter-cluster edges or not, taking $O(\MM{n,n,\abs{\Vp}}\cdot \log n)$ time.
\end{claim}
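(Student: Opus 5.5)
We need to prove the inter-cluster edge verification claim: given $\Vp$ and, for each $v\in\Vp$, a clustering $\CC_v$ of its neighborhood (restricted to $V_2$), decide for every $v$ whether some edge of $G[V_2]$ joins two distinct clusters of $\CC_v$, in time $O(\MM{n,n,\abs{\Vp}}\cdot\log n)$.

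The plan is to reduce the check to a single rectangular product together with per-vertex local counting, using the leaders. By \Cref{claim:reps} each cluster $C\in\CC_v^r$ has a unique leader $\ell_v(C)$, the endpoint of the single $H_r$-edge from $v$ into $C$. We have already verified that each leader is adjacent to every other vertex of its cluster; if that check failed we reject $v$ immediately.

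First compute the matrix $R=\Inc_r\cdot A$, of size $\abs{\Vp}\times\abs{V_2}$, in time $O(\MM{\abs{\Vp},n,n})\le O(\MM{n,n,\abs{\Vp}})$ by symmetry of $\omega$. Since $\Inc_r$ has exactly one $1$ per cluster in row $v$, the entry $R[v,u]$ counts the leaders of clusters of $v$ that are adjacent to $u$.

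Next, for each $u\in N(v)\cap V_2$ lying in cluster $C$, we have $R[v,u]\ge 1$, because $u$ is adjacent to its own leader (or $u$ is the leader itself, in which case we add $\Inc_r[v,u]$, i.e.\ use $R+\Inc_r$ as in \algCluster). We check that $R[v,u]+\Inc_r[v,u]=1$ exactly. If the value is at least $2$, then $u$ is adjacent to a leader of a different cluster, which is an inter-cluster edge, and we reject $v$. This check costs $O(n)$ per $v$, so $O(n^2)$ in total.

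The remaining difficulty is that inter-cluster edges between two non-leader vertices are invisible to leaders. To handle this, we note that if every $u$ sees only its own leader, we can count inter-cluster edges via a second product. Define $S_v=\sum_{u\in N(v)\cap V_2}\deg_{N(v)\cap V_2}(u)$, which is twice the number of edges inside $N(v)\cap V_2$. This equals the sum of row $v$ of $\Inc\cdot A\cdot$ restricted to $N(v)$, i.e.\ $\sum_u \Inc[v,u](\Inc A)[v,u]$. We compute $\Inc\cdot A$ in $O(\MM{\abs{\Vp},n,n})$ time and then take row-wise dot products with $\Inc$ in $O(n^2)$ time. Separately, the intra-cluster edge count, if every cluster is a clique, is $\sum_{C}\abs{C}(\abs{C}-1)$. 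The clustering has no inter-cluster edges and all clusters are cliques if and only if $S_v=\sum_C\abs{C}(\abs{C}-1)$, given that each cluster contributes at most $\abs{C}(\abs{C}-1)$ internal degree.

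This equality does not separate the two properties, however. A non-clique cluster together with some inter-cluster edges could give the same count. So the clique property has to be certified separately, and that certification is exactly the companion claim for verifying that clusters are cliques. We can compute the internal degree per cluster by using a cluster-indicator product: for each $v$, form a matrix whose column for $u$ is the indicator of $u$'s cluster. This is not a single product across all $v$, so the most delicate step is ensuring we avoid $\abs{\Vp}$ separate products. The alternative we would pursue is the leader trick already established: $u$'s number of neighbors inside its cluster can be obtained from products with $\Inc_r$ only after relabeling. Failing that, we would use the $O(\log n)$ random halving of clusters, in the style of \Cref{lemma:3 disjunct}, to isolate inter-cluster pairs. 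This is where the $\log n$ factor in the claimed bound enters, and it is the step we expect to be the main obstacle.
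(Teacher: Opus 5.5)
Your proposal has a genuine gap, and it is the step you flag yourself as the main obstacle: you never give a test for an inter-cluster edge whose endpoints are both non-leaders. The leader check ($R[v,u]+I_r[v,u]=1$) only sees edges incident to a leader. As you note, the global test $S_v=\sum_C|C|(|C|-1)$ cannot separate inter-cluster edges from missing intra-cluster edges. Concretely, take clusters $\{a_1,a_2,a_3\}$ and $\{b_1,b_2,b_3\}$ with leaders $a_1,b_1$. Let the intra-cluster edges be $a_1a_2,a_1a_3,b_1b_2,b_1b_3$ and the inter-cluster edges be $a_2b_2,a_3b_3$. Every test you describe passes, and so does the per-vertex degree test of the companion clique claim, yet the clustering has inter-cluster edges. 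Your closing suggestion of ``random halving'' does not say what is multiplied or what is checked. A randomized test would also not give the deterministic algorithm the claim asserts.

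The missing idea is to encode cluster indices in binary, so that one $0/1$ matrix per bit records every $v$'s own clustering at once. This also removes your worry about needing $|S|$ separate products.

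Number the clusters of $\CC_v$ as $C_1,\dots,C_k$. For each bit $h\in[\lceil\log n\rceil]$, let $B_h[v,u]=1$ iff $u$ lies in a cluster of $\CC_v$ whose index has bit $h$ set. Compute $M_h=B_h\cdot A$ in $O(\MM{|S|,n,n})$ time. Flag $v$ if some $u\in N(v)\setminus S$ has $B_h[v,u]=0$ but $M_h[v,u]>0$.

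Suppose $ab$ is an edge with $a\in C_i$, $b\in C_j$, $i\neq j$. Choose a bit where $i$ and $j$ differ, with $C_i$ on the $1$ side. Then $M_h[v,b]\ge B_h[v,a]\,A[a,b]=1$ while $B_h[v,b]=0$, so $v$ is flagged. Conversely, a flag exhibits a neighbor of $u$ in a $1$-side cluster while $u$ itself is in a $0$-side cluster, which is an inter-cluster edge.

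No leaders or counting are needed. The $\lceil\log n\rceil$ products give the $O(\MM{n,n,|S|}\log n)$ bound deterministically.
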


\begin{claim}[Verify Clique]\label{claim:verify clique}
    There is a deterministic algorithm that takes a subset of vertices $\Vp$ and a partition $\CC_v$ of the neighborhood of every vertex $v\in \Vp$.
    The algorithm outputs for every $v\in \Vp$ whether each cluster in the clustering is a clique, taking $O(\MM{n,n,\abs{\Vp}})$ time.
\end{claim}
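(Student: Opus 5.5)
We verify every cluster of every $v\in\Vp$ at once, using a constant number of rectangular products with integer weights that encode cluster identities. Clusters are sets of vertices of $V\setminus\Vp$, so we work in the auxiliary graph $H$ with the original incidence matrix $\Inc$ ($\abs{\Vp}\times\abs{V_2}$) and the adjacency matrix $A$ of $H[V_2]$. For each $v\in\Vp$ we number the clusters of $\CC_v$ as $1,\ldots,k_v$ and write $c_v(w)$ for the index of the cluster containing $w\in N(v)\setminus\Vp$. We then form three $\abs{\Vp}\times\abs{V_2}$ matrices:
\[
B_0[v,w]=\Inc[v,w],\qquad B_1[v,w]=\Inc[v,w]\,c_v(w),\qquad B_2[v,w]=\Inc[v,w]\,c_v(w)^2 .
\]
Building them takes $O(n\abs{\Vp})$ time. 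We compute $S_j=B_j\cdot A$ for $j=0,1,2$. These are three products of an $\abs{\Vp}\times n$ matrix by an $n\times n$ matrix with entries bounded by $\poly(n)$, so they fit in $O(\log n)$-bit words. By symmetry of $\omega(\cdot,\cdot,\cdot)$ the total time is $O(\MM{n,n,\abs{\Vp}})$.

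Now fix $v$ and $u\in N(v)\setminus\Vp$, and let $n_j$ be the number of neighbors of $u$ in cluster $j$ of $\CC_v$. Since $A$ has no self-loops, $u$ itself is never counted. Then $S_0[v,u]=\sum_j n_j$, $S_1[v,u]=\sum_j j\,n_j$ and $S_2[v,u]=\sum_j j^2 n_j$. We declare the clustering of $v$ valid for the clique test iff every $u\in N(v)\setminus\Vp$, with $c=c_v(u)$ and $C$ the cluster containing $u$, satisfies
\[
S_0[v,u]\cdot S_2[v,u]=S_1[v,u]^2,\qquad S_1[v,u]=c\cdot S_0[v,u],\qquad S_0[v,u]=\abs{C}-1 .
\]
Checking these conditions costs $O(1)$ per pair $(v,u)$, hence $O(n\abs{\Vp})$ in total.

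It remains to prove that these conditions hold iff each cluster is a clique. This step needs care: the raw count $S_0$ alone cannot tell neighbors inside $C$ from neighbors in other clusters, which is exactly why the weights are needed.
\begin{itemize}
\item If every cluster is a clique and there are no inter-cluster edges, then all neighbors of $u$ in $N(v)\setminus\Vp$ lie in $C$, and there are exactly $\abs{C}-1$ of them. So $n_c=\abs{C}-1$ and $n_j=0$ for $j\neq c$, and all three identities hold.
\item Conversely, suppose the identities hold. By Cauchy--Schwarz, $\bigl(\sum_j j\,n_j\bigr)^2\le \bigl(\sum_j n_j\bigr)\bigl(\sum_j j^2 n_j\bigr)$ with equality iff all $j$ with $n_j>0$ are equal. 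Together with $S_1=cS_0$, the first two identities force every neighbor of $u$ in $N(v)\setminus\Vp$ to lie in cluster $c$. The third identity then says $u$ is adjacent to all of $C\setminus\{u\}$. Since this holds for every $u$, each cluster is a clique.
\end{itemize}
If inter-cluster edges are present, the test may reject a clustering whose clusters happen to be cliques. Such a clustering is invalid anyway, so the combined output of \algClusterVeri (\Cref{claim:verify inter-cluster} together with this claim) remains correct.
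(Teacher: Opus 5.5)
Your proof is correct, but it takes a different route from the paper.

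The paper uses only your first product. It computes $D=B\cdot A$ with $B[v,u]=1$ iff $u\in N(v)\setminus\Vp$, and checks $D[v,u]=\abs{C}-1$ for every $u\in C$. This test is sound only because it runs after the inter-cluster check of the preceding claim, so all neighbors of $u$ inside $N(v)\setminus\Vp$ are already known to lie in $C$. Without that precondition, a single count can give both false positives and false negatives.

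Your two extra products, with weights $c_v(w)$ and $c_v(w)^2$, certify this containment pointwise. Equivalently, $S_2[v,u]-2c\,S_1[v,u]+c^2S_0[v,u]=\sum_j (j-c)^2 n_j$ vanishes iff every neighbor of $u$ in $N(v)\setminus\Vp$ lies in cluster $c$. Your test therefore accepts iff the clustering is fully valid. In effect it also performs the inter-cluster verification, which the paper does with $O(\log n)$ products via the binary encoding of cluster indices, and you need only three products over integers bounded by $\poly(n)$. The paper's version is simpler once its precondition is granted.

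Two small remarks:
\begin{itemize}
\item Whenever an inter-cluster edge exists, your test always rejects, not just ``may'' reject.
\item State explicitly that $B_0$ is the full incidence matrix of $N(v)\setminus\Vp$. In the verification subsection the paper redefines $H$ and its incidence matrix to keep only the edges from $v$ to cluster leaders, and using that matrix would break your counts.
\end{itemize}
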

\begin{proof}[Proof of \Cref{lemma:alg partveri}]
    We run the two verification algorithms from \Cref{claim:verify inter-cluster,claim:verify clique} in sequence.
    For every vertex $v\in \Vp$, if either of the two algorithms reports a violation, then we report $v$ as a \degc vertex.
    Otherwise, we accept the clustering as valid.
    The correctness and running time follows from the two claims.
\end{proof}

We first provide the first step and notation used in both verification algorithms.
Recall that for every vertex $v\in \Vp$, we have a partition $\CC_v$ of its neighborhood $N(v)\setminus \Vp$ into clusters.
For every cluster $C\in \CC_v$, a leader is picked, followed by verification that the leader is connected to every other vertex in $C$.
Then, define an auxiliary directed graph $H=((\Vp,V\setminus \Vp),E_H)$, where for every $v\in \Vp$ we keep only the edges between $v$ and its cluster leaders, directed from $v$ to the leader. Let $\Inc$ be the incidence matrix of the directed edges between $\Vp$ and $V\setminus \Vp$ in $H$. We also keep the adjacency matrix $A$ of the graph $G[V\setminus \Vp]$ (each edge is directed in both directions).

We are now ready to prove \Cref{claim:verify inter-cluster}.

\begin{proof}[Proof of \Cref{claim:verify inter-cluster}]
    We explain how to verify that there are no edges between clusters in the partition $\CC_v$, i.e., inter-cluster edges.
    For every vertex $v\in \Vp$, write $\CC_v=\{C_1,\ldots,C_k\}$.
    For each $h\in[\log n]$, we say that a cluster $C_i\in\CC_v$ is \emph{active} in iteration $h$ if the binary representation of $i$ has a $1$ in the $h$-th bit, and \emph{inactive} in iteration $h$ otherwise.
    Let $B_h$ be the $\abs{\Vp}\times\abs{V\setminus\Vp}$ matrix where $B_h[v,u]=1$ if and only if $u$ belongs to a cluster of $\CC_v$ that is active in iteration $h$.
    We compute $M_h=B_h\cdot A$, and check if there exists a vertex $u\in N(v)\setminus\Vp$ such that $B_h[v,u]=0$ and $M_h[v,u]>0$.
    If so, the verification fails and we report that the clustering of $v$ contains an inter-cluster edge.
    The correctness follows because if there is an edge $(a,b)$ between two distinct clusters $C_i,C_j\in\CC_v$, then there exists a bit $h$ that separates $i$ and $j$.
    Assume without loss of generality that $C_i$ is active in iteration $h$ and $C_j$ is inactive in iteration $h$.
    Then $B_h[v,a]=1$, $A[a,b]=1$, and $B_h[v,b]=0$, hence $M_h[v,b]>0$, so the violation is detected.
    Conversely, if the test reports a violation for some $v,h,u$, then $B_h[v,u]=0$ and $M_h[v,u]>0$.
    Therefore, there exists a vertex $a$ in a cluster of $\CC_v$ that is active in iteration $h$ such that $(a,u)$ is an edge in $G[V\setminus\Vp]$.
    Since $u$ belongs to a cluster that is inactive in iteration $h$, this edge is between two distinct clusters of $\CC_v$.
    The running time is one matrix multiplication for every $h\in[\log n]$, each taking $O(\MM{\abs{\Vp},n,n})$ time.
    Thus the total running time is $O(\MM{n,n,\abs{\Vp}}\cdot \log n)$.
\end{proof}
We are now ready to prove \Cref{claim:verify clique}. We emphasize that this step assumes that there are no inter-cluster edges, as verified in the previous step.
\begin{proof}[Proof of \Cref{claim:verify clique}]
    Let $B$ be the $\abs{\Vp}\times\abs{V\setminus\Vp}$ matrix where $B[v,u]=1$ if and only if $u\in N(v)\setminus\Vp$.
    We compute $D=B\cdot A$, where $A$ is the adjacency matrix of $G[V\setminus\Vp]$.
    For every $v\in\Vp$, every cluster $C\in\CC_v$, and every vertex $u\in C$, we check that $D[v,u]=\abs{C}-1$.
    If one of these checks fails, then we report that the cluster $C$ is not a clique.
    The correctness follows because $D[v,u]$ is the number of neighbors of $u$ inside $N(v)\setminus\Vp$.
    Since there are no inter-cluster edges, all these neighbors belong to the same cluster as $u$.
    Therefore, for $u\in C$, the equality $D[v,u]=\abs{C}-1$ holds if and only if $u$ is adjacent to every other vertex in $C$.
    It follows that all checks pass if and only if every cluster is a clique.
    The running time is dominated by computing $D=B\cdot A$, which takes $O(\MM{\abs{\Vp},n,n})=O(\MM{n,n,\abs{\Vp}})$ time.
    The additional time needed to compute the cluster sizes and perform the checks is $O(n\abs{\Vp})$, so the total running time is $O(\MM{n,n,\abs{\Vp}})$.
\end{proof}

This completes the clustering and verification algorithms.

\bibliographystyle{alpha}
\bibliography{refs.bib}

\end{document}